\DeclareMathOperator*{\argmin}{arg\,min}
\DeclareMathOperator*{\E}{\mathbb{E}}
\newcommand{\RN}[1]{\textup{\uppercase\expandafter{\romannumeral#1}}}
\newcommand{\Rmnum}[1]{\expandafter\@slowromancap\romannumeral #1@}
\newtheorem{theorem}{Theorem}
\newtheorem{lemma}{Lemma}
\newtheorem{proposition}{Proposition}
\newtheorem{corollary}{Corollary}
\newtheorem{definition}{Definition}
\newtheorem{assumption}{Assumption}
\newtheorem{remark}{Remark}
\newcommand{\red}[1]{\textcolor{red}{#1}}
\newcommand{\blue}[1]{\textcolor{blue}{#1}}
\definecolor{dark}{rgb}{0.35, 0.15, 0.13}
\DeclarePairedDelimiter\ceil{\lceil}{\rceil}
\newcommand{\R}{\mathbb R}
\newcommand{\T}{\mathcal T}
\newcommand{\Esafe}{\mathcal E_{\text{safe}}}
\newcommand{\ind}{\mathrel{\text{\scalebox{1.07}{$\perp\mkern-10mu\perp$}}}}
\newcommand{\D}{\mathcal D}
\newcommand{\U}{\mathbb U}
\newcommand{\W}{\mathbb W}
\newcommand{\A}{\mathcal A}
\newcommand{\F}{\mathcal F}
\newcommand{\X}{\mathbb X}
\newcommand{\Pb}{\mathbb P}
\newcommand{\B}{\mathbb B}
\newcommand{\M}{\mathcal M}
\newcommand{\K}{\mathcal K}
\newcommand{\Kb}{\mathbb K}
\newcommand{\mb}{\mathbf}
\newcommand{\one}{\mathds{1}}
\newcommand{\id}{\mathbb{I}}
\newcommand{\nbf}{\noindent\textbf}
\title{Safe Adaptive Learning-based Control for Constrained Linear Quadratic Regulators with Regret Guarantees}
\author[1]{Yingying Li}
\author[2]{Subhro Das}
\author[1]{Jeff Shamma}
\author[3]{Na Li}
\affil[1]{University of Illinois at Urbana-Champaign}
\affil[2]{MIT-IBM Watson AI Lab, IBM Research}
\affil[3]{SEAS Harvard University}
\date{} 
\begin{document}

\maketitle
\begin{abstract}

We study the adaptive control of an unknown linear system with a quadratic cost function subject to safety constraints on both the states and actions. The challenges of this problem arise from the tension among safety, exploration, performance, and computation. To address these challenges, we propose a polynomial-time   algorithm that guarantees  feasibility and constraint satisfaction  with high probability under proper conditions. Our algorithm is implemented on a single trajectory and does not require system restarts. Further, we analyze the regret of our learning algorithm compared to the optimal safe linear controller with known model information. The proposed algorithm can achieve a $\tilde O(T^{2/3})$ regret, where $T$ is the number of stages and $\tilde O(\cdot)$ absorbs some logarithmic terms of $T$.
\end{abstract}

\section{Introduction}
\begin{myspecial}
	\red{Q: do I advertise recursive feasibility, or feasibility?}
	
	\red{Q: should I explain recursive feasibility somewhere?}
\end{myspecial}



Recent years have witnessed great interest in learning-based, and a lot of results have been developed for \textit{unconstrained} systems \citep{fazel2018global,dean2018regret,dean2017sample,mania2019certainty,simchowitz2018learning,simchowitz2020improper,cohen2019learning}. However,  practical systems usually face \textit{constraints} on the states and control inputs, especially in safety-critical applications \citep{campbell2010autonomous,vasic2013safety}. For example, drones are not supposed to visit certain locations to avoid collision and the thrusts of drones are usually bounded. Therefore, it is crucial to study \textit{safe} learning-based control for \textit{constrained} systems.


As a starting point, this paper considers a linear quadratic regulator (LQR) with linear constraints on the states and actions, i.e., 
\begin{align}\label{equ: state and action constraints}
	D_x x_t \leq d_x, \quad D_u u_t \leq d_u.
\end{align}
We consider a linear  system $	x_{t+1}=A_* x_t + B_* u_t +w_t$ with bounded system disturbances $w_t \in \W=\{w: \|w\|_\infty \leq w_{\max}\}$ and unknown model $(A_*, B_*)$. We aim to design an adaptive  control algorithm to  minimize the quadratic cost $\E[x_t^\top Q x_t +u_t^\top R u_t]$ with \textit{safety} guarantees during the learning process, i.e. satisfying all the constraints for any  disturbances $w_t\in\W$.




The constraints on LQR bring great difficulties even when the model is known. Unlike unconstrained LQR, which enjoys  closed-form optimal policies \citep{lewis2012optimal}, there is no computationally efficient method to solve the optimal policy for constrained LQR \citep{rawlings2009model}.\footnote{Efficient algorithms exist for some special cases, e.g. when $w_t=0$, the optimal controller is piecewise-affine and can be computed as in \cite{bemporad2002explicit}.} Thus, most literature  sacrifices optimality for computation efficiency by  designing policies with certain structures, e.g. linear policies \citep{dean2019safely,li2020online}, piecewise-affine polices in robust model predictive control (RMPC) \citep{bemporad1999robust,rawlings2009model}, etc.

Therefore, when the model is unknown, a reasonable goal  is to learn and  achieve what can be obtained with perfect model information. In this paper, we adopt the optimal safe linear policy as our benchmark/target and leave the discussions on RMPC as future work.

The current literature on learning the optimal safe linear policies adopts  an offline/non-adaptive learning approach, which does not improve the policies until the learning terminates \citep{dean2019safely}. To improve the control performance during learning, adaptive/online learning-based  control algorithms should be  designed. However, though adaptive learning for unconstrained LQR can be designed by direct conversions from offline algorithms  (see e.g., \citep{simchowitz2020naive,mania2019certainty,dean2018regret}), it is  much more challenging for the  constrained case because  direct conversions may cause infeasibility and/or constraint violation  for single-trajectory adaptive learning as noted in \cite{dean2019safely}.

\nbf{Our contributions.} In this paper, we propose a single-trajectory adaptive learning algorithm for constrained LQR with feasibility and constraint satisfaction guarantees.  Our algorithm estimates the model with a least-square-estimator (LSE) and updates the policies with improved model estimations based on certainty-equivalence (CE) with robust constraint satisfaction against model uncertainties. To ensure safe  policy updates, we propose a SafeTransit algorithm by extending the slow-variation trick in \cite{li2020online} for a known model to the case with model uncertainties and varying model estimations. Our algorithms  can be implemented in polynomial time at each stage.



Further,  we provide performance guarantees for  our learning algorithm by discussing the regret  compared with the optimal safe linear policy with perfect model information.
 We obtain a sublinear regret bound of order $\tilde O(T^{2/3})$. Interestingly,   our regret bound also holds when compared against an RMPC algorithm (RMPC)  proposed in \cite{mayne2005robust}. Discussions on more general regret benchmarks are left for the future.

Lastly, when developing our theoretical results, we provide a model estimation error bound for general and possible nonlinear policies. This is to handle the potential nonlinearity in our designed controllers when the model errors are non-negligible. Our error bound extends  the existing results for linear policies  in \cite{dean2017sample,dean2019safely} and can be useful by its own.

\nbf{Related work.}  \textit{Constrained LQR with linear policies} is studied in \cite{dean2019safely,li2020online}.  \cite{dean2019safely} consider  an \textit{unknown} model and  propose an offline learning method with sample complexity guarantees.  In contrast, \cite{li2020online} study online constrained LQR with a \textit{known} model and  adopt a slow-variation trick for safe policy updates. However, it remains open how to ensure safe adaptive control under  model uncertainties.


\textit{Constrained LQR by model predictive control (MPC).} MPC and its variants are popular methods for constrained control, e.g. RMPC designed for hard constraints \citep{mayne2005robust,limon2010robust,rawlings2009model}, and stochastic MPC methods for soft constraints \citep{mesbah2016stochastic,oldewurtel2008tractable}. With model uncertainties, robust adaptive MPC (RAMPC) algorithms are proposed to learn the model and updates the policies \citep{zhang2020adaptive,bujarbaruah2019adaptive,kohler2019linear,lu2019robust}. Most RAMPC algorithms guarantee recursive feasibility and constraint satisfaction but lack non-asymptotic performance guarantees compared with the known-model case. In contrast, there are some recent results on non-asymptotic regret bounds by sacrificing feasibility and/or constraint satisfaction, e.g., \cite{wabersich2020performance} establish a regret bound for an adaptive MPC algorithm that requires restarting the system  to some safe feasible state, \cite{muthirayan2020regret} provides a regret bound for an adaptive algorithm without considering state constraints.



\textit{Learning-based unconstrained LQR}  enjoys rich literature, so we only review the most related papers below. Firstly, our algorithm is related with the CE-based
adaptive control  \citep{dean2018regret,mania2019certainty,cohen2019learning,simchowitz2020naive}, and this approach is shown to be optimal for the unconstrained LQR \citep{mania2019certainty,simchowitz2020naive}.
Further, similar to \cite{agarwal2019online,agarwal2019logarithmic,plevrakis2020geometric}, we adopt the disturbance-action policies to approximate  linear policies.




\noindent\textit{Safe reinforcement learning (RL).} Safety in RL has different definitions \citep{mihatsch2002risk,garcia2015comprehensive}. This paper is related 
with RL with  constraints, which enjoys a lot of research  but has limited results on both safety and non-asymptotic optimality guarantees \citep{marvi2021safe,leurent2020robust,fisac2018general,garcia2015comprehensive,cheng2019end,fulton2018safe}. 


\textit{Model estimation for nonlinear systems.} There are model estimation guarantees for general nonlinear systems \citep{foster2020learning,sattar2020non}, but our estimation error bound leverages the special structure of our problem, i.e., nonlinear policies on a linear system, to obtain better guarantees.



\nbf{Notations.}
For a distribution $\D_\eta$, we write $\eta \overset{\text{ind.}}{\sim} \bar \eta \D_\eta$ if $\eta/\bar \eta $ is generated with distribution $\D_\eta$ and is independent from other random variables in the context. By $\| . \|_F$ we denote the Frobenius norm.  Define $
\B(\hat{\theta},r)=\{\theta: \|\theta-\hat\theta\|_F\leq r\}$. Let $\one_n$ be an all-one vector in $\R^n$. Let $\mathbb I_E$ be an indicator function on set $E$.

\vspace{-5pt}

\section{Problem formulation}
We consider the following constrained LQR problem,
\vspace{-5pt}
\begin{equation}\label{equ: J(pi)}
	\begin{aligned}
	&	\min_{u_0, u_1, \dots}  \lim_{T\to +\infty} \frac{1}{T}\sum_{t=0}^{T-1}\E [l(x_t,u_t)]\\
		\text{s.t. } & x_{t+1}=A_*x_t+B_* u_t +w_t, \ \forall \, t\geq 0,\\
		&D_x x_t \leq d_x,  D_u u_t\leq d_u, \  \forall  \, \{w_k \in \W\}_{k\geq 0}.
	\end{aligned}
\end{equation}
where  $l(x,u)=x^\top Q x+ u^\top R u$, $Q$ and $R$ are positive definite matrices,  $x_t\in \R^n$ is the state with a given initial state $x_0$, $u_t\in \R^m$ is the action, and $w_t$ is the disturbance. The parameters $D_x, d_x$, and $D_u, d_u$ determine the constraint sets of the state and action respectively, where $d_x\in \R^{k_x}, d_u\in \R^{k_u}$. Further, the constraint sets on the state and action are assumed to be bounded with $x_{\max}=\sup_{D_x x \leq d_x}\|x\|_2$ and $ u_{\max}=\sup_{D_u u \leq d_u}\|u\|_2$. Besides, denote $\theta_*\coloneqq (A_*, B_*)$ and $\theta\coloneqq (A, B)$ for simplicity. The model parameters $\theta_*$ are unknown but other parameters are known. 

An algorithm/controller is called `safe' if its induced states and actions satisfy  the constraints for all $t$ under any possible  disturbances $w_t\in\W$, which is also called robust constraint satisfaction under   disturbances $w_t$.

Notice that even with known model $\theta_*$, the optimal policy to   problem \eqref{equ: J(pi)} cannot be computed efficiently, but there are  efficient methods to compute  sub-optimal policies, e.g.    optimal safe linear policies by quadratic programs \citep{dean2019safely,li2020online} and piecewise affine policies by RMPC \citep{mayne2005robust,rawlings2009model}. In this paper, we set the optimal safe linear policy as our learning  goal  and leave  RMPC for future. We aim to achieve our learning goal by designing safe adaptive learning-based control. Further, we  consider single-trajectory  learning, which is more challenging since the system cannot be restarted to ensure feasibility and constraint satisfaction.








For simplicity, we assume $x_0=0$. Our results can be generalized to $x_0$  in a small neighborhood around 0.\footnote{Footnote \ref{footnote: generalize small nonzero x0} will provide more discussions on  nonzero $x_0$ and generalization to a small $x_0$. Here, we discuss the implication of considering a small $x_0$. Remember that state 0 represents a desirable equilibrium point of the system.  By considering $x_0$ close to 0, this paper focuses on how to safely optimize the performance around the  equilibrium   instead of safely driving a distant state back to the equilibrium. As an example of applications,  this paper studies how to safely maintain a drone around a target  in the air  despite  wind disturbances with minimum battery consumption, instead of flying the drone safely to the target from a distance. In practice, one can first apply existing algorithms such as \citet{mayne2005robust} to safely steer the system to around 0 and then apply our algorithm to achieve optimality and safety around 0.}



\nbf{Regret  and benchmark.} Roughly speaking, we measure the performance of our adaptive learning algorithm by comparing with 
the optimal safe linear policy $u_t=-K^* x_t$ computed with perfect model information.

To formally define the performance metric, we first define a quantitative characterization of matrix stability as in e.g., \citet{agarwal2019online,agarwal2019logarithmic,cohen2019learning}.
\begin{definition}
	For $\kappa\geq 1$, $\gamma\in [0,1)$,
	a matrix $A$ is called $ (\kappa, \gamma)$-stable if 
	$\|A^t\|_2\leq \kappa (1-\gamma)^t, \forall  t\geq 0$.\footnote{In some literature, e.g. \cite{agarwal2019online}, this property is  called $(\sqrt \kappa, \gamma)$-strong stability.}
\end{definition}
Consider the following  benchmark policy set:
\begin{align*}
	\mathcal K=\{ &K: (A_*-B_*K)\text{ is }(\kappa, \gamma) \text{-stable,  $ \|K\|_2\leq \kappa$} \\
	& D_x x^K_t \leq d_x, D_u u^K_t\leq d_u, \forall\, t, \forall\,\{w_k \in \mathbb W\}_{k\geq 0}\},
\end{align*}
where $x_t^K, u_t^K$ are generated by policy $u_t=-Kx_t$. 

For any safe learning algorithm/controller $\A$, we measure its performance by `regret' as defined below:
\begin{align*}
	\text{Regret}=\sum_{t=0}^{T-1}l(x^{\A}_t,u^{\A}_t)-T\min_{K\in \K} J(K)
\end{align*}
where $x_t^{\A}, u_t^{\A}$  are generated by the algorithm $\A$ and $J(K)=\lim_{T\to +\infty} \frac{1}{T}\sum_{t=0}^{T-1}\E[l(x^{K}_t,u^{K}_t)]$.

Next, we provide and discuss the assumptions.

\nbf{Assumptions.} Firstly, though the model $\theta_*$ is not perfectly known, we assume there is some prior knowledge, which is captured by a bounded model uncertainty set $\Theta_{\text{ini}}$ that contains $\theta_*$. It is widely acknowledged that without such prior knowledge,  hard constraint satisfaction is extremely difficult, if not impossible \citep{dean2019safely}. We also assume that $\Theta_{\text{ini}}$ is small enough so that there exists a universal linear controller $u_t=-K_\text{stab}x_t$ to stabilize any  system in $\Theta_{\text{ini}}$. This is a common assumption in constrained LQR with model uncertainties \citep{kohler2019linear,lu2019robust} and $K_\text{stab}$ can be  computed by, e.g., linear matrix inequalities (LMIs) (see \citet{caverly2019lmi} as a review). 

\begin{assumption}\label{ass: Theta0 known}
	There is a known model uncertainty set $\Theta_{\textup{ini}}=\{\theta: \|\theta-\hat \theta_{\textup{ini}}\|_F\leq r_{\textup{ini}}\}$\footnote{The symmetry of  $\Theta_\textup{ini}$ is assumed for  simplicity and not restrictive. We only need $\Theta_{\text{ini}}$ to be small and contain  $\theta_*$.} for some $0<r_{\textup{ini}}<+\infty$ such that (i) $\theta_*\in \Theta_\textup{ini}$, and (ii) there exist $\kappa\geq 1, \gamma\in [0,1)$, and $K_\textup{stab}$ such that for any  $(A, B)\in \Theta_\textup{ini}$, $A-BK_\textup{stab}$ is $(\kappa,\gamma)$-stable.
\end{assumption}
Next, to further simplify the technical exposition, we impose a technical assumption  	$K_\textup{stab}=0$, which basically requires that $A$ is open-loop stable for any  $A$ in $\Theta_\textup{ini}$. This assumption can be removed by a standard pre-stabilizing procedure: consider inputs as $u_t=-K_\textup{stab} x_t +v_t$ and focus on designing $v_t$ (see e.g., \citet{agarwal2019logarithmic}).
\begin{assumption}[Technical assumption]\label{ass: open loop stable}
	$K_\textup{stab}=0$.
\end{assumption}
\vspace{-5pt}

Further, we  need to assume  a feasible linear policy exists for our constrained LQR \eqref{equ: J(pi)}, otherwise our regret benchmark is not well-defined. Here, we impose a slightly stronger assumption of strict feasibility to allow   approximation errors in our control design. This assumption can also be verified by LMIs \citep{caverly2019lmi}. 
\begin{assumption}\label{ass: KF epsilonF}
	There exists $K_F\in \mathcal K$ and $\epsilon_{F,x}>0,\epsilon_{F,u}>0$ such that  $D_x x_t^{K_F}\leq d_x -\epsilon_{F,x}\one_{k_x} $ and $D_u u_t^{K_F}\leq d_u -\epsilon_{F,u}\one_{k_u} $ for all $t\geq 0$ under all  $w_k \in \W$.
\end{assumption}

Lastly, we impose assumptions on disturbance $w_t$. We define a certain anti-concentration property around 0 as in \citep{abeille2017linear}, which essentially requires a random vector $X$ to have  large enough probability  at a distance from 0 on all directions.

\begin{definition}[Anti-concentration]\label{def: w anticoncentration}
	A random vector $X\in \R^n$ satisfies  $(s, p)$-anti-concentration  for some $s>0, p\in (0,1)$ if  $\Pb(\lambda^\top w\geq s)\geq p$  for any  $\|\lambda\|_2=1$.
\end{definition}

We assume anti-concentrated $w_t$ below to provide excitation for learning the model,  which  is crucial for our estimation error bound under general policies.
\begin{assumption}\label{ass: w Sigma}
	$w_t \in \mathbb W$ is i.i.d., $\sigma_{sub}^2$-sub-Gaussian,   zero mean, and  $(s_w, p_w)$-anti-concentration.\footnote{Notice that by $\mathbb W=\{w: \|w\|_\infty \leq w_{\max}\}$, we have $\sigma_{sub}\leq \sqrt n w_{\max}$. .} 
\end{assumption}


\begin{mycitation}
Further,

the first few stages.

 when learning starts, if not impossib

 let $x_t^{\A}, u_t^{\A}$ denotes the

For clarification, we formalize the definition of safety in this paper.
\begin{definition}[Safety]
	We call an algorithm to be 
	\begin{enumerate}
		\item[(i)] \textup{safe} if $x_t\in \X, u_t\in \U$  for all $t$ and all $w_k\in \W$ under the true system $\theta_*$; 
		\item[(ii)] \textup{$\epsilon$-strictly safe}, where $\epsilon=(\epsilon_x, \epsilon_u)>0$, if  $D_x x_t \leq d_x -\epsilon_x \one_{k_x}, D_u u_t \leq d_u -\epsilon_u \one_{k_u}$ for all $t$ and all $w_k\in \W$ under the true system $\theta_*$;
		\item[(iii)] \textup{robustly safe} on  an uncertainty set $\Theta$ if $x_t\in \X, u_t\in \U$  for all $t$ and all $w_k\in \W$ under any  system with parameter $\theta\in \Theta$.
	\end{enumerate}

\end{definition}

The focus of this paper is important in many applications, e.g. optimizing the fuel consumption while maintaining the speed at desirable range

 In practice, one can first apply other algorithms, e.g. RMPC (cite??), to safely steer the system to around 0, then apply our algorithm to optimize the performance around 0. 

This paper will focus on

 Hence, this paper follows the literature (cite??) and focuses on the optimal linear policy for problem \eqref{equ: J(pi)}.

follows the literature (cite??) and

\red{where to explain: robust constraint satisfaction, but expected cost. Also, we can do min-max, but we need the randomness to explore. this problem is borrowed from Sarah Dean.}

Consider the following optimal control with a constrained linear system and bounded disturbances.

\begin{equation}\label{equ: J(pi)}
\begin{aligned}
	\min_{\pi}  &\ J(\pi)= \lim_{T\to +\infty} \frac{1}{T}\sum_{t=0}^{T-1}\E l(x_t, u_t)\\
	\text{s.t. } & x_{t+1}=A_*x_t+B_* u_t +w_t, \ \forall \, t\geq 0,\\
	&D_x x_t \leq d_x,  D_u u_t\leq d_u, \, \forall  \{w_t\!:\!\|w_t\|_\infty \leq w_{\max}\}
\end{aligned}
\end{equation}
where $\pi$ denotes a control policy,  $J(\pi)$ denotes the infinite-horizon averaged cost of $\pi$, $l(x_t,u_t)=x_t^\top Q x_t +u_t^\top R u_t$ for positive definite $Q$ and $R$,  $x_t\in \R^n, u_t\in \R^m, d_x\in \R^{k_x}, d_u\in \R^{k_u}$, $x_0$ is given. 	We define some shorthand notations: $\mathbb X:=\{x: D_x x\leq d_x\}$, $  \mathbb U:=\{u: D_u u\leq d_u\}$,
$\mathbb W:=\{w\in \R^n : \|w\|_\infty \leq w_{\max}\}$,
$ \theta_*:=(A_*, B_*)$, $ \theta :=(A, B)$. We consider bounded $\mathbb X$ and $\mathbb U$, i.e., there exist $x_{\max}, u_{\max}$ such that
$\|x\|_2 \leq x_{\max}, \forall x\in \mathbb X, \|u\|_2\leq u_{\max}, \forall u\in \mathbb U$.  For simplicity, we consider $x_0=0$ and discuss non-zero $x_0$ in the supplementary file.\footnote{Roughly, if $x_0$ is sufficiently small such that it admits a safe linear controller, then our algorithm can directly be applied. If $x_0$ is large, we leverage  RMPC in \cite{mayne2005robust} to steer the state to be small enough. For too large $x_0$, the constrained control can be infeasible.} 
For ease of illustration and discussion, we define four versions of safety.
\begin{definition}[Safety]
We call an algorithm to be 
\begin{enumerate}
	\item[(i)] \textup{safe} if $x_t\in \X, u_t\in \U$  for all $t$ and all $w_k\in \W$ under the true system $\theta_*$; 
	\item[(ii)] \textup{$\epsilon$-strictly safe}, where $\epsilon=(\epsilon_x, \epsilon_u)>0$, if  $D_x x_t \leq d_x -\epsilon_x \one_{k_x}, D_u u_t \leq d_u -\epsilon_u \one_{k_u}$ for all $t$ and all $w_k\in \W$ under the true system $\theta_*$;
	\item[(iii)] \textup{robustly safe} on  an uncertainty set $\Theta$ if $x_t\in \X, u_t\in \U$  for all $t$ and all $w_k\in \W$ under any  system with parameter $\theta\in \Theta$; 
\end{enumerate}

\end{definition}


To demonstrate theoretical rigor, we introduce a quantitative version of matrix stability.
\begin{definition}
	For $\kappa\geq 1$, $\gamma\in [0,1)$,
	a matrix $A$ is called $ (\kappa, \gamma)$-stable\footnote{In some literature, e.g. \cite{agarwal2019online}, this property is  called $(\sqrt \kappa, \gamma)$-strong stability.} if 
	 $\|A^t\|_2\leq \kappa (1-\gamma)^t, \forall  t\geq 0$.
\end{definition}


In this work, we consider that the system parameters $\theta_*=(A_*, B_*)$ are unknown but  the constraints ($\X, \U, \W$) and cost functions  ($Q, R$,) are known and $x_t$ can be observed. Though the true model $(A_*,B_*)$ is unknown, we assume some prior knowledge on the system dynamics is available, i.e., a model uncertainty set $\Theta_{\text{ini}}$ that satisfies the following assumption.

\red{add more explanation and discussion to the assumptions}
\begin{assumption}\label{ass: Theta0 known}
	There is a known model uncertainty set $\Theta_{\textup{ini}}=\{\theta: \|\theta-\hat \theta_{\textup{ini}}\|_F\leq r_{\textup{ini}}\}$\footnote{Here, $\Theta_\textup{ini}$ is symmetric on all directions, which may not be the case in practice. This is not  restrictive and only assumed for technical simplicity. What we really need is that $\Theta^{(0)}$ is a  compact set containing $\theta_*$.} for some $0<r_{\textup{ini}}<+\infty$ such that (i) $\theta_*\in \Theta_\textup{ini}$, and (ii) there exist $\kappa\geq 1, \gamma\in [0,1)$ such that for any  $(A, B)\in \Theta_\textup{ini}$, $A$ is $(\kappa,\gamma)$-stable.
\end{assumption}
\begin{remark}
Assumption (i) is standard in the literature \cite{dean2018regret,mayne2005robust}. Though assumption (ii) seems restrictive, with a pre-stabilization trick (cite??), this condition can be relaxed to a common assumption in the constrained LQR literature \cite{kohler2019linear,lu2019robust}, i.e. (iii): there exists $ K_{pre}$ such that $A-B K_{pre}$ is $(\kappa,\gamma)$-stable  for any  $(A,B)\in \Theta_\text{ini}$. Essentially, Assumption (iii) requires a small enough model uncertainty set such that any system in this set can be stabilized by a common  policy matrix $K_{pre}$.  The  matrix $K_{pre}$ can be computed by e.g., linear matrix inequalities (LMI)  \cite{kohler2019linear}. The pre-stabilization trick is by considering $u_t=-K_{pre} x_t + v_t$ and focusing on choosing $v_t$, which leads to a closed-loop system $x_{t+1}=(A-BK_{pre})x_t+Bv_t+w_t$ with $(A-BK_{pre})$ satisfying the assumption (ii). This paper imposes Assumption (ii) instead of (iii) for technical simplicity. Most results can be obtained under Assumption (iii) with more complicated analysis.
\end{remark}

We note that although $A_*$ is stable, implementing zero control may not be safe, i.e., violating the constraints, so it calls for 
a more careful control design to ensure constraint satisfaction.

Next, we impose assumptions on disturbance $w_t$. We introduce anti-concentration property \cite{abeille2017linear}.
\begin{definition}[Anti-concentration]\label{def: w anticoncentration}
	A random vector $X\in \R^n$ is said to satisfy $(s, p)$-anti-concentration properties for some $s>0, p\in (0,1)$ if  for any $\lambda\in \R^n$, $\|\lambda\|_2=1$, $\Pb(\lambda^\top X\geq s)\geq p.$
\end{definition}
Notice that this definition essentially requires $X$ has positive probability on all directions and there is a positive lower bound on the probability of each direction. 

\red{explain what this means on the covariance matrix.}

\begin{assumption}\label{ass: w Sigma}
	$w_t \in \mathbb W$ is i.i.d., $\sigma_{sub}^2$-sub-Gaussian,   zero mean, and  $(s_w, p_w)$-anti-concentration.\footnote{Notice that by $\mathbb W=\{w: \|w\|_\infty \leq w_{\max}\}$, we have $\sigma_{sub}\leq \sqrt n w_{\max}$. .} 
\end{assumption}

%

%
%
%
%
%
%
%


\red{add discussion on effect of $x_0$ and how we choose our benchmark with a different $x_0$!}

\red{in the main text, we consider initial state is x0=0, in the supplementary, we will discuss non-zero x0. Specifically, our benchmark is still x0=0, because it is meaningful, and because it has good connection with RMPC. Otherwise, we just do similar things as x0=0, every step: solve the same x0.}

\red{in DAP-constraint satisfaction: we have x0=0 in AAAI paper, and we explain when x0 not=0, we impose constraints on the transient.}

\red{comparison with Sarah, add a remark: 1).. use FIR on ut to represent xt, instead of using FIR xt and impose a equality constraints. 2).. directly implement FIR, instead of solving dynamical controller K. Similar to Sarah's FIR formulation. We do this because, ut does not depend on all the history, but a finite memory, easier to adapt, while u=Kx more complicated at transient. Note: this transient additional constraint also shows up in their FIR formulation, see equation (??).}
	
\red{Note: our selling point, we have an adaptive way to achieve optimal linear policy, and our linear policy design is different from Sarah's method, because we want to have cleaner truncation/forgetting pattern. That being said, we think our result can still be applied to Sarahs, with more careful and involved design.}

\red{transient: constraint violation: because new initial state is no longer x0. One method to handle this is by imposing additional constraints based on the new intiial state. This means we need to revise the constraints on the transient, and this is what is done in Sarah, but in an adaptive way,  but this may cause infeasibility, as mentioned in (cite??). We provide an example below.}

\blue{ for nonzero initial state, we need to bound transient states, here, we still treat wt=0 for t<0, some times people define w-1=x0, we borrow their idea, both are okay.}
\nbf{Regret definition and Benchmark Policy.} 
In this paper, we aim to design an adaptive algorithm $\A$ that learns the system parameters and updates the control policies in an online fashion to improve the system performance and guarantee constraint satisfaction for all $t\geq 0$ under any $w_t\in \W$. Solving  \eqref{equ: J(pi)} efficiently with constraints $\mathbb X$ and $\mathbb U$ is still an open problem even when the model $\theta_*$ is known. People reformulate \eqref{equ: J(pi)} as a tractable problem by limiting the policy class. For example, \cite{dean2019safely} considers linear controllers with memory, and robust MPC  considers piece-wise affine controllers without memory \cite{mayne2005robust}. In this paper, we will consider disturbance-action policy (DAP) control as introduced later in Section~\ref{sec: prelim}. 

To evaluate the performance of the online algorithm $\A$, besides ensuring safety, we analyze the regret of the algorithm $\A$ while bench-marking it with some optimal safe policy obtained by assuming $\theta_*$ to be known. Given the difficulty in designing safe policies even with known $\theta_*$,  we consider static linear policy class as our benchmark policies in this paper. But our results can be extended to other benchmarks such as linear policy with memory and a certain type of RMPC with PWA controllers. We discuss these extensions in the supplementary while leaving more general benchmarks as future work. In particular,
we consider $u_t=-K x_t$ as benchmark. 
Define 
\begin{align*}
	\mathcal K=\{ & \ K: (A_*-B_*K)\text{ is }(\kappa, \gamma) \text{ stable and } \\
	& \ \|K\|_2 \leq \kappa, x_t 
	\in \mathbb X, u_t \in \mathbb U, \forall\{w_k \in \mathbb W\}_{k\geq 0}\}.
\end{align*} Let $J^*=\min_{K\in \mathcal K} J(K)$ denote the optimal control cost provided by policy $\K$ when the model is known. We measure the performance of online algorithm $\A$ by policy regret, which is defined as
\vspace{-4pt}
\begin{center}
$\text{Regret}=\sum_{t=0}^{T-1} l(x^{\A}_t,u^{\A}_t)-TJ^*$
\end{center}

To make the regret well-defined, we need to assume that $\mathcal K$ is not empty. For technical reasons, we will impose a stronger assumption that there exists a strictly robustly safe  controller.
\begin{assumption}\label{ass: KF epsilonF}
	There exists $K_F\in \mathcal K$ and $\epsilon_F=(\epsilon_{F,x},\epsilon_{F,u})>0$ such that  $K_F$ is $\epsilon_F$-strictly robustly safe on model uncertainty set $\Theta$.
\end{assumption}
A sufficient condition to verify the existence of $K_F$ is by LMI reformulation. In the supplementary file, we provide one such reformulation.

\begin{remark}
	Unlike unconstrained LQR, the initial state $x_0$ affects the optimal linear controller to the constrained LQR. This is because a contr
	
\end{remark}

\red{add a brief review of DAP}

\end{mycitation}

\section{Preliminaries}\label{sec: prelim}



This section reviews the existing results in \cite{li2020online} on how to approximate the optimal safe linear policy computationally efficiently and how to do safe policy updates when assuming a known model.
\subsection{Approximation of Optimal Linear Polices with Disturbance-action Policies}
When the model is known, a computationally efficient way to approximate the optimal linear policy for problem \eqref{equ: J(pi)} is to rewrite the action as a linear combination of the  history disturbances, i.e., 
\vspace{-5pt}
\begin{equation}\label{equ: def DAP}
	u_t=\sum_{k=1}^H M[k]w_{t-k},
\end{equation}  and solving the optimal policy matrices $\bm M=\{M[1], \dots, M[H]\}$ by a linearly constrained quadratic program. When the model is known, we can compute the disturbance by $w_t=x_{t+1}-A_* x_t -B_* u_t$. The policy \eqref{equ: def DAP} is called a disturbance-action policy (DAP) in \citet{agarwal2019logarithmic,li2020online} and $H$ is called the policy's memory length.\footnote{\eqref{equ: def DAP} is also called 
	 finite-impulse response in \citet{dean2019safely} and affine-disturbance policy in \citet{mesbah2016stochastic}.}
 There are different ways to construct the linearly constrained quadratic program (see e.g., \citep{li2020online,dean2019safely,mesbah2016stochastic}). This paper adopts the method in \citet{li2020online}, which is briefly reviewed below.

 Under DAP, the state  can be approximated by $\tilde x_t$, which is an affine function of $\mb M$ as below.
 \begin{proposition}[\cite{agarwal2019online}]\label{prop: state formula with known system}
	Under a time-invariant policy $\mb M$, we have
	$x_t=A_*^H x_{t-H}+\tilde x_t(\mb M;\theta_*)$, where $ \tilde x_t(\mb M;\theta_*)=\sum_{k=1}^{2H} \Phi_k^x(\mb M;\theta_*)w_{t-k}$ and
 $
	 \Phi_k^x(\mb M;\theta_*)= A_*^{k-1} \id_{(k\leq H)}+\sum_{i=1}^H A_*^{i-1} B_* M[k-i] \id_{(1\leq k-i \leq H)}
	$. 

\end{proposition}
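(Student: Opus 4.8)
The plan is to prove the identity by unrolling the dynamics exactly $H$ steps and then substituting the DAP definition, collecting terms according to the disturbance index. First I would package the closed-loop increment as $v_s \coloneqq B_* u_s + w_s$, so that the recursion reads $x_{s+1} = A_* x_s + v_s$. A routine induction on $H$ then yields $x_t = A_*^H x_{t-H} + \sum_{j=0}^{H-1} A_*^j v_{t-1-j}$: the base case $H=1$ is the dynamics itself, and each additional step replaces $x_{t-j}$ by $A_* x_{t-j-1} + v_{t-j-1}$. This already isolates the leading term $A_*^H x_{t-H}$ and identifies $\tilde x_t(\mathbf{M};\theta_*) = \sum_{j=0}^{H-1} A_*^j\bigl(B_* u_{t-1-j} + w_{t-1-j}\bigr)$.

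Next I would split $\tilde x_t$ into a pure-disturbance part and a control part. For the disturbance part, the change of variables $k = j+1$ turns $\sum_{j=0}^{H-1} A_*^j w_{t-1-j}$ into $\sum_{k=1}^{H} A_*^{k-1} w_{t-k}$, which contributes exactly the summand $A_*^{k-1}\id_{(k\leq H)}$ to each coefficient $\Phi_k^x$. For the control part, substituting $u_{t-1-j} = \sum_{l=1}^{H} M[l]\, w_{t-1-j-l}$ produces the double sum $\sum_{j=0}^{H-1}\sum_{l=1}^{H} A_*^j B_* M[l]\, w_{t-1-j-l}$.

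The key bookkeeping step is to re-collect this double sum by the disturbance lag. Writing the lag index as $t-k$, the disturbance $w_{t-1-j-l}$ equals $w_{t-k}$ precisely when $k = 1+j+l$; setting $i = j+1$ gives $A_*^j = A_*^{i-1}$ and $l = k-i$, so the coefficient of $w_{t-k}$ is $A_*^{i-1} B_* M[k-i]$. The original ranges $0\le j\le H-1$ and $1\le l\le H$ translate into $1\le i\le H$ and $1\le k-i\le H$, which is exactly the indicator $\id_{(1\le k-i\le H)}$ in the statement. Combining the two parts, each coefficient of $w_{t-k}$ matches $\Phi_k^x(\mathbf{M};\theta_*)$, and the admissible $k$ range over $1\le k\le 2H$ (the disturbance part reaching $k=H$, the control part reaching $k = i+l = 2H$), giving the claimed expression; at $k=1$ the control indicator forces $i\le 0$ and so contributes nothing, consistently with the formula.

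Since the argument is essentially a careful change of variables on a finite sum, I do not expect a genuine obstacle. The only place demanding attention is checking that the two index shifts ($k=j+1$ for the disturbances and $i=j+1$, $l=k-i$ for the controls) are mutually consistent and that the indicators correctly encode the summation limits at the boundaries $k\le H$ and $k>H$. I would note that Assumption~\ref{ass: open loop stable} ($K_\textup{stab}=0$) plays no role in the identity itself, which is a purely algebraic consequence of the linear dynamics and the DAP form; it only becomes relevant downstream, where the $(\kappa,\gamma)$-stability of $A_*$ controls the size of the powers $A_*^j$ and justifies truncating the memory at length $H$.
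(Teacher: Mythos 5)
Your proof is correct, and it is the standard unrolling-plus-reindexing argument: the paper itself states this proposition without proof, citing \citet{agarwal2019online}, and your derivation is exactly the computation that reference (and the paper's own time-varying extension in Lemma \ref{lem: xt formula DAP uncertain}, whose proof is likewise omitted as ``straightforward'') relies on. Your index bookkeeping at the boundaries ($k=1$ killing the control term, $k>H$ killing the $A_*^{k-1}$ term, and the maximal lag $k=2H$) checks out, and your observation that Assumption \ref{ass: open loop stable} is irrelevant to the identity itself is accurate.
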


\nbf{Safe policy set.} \citet{li2020online} reformulated the linear constraints on the states and actions to polytopic constraints on the policy parameters as follows.
\begin{align}
	&g_i^x(\mb M;\theta_*)\leq d_{x,i}\!-\!\epsilon_x,\ \forall 1\leq i \leq k_x \label{equ: gix}\\
	&g_j^u(\mb M)\leq d_{u,j}-\epsilon_u, \forall 1\leq j \leq k_u\label{equ: giu}
\end{align}
where $g_i^x(\mb M;\theta_*)$ and $g_j^u(\mb M)$ represents each line of the state and action constraints as follows:
\begin{align*}
		g_i^x(\mb M;\theta_*)& \coloneqq \sup_{w_k\in \W} D_{x,i}^\top \tilde x_t(\mb M;\theta_*)\\
		&=\sum_{k=1}^{2H}\|D_{x,i}^\top \Phi_k^x(\mb M;\theta_*)\|_1w_{\max}\\
		g_j^u(\mb M)&\coloneqq\!\sup_{w_k\in \W} D_{u,j}^\top u_t\!=\! \sum_{k=1}^{H}\|D_{u,j}^\top M[k]\|_1w_{\max}.
\end{align*}
Notice that $g_i^x(\cdot)$  is defined based on the approximate state $\tilde x_t$ in Proposition \ref{prop: state formula with known system}, so a constraint-tightening term $\epsilon_x \geq \epsilon_{H}(H)=\|D_x \|_\infty \kappa x_{\max}(1-\gamma)^{H}$ is introduced in \eqref{equ: gix} to account for the approximation error. $g_j^u(\cdot)$ is defined on the actual action $u_t$ so  $\epsilon_u=0$ here.


In summary, \citet{li2020online} construct a safe policy set:\footnote{The policy set \eqref{equ: Omega def} includes safe DAPs starting from $x_0=0$. With a non-zero $x_0$, additional linear constraints should be imposed to ensure state constraint satisfaction for $t<H$ because Proposition \ref{prop: state formula with known system} only holds for $t\geq H$ with a nonzero $x_0$. Due to the additional constraints, only small $x_0$ can ensure the existence of a  safe DAP/linear policy. For larger $x_0$, other types of policies should be considered to ensure safety, e.g.,  \cite{mayne2005robust}. For too large $x_0$, it is possible that no policy can ensure safety. See e.g., \cite{rawlings2009model}, for more discussions.\label{footnote: generalize small nonzero x0}} 
\begin{align}\label{equ: Omega def}
    \Omega(\theta_*, \epsilon_x,\epsilon_u)=\{\mb M\in \M_H: \eqref{equ: gix}, \eqref{equ: giu}.\}, 
\end{align}
where $\theta_*$ is  used in \eqref{equ: gix}, 
$\epsilon_x\geq \epsilon_H(H), \epsilon_u\geq 0$, and  $	\M_H=\{\mb M: \|M[k]\|_\infty \leq 2\sqrt n\kappa^2(1-\gamma)^{k-1},  \forall\, 1\leq k\leq H\}
$ is included for technical simplicity without losing generality. Notice that set \eqref{equ: Omega def} is a polytope.

\nbf{Quadratic program (QP) for optimal safe DAP.} The optimal safe DAP $\mb M^*$ can be solved  by QP below:
\begin{equation}\label{equ: optimal DAP}
	\begin{aligned}
		&\min_{\mb M} f(\mb M; \theta_*)\\
		\text{s.t. } &\mb M\in \Omega( \theta_*, \epsilon_x^*, \epsilon_u^*),\epsilon_x^*= \epsilon_H(H), \epsilon_u^*=0.
	\end{aligned}
\end{equation}
where  $ f(\mb M; \theta_*)=\E\left[ l(\tilde x_t(\mb M, \theta_*), u_t\right]$ is defined by the expected cost of the approximate state and the action, which is a quadratic convex function of $\mb M$.  \citet{li2020online} show that the optimal safe DAP $\mb M^*$ approximates the optimal safe linear policy $K^*$  in the sense of $J(\mb M^*)-J(K^*)\leq \tilde O(1/T)$ for large enough $H$.

\begin{mycitation}

\citet{li2020online} constructs a polytopic safe policy set by establishing safety constraints on the  policy parameters $\mb M$ below.
\begin{align}
    \Omega(\theta_*, \epsilon_x, \epsilon_u)
\end{align}

for safe implementation. For

safe policy set by using Proposition 

By imposing the constraints on the approximate state $\tilde x_t$, \citet{li2020online} establishes state constraints on the policy $\mb M$:






As in \cite{li2020online}, we  will work with a convex polytopic constraint set $\mathcal{M}_H$ on admissible $\mb M$ for technical simplicity:
$	\M_H=\{\mb M: \|M[k]\|_\infty \leq 2\sqrt n\kappa^2(1-\gamma)^{k-1}, \ \forall\, 1\leq k\leq H\}
$. Notice that $u_t$ is a linear function with respect to $\mb M$. Further, the next proposition shows that $x_t$ can be approximated by $\tilde x_t(\mb M;\theta_*)$, which is an affine function on $\mb M$. 

\begin{proposition}[\cite{agarwal2019online}]\label{prop: state formula with known system}
	When implementing time-invariant policy $\mb M$, we have
	$x_t=A_*^H x_{t-H}+\tilde x_t(\mb M;\theta_*)=A_*^H x_{t-H}+\sum_{k=1}^{2H} \Phi_k^x(\mb M;\theta_*)w_{t-k}$, 
	where $
	 \Phi_k^x(\mb M;\theta_*)= A_*^{k-1} \one_{(k\leq H)}+\sum_{i=1}^H A_*^{i-1} B_* M[k-i] \id_{(1\leq k-i \leq H)}
	$. 

\end{proposition}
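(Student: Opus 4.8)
The plan is to obtain the formula by directly iterating the dynamics $x_{t+1}=A_* x_t+B_* u_t+w_t$ backwards over a window of length $H$ and then substituting the disturbance-action policy \eqref{equ: def DAP}. First I would establish, by a one-line induction on $H$ (or equivalently by telescoping), the unrolled identity
\begin{equation*}
x_t=A_*^H x_{t-H}+\sum_{i=1}^{H}A_*^{i-1}\bigl(B_* u_{t-i}+w_{t-i}\bigr),
\end{equation*}
which isolates the propagated tail $A_*^H x_{t-H}$ and collects the contributions of the inputs and disturbances entering over the last $H$ stages. This step uses only linearity of the dynamics and nothing about the policy.

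Next I would split the remaining sum into a pure-disturbance part and a control part. The disturbance part $\sum_{i=1}^{H}A_*^{i-1}w_{t-i}$ is already in the desired form: reindexing $k=i$, it contributes $A_*^{k-1}\id_{(k\le H)}$ to the coefficient of $w_{t-k}$, accounting for the first term of $\Phi_k^x(\mb M;\theta_*)$. For the control part I would use that under a time-invariant $\mb M$ each $u_{t-i}=\sum_{l=1}^{H}M[l]\,w_{t-i-l}$ by \eqref{equ: def DAP}, producing the double sum $\sum_{i=1}^{H}\sum_{l=1}^{H}A_*^{i-1}B_*M[l]\,w_{t-i-l}$.

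The key bookkeeping step is to collect this double sum by the disturbance index. Setting $k=i+l$ and summing over $i$ for fixed $k$, the ranges $1\le i\le H$ and $1\le l=k-i\le H$ force $k$ to run from $2$ to $2H$, and the admissible inner indices are exactly encoded by $\id_{(1\le k-i\le H)}$. Thus the coefficient of $w_{t-k}$ coming from the control part is $\sum_{i=1}^{H}A_*^{i-1}B_*M[k-i]\,\id_{(1\le k-i\le H)}$, matching the second term of $\Phi_k^x$. Adding the two parts and noting that all contributions live in the index range $1\le k\le 2H$ yields $\tilde x_t(\mb M;\theta_*)=\sum_{k=1}^{2H}\Phi_k^x(\mb M;\theta_*)w_{t-k}$ with the stated $\Phi_k^x$, completing the argument.

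I expect the only real obstacle to be this index-collection step: one must verify that every product $A_*^{i-1}B_*M[l]$ lands in a unique coefficient of $w_{t-k}$ with $k=i+l$, confirm the overall range $k\in\{1,\dots,2H\}$ (the disturbance part covering $k\le H$ and the control part covering $2\le k\le 2H$), and check that the indicator $\id_{(1\le k-i\le H)}$ correctly reproduces both boundary truncations. These manipulations are routine but must be carried out carefully to avoid off-by-one errors in the memory-length window.
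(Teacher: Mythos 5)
Your derivation is correct: the unrolled identity $x_t=A_*^H x_{t-H}+\sum_{i=1}^{H}A_*^{i-1}(B_* u_{t-i}+w_{t-i})$ follows from linearity, and the reindexing $k=i+l$ with the indicator $\id_{(1\le k-i\le H)}$ reproduces exactly the stated $\Phi_k^x$, with the disturbance part covering $1\le k\le H$ and the control part covering $2\le k\le 2H$. The paper itself omits any proof, citing \cite{agarwal2019online} (and likewise omits the proof of its time-varying extension, Lemma \ref{lem: xt formula DAP uncertain}, as "straightforward"); your argument is precisely the standard one intended there, so there is nothing to reconcile.
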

Based on DAP and Proposition \ref{prop: state formula with known system}, we introduce the quadratic problem formulation to approximates the optimal linear policy for problem \eqref{equ: J(pi)}.

\nbf{Quadratic cost functions.} For fixed $\mb M$, define $f(\mb M;\theta_*)$ by the expected cost of  approximate state and action: $ f(\mb M; \theta_*)=\E_{w_k}\left[ l(\tilde x_t(\mb M, \theta_*), u_t(\mb M))\right]$, which is a convex quadratic  function of $\mb M$. 

\nbf{Polytopic constraints.} \citet{li2020online} reformulated the linear constraints on the states and actions to polytopic constraints on the policy parameters as follows.
\begin{align}
	&g_i^x(\mb M;\theta_*)\leq d_{x,i}\!-\!\epsilon_x,\ \forall 1\leq i \leq k_x \label{equ: gix}\\
	&g_j^u(\mb M)\leq d_{u,j}-\epsilon_u, \forall 1\leq j \leq k_u\label{equ: giu}
\end{align}
where $g_i^x(\mb M;\theta_*)$ and $g_j^u(\mb M)$ represents each line of the state and action constraints as follows:
\begin{align*}
		g_i^x(\mb M;\theta_*)& \coloneqq \sup_{w_k\in \W} D_{x,i}^\top \tilde x_t(\mb M;\theta_*)\\
		&=\sum_{k=1}^{2H}\|D_{x,i}^\top \Phi_k^x(\mb M;\theta_*)\|_1w_{\max}\\
		g_j^u(\mb M)&\coloneqq\sup_{w_k\in \W} D_{u,j}^\top u_t(\mb M)\\
		&= \sum_{k=1}^{H}\|D_{u,j}^\top M[k]\|_1w_{\max}.
\end{align*}
Notice that $g_i^x(\cdot)$ above is defined based on the approximate state $\tilde x$, so a constraint-tightening term $\epsilon_x$ in \eqref{equ: gix} is introduced to account for the approximation error.  \citet{li2020online} shows that it is sufficient to let $\epsilon_x \geq \epsilon_{H}(H)=\|D_x \|_\infty \kappa x_{\max}(1-\gamma)^{H}$. There is no action  approximation here so we can let $\epsilon_u=0$ here.

In summary, the optimal DAP to our problem \eqref{equ: J(pi)} can be approximated by the quadratic program  below.
\begin{align}\label{equ: time-invariant optimal DAP}
	\min_{\bm M\in \M_H} f(\bm M; \theta_*),\, \text{s.t.\,\eqref{equ: gix},\eqref{equ: giu},\,$\epsilon_x\!=\!\epsilon_{H}(H), \epsilon_u\!=\!0$.}
\end{align}
As shown in \citet{li2020online},   the optimal DAP $\bm M_*$ to \eqref{equ: time-invariant optimal DAP} is safe to implement  and its induced cost $J(\bm M_*)$ is close to the cost of the optimal linear policy $J(K^*)$  in the sense of $J(\mb M^*)-J^*\leq \tilde O(1/T)$ for large enough $H$. 
\end{mycitation}
\vspace{-5pt}
\subsection{Safe Policy Updates with  Known Models}\label{subsec: slow variation trick}
\begin{myspecial}
\red{Q: should I add a counter-example here? maybe.}
\end{myspecial}

Notice that the constraints in \eqref{equ: optimal DAP}  can only guarantee safety when a feasible policy $\mb M$ is implemented in a time-invariant fashion. It is known that time-varying policies $\{\mb M_t\}_{t\geq 0}$ may still violate the  constraints even if each $\bm M_t\in \Omega(\theta_*, \epsilon_x^*, \epsilon_u^*)$ \citep{li2020online}. 
Intuitively, this is because the   approximate state $ \tilde x_{t+1}$ is affected by not only $\mb M_t$ but also  $\mb M_{t-1},\mb M_{t-2}, \dots$, causing the safety constraints  coupled across   the history policies (see  \cite{li2020online} for more details).

 
 When the model is known, \citet{li2020online} tackle the  coupled constraints by a \textit{slow-variation} trick. Roughly, this trick indicates that safety of a slowly varying sequence of policies $\{\mb M_t\}$ can be guaranteed if each $\mb M_t$ belongs to the set \eqref{equ: Omega def} with an additional constraint tightening term $\epsilon_v$ to allow  small policy variations.  
\begin{lemma}[Slow variation trick with perfect model \citep{li2020online}]\label{lem: constraint satisfaction of slow varying DAP}
	Consider a slowly varying  DAP sequence $\{\mb M_t\}_{t\geq 0}$ with %
	%
	$\|\mb M_t-\mb M_{t-1}\|_F\leq \Delta_M$, where $\Delta_M$ is called the policy variation budget.  $\{\mb M_t\}_{t\geq 0}$ is safe to implement  if $\mb M_t\in \Omega(\theta_*, \epsilon_x, \epsilon_u)$ for all $t\geq 0$, where
	\begin{align*}
		\epsilon_x \geq \epsilon_H(H)+ \epsilon_v(\Delta_M,H), \ \epsilon_u\geq 0,
	\end{align*}
and $	\epsilon_{v}(\Delta_M, H)\!=\! \sqrt{mnH}\Delta_{M}\cdot \|D_x\|_\infty w_{\max}\frac{\kappa \kappa_B}{\gamma^2}$,  $\kappa_B=\max_{(A, B)\in \Theta_{\text{ini}}} \|B\|_2.$.
%
\end{lemma}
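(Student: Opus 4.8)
The plan is to verify the state constraints $D_x x_t \le d_x$ by induction on $t$, while disposing of the action constraints immediately. Since $u_t$ depends only on the current policy through $u_t=\sum_{k=1}^H M_t[k]\,w_{t-k}$, the membership $\mb M_t\in\Omega(\theta_*,\epsilon_x,\epsilon_u)$ gives $D_{u,j}^\top u_t\le g_j^u(\mb M_t)\le d_{u,j}-\epsilon_u\le d_{u,j}$ for every $j$ and every disturbance realization, so action safety needs no slow-variation argument. The substance of the lemma is the state constraint, where the coupling across the past policies $\mb M_{t-1},\mb M_{t-2},\dots$ appears, and the induction hypothesis I would carry is that $x_s$ is safe, hence $\|x_s\|_2\le x_{\max}$, for all $s<t$.

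First I would unroll the dynamics $H$ steps and substitute the disturbance-action form of each past action, obtaining
\begin{equation*}
x_t-A_*^H x_{t-H}=\sum_{j=0}^{H-1}A_*^j w_{t-1-j}+\sum_{j=0}^{H-1}\sum_{k=1}^H A_*^j B_* M_{t-1-j}[k]\, w_{t-1-j-k}.
\end{equation*}
Comparing this with the time-invariant approximate state $\tilde x_t(\mb M_{t-1};\theta_*)=\sum_{k}\Phi_k^x(\mb M_{t-1};\theta_*)w_{t-k}$ from Proposition~\ref{prop: state formula with known system}, the pure-disturbance term and the terms that would arise if every past policy equaled $\mb M_{t-1}$ cancel, leaving the residual
\begin{equation*}
x_t-A_*^H x_{t-H}-\tilde x_t(\mb M_{t-1};\theta_*)=\sum_{j=0}^{H-1}\sum_{k=1}^H A_*^j B_*\big(M_{t-1-j}[k]-M_{t-1}[k]\big)w_{t-1-j-k}.
\end{equation*}
This residual is exactly the price of using a time-varying rather than a time-invariant policy, and it is what the tightening $\epsilon_v$ must absorb.

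Next I would bound the three pieces of $D_{x,i}^\top x_t$ separately. For the truncation term, the induction hypothesis $\|x_{t-H}\|_2\le x_{\max}$ together with $(\kappa,\gamma)$-stability of $A_*$ (Assumption~\ref{ass: open loop stable}) and $\|D_{x,i}\|_1\le\|D_x\|_\infty$ gives $|D_{x,i}^\top A_*^H x_{t-H}|\le \|D_x\|_\infty\kappa(1-\gamma)^H x_{\max}=\epsilon_H(H)$. For the approximate-state term, taking the supremum over $w_k\in\W$ and using $\mb M_{t-1}\in\Omega$ yields $D_{x,i}^\top\tilde x_t(\mb M_{t-1};\theta_*)\le g_i^x(\mb M_{t-1};\theta_*)\le d_{x,i}-\epsilon_x$. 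For the residual, taking the supremum over $w_k\in\W$ in the same $\ell_1$/$w_{\max}$ style as $g_i^x$, I would telescope the variation budget via $\|\mb M_{t-1-j}-\mb M_{t-1}\|_F\le j\Delta_M$, apply the spectral bounds $\|A_*^j\|_2\le\kappa(1-\gamma)^j$ and $\|B_*\|_2\le\kappa_B$, perform a Cauchy--Schwarz step over the $H$ memory indices $k$, and sum the convergent weighted series $\sum_{j\ge 0}j(1-\gamma)^j\le 1/\gamma^2$; collecting the dimensional factors from the $\ell_1$-to-Frobenius conversions on the $m\times n$ blocks gives a bound of the form $\sqrt{mnH}\,\Delta_M\|D_x\|_\infty w_{\max}\tfrac{\kappa\kappa_B}{\gamma^2}=\epsilon_v(\Delta_M,H)$. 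Summing the three pieces, $D_{x,i}^\top x_t\le \epsilon_H(H)+(d_{x,i}-\epsilon_x)+\epsilon_v(\Delta_M,H)\le d_{x,i}$ precisely when $\epsilon_x\ge\epsilon_H(H)+\epsilon_v(\Delta_M,H)$, which re-establishes $\|x_t\|_2\le x_{\max}$ and closes the induction.

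The main obstacle I anticipate is the residual bound: obtaining the stated constant requires choosing the norm chain so that the $\ell_1$ structure of $g_i^x$, the Frobenius variation budget $\|\mb M_{t-1-j}-\mb M_{t-1}\|_F\le j\Delta_M$, and the weight $\|D_x\|_\infty$ all line up, and summing $\sum_j j(1-\gamma)^j$ to produce the $1/\gamma^2$ factor rather than a cruder $1/\gamma$. A secondary point is the induction base case and the transient regime $t<H$: with $x_0=0$ the early states involve fewer history terms, so one checks directly that the same decomposition (with the $A_*^H x_{t-H}$ contribution absent or zero) still yields $D_{x,i}^\top x_t\le d_{x,i}$, ensuring the inductive step has a valid starting point.
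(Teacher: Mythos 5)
Your proposal is correct and follows essentially the same route the paper takes (via its general machinery in Lemma~\ref{lem: xt formula DAP uncertain}, Corollary~\ref{cor: state constraint decomposition}, Lemmas~\ref{lem: define epsilonH} and~\ref{lem: define epsilon V}, and the induction in Lemma~\ref{lem: general constraint satisfaction theta*}): unroll $H$ steps, split $D_{x,i}^\top x_t$ into the truncation term, the time-invariant approximate-state term, and a policy-variation residual, bound the residual by $\epsilon_v$ through the same $\ell_1$/Frobenius norm chain and the $\sum_j j(1-\gamma)^{j-1}\le 1/\gamma^2$ summation, and close by induction on the boundedness of $x_{t-H}$. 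The only cosmetic difference is that you anchor the comparison at $\mb M_{t-1}$ rather than $\mb M_t$, which changes nothing since every $\mb M_t$ lies in $\Omega(\theta_*,\epsilon_x,\epsilon_u)$.
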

\begin{myspecial}
\red{can I define safe policy set, and a conservative safe policy set  tightened to address slow variation of the policies. Then, with model uncertainty, we further tighten the constraints. Define $\Omega(\theta, \epsilon_x, \epsilon_u)$. By choosing $\Delta_M$ small enough, approximate the optimal DAP. }
\end{myspecial}

\section{Safe Adaptive  Control Algorithm}

This section introduces our safe adaptive control algorithm for constrained LQR. Our algorithm design adopts a standard model-based approach as in, e.g., \citet{mania2019certainty,simchowitz2020naive}. That is, we improve  model estimations with newly collected data and update the policies by near-optimal   policies computed based on the current model estimation. 

However, the constraints in our problem bring additional challenges on, e.g., constraint satisfaction, feasibility, tradeoff among exploration, exploitation, and safety. To address these challenges, we design three  components in Algorithm \ref{alg: online algo} that are  different or irrelevant in the unconstrained case (see e.g., \citet{mania2019certainty,simchowitz2020naive}), i.e.,  (i) the \texttt{RobustCE} and \texttt{ApproxDAP} subroutines (Lines 3,6, 8,10), (ii)  safe policy updates by Algorithm \ref{alg: safe transit} (Line 4 \& Line 9), (iii) Phase 2: pure exploitation  (Line 8-11).

In the following, we explain the   components (i) and (ii) in detail  then discuss the overall algorithm. The component (iii) is motivated by our regret analysis and will be explained after our regret bound in Theorem \ref{thm: regret bound}. 



\nbf{(i) Approximate DAP and  robust CE.}
When the model $\theta_*$ is unknown but an estimation $\hat \theta$ is available, we consider the following approximate DA policy: 
\vspace{-5pt}
\begin{equation} \label{equ: DAP unknown model, etat}
	u_t = \sum_{k=1}^H M[k]\hat w_{t-k}+\eta_t, 
\end{equation}
where   we approximate the disturbances by
$ \hat w_t =\Pi_{\mathbb W}(x_{t+1}-\hat A x_{t}-\hat B u_{t} )$ and the projection on $\W$ benefits constraint satisfaction  but  introduces nonlinearity to the  policy \eqref{equ: DAP unknown model, etat} with respect to the history states. Besides, 
we add an excitation noise $\eta_t\overset{\text{ind.}}{\sim}\! \bar \eta\D_\eta$ in  \eqref{equ: DAP unknown model, etat} to encourage exploration, where   $\bar \eta\geq 0$ is called an excitation level, the distribution $\D_\eta$ has zero mean, $(s_\eta, p_\eta)$-anti-concentration for some $s_\eta, p_\eta$, and bounded support such that $\|\eta_t\|_\infty \leq \bar \eta$. Examples of $\D_\eta$ include truncated Gaussian, uniform distribution, etc.

\begin{algorithm2e}[htp]
	\caption{Safe Adaptive Control}
	\label{alg: online algo}
	\SetAlgoNoLine
	\DontPrintSemicolon
	\LinesNumbered
	\KwIn{$\Theta_{\text{ini}},\mathcal D_\eta$. $ T^{(e)}, H^{(e)},\bar \eta^{(e)},\Delta_M^{(e)}$, $T_D^{(e)},\forall e.$}
	\textbf{Initialize: }$\hat \theta^{(0)}=\hat \theta_{\text{ini}}, r^{(0)}=r_{\text{ini}}, \Theta^{(0)}=\Theta_{\text{ini}}$.  Define  $w_t=\hat w_t=0$ for $t<0$, $t_1^{(0)}=0$.
	
	\For{\textup{Episode} $e=0,1, 2, \dots$}{
		\tcp{Phase 1$:$ 
			exploration \& exploitation}
		$(\mb M_{\dagger}^{(\!e\!)}\!,\Omega_{\dagger}^{(\!e\!)}\!)\leftarrow$\FuncSty{RobustCE}($\Theta^{(\!e\!)}\!,\! H^{(\!e\!)}, \bar \eta^{(\!e\!)},\! \Delta_M^{(\!e\!)}$).\;

		Run Algorithm \ref{alg: safe transit} to \blue{safely update the policy} to $\mb M_\dagger^{(e)}$ and output $t_1^{(e)}$ if $e>0$.  The algorithm  inputs are $(\mb M^{(e-1)}, \Omega^{(e-1)}, \Theta^{(e)}, 0, \Delta_M^{(e-1)})$, $(\mb M_\dagger^{(e)},\Omega_\dagger^{(e)}, \Theta^{(e)},  \bar \eta^{(e)}, \Delta_M^{(e)})$,  $T^{(e)}$.
		
		
		\For{$t=t_1^{(e)},\dots, t_1^{(e)}+T_D^{(e)}-1$}{
			Run \FuncSty{ApproxDAP}($\bm M_\dagger^{(e)}, \hat \theta^{(e)}, \bar \eta^{(e)}$).\;
		}
		
		\tcp{Model Updates} 
		$\Theta^{(e+1)}\!\leftarrow$ \FuncSty{ModelEst} ($\{  x_k, u_k\}_{k=t_1^{(e)}}^{t_1^{(e)}+T_D^{(e)}}\!, \bar \eta^{(\!e\!)})$). \;

		\tcp{Phase 2: pure exploitation ($\eta_t=0$)}
		$(\mb M^{(\!e\!)}\!,\Omega^{(\!e\!)}\!)\leftarrow$\FuncSty{RobustCE}($\Theta^{(\!e+1\!)}\!,\! H^{(\!e\!)}, 0,\! \Delta_M^{(\!e\!)}$).\;

		Run Algorithm \ref{alg: safe transit} to \blue{safely update the policy} to $\mb M^{(e)}$ and output $t_2^{(e)}$. Algorithm  \ref{alg: safe transit}'s inputs are $(\mb M_\dagger^{(e)}, \Omega_\dagger^{(e)}, \Theta^{(e)},  \bar \eta^{(e)}, \Delta_M^{(e)})$, $(\mb M^{(e)},\Omega^{(e)}, \Theta^{(e+1)},  0, \Delta_M^{(e)})$,  $t_1^{(e)}+T_D^{(e)}$.\;

		\For{$t=t_2^{(e)},\dots, T^{(e+1)}-1$}{
			Run \FuncSty{ApproxDAP}($\bm M^{(e)}, \hat \theta^{(e+1)}, 0$). 
		}
		
	}
\end{algorithm2e}

\begin{algorithm}
	\LinesNumbered
	\SetKwFunction{FRCE}{RobustCE}
	\SetKwProg{Fn}{Subroutine}{:}{}
	\Fn{\FRCE{$\Theta, H, \bar \eta, \Delta_M$}}{
		%
		%
		Compute the optimal policy $\mb M$ to \eqref{equ: CCE}.\;
		
		Construct the robustly safe policy set: 
		
		$\Omega= \Omega(\hat \theta, \epsilon_x, \epsilon_u)$, where $(\epsilon_x, \epsilon_u)$ are defined by \eqref{equ: CCE}.
		
		
		
		
		
		\Return{\textup{policy $\mb M$ and  robustly safe policy set $ \Omega.$}}}
\end{algorithm}

\begin{algorithm}
	\setlength{\interspacetitleruled}{1pt}
	\setlength{\algotitleheightrule}{10pt}
	\LinesNumbered
	\SetKwFunction{FExc}{ApproxDAP}
	\SetKwProg{Fn}{Subroutine}{:}{}
	\Fn{\FExc{$\bm M, \hat \theta, \bar \eta$}}{

		Implement 
		$
		u_t \!= \!\sum_{k=1}^H\! M[k]\hat w_{t\!-\!k}\!+\!\eta_t,
		$  with $\eta_t\!\overset{\text{ind.}}{\sim}\! \bar \eta\D_\eta$.\; 
		Observe $x_{t\!+\!1}$ and	record $\hat w_{t} \!=\!\Pi_{\mathbb W}(x_{t\!+\! 1}\!-\!\hat A x_{t}\!-\!\hat B u_{t} \!)$.\;
	}
\end{algorithm}

\begin{myspecial}
	
	\red{when we say $\Theta$, we mean we know $\hat \theta$ and $r$.}
	
	\red{how about define $\Theta$ by Ball, and do not joint with $\Theta_{ini}$? the only trouble is that, Omega safe for Theta joint ini, not Theta itself. } 
\end{myspecial}

\begin{algorithm}
	\setlength{\interspacetitleruled}{1pt}
	\setlength{\algotitleheightrule}{10pt}
	\LinesNumbered
	\SetKwFunction{FModelEst}{ModelEst}
	\SetKwProg{Fn}{Subroutine}{:}{}
	\Fn{\FModelEst{$\{x_k, u_k\}_{k=0}^{t},\bar \eta$}}{
		
		Estimate the model by  LSE  with projection on $\Theta_{\text{ini}}$:
		\begin{center}
			$ \tilde \theta=\argmin_{\theta}\sum_{k=0}^{t-1}\|x_{k+1}-Ax_k-Bu_k\|_2^2,\quad 	\hat \theta=\Pi_{\Theta_{\text{ini}}}(\tilde \theta).$  
		\end{center}
		
		
		Set the model uncertainty set as $\Theta=\B(\hat{\theta},r)\cap \Theta_{\text{ini}}$, where  $r=\tilde O(\frac{ \sqrt{n^2+nm}}{\sqrt{t} \bar \eta})$ by Corollary \ref{cor: estimation error of our algo}.\;
		
		\Return{\textup{the model uncertainty set} $\Theta$.}}
	
\end{algorithm}
\vspace{-10pt}

To ensure safety of \eqref{equ: DAP unknown model, etat} without knowing the true model, we rely on robust constrainit satisfaction. Specifically, given a model uncertainty set $\Theta=\B(\hat{\theta},r)\cap \Theta_{\text{ini}}$, where $r$ is the confidence radius of the estimated model $\hat \theta$, we construct a robustly safe policy set 
  to include policies that are safe for any  model in $\Theta$ and any excitation noises $\|\eta_t\|_\infty \leq \bar \eta$. This robustly safe policy set can be determined by $\Omega(\hat \theta, \epsilon_x, \epsilon_u)$ for any
  \begin{equation}\label{equ: epsilonx, epsilonu, CCE, no DeltaM}
  	\epsilon_x\geq \epsilon_\theta(r)\!+\!\epsilon_{\eta,x}(\bar \eta)+\epsilon_H(H\!), \quad \epsilon_u\!\geq \!\epsilon_{\eta,u}(\bar \eta).
  \end{equation}
Compared with the safe policy set $\Omega(\theta_*, \epsilon_x, \epsilon_u)$ in \eqref{equ: Omega def}, the robustly safe policy set replaces the unknown model with the estimated model $\hat \theta$ and includes additional constraint tightening terms  $\epsilon_\theta(r), \epsilon_{\eta,x}, \epsilon_{\eta,u}$ in $\epsilon_x, \epsilon_u$'s lower bounds to be more conservative in the face of model uncertainties and excitation noises to guarantee safety. Formulas of these  constraint-tightening  terms can be determined below based on perturbation analysis, whose proof is provided in the supplementary.
\begin{lemma}[Definitions of $\epsilon_\theta, \epsilon_{\eta,x}, \epsilon_{\eta,u}$]\label{lem: constraint tightening terms}
	Define $	\epsilon_{ \theta}(r) = c_1\sqrt{mn} r$, 
	$	\epsilon_{\eta,x}(\bar \eta)=c_2 \sqrt m \bar \eta$,	$\epsilon_{\eta,u}(\bar \eta)=c_3 \bar \eta$, 
where $c_1, c_2, c_3$ are  polynomials of $\|D_x\|_\infty,\|D_u\|_\infty,\kappa,$ $\kappa_B,\gamma^{-1}, w_{\max},x_{\max}, u_{\max}$.\footnote{Formulas of  $c_1, \dots, c_3$ are  provided in Appendix \ref{append: constraint tightening}.} For any $\mb M\in \Omega(\hat \theta, \epsilon_x, \epsilon_u)$ where $(\epsilon_x, \epsilon_u)$ satisfies \eqref{equ: epsilonx, epsilonu, CCE, no DeltaM},  policy \eqref{equ: DAP unknown model, etat} is  safe for any model in  $ \Theta$,  any $\|\eta_t\|_\infty \leq \bar \eta$, and  any $w_t \in \W$.
	
\end{lemma}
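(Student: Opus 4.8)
The plan is to fix an arbitrary model $\theta=(A,B)\in\Theta$ and disturbance sequence $\{w_t\in\W\}$ and verify both constraints directly for the closed loop generated by \eqref{equ: DAP unknown model, etat}, arguing by induction on $t$: I assume $x_s,u_s$ are safe for $s<t$ (so $\|x_s\|_2\le x_{\max}$, $\|u_s\|_2\le u_{\max}$) and deduce the same at $t$, with the base case handled by $x_0=0$ and the convention $w_s=\hat w_s=0$ for $s<0$. The action constraint is immediate: each $\hat w_{t-k}=\Pi_\W(\cdot)\in\W$, so $D_{u,j}^\top\sum_{k=1}^H M[k]\hat w_{t-k}\le g_j^u(\mb M)$ by definition of $g_j^u$, while $|D_{u,j}^\top\eta_t|\le\|D_u\|_\infty\bar\eta$; choosing $\epsilon_{\eta,u}(\bar\eta)=c_3\bar\eta$ to dominate the latter and using $\mb M\in\Omega(\hat\theta,\epsilon_x,\epsilon_u)$ with $\epsilon_u\ge\epsilon_{\eta,u}$ yields $D_{u,j}^\top u_t\le d_{u,j}$.

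For the state constraint I would unroll the true dynamics $H$ steps, $x_t=A^Hx_{t-H}+\sum_{i=0}^{H-1}A^i(Bu_{t-1-i}+w_{t-1-i})$, substitute $u_{t-1-i}=\sum_k M[k]\hat w_{t-1-i-k}+\eta_{t-1-i}$, and split $\hat w=w+(\hat w-w)$. Regrouping the terms built from the true disturbances $w$ reproduces exactly $A^Hx_{t-H}+\tilde x_t(\mb M;\theta)$ of Proposition~\ref{prop: state formula with known system}, leaving three residual terms: the excitation term $\sum_{i=0}^{H-1}A^iB\eta_{t-1-i}$, the substitution error $\sum_{i,k}A^iBM[k](\hat w_{t-1-i-k}-w_{t-1-i-k})$, and (after writing $\tilde x_t(\mb M;\theta)=\tilde x_t(\mb M;\hat\theta)+[\tilde x_t(\mb M;\theta)-\tilde x_t(\mb M;\hat\theta)]$) the model-mismatch error. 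The leading piece satisfies $D_{x,i}^\top\tilde x_t(\mb M;\hat\theta)\le g_i^x(\mb M;\hat\theta)\le d_{x,i}-\epsilon_x$ by $\mb M\in\Omega(\hat\theta,\epsilon_x,\epsilon_u)$, so it only remains to bound the three residuals and the $A^Hx_{t-H}$ term.

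Each residual is controlled using $(\kappa,\gamma)$-stability of $A$ (available since $\theta\in\Theta_{\text{ini}}$ and $K_{\text{stab}}=0$ give $A$ itself $(\kappa,\gamma)$-stable), $\|B\|_2\le\kappa_B$, and $\sum_{i\ge0}\|A^iB\|_2\le\kappa\kappa_B/\gamma$, together with the $\M_H$-bounds on $\|M[k]\|$ and $\|w_{t-k}\|_2\le\sqrt n\,w_{\max}$. The excitation term is $O(\sqrt m\,\bar\eta)$ via $\|\eta_t\|_2\le\sqrt m\,\bar\eta$, giving a bound of the form $\epsilon_{\eta,x}(\bar\eta)=c_2\sqrt m\,\bar\eta$; the substitution and model-mismatch terms are each $O(r)$, and after collecting dimensional factors combine into $\epsilon_\theta(r)=c_1\sqrt{mn}\,r$. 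The $A^Hx_{t-H}$ term is handled exactly as in the known-model case by $|D_{x,i}^\top A^Hx_{t-H}|\le\|D_x\|_\infty\kappa x_{\max}(1-\gamma)^H=\epsilon_H(H)$, using $\|x_{t-H}\|_2\le x_{\max}$ from the inductive hypothesis. Summing, $D_{x,i}^\top x_t\le g_i^x(\mb M;\hat\theta)+\epsilon_H(H)+\epsilon_\theta(r)+\epsilon_{\eta,x}(\bar\eta)\le d_{x,i}$ whenever $\epsilon_x$ meets its lower bound in \eqref{equ: epsilonx, epsilonu, CCE, no DeltaM}, closing the induction and giving safety for every $\theta\in\Theta$.

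I expect the substitution error to be the crux. Bounding it requires controlling $\|\hat w_t-w_t\|_2$, which I would do by writing $\hat w_t=\Pi_\W\big(w_t+(A-\hat A)x_t+(B-\hat B)u_t\big)$ and invoking non-expansiveness of the (Euclidean) projection together with $w_t\in\W$, so $\Pi_\W(w_t)=w_t$, to obtain $\|\hat w_t-w_t\|_2\le\|A-\hat A\|_2 x_{\max}+\|B-\hat B\|_2 u_{\max}\le r(x_{\max}+u_{\max})$. This is precisely where the induction is indispensable, since the bound needs $\|x_t\|_2,\|u_t\|_2$ already under control; it also exposes the genuine coupling between the estimated disturbances and the closed-loop state that makes the uncertain case harder than the known-model analysis of Lemma~\ref{lem: constraint satisfaction of slow varying DAP}. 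The remainder is bookkeeping: assembling $c_1,c_2,c_3$ as explicit polynomials in $\|D_x\|_\infty,\|D_u\|_\infty,\kappa,\kappa_B,\gamma^{-1},w_{\max},x_{\max},u_{\max}$ and tracking how the $\ell_\infty$-to-$\ell_2$ conversions for $\eta_t$ and the Frobenius bound on $\theta-\hat\theta$ produce the stated $\sqrt m$ and $\sqrt{mn}$ factors.
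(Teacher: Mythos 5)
Your proposal is correct and follows essentially the same route as the paper's proof (Corollary~\ref{cor: state constraint decomposition} and Lemmas~\ref{lem: Definition of epsilon theta(r)}--\ref{lem: define epsilonH}, \ref{lem: what - w}, \ref{lem: general constraint satisfaction theta*}): unroll the true dynamics into the DAP state formula, bound the excitation, model-mismatch, disturbance-approximation, and history-truncation residuals term by term using $(\kappa,\gamma)$-stability and non-expansiveness of $\Pi_{\W}$, and close an induction on $t$ that supplies the $\|z_s\|_2\le z_{\max}$ bound needed for $\|\hat w_s-w_s\|_2\le z_{\max}r$. The only cosmetic difference is that you split $\hat w = w + (\hat w - w)$ inside the policy's double sum, whereas the paper keeps the $\hat w$'s there (taking the sup defining $g_i^x$ over $\hat w\in\W$) and confines the $w-\hat w$ error to the free-response terms $\sum_k A_*^{k-1}(w_{t-k}-\hat w_{t-k})$; both yield the same $O(\sqrt{mn}\,r)$ tightening.
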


Based on the robustly safe policy set, we can compute a near-optimal and robustly safe DAP by  the QP below:
\begin{equation}\label{equ: CCE}
	\begin{aligned}
		&\min_{\mb M} f(\mb M;\hat \theta)\\
		\text{s.t. } &\mb M\in \Omega(\hat \theta, \epsilon_x, \epsilon_u), \ \epsilon_u= \epsilon_{\eta,u}(\bar \eta), \\
		&\epsilon_x= \epsilon_\theta(r)\!+\!\epsilon_{\eta,x}(\bar \eta)+\epsilon_H(H)+\epsilon_v(\Delta_M,H).
	\end{aligned}
\end{equation}
Compared with the optimal DAP problem \eqref{equ: optimal DAP}, problem \eqref{equ: CCE} replaces the true model $\theta_*$ by the estimated model $\hat \theta$ (certainty equivalence) and ensures robust constraint satisfaction on $\Theta$ and $\| \eta_t\|\leq \bar \eta$. Hence, we  call \eqref{equ: CCE}   robust CE.\footnote{Our robust CE  does not consider min-max cost.} Notice that we include an additional constraint tightening term $\epsilon_v$ in \eqref{equ: CCE} to ensure safety of slowly varying policies according to Section \ref{subsec: slow variation trick}, which is crucial for our safe policy update algorithm below.

\begin{mycitation}

 safe small policy variations, which offers freedom for safe policy updates as discussed below.


\begin{align}
	\Omega\coloneqq\{\mb M \in \M_H: & g_i^x(\mb M;\hat \theta)\!\leq\! d_{x,i}\!-\!\epsilon_x, \forall 1\!\leq \!i \!\leq \!k_x \\
	&g_j^u(\mb M)\!\leq\! d_{u,j}-\epsilon_u, \forall 1\!\leq\! j \!\leq\! k_u\}
\end{align}
where
\begin{equation}\label{equ: epsilonx, epsilonu, CCE, no DeltaM}
	\epsilon_x\geq \epsilon_\theta(r)\!+\!\epsilon_{\eta,x}(\bar \eta)+\epsilon_H(H\!), \quad \epsilon_u\!\geq \!\epsilon_{\eta,u}(\bar \eta)	.
\end{equation}

 on any possible model in the model uncertainty set, denoted

\red{revise the following paragraph, your goal is to make the preliminary and this RCE easy to compare, so it is easy to follow, not much extra efforts.}

To approximate the optimal DAP while suffering model uncertainties, we formulate the following problem:
\begin{equation}\label{equ: CCE}
	\begin{aligned}
		&\min_{\mb M\in \M_H} f(\mb M;\hat \theta)\\
		\text{s.t. } &g_i^x(\mb M;\hat \theta)\!\leq\! d_{x,i}\!-\!\epsilon_x, \forall 1\leq i \leq k_x \\
		&g_j^u(\mb M)\!\leq\! d_{u,j}-\epsilon_u, \forall 1\leq j \leq k_u
	\end{aligned}
\end{equation}
Compared with \eqref{equ: time-invariant optimal DAP}, the problem \eqref{equ: CCE} replaces the true model $\theta_*$ with the estimated model $\hat \theta$ since $\theta_*$ is unknown. This approach is called CE in  the control literature (cite??). Further, the constraint-tightening terms $\epsilon_x, \epsilon_u$ should be carefully designed to ensure  safety of our approximate DAP despite model uncertainties and excitation noises.



Specifically, we can set
\begin{equation}\label{equ: CCE constraint tightening}
		\epsilon_x\!=\!\epsilon_\theta(r)\!+\!\epsilon_{\eta,x}(\bar \eta)+\epsilon_H(H\!)\!+\epsilon_v(\Delta_M,\!H)\!, \epsilon_u\!=\!\epsilon_{\eta,u}(\bar \eta)	
\end{equation}
where $\epsilon_\theta(r), \epsilon_{\eta,x}, \epsilon_{\eta, u}$ are additional constraint tightening terms to account for the model estimation errors bounded by $r$ and excitation noises bounded by $\bar \eta$, whose formulas will be provided in Definition \ref{def: constraint tightening terms} in Section ??. 
$\epsilon_H(H)$ is introduced in Section ?? and is needed even  with perfect model information,  $\epsilon_v(\Delta_M,H)$ is  defined in Lemma \ref{lem: constraint satisfaction of slow varying DAP} to ensure safety of slowly varying policy variations, which will help ensure safe policy updates as shown later.

 It can be further shown that any feasible policy to \eqref{equ: CCE}, \eqref{equ: CCE constraint tightening} is \textit{robustly safe} for any  model in the uncertainty set $\Theta=\B(\hat \theta, r)\cap \Theta_{\text{ini}}$ and any excitation noises $\|\eta_t\|_\infty \leq \bar \eta$ (see the proof of theorem \ref{thm: constraint satisfaction}). Therefore, we call feasible policy set, denoted by $\Omega$, as a robustly safe policy set, and call the problem \eqref{equ: CCE} with \eqref{equ: CCE constraint tightening} as Robustly Safe CE (RS-CE).
\end{mycitation}
\begin{mycitation}

\red{should I add $\Delta_M$ here, or only add it in the safe transit part?}
\end{mycitation}

\begin{algorithm2e}[htp]
	\caption{SafeTransit}
	\label{alg: safe transit}
	\SetAlgoNoLine
	\DontPrintSemicolon
	\LinesNumbered
	\KwIn{$(\mb M, \Omega,\Theta,  \bar \eta, \Delta_M)$.  $(\mb M', \Omega',\Theta',  \bar \eta', \Delta_M')$.  $ t_0$. }
	
	Set $\bar \eta_{\min}=\min(\bar \eta, \bar \eta')$, $\hat \theta_{\min}\!=\! \hat \theta \id_{(r\leq r')}\!+\! \hat \theta' \id_{(r> r')}$. \;
	Set an intermediate  policy  as ${\mb M}_{\text{mid}}\in \Omega\cap\Omega'$.\;
	\tcp{Step 1$:$ slowly move from $\mb M$ to ${\mb M}_{\text{mid}}$}
	Define $W_{1}=\max( \ceil{\frac{\|\mb M-{\mb M}_{\text{mid}}\|_F}{\min(\Delta_M, \Delta_M')}},H')$.\;
	
	\For{$t=t_0, \dots, t_0+W_{1}-1$}{
		Set
		$\mb M_t= \mb M_{t-1}+\frac{1}{W_{1}}({\mb M}_{\text{mid}}-\mb M).$ \;
		Run \FuncSty{ApproxDAP}($\bm M_t, \hat \theta_{\min}, \bar \eta_{\min}$).\;
		
	}
	\tcp{Step 2$:$ slowly move from ${\mb M}_{\text{mid}}$ to $\mb M'$} Define $W_{2}= \ceil{\frac{\|\mb M'-{\mb M}_{\text{mid}}\|_F}{\Delta_M'}}$. \;
	\For{$t=t_0+W_{1}, \dots, t_0+W_{1}+W_{2}-1$}{
		Set
		$\mb M_t=\mb M_{t-1}+\frac{1}{W_{2}}(\mb M'-{\mb M}_{\text{mid}}). $\;
		Run \FuncSty{ApproxDAP}($\bm M_t, \hat \theta', \bar \eta'$).\;
	}
	\KwOut{Termination stage $t_1=t_0+W_{1}+W_{2}$.}
\end{algorithm2e}

\nbf{(ii) SafeTransit Algorithm.} Notice that at the start of each phase in Algorithm \ref{alg: online algo}, we compute a new policy to implement in this phase. However, directly changing the old policy to the new one may cause constraint violation as discussed in Section \ref{subsec: slow variation trick}.  

To address this, we design Algorithm \ref{alg: safe transit} to ensure safe policy updates at the start of each phase. The high-level idea of Algorithm \ref{alg: safe transit} is based on the slow-variation trick reviewed in Section \ref{subsec: slow variation trick}. That is, we construct a policy path connecting the old policy to the new policy such that this policy path is contained in some robustly safe policy set with  an additional constraint tightening term to allow slow policy variation, then by slowly varying the policies along this path, we are able to safely transit  to the new policy. 

\begin{myspecial}
    \red{(please review the Section ?? in the same language as this one)!}
\end{myspecial}

Next, we focus on constructing such a policy path. We illustrate our construction by Figure \ref{fig:illustration}. We follow the notations in Algorithm \ref{alg: safe transit}, i.e. the old policy is $\mb M$  in an old robustly safe policy set $\Omega$ and the new policy is $\mb M'$ in  $\Omega'$. Notice that the straight line from $\mb M$ to $\mb M'$ does not satisfy the requirements of the slow variation trick because some parts of the line are outside both robustly safe policy sets. To address this, Algorithm \ref{alg: safe transit} introduces an intermediate policy $\mb M_{\text{mid}}$ in $\Omega\cap\Omega'$, and  slowly moves the policy from the old one $\mb M$ to the intermediate one $\mb M_{\text{mid}}$ (Step 1),  then slowly moves from $\mb M_{\text{mid}}$ to the new policy $\mb M'$ (Step 2). In this way, all the path is included in at least one of the robustly safe policy sets, which allows 
safe transition from the old policy to the new policy. 

The choice of $\mb M_{\text{mid}}$ is not unique. In practice, we recommend selecting $\mb M_{\text{mid}}$ with a shorter path length for quicker policy transition. The existence of $\mb M_{\text{mid}}$ can be guaranteed if the first \texttt{RobustCE} program in Algorithm \ref{alg: online algo} (Phase 1 of episode 0) is strictly feasible. This is usually called recursive feasibility and will be formally proved in Theorem \ref{thm: feasibility}.
\begin{figure}[h]
	\centering
	\includegraphics[scale=0.7]{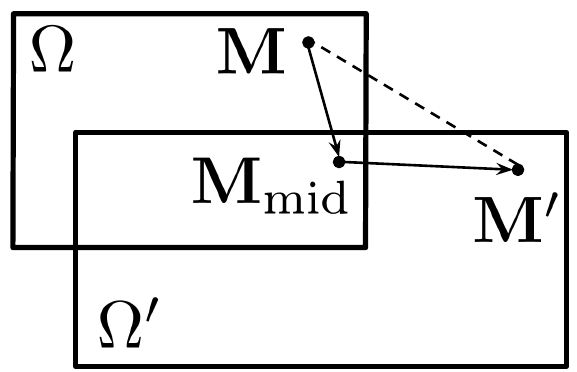}
	\caption{Illustration for Algorithm \ref{alg: safe transit}.}
	\label{fig:illustration}
\end{figure}

\begin{remark}
	If the robustly safe policy sets are monotone, e.g., if $\Omega\subseteq \Omega'$, then we can let $\mb M_{\text{mid}}=\mb M$ and the path constructed by Algorithm \ref{alg: safe transit} reduces to the straight line from $\mb M$ to $\mb M'$.  Hence, a non-trivial $\mb M_{\text{mid}}$ is only relevant when the robustly safe policy sets are not monotone, which can be caused by the non-monotone model uncertainty sets generated by  LSE (even though the error bound $r$ of LSE decreases with more data, the change in the point estimator $\hat \theta$ may cause $\Theta'\not\subseteq \Theta$). Though one can enforce decreasing uncertainty sets by taking joints over all the history  uncertainty sets, this approach leads to an  increasing number of constraints when determining the robustly safe policy sets in \textup{\texttt{RobustCE}}, thus being computationally demanding especially for large episode index $e$. 
	
	
	
\end{remark}




\nbf{Discussions on the overall algorithm.} 
Firstly, though  Algorithm \ref{alg: online algo} is implemented by episodes, it is still a 
single-trajectory learning since no system starts are needed when new episodes start. The episodic design  reduces the computational burden since  a new policy is only computed once in a while. Further, the new policies are computed by solving \eqref{equ: CCE}, which is 
QP and enjoys polynomial-time solvers. Besides, $\mb M_{\text{mid}}$ in Algorithm \ref{alg: safe transit} can also be computed efficiently since  the set $\Omega\cap \Omega'$ is a polytope. As for the model updates, we can use all the history data in practice, though we only use part of history in \texttt{ModelEst} for simpler analysis.  \texttt{ModelEst}  also projects  the estimated model onto $\Theta_{\text{ini}}$ to ensure bounded estimation.  We also  note that Algorithm \ref{alg: safe transit} is not the unique way to ensure safe policy updates: other path design or MPC   can also work. More discussions are provided in the remarks below.


\begin{remark}[Comparison with literature]
Some constrained control methods   construct safe \textup{state} sets and safe \textup{action} sets for  the current  state, e.g.,  control barrier functions \citep{ames2016control}, reachability-set-based methods \citep{akametalu2014reachability}, regulation maps \citep{kellett2004smooth}, etc. In contrast, this paper constructs  safe \textup{policy} sets in the  space of policy parameters $\mb M$. This is possible because our policy structure (linear on history disturbances) allows  a transformation from linear constraints on the states and actions to polytopic constraints on the policy parameters. 
\end{remark}

\begin{remark}[More discussions on safe policy updates]
As mentioned in Section \ref{subsec: slow variation trick}, for time-varying policies, state constraints require  coupling policy constraints that  depend on not only the current policy but also the history ones, so the decoupled constraints in \eqref{equ: CCE} no longer guarantee safety. To address this, we adopt the slow-variation trick in \cite{li2020online} and slowly update the policies. Another possible idea is to introduce additional constraints to \eqref{equ: CCE} conditioning on the history policies and directly compute a new policy that is safe to implement given the current history policies. However, these additional constraints may reduce the sizes of the safe policy sets  and thus  sacrifice optimality.




\end{remark}


\begin{myspecial}
	\red{abuse the name: not robust on cost, only robust on constraints.}
	
\end{myspecial}

\section{Theoretical Analysis}\label{sec: theory}

In this section, we provide theoretical guarantees of our algorithms including model estimation errors, feasibility, constraint satisfaction, and a regret bound. 


\subsection{Estimation Error Bound}

The estimation error bounds for linear policies have been studied in the literature \cite{dean2018regret}. However, due to the projection in our disturbance approximation in \eqref{equ: DAP unknown model, etat}, the policies implemented by our algorithms can be sometimes nonlinear. To cope with this, we provide an  error bound below for  general policies.

\begin{theorem}[Estimation error bound]\label{thm: general estimation error bdd}
Consider  actions $u_t=\pi_t(x_0, \{w_k,\eta_k\}_{k=0}^{t-1})+\eta_t$, where $\|\eta_t\|_\infty\leq \bar \eta$ is generated as discussed after \eqref{equ: DAP unknown model, etat} and policies $\pi_t(\cdot)$  ensure bounded states and actions, i.e. $\|(x_t^\top, u_t^\top)\|_2\leq b_z$ for all $t\geq 0$.    Let $\tilde \theta_T=\min_{A,B} \sum_{t=0}^{T-1}\|x_{t+1}-Ax_t-Bu_t\|_2^2$ denote the model estimation.  
	For any $0<\delta<1/3$, for $T\geq O(\log(1/\delta)+(m+n)\log(b_z/\bar \eta))$, 
	 with probability (w.p.) $1-3\delta$, we have
	$$\|\tilde \theta_T-\theta_*\|_2 \leq \! O\!\left(\!\sqrt{n\!+\!m}\frac{\sqrt{\log(b_z/\bar\eta+1/\delta)}}{\sqrt T \bar \eta}\!\right).$$
\end{theorem}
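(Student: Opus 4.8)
The plan is to recast the estimation problem as a linear regression driven by martingale noise and then to control the two standard ingredients: a self‑normalized cross term and the smallest eigenvalue of the empirical Gram matrix. Stack the regressor $z_t=(x_t^\top,u_t^\top)^\top\in\R^{n+m}$ and write $\theta_*=[A_*\ B_*]$, so that the dynamics read $x_{t+1}=\theta_* z_t+w_t$. The least‑squares normal equations give the closed form $\tilde\theta_T-\theta_*=\big(\sum_t w_t z_t^\top\big)V_T^{-1}$, whence
\[
\|\tilde\theta_T-\theta_*\|_2\ \le\ \big\|(\sum_t w_t z_t^\top)\,V_T^{-1/2}\big\|_2\,\cdot\,\lambda_{\min}(V_T)^{-1/2},\qquad V_T:=\sum_t z_t z_t^\top .
\]
It therefore suffices to upper bound the self‑normalized factor and lower bound $\lambda_{\min}(V_T)$.

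\textbf{Self‑normalized cross term.} I would fix the filtration $\F_{t-1}=\sigma(w_{0:t-1},\eta_{0:t})$. The timing in $u_t=\pi_t(x_0,\{w_k,\eta_k\}_{k=0}^{t-1})+\eta_t$ is crucial: $z_t$ is $\F_{t-1}$‑measurable while $w_t$ is independent of $\F_{t-1}$, so $\{w_t\}$ is a $\sigma_{sub}^2$‑sub‑Gaussian martingale‑difference sequence adapted to $\{\F_t\}$. A matrix self‑normalized (Abbasi‑Yadkori‑type) concentration inequality then yields, with probability $1-\delta$, $\|(\sum_t w_t z_t^\top)V_T^{-1/2}\|_2\le O(\sqrt{(n+m)\log(Tb_z/\bar\eta)+\log(1/\delta)})$, where the boundedness $\|z_t\|_2\le b_z$ controls $\log\det V_T$ and produces the logarithmic factor $\log(b_z/\bar\eta+1/\delta)$ inside the root.

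\textbf{Persistent excitation.} The crux is the lower bound $\lambda_{\min}(V_T)\gtrsim T\bar\eta^2$, which I would obtain through a block‑martingale small‑ball (BMSB) argument in the spirit of Simchowitz et al. Fix a unit direction $\lambda=(\lambda_x,\lambda_u)$. The key observation is that $z_t$ always carries one unit of fresh, independent randomness the policy cannot have reacted to: the injected $\eta_t$ enters $u_t$ additively and is independent of $\pi_t$ and of $w_{0:t-1}$, while $w_{t-1}$ enters $x_t$ additively and is independent of $z_{t-1}$. Conditioning so as to freeze everything but $\eta_t$, the $(s_\eta,p_\eta)$‑anti‑concentration gives $\Pb(|\lambda^\top z_t|\ge \bar\eta s_\eta\|\lambda_u\|\mid\cdot)\ge p_\eta$; conditioning instead so as to freeze everything but $w_{t-1}$, and using that a bounded action response can shift $\lambda^\top z_t$ by at most $\|\lambda_u\|b_z$, the $(s_w,p_w)$‑anti‑concentration gives $\Pb(|\lambda^\top z_t|\ge s_w\|\lambda_x\|-\|\lambda_u\|b_z\mid\cdot)\ge p_w$. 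Splitting on the threshold $\|\lambda_u\|\gtrless s_w/(2b_z)$ and keeping the better of the two bounds yields a uniform $(1,\nu,p)$‑BMSB condition with $\nu\asymp\bar\eta$ and $p=\min(p_\eta,p_w)$. I would then invoke the BMSB‑to‑eigenvalue lemma together with a union bound over an $\epsilon$‑net of the sphere $\Sb^{n+m-1}$ at resolution $\epsilon\asymp\bar\eta/b_z$; the net has log‑cardinality $O((n+m)\log(b_z/\bar\eta))$, which is exactly the stated burn‑in $T\gtrsim (n+m)\log(b_z/\bar\eta)+\log(1/\delta)$, after which $\lambda_{\min}(V_T)\gtrsim T\nu^2 p^2$ holds with probability $1-\delta$.

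\textbf{Combining and the main difficulty.} A union bound over the three events (the sub‑Gaussian tail controlling $\|z_t\|\le b_z$, the self‑normalized bound, and the small‑ball bound) gives the $1-3\delta$ probability, and substituting the two estimates into the first display yields the claimed rate. The step I expect to be genuinely delicate is the excitation bound: because $\pi_t$ may be an arbitrary nonlinear map of the history, the action $u_t$ can respond to the most recent disturbance $w_{t-1}$ and partially cancel its effect on $z_t$, which is precisely why the linear‑policy arguments of \cite{dean2017sample,dean2019safely} do not transfer verbatim. The resolution is never to require $w_{t-1}$ and the action response to be controlled simultaneously: one isolates the independent injected $\eta_t$ (which no policy can anticipate) for the action‑heavy directions and exploits boundedness of $u_t$ to limit cancellation in the state‑heavy directions. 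Quantitatively this amounts to checking that the Schur complement of the conditional covariance of $z_t$ is lower bounded by $\Sigma_\eta\succeq c\,\bar\eta^2 I$, with the remaining dependence on $b_z$ sitting in the constants; making this bookkeeping explicit is where the real work lies.
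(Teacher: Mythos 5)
Your proposal is correct and follows essentially the same route as the paper: the paper also reduces everything to verifying the BMSB condition for the regressor process $z_t=(x_t^\top,u_t^\top)^\top$, with the identical case split between action-heavy directions (handled by the fresh injected noise $\eta$, which no policy can anticipate) and state-heavy directions (handled by the anti-concentration of $w$ after bounding the action's contribution by $\|\lambda_2\|_2 b_u$). The only cosmetic difference is packaging: the paper invokes Theorem 2.4 of Simchowitz et al.\ as a black box, which already bundles the self-normalized martingale bound, the covering argument, and the BMSB-to-eigenvalue conversion that you propose to re-derive by hand.
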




\begin{myspecial}
	\red{Q: how to discuss the anti-concentration, and its comparison with Sigma>sigmaw?}
\end{myspecial}
\vspace{-10pt}

Theorem \ref{thm: general estimation error bdd} holds for both linear and nonlinear policies as long as the  induced states and actions are bounded, which can be guaranteed by the stability of the policies. Further, the  error bound in Theorem \ref{thm: general estimation error bdd}  is   $\tilde O(\frac{\sqrt{m+n}}{\bar \eta \sqrt T})$, which coincides with the error bound  for linear policies in terms of $T, \bar \eta, n, m$ in \cite{dean2018regret,dean2019safely}.

Based on Theorem \ref{thm: general estimation error bdd}, we obtain a formula for the estimation error bound $r^{(e)}$ in Line 7 of    Algorithm \ref{alg: online algo}.
\begin{corollary}[Formula of $r^{(e)}$]\label{cor: estimation error of our algo}
Suppose   $H^{(0)}\geq \log(2\kappa)/\log((1-\gamma)^{-1})$, $ T^{(e+1)}\geq t_1^{(e)}+T_D^{(e)}$ and $T_D^{(e)}$ satisfies the condition on $T$ in Theorem \ref{thm: general estimation error bdd}. For any $0<p<1$ and $e\geq 1$, with probability at least $1-\frac{p}{2e^2}$, we have $
	\|\hat \theta^{(e)}-\theta_*\|_F \leq r^{(e)}$, where
	\begin{equation}\label{equ: def of re}
		 r^{(e)}\!=\!
		O\!\left(\!\frac{(n\!+\!\sqrt{mn})\sqrt{\log(\sqrt{mn}/\bar \eta^{(e-1)}\!+\!e^2/p)}}{\sqrt{ T^{(e-1)}_D} \bar \eta^{(e-1)}}\!\right)\!.
	\end{equation}
\begin{myspecial}
		\red{Q: how does $r^e$ depends on p? I need this, otherwise the p in constraint satisfaction looks weird}
\end{myspecial}


\end{corollary}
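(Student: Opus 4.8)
The plan is to derive Corollary \ref{cor: estimation error of our algo} as a direct specialization of Theorem \ref{thm: general estimation error bdd}, so the main work is plugging in the right values of $b_z$, $T$, and $\delta$ that are specific to Algorithm \ref{alg: online algo}. First I would check that the hypotheses of Theorem \ref{thm: general estimation error bdd} apply to the data segment used in \texttt{ModelEst} for episode $e$: the relevant data are $\{x_k,u_k\}$ for $k=t_1^{(e-1)},\dots,t_1^{(e-1)}+T_D^{(e-1)}$, produced by running \texttt{ApproxDAP} with policy $\mb M_\dagger^{(e-1)}$ and excitation level $\bar\eta^{(e-1)}$. These actions have exactly the form $u_t=\pi_t(x_0,\{w_k,\eta_k\})+\eta_t$ required by the theorem, with the injected excitation $\eta_t$ playing the role of the additive noise term, so the theorem is applicable with the sample size $T=T_D^{(e-1)}$ and excitation $\bar\eta=\bar\eta^{(e-1)}$.

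Next I would pin down the state-action bound $b_z$. Because $\mb M_\dagger^{(e-1)}\in\M_{H}$, the policy is robustly stabilizing (the closed loop inherits $(\kappa,\gamma)$-stability from Assumption \ref{ass: Theta0 known} via $K_{\text{stab}}=0$), and the constraint sets are bounded by $x_{\max},u_{\max}$. Robust safety of the implemented policy (Lemma \ref{lem: constraint tightening terms}) guarantees $x_t\in\X$, $u_t\in\U$ throughout, hence $\|(x_t^\top,u_t^\top)\|_2\leq b_z$ for a fixed $b_z=\sqrt{x_{\max}^2+u_{\max}^2}$ that does not grow with $e$. This is what lets the $\log(b_z/\bar\eta)$ factor in Theorem \ref{thm: general estimation error bdd} collapse into the $\log(\sqrt{mn}/\bar\eta^{(e-1)})$-type term in \eqref{equ: def of re}; the $\sqrt{mn}$ scaling in the numerator of $r^{(e)}$ should come from converting the operator-norm bound $\|\tilde\theta_T-\theta_*\|_2$ of the theorem into the Frobenius-norm bound $\|\hat\theta^{(e)}-\theta_*\|_F$ claimed in the corollary, since $\|\cdot\|_F\leq\sqrt{\mathrm{rank}}\,\|\cdot\|_2$ and $\theta$ has $n$ rows and $n+m$ columns.

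The remaining piece is the probability bookkeeping: I would set the per-episode failure probability to $\delta=\frac{p}{6e^2}$ so that $1-3\delta=1-\frac{p}{2e^2}$ matches the stated confidence, substitute this $\delta$ into the $\log(1/\delta)$ term of Theorem \ref{thm: general estimation error bdd} to produce the $\log(e^2/p)$ contribution inside the square root, and finally note that the projection step $\hat\theta=\Pi_{\Theta_{\text{ini}}}(\tilde\theta)$ only decreases the error since $\theta_*\in\Theta_{\text{ini}}$ by Assumption \ref{ass: Theta0 known}, so the bound on $\tilde\theta_T$ transfers to $\hat\theta^{(e)}$. The condition $H^{(0)}\geq\log(2\kappa)/\log((1-\gamma)^{-1})$ is precisely what makes $\kappa(1-\gamma)^{H}\leq 1/2$, ensuring the DAP truncation error is controlled and the state-action bound $b_z$ above is valid; the conditions $T^{(e+1)}\geq t_1^{(e)}+T_D^{(e)}$ and the lower bound on $T_D^{(e)}$ just guarantee the data segment is long enough for the theorem's sample-size requirement to hold.

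The main obstacle I anticipate is not any single inequality but the uniformity of $b_z$ across episodes, together with making the two anti-concentration properties (of $w_t$ from Assumption \ref{ass: w Sigma} and of $\eta_t$ from the excitation design) interact correctly to supply the persistent excitation that Theorem \ref{thm: general estimation error bdd} implicitly relies on; I would want to verify that the $\bar\eta^{(e-1)}$ in the denominator of $r^{(e)}$ genuinely reflects the smallest-eigenvalue lower bound of the empirical covariance of the regressors, and that this lower bound is driven by the injected excitation rather than the disturbance alone. Once that excitation lower bound is confirmed, the corollary follows by mechanical substitution.
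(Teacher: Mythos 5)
Your overall route is the same as the paper's: specialize Theorem \ref{thm: general estimation error bdd} to the exploration segment of episode $e-1$, take $\delta=\frac{p}{6e^2}$ so that $1-3\delta=1-\frac{p}{2e^2}$, convert the operator-norm bound to Frobenius norm via $\|\cdot\|_F\leq\sqrt{n}\,\|\cdot\|_2$ (giving the $(n+\sqrt{mn})=\sqrt{n}\sqrt{n+m}$ numerator), and observe that the projection onto $\Theta_{\text{ini}}$ is non-expansive since $\theta_*\in\Theta_{\text{ini}}$. All of that matches.

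There is, however, a genuine gap in how you obtain the bound $b_z$. You invoke robust safety (Lemma \ref{lem: constraint tightening terms} / the constraint-satisfaction machinery) to claim $x_t\in\X$ for all $t$ and hence $b_z=\sqrt{x_{\max}^2+u_{\max}^2}$. This fails for two reasons. First, Theorem \ref{thm: general estimation error bdd} needs an \emph{almost sure} bound on $\|(x_t^\top,u_t^\top)\|_2$ (its covariance condition is verified with $\Pb(\sum_t z_tz_t^\top\not\leq T\bar\Gamma)=0$), whereas state constraint satisfaction only holds with probability $1-p$, namely on the event $\Esafe$ that $\theta_*$ lies in every confidence set. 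Second, and more seriously, that event is itself established \emph{using} Corollary \ref{cor: estimation error of our algo}, so conditioning on it here would make the argument circular. The paper sidesteps both issues by proving a larger but unconditional bound: $u_t\in\U$ always holds because $u_t$ is built from the projected disturbances $\hat w_{t-k}\in\W$ and $\mb M_t$ in a robustly safe polytope (Lemma \ref{lem: ut satisfies constraints}), and $\|x_t\|_2\leq b_x=O(\sqrt{mn})$ holds for \emph{any} $w_k,\hat w_k\in\W$ by the boundedness of $\M_H$ and the open-loop $(\kappa,\gamma)$-stability of $A_*$ (Lemma \ref{lem: w.p.1 bdd on xt}) — this is exactly where the hypothesis $H^{(0)}\geq\log(2\kappa)/\log((1-\gamma)^{-1})$ is used, and it is also why the logarithm in \eqref{equ: def of re} contains $\sqrt{mn}/\bar\eta^{(e-1)}$ rather than a constant over $\bar\eta^{(e-1)}$: the $\sqrt{mn}$ inside the log is $b_z$, not a norm-conversion artifact. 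To repair your proof, replace the appeal to robust safety with this almost-sure state bound.
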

\vspace{-8pt}

Corollary \ref{cor: estimation error of our algo} considers the $\|\cdot \|_F$ norm because Algorithm \ref{alg: online algo}   projects matrix $\tilde \theta^{(e)}$ onto $\Theta_\text{ini}$ and the $\|\cdot \|_F$ norm  is more convenient to analyze and implement when matrix projections are involved.  Due to the  change of norms, the  error bound has an additional $\sqrt n$ factor.

\subsection{Feasibility and Constraint Satisfaction}
This section provides feasibility and constraint satisfaction guarantees of our adaptive control algorithm.
\begin{myspecial}
	\red{explain how this compares with the bound when the model is known, and what's the use of the additional terms}
	
	\red{define $\epsilon_{rob,x}^{(0)} $ and $\epsilon_{rob,u}^{(0)} $ somewhere! I change this to be more concise, to make formula look cleaner and on the same line.}
\end{myspecial}

\begin{theorem}[Feasibility]\label{thm: feasibility}
	Algorithms \ref{alg: online algo} and  \ref{alg: safe transit} output feasible policies for all $t$ under the following conditions.
	\begin{enumerate}[nosep]\setlength{\itemsep}{0.5pt}

		\item[(i)] (Strict initial feasibility) There exists $\epsilon_0>0$ such that $\Omega(\hat \theta^{(0)},\epsilon_x^{(0)}+\epsilon_0, \epsilon_u^{(0)})\not=\emptyset $, where $\epsilon_x^{(0)},\epsilon_u^{(0)} $ are defined by \eqref{equ: CCE} with initial parameters $r^{(0)}, \bar \eta^{(0)}, H^{(0)}, \Delta_M^{(0)}$.
		
		
		
		\item[(ii)] (Monotone parameters) $\bar \eta^{(e)}, H^{(e)}, T_D^{(e)}, \Delta_M^{(e)}$ are selected s.t.  $(H^{(e)})^{-1}, \sqrt{H^{(e)}}\Delta_M^{(e)}$, $\bar \eta^{(e)}$,  $r^{(e)}$ are all non-increasing with $e$, and  $r^{(1)}\leq \frac{\epsilon_0}{c_1\sqrt{mn}}$, where $r^{(e)}$   is defined in \eqref{equ: def of re}, $c_1$ is defined in  Lemma \ref{lem: constraint tightening terms}.
	\end{enumerate}
	%
	
	Furthermore, under Assumption \ref{ass: KF epsilonF}, condition (i) above  is satisfied if 
	\begin{align}\label{equ: initial feasibility}
		&\epsilon_{x}^{(0)} \!+\!\epsilon_0\leq  \epsilon_{F,x}\!-\!\epsilon_\theta(r_{\textup{ini}})\!-\!\epsilon_{P}, \epsilon_{u}^{(0)}\leq \epsilon_{F,u}\!-\!\epsilon_{P},
	\end{align}
	where $\epsilon_{P}=c_4 \sqrt{mn} (1-\gamma)^{H^{(0)}}$ and $c_4$ is a polynomial of $\|D_x\|_\infty,\|D_u\|_\infty,\kappa,$ $\kappa_B,\gamma^{-1}, w_{\max},x_{\max}, u_{\max}$.
\end{theorem}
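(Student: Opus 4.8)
The plan is to reduce ``feasibility for all $t$'' to a single persistent-feasibility statement and then verify the two hypotheses against it. The only ways the algorithm can fail to output a feasible policy are (a) some \texttt{RobustCE} call solves an infeasible instance of \eqref{equ: CCE}, i.e.\ some robustly safe set $\Omega(\hat\theta^{(e)},\epsilon_x^{(e)},\epsilon_u^{(e)})$ is empty, or (b) some \texttt{SafeTransit} call cannot pick $\mb M_{\text{mid}}\in\Omega\cap\Omega'$, i.e.\ two consecutive robustly safe sets are disjoint. Since every $\Omega(\hat\theta,\epsilon_x,\epsilon_u)$ is a polytope (hence convex), once $\mb M_{\text{mid}}$ exists the interpolated iterates of Algorithm \ref{alg: safe transit} automatically stay inside $\Omega$ on Step~1 and inside $\Omega'$ on Step~2 by convexity. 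It therefore suffices to exhibit one fixed anchor policy $\bar{\mb M}$ that lies in every robustly safe set built over the whole run; this simultaneously shows each QP is feasible and each intersection contains $\bar{\mb M}$.

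The heart of the argument is transferring feasibility of $\bar{\mb M}$ across the changing model estimates. First I would record that, by hypothesis (ii) together with the formulas in Lemma \ref{lem: constraint tightening terms} and Lemma \ref{lem: constraint satisfaction of slow varying DAP}, the non-model tightening terms $\epsilon_{\eta,x}(\bar\eta^{(e)}),\epsilon_{\eta,u}(\bar\eta^{(e)}),\epsilon_H(H^{(e)}),\epsilon_v(\Delta_M^{(e)},H^{(e)})$ are all non-increasing in $e$ (and the pure-exploitation sets, built with $\bar\eta=0$, are even looser), so the episode-$0$ exploration set has the largest tightening among all sets produced. Next, using the perturbation estimate behind $\epsilon_\theta$ in Lemma \ref{lem: constraint tightening terms}, for each $i$ the map $\theta\mapsto g_i^x(\bar{\mb M};\theta)$ changes by at most $\epsilon_\theta(\|\theta-\theta'\|_F)=c_1\sqrt{mn}\,\|\theta-\theta'\|_F$; combined with $\|\hat\theta^{(e)}-\theta_*\|_F\le r^{(e)}$ from Corollary \ref{cor: estimation error of our algo}, I can recenter the constraints from the center used by $\bar{\mb M}$ to the center $\hat\theta^{(e)}$ at a cost governed by $\epsilon_\theta(r^{(e)})$. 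Because the uncertainty sets need not be nested, this transfer is routed through the true model $\theta_*$; monotonicity of $r^{(e)}$ and the condition $r^{(1)}\le\epsilon_0/(c_1\sqrt{mn})$ then guarantee the residual model gap is covered by the initial slack $\epsilon_0$ for every $e\ge1$.

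Concretely, for the recursive-feasibility part I take $\bar{\mb M}=\mb M_0$ from condition (i), a point of $\Omega(\hat\theta^{(0)},\epsilon_x^{(0)}+\epsilon_0,\epsilon_u^{(0)})$, padded with zeros to reach memory $H^{(e)}\ge H^{(0)}$ (which changes neither $\M_H$-membership nor any $g^x_i,g^u_j$ value). Peeling off the $\epsilon_\theta(r^{(0)})$ already contained in $\epsilon_x^{(0)}$ certifies $\mb M_0$ under $\theta_*$ with only the non-model tightening plus the slack $\epsilon_0$; adding back $\epsilon_\theta(r^{(e)})$ to recenter at $\hat\theta^{(e)}$ and invoking monotonicity of the non-model terms reduces membership in $\Omega(\hat\theta^{(e)},\epsilon_x^{(e)},\epsilon_u^{(e)})$ to an inequality of the form $\epsilon_\theta(r^{(e)})\lesssim\epsilon_0$, which hypothesis (ii) supplies. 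Since $\mb M_0$ then lies in each $\Omega_\dagger^{(e)}$, each $\Omega^{(e)}$, and in every consecutive pair, all \texttt{RobustCE} QPs are feasible and every $\mb M_{\text{mid}}\in\Omega\cap\Omega'$ exists.

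For the Furthermore part I instead take $\bar{\mb M}=\mb M_F$, the length-$H^{(0)}$ disturbance-action truncation of $K_F$. By Proposition \ref{prop: state formula with known system} its approximate-state truncation error is $O((1-\gamma)^{H^{(0)}})$, bounded by $\epsilon_P=c_4\sqrt{mn}(1-\gamma)^{H^{(0)}}$, so under $\theta_*$ the strict slacks $\epsilon_{F,x},\epsilon_{F,u}$ of Assumption \ref{ass: KF epsilonF} are eroded only by $\epsilon_P$; transferring to $\hat\theta^{(0)}$ costs a further $\epsilon_\theta(r_{\text{ini}})$, and requiring the episode-$0$ tightening plus the target slack $\epsilon_0$ to fit in the remaining room yields exactly \eqref{equ: initial feasibility}, certifying condition (i). The main obstacle is the sign-and-magnitude bookkeeping of the five tightening terms as $\hat\theta^{(e)}$ moves around a non-monotone uncertainty sequence: I must check that each transfer --- exploration versus exploitation sets, the episode-to-episode change of both $\hat\theta^{(e)}$ and $H^{(e)}$, and the two interpolation legs of \texttt{SafeTransit} --- collapses to the same anchor-membership inequality, with the slack accounting closing under (ii). Routing through $\theta_*$ (forced by the non-nested uncertainty sets) and pinning down the perturbation constant $c_1$ are the most delicate points, since looseness there would demand a stronger bound on $r^{(1)}$ than the stated one.
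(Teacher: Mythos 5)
Your overall architecture matches the paper's: reduce everything to exhibiting an anchor policy contained in every robustly safe set (the paper shows the whole set $\Omega_0=\Omega(\hat\theta^{(0)},\epsilon_x^{(0)}+\epsilon_0,\epsilon_u^{(0)})$ is contained in each $\Omega_\dagger^{(e)}$ and $\Omega^{(e)}$, which simultaneously handles QP feasibility and the existence of $\mb M_{\text{mid}}$), and your treatment of the ``Furthermore'' part via the length-$H^{(0)}$ DAP truncation $\mb M_F$ of $K_F$, with truncation cost $\epsilon_P$ and model-transfer cost $\epsilon_\theta(r_{\textup{ini}})$, is exactly the paper's argument.

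There is, however, a genuine gap in your recursive-feasibility step: the two-hop transfer $\hat\theta^{(0)}\to\theta_*\to\hat\theta^{(e)}$. First, the leg $\theta_*\to\hat\theta^{(e)}$ invokes $\|\hat\theta^{(e)}-\theta_*\|_F\le r^{(e)}$ from Corollary \ref{cor: estimation error of our algo}, which holds only with probability $1-p/(2e^2)$; Theorem \ref{thm: feasibility} is a deterministic statement, and the paper keeps it deterministic. Second, even on that event the slack accounting does not close: your route pays $\epsilon_{\hat\theta}(r^{(0)})+\epsilon_{\hat\theta}(r^{(e)})$ in transfer costs and must still leave $\epsilon_\theta(r^{(e)})$ of headroom for the target constraint $\epsilon_x^{(e)}$, while the available budget is only the $\epsilon_\theta(r^{(0)})$ already inside $\epsilon_x^{(0)}$ plus the slack $\epsilon_0$; with $r^{(1)}\le\epsilon_0/(c_1\sqrt{mn})$ you are short by up to $\epsilon_{\hat\theta}(r^{(e)})$, i.e.\ you would need roughly $r^{(1)}\le\epsilon_0/(2c_1\sqrt{mn})$ --- precisely the risk you flagged. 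The detour is also unnecessary: because \texttt{ModelEst} projects onto $\Theta_{\text{ini}}=\B(\hat\theta^{(0)},r_{\textup{ini}})$, one has $\|\hat\theta^{(e)}-\hat\theta^{(0)}\|_F\le r^{(0)}$ deterministically for every $e$, and the perturbation bound behind $\epsilon_{\hat\theta}$ (Lemma \ref{lem: define epsilon hat theta}) applies to any two points of $\Theta_{\text{ini}}$, not only to pairs involving $\theta_*$. The paper therefore transfers in one hop, $g_i^x(\mb M;\hat\theta^{(e)})\le g_i^x(\mb M;\hat\theta^{(0)})+\epsilon_\theta(r^{(0)})$, so that the single transfer cost is exactly cancelled by the $\epsilon_\theta(r^{(0)})$ term of $\epsilon_x^{(0)}$ and the slack $\epsilon_0\ge\epsilon_\theta(r^{(1)})\ge\epsilon_\theta(r^{(e)})$ covers only the target's model term. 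With that one replacement your argument goes through under the stated hypotheses and without any probabilistic qualifier.
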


Condition (i)  requires the initial policy set $\Omega_\dagger^{(0)}$ to contain a  policy that strictly satisfies the constraints on $g_i^x(\mb M; \hat \theta^{(0)})$. Condition (ii) requires monotonic  parameters in later phases, where the non-increasing estimation error $r^{(e)}$ requires an increasing number of exploration stages $T_D^{(e)}$. Conditions (i) and (ii) together establish the \textit{recursive feasibility}: if our algorithm is (strictly) feasible at the initial stage, then the algorithm is feasible in the future  under proper   parameters.

The last statement in Theorem \ref{thm: feasibility} provides conditions  \eqref{equ: initial feasibility} for strict \textit{initial feasibility}, which is based on the $\epsilon_F$-strictly safe policy  $u_t=-K_Fx_t$  in Assumption \ref{ass: KF epsilonF}. The  term $\epsilon_P$ captures the difference between the  policy $K_F$  and its approximate DAP. \eqref{equ: initial feasibility} requires large enough $H^{(0)}, T_D^{(0)}$ and small enough $\bar \eta^{(0)}, \Delta_M^{(0)}$. Further,  \eqref{equ: initial feasibility} implicitly requires a small enough initial uncertainty radius: we at least need $\epsilon_\theta(r_{\text{ini}})<\epsilon_{F,x}$. If the initial uncertainty set is too large  but a safe policy is available,  one can first explore the system with the safe policy to reduce the model uncertainty until the strict initial feasibility is obtained and then apply our algorithm. 


\begin{theorem}[Constraint Satisfaction]\label{thm: constraint satisfaction}
	Under the conditions in Theorem \ref{thm: feasibility} and Corollary \ref{cor: estimation error of our algo}, when $T^{(e+1)} \geq t_2^{(e)}$,  we have $u_t\in \U$ for all $t\geq 0$ w.p.1 and $x_t \in \X$ for all $t\geq 0$ w.p.  $1-p$, where $p$ is chosen  in Corollary \ref{cor: estimation error of our algo}.
\end{theorem}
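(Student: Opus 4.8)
The plan is to condition on a high-probability ``good event'' on which the true model lies inside every uncertainty set, reduce action safety to a deterministic pathwise bound, and reduce state safety to the slow-variation trick of Lemma \ref{lem: constraint satisfaction of slow varying DAP} combined with the perturbation bounds of Lemma \ref{lem: constraint tightening terms}, applied along the whole trajectory. Concretely, define $\mathcal E=\{\theta_*\in \Theta^{(e)}\text{ for all }e\geq 0\}$. By Assumption \ref{ass: Theta0 known}, $\theta_*\in\Theta^{(0)}=\Theta_{\textup{ini}}$ deterministically; for $e\geq 1$, Corollary \ref{cor: estimation error of our algo} gives $\|\hat\theta^{(e)}-\theta_*\|_F\leq r^{(e)}$ with probability at least $1-p/(2e^2)$, which forces $\theta_*\in\B(\hat\theta^{(e)},r^{(e)})\cap\Theta_{\textup{ini}}=\Theta^{(e)}$. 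A union bound over $e\geq 1$ yields $\Pb(\mathcal E)\geq 1-\sum_{e\geq 1}p/(2e^2)\geq 1-p$ since $\sum_{e\geq 1}e^{-2}=\pi^2/6<2$. State safety will be shown on $\mathcal E$; action safety will hold pathwise.

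For the action constraints, at every $t$ the implemented input is $u_t=\sum_{k=1}^H M_t[k]\hat w_{t-k}+\eta_t$, where $\mb M_t$ is the policy active at $t$. By Theorem \ref{thm: feasibility} and the construction in \texttt{RobustCE} and Algorithm \ref{alg: safe transit}, every such $\mb M_t$ lies in a robustly safe set whose action tightening satisfies $\epsilon_u\geq\epsilon_{\eta,u}(\bar\eta)$; along the interpolating paths of Algorithm \ref{alg: safe transit} this follows from convexity of $g_j^u$ and of $\Omega$, $\Omega\cap\Omega'$, so $g_j^u(\mb M_t)$ is controlled by its endpoint values. Since $\hat w_{t-k}=\Pi_{\W}(\cdot)\in\W$ by projection and $\|\eta_t\|_\infty\leq\bar\eta$, each row obeys $D_{u,j}^\top u_t\leq g_j^u(\mb M_t)+\epsilon_{\eta,u}(\bar\eta)\leq d_{u,j}$. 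This never uses correctness of the estimate, so $u_t\in\U$ for all $t$ with probability one.

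For the state constraints on $\mathcal E$, I would first verify the slow-variation hypothesis: the step sizes in Algorithm \ref{alg: safe transit} (increments $(\mb M_{\textup{mid}}-\mb M)/W_1$ and $(\mb M'-\mb M_{\textup{mid}})/W_2$, with $W_1,W_2$ chosen so each increment has Frobenius norm at most the relevant $\Delta_M$), together with the zero-variation constant phases, give $\|\mb M_t-\mb M_{t-1}\|_F\leq\Delta_M^{(e)}$ throughout. Next, on $\mathcal E$ we have $\theta_*\in\B(\hat\theta_t,r_t)$ for the estimate $\hat\theta_t$ active at $t$, and each $\mb M_t$ lies in the set \eqref{equ: CCE} carrying the full tightening, so the perturbation bound behind $\epsilon_\theta$ (Lemma \ref{lem: constraint tightening terms}) yields $g_i^x(\mb M_t;\theta_*)\leq g_i^x(\mb M_t;\hat\theta_t)+\epsilon_\theta(r_t)\leq d_{x,i}-\epsilon_{\eta,x}(\bar\eta_t)-\epsilon_H(H)-\epsilon_v(\Delta_M,H)$. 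Each $\mb M_t$ is therefore safe \emph{for the true model} with residual margin $\epsilon_H+\epsilon_v+\epsilon_{\eta,x}$. Using $x_t=A_*^Hx_{t-H}+\tilde x_t(\mb M_t;\theta_*)$ from Proposition \ref{prop: state formula with known system} (valid for all $t\geq 0$ by the conventions $x_0=0$ and $w_t=\hat w_t=0$ for $t<0$), the $\epsilon_H$ term absorbs the truncated tail $A_*^Hx_{t-H}$, the $\epsilon_v$ term absorbs the coupling of $\tilde x_t$ to the history policies $\mb M_{t-1},\mb M_{t-2},\dots$ exactly as in Lemma \ref{lem: constraint satisfaction of slow varying DAP}, and $\epsilon_{\eta,x}$ absorbs the contribution of the excitation noises entering through $\hat w$. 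Hence $D_xx_t\leq d_x$ for all $t$, i.e. $x_t\in\X$ on $\mathcal E$.

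The main obstacle is making this single slow-variation argument run \emph{across} episode boundaries, where the policy target, the estimation model, and the excitation level all change. During \texttt{SafeTransit} the disturbance reconstruction uses $\hat\theta_{\min}$ and $\bar\eta_{\min}$, and the path traverses two distinct robustly safe sets $\Omega,\Omega'$; I must check that at each step the active policy lies in a robustly safe set consistent with the model actually used to form $\hat w$ and with $\theta_*$ lying in the corresponding ball, so that $\epsilon_\theta(r_t)$ and the margins $\epsilon_{\eta,x},\epsilon_v$ remain simultaneously valid. Convexity of $\Omega$ and $\Omega\cap\Omega'$ (Step 1 stays in $\Omega$, Step 2 in $\Omega'$) and the deliberate use of the smaller-radius estimate $\hat\theta_{\min}$ are exactly what certify this uniform control, while the condition $T^{(e+1)}\geq t_2^{(e)}$ guarantees both transitions and both constant phases complete within each episode, so that the schedule $\{\mb M_t\}$ and the times $t_1^{(e)},t_2^{(e)}$ are well-defined and the bookkeeping above is exhaustive.
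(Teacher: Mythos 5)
Your proposal is correct and follows essentially the same route as the paper: the good event $\Esafe=\{\theta_*\in\bigcap_e\Theta^{(e)}\}$ with a union bound over Corollary \ref{cor: estimation error of our algo} for the $1-p$ state guarantee, a pathwise action argument from the projection onto $\W$, convexity of $\Omega,\Omega'$ along the transit path, and the $\epsilon_{\eta,u}$ tightening, and, on $\Esafe$, an inductive state-constraint decomposition in which $\epsilon_H,\epsilon_v,\epsilon_{\theta},\epsilon_{\eta,x}$ each absorb the corresponding error term and the conditions $W_1\geq H'$, $\hat\theta_{\min}$, $\bar\eta_{\min}$, and parameter monotonicity control the look-back window across episode boundaries. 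The paper packages step (iii) as a general lemma (Lemma \ref{lem: general constraint satisfaction theta*}) followed by a three-case verification over the transition and constant phases, but this is the same argument you outline.
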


The control constraint satisfaction is always ensured by the projection onto $\W$ in \eqref{equ: DAP unknown model, etat}. Besides, we can show that the state constraints are satisfied if the true model is  inside the confidence sets $\Theta^{(e)}$ for all $e\geq 0$, whose  probability is at least $1-p$  by Corollary \ref{cor: estimation error of our algo}.

\subsection{Regret Bound}
Next, we show that our algorithm can achieve a  $\tilde O(T^{2/3})$ regret bound together with  feasibility and constraint satisfaction under proper conditions. Further, we explain the reasons behind the pure exploitation phase. 


\begin{myspecial}
	\red{what's $T^1$'s dependence on epsilon0, or at least epsilonF?}
\end{myspecial}
\begin{theorem}[Regret bound]\label{thm: regret bound} Suppose $\epsilon_\theta(r_\textup{ini})\leq \epsilon_{F,x}/4$.
For any $0<p<1/2$, with parameters $T^{(1)}\geq \tilde O((\sqrt{n}m +n\sqrt m)^{3})$, $T^{(e+1)}=2T^{(e)}$, $T_D^{(e)}=(T^{(e+1)}-T^{(e)})^{2/3}$, $\Delta_M^{(e)}=O(\frac{\epsilon_F^x}{\sqrt{mn	H^{({e})} } (T^{(e+1)})^{1/3}})$,  $\bar\eta^{(e)}= O(\min( \frac{\epsilon^x_F}{ \sqrt m }, \epsilon_F^u))$, and  $H^{(e)}\geq O(\log(\max(T^{(e+1)},\frac{ \sqrt n}{\min(\epsilon_F)})))$, Algorithm \ref{alg: online algo}
	is feasible and satisfies $\{u_t\in\U\}_{t\geq 0} $ w.p.1 and $\{x_t\in\X\}_{t\geq 0} $ w.p. $(1-p)$. Further, w.p. $(1-2p)$, 
	$$\textup{Regret}\leq \tilde O((n^2m^{{2}}+n^{2.5}m^{{1.5}})\sqrt{mn+k_x+k_u}T^{2/3}).$$
\end{theorem}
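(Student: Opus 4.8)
The plan is to benchmark against the optimal safe DAP $\mb M^*$ of \eqref{equ: optimal DAP} rather than $K^*$ directly, since Section \ref{sec: prelim} gives $J(\mb M^*)-\min_{K\in\K}J(K)\le \tilde O(1/T)$ for $H=O(\log T)$; thus it suffices to bound $\sum_t l(x_t^{\mathcal A},u_t^{\mathcal A})-T\,J(\mb M^*)$ up to an additive $\tilde O(1)$ term. First I would discharge feasibility and constraint satisfaction by verifying that the schedule $T^{(e+1)}=2T^{(e)}$, $T_D^{(e)}=(T^{(e)})^{2/3}$, together with the stated $\Delta_M^{(e)},\bar\eta^{(e)},H^{(e)}$ and $\epsilon_\theta(r_\textup{ini})\le\epsilon_{F,x}/4$, meets conditions (i)--(ii) of Theorem \ref{thm: feasibility}: large $H^{(e)},T_D^{(e)}$ and small $\bar\eta^{(e)},\Delta_M^{(e)}$ make \eqref{equ: initial feasibility} hold (strict initial feasibility), while the monotonicity of $(H^{(e)})^{-1},\sqrt{H^{(e)}}\Delta_M^{(e)},\bar\eta^{(e)},r^{(e)}$ (with $r^{(e)}$ from Corollary \ref{cor: estimation error of our algo}) gives recursive feasibility. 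Theorem \ref{thm: constraint satisfaction} then yields $\{u_t\in\U\}$ w.p.\ $1$ and, on the event $\mathcal{E}_\theta=\{\theta_*\in\Theta^{(e)}\ \forall e\}$ of probability $\ge 1-p$ (union bound over the per-episode $1-p/(2e^2)$ guarantees, using $\sum_{e\ge1}1/e^2<2$), $\{x_t\in\X\}$. The remaining work is a cost bound holding on $\mathcal{E}_\theta$, where the second $p$ of the $1-2p$ will come from a martingale confidence failure.

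On $\mathcal{E}_\theta$ I would split $\sum_t(l(x_t,u_t)-J(\mb M^*))$ over episodes and phases into five groups: (a) a martingale deviation $\sum_t\big(l(\tilde x_t,u_t)-\E[l(\tilde x_t,u_t)\mid\mathcal{F}_{t-1}]\big)$; (b) the DAP truncation error $\sum_t|l(x_t,u_t)-l(\tilde x_t,u_t)|$ from the $A_*^Hx_{t-H}$ tail in Proposition \ref{prop: state formula with known system}; (c) the policy suboptimality $\sum_t\big(f(\mb M_t;\theta_*)-f(\mb M^*;\theta_*)\big)$; (d) the excitation overhead during the $T_D^{(e)}$ exploration steps; and (e) the SafeTransit overhead. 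Group (a) is $\tilde O(\sqrt T)$ by a Freedman/Azuma bound, since each term is $O(x_{\max}^2+u_{\max}^2)$ and $\mb M_t$ is fixed given the past. Group (b) is $\tilde O(1)$ because $\|A_*^H\|_2\le\kappa(1-\gamma)^{H}$ with $H^{(e)}=O(\log T)$ makes the tail geometrically negligible; the first episode (of $T$-independent length $T^{(1)}$) likewise contributes only an additive term subsumed by the prefactor.

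The heart is group (c), which I would control by a value-perturbation argument for the QP \eqref{equ: CCE}. Both $f(\cdot;\theta)$ and each constraint map $g_i^x(\cdot;\theta)$ are convex in $\mb M$ over the bounded polytope $\M_H$ and Lipschitz in $\theta$, so $|f(\mb M;\hat\theta)-f(\mb M;\theta_*)|=O(\sqrt{mn}\,r^{(e)})$ and the constraint maps differ by $O(\epsilon_\theta(r^{(e)}))$. The cost of shrinking the feasible set by the extra margins $\epsilon_\theta(r^{(e)})+\epsilon_v(\Delta_M^{(e)},H^{(e)})$ (plus $\epsilon_{\eta,x},\epsilon_{\eta,u}$ in Phase 1) I would bound using the strictly feasible $\mb M_F$ of Assumption \ref{ass: KF epsilonF}: by convexity of $g_i^x$, the combination $(1-\alpha)\mb M^*+\alpha\mb M_F$ lies in $\Omega(\hat\theta,\epsilon_x,\epsilon_u)$ once $\alpha\ge c\,(\epsilon_\theta(r^{(e)})+\epsilon_v)/(\epsilon_{F,x}-\epsilon_H)$, and by convexity of $f$ its cost exceeds $f(\mb M^*;\theta_*)$ by only $O(\alpha)$. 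Comparing the RobustCE minimizer $\mb M_t$ of $f(\cdot;\hat\theta)$ against this feasible point and converting $\hat\theta$-cost back to $\theta_*$-cost gives a per-step bound $f(\mb M_t;\theta_*)-f(\mb M^*;\theta_*)=\tilde O\big(\sqrt{mn}\,r^{(e)}/\epsilon_F+\epsilon_v/\epsilon_F\big)$. Since Corollary \ref{cor: estimation error of our algo} gives $r^{(e)}=\tilde O\big((n+\sqrt{mn})/(\sqrt{T_D^{(e-1)}}\,\bar\eta)\big)=\tilde O\big((n+\sqrt{mn})/(T^{(e)})^{1/3}\big)$ and the stated $\Delta_M^{(e)}$ makes $\epsilon_v=O(\epsilon_{F,x}/(T^{(e+1)})^{1/3})$, the Phase-2 per-step suboptimality is $\tilde O\big(\sqrt{mn}(n+\sqrt{mn})/(\epsilon_F(T^{(e)})^{1/3})\big)$; summed over the $\approx T^{(e)}$ Phase-2 steps and the $O(\log T)$ geometrically growing episodes it is $\tilde O(T^{2/3})$, with Phase 1 contributing the same order over its shorter window.

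Finally I would bound the remaining groups and assemble. The excitation overhead (d) is $O\big((\bar\eta^{(e)})^2+\bar\eta^{(e)}/\epsilon_F\big)$ per step over $T_D^{(e)}=(T^{(e)})^{2/3}$ steps, i.e.\ $\tilde O((T^{(e)})^{2/3})$ per episode (with $\bar\eta^{(e)}$ constant in $T$, feeding only the dimension prefactor), and the SafeTransit overhead (e) costs $O(x_{\max}^2+u_{\max}^2)$ over $W_1+W_2=\tilde O(\sqrt{mn}\,(T^{(e)})^{1/3})$ steps, hence the lower-order $\tilde O(T^{1/3})$. Adding (a)--(e) and summing the geometric episode series gives the $T^{2/3}$ rate; tracking the $(n+\sqrt{mn})$ factor in $r^{(e)}$, the $\sqrt{mn}$ in the tightening coefficients $c_1$--$c_3$ (Lemma \ref{lem: constraint tightening terms}), the cost Lipschitz constant, and the $\sqrt{mn+k_x+k_u}$ from the number of constraints yields the prefactor $(n^2m^2+n^{2.5}m^{1.5})\sqrt{mn+k_x+k_u}$. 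The exponent $2/3$ is exactly the balance of exploration cost ($\propto T_D$) against tightening-induced suboptimality ($\propto T\cdot r\propto T/\sqrt{T_D}$), forcing $T_D^{(e)}=(T^{(e)})^{2/3}$; the pure-exploitation Phase 2 confines the $O(\bar\eta/\epsilon_F)$ penalty to the short $T_D$ window while letting the freshly improved $\hat\theta^{(e+1)}$ and the relaxed, $\epsilon_\eta$-free constraints govern the long remainder of each episode, which is what keeps (c) at $\tilde O(T^{2/3})$. I expect the main obstacle to be this group-(c) bound: the Slater-point argument must simultaneously absorb model error inside the constraint maps $g_i^x(\cdot;\hat\theta)$ and in $f(\cdot;\hat\theta)$, the policy-variation margin $\epsilon_v$ demanded by SafeTransit (Lemma \ref{lem: constraint satisfaction of slow varying DAP}), and the nonlinearity of the projected disturbance estimates $\hat w_t$, all while keeping the $1/\epsilon_F$ dependence explicit and uniform across episodes.
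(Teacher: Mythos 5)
Your overall skeleton matches the paper's proof: the dominant term is the QP-perturbation bound on $f(\mb M^{(e)};\theta_*)-f(\mb M^*;\theta_*)$, handled by a Slater-point argument off the strictly feasible policy of Assumption \ref{ass: KF epsilonF} together with the $O(\sqrt{mn}\,r)$ Lipschitz dependence of cost and constraints on $\theta$ (the paper packages this as Lemma \ref{lem: perturb cost}, an extension of Proposition 2 of \cite{li2020online}; your convex-combination $(1-\alpha)\mb M^*+\alpha \mb M_F$ is the same idea), and the feasibility/safety part and the exploration/transition overheads are verified the same way. However, there are two concrete gaps in how you account for the non-dominant terms. First, your group (b) is wrong as stated: the actual trajectory is driven by the \emph{estimated} disturbances $\hat w_{t-k}$ from \eqref{equ: DAP unknown model, etat}, while $\tilde x_t(\mb M;\theta_*)$ in Proposition \ref{prop: state formula with known system} and the function $f(\mb M;\theta_*)$ are built from the \emph{true} $w_{t-k}$. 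The gap $\|x_t-\hat x_t\|_2$, $\|u_t-\hat u_t\|_2$ therefore contains not only the geometrically small tail $\kappa(1-\gamma)^{H}$ but also a term of order $\sqrt{mn}\,r^{(e)}$ per stage (Lemma \ref{lem: what - w}), which sums to $\tilde O(nm\sqrt{m+n}\,T^{2/3})$ — the paper's Part i — and is nowhere accounted for in your five groups; your claim that this contribution is $\tilde O(1)$ is false. It happens to be subdominant in the dimension prefactor, so the final bound survives, but the accounting as written does not close.

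Second, your martingale term (a) is not set up correctly. If you condition on $\F_{t-1}$, then $\tilde x_t=\sum_{k\ge 1}\Phi_k^x w_{t-k}$ and $u_t$ are already $\F_{t-1}$-measurable, so the increment is identically zero and the centering term is $l(\tilde x_t,u_t)$ itself, not $f(\mb M_t;\theta_*)$; to get $\E[l(\hat x_t,\hat u_t)\mid\cdot]=f(\mb M^{(e)};\theta_*)$ one must condition on a $\sigma$-algebra that excludes the last $2H^{(e)}$ disturbances, and then consecutive increments share disturbances and the sequence is no longer a martingale difference sequence, so Azuma/Freedman does not apply directly. The paper resolves this by subsampling the stages into $2H^{(e)}$ residue classes modulo $2H^{(e)}$, applying Azuma--Hoeffding to each (genuinely martingale) subsequence, and union-bounding, which costs an extra $\sqrt{H^{(e)}}=\tilde O(1)$ factor; some such blocking device is needed and is missing from your argument. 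Everything else — the benchmark shift to $\mb M^*$ (the paper's Part iv), the condition verification against Theorems \ref{thm: feasibility} and \ref{thm: constraint satisfaction}, and the balancing $T_D^{(e)}=(T^{(e)})^{2/3}$ — agrees with the paper.
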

\nbf{On $r_\text{ini}$.} As discussed after Theorem \ref{thm: feasibility}, the initial feasibility requires a small enough $r_\text{ini}$, otherwise more exploration is needed before applying our algorithm. For technical simplicity, Theorem \ref{thm: regret bound}  assumes $\epsilon_\theta(r_\text{ini})\leq \epsilon_{F,x}/4$ and establishes other conditions  accordingly. 

\nbf{On parameters.} Theorem \ref{thm: regret bound}  provides choices of parameters that ensure feasibility, safety, and the regret bound. Here, we choose exponentially increasing episode lengths $T^{(e)}$, and explore for $(T^{(e)})^{2/3}$ stages at each episode with a small enough constant excitation level $\bar \eta^{(e)}$. We select  large enough memory lengths $H^{(e)}\geq O(\log(T^{(e)}))$ and small enough variation budgets $\Delta_M^{(e)}\leq O((T^{(e+1)})^{-1/3})$.

\nbf{On  regret.} Though our regret bound $\tilde O(T^{2/3})$ is worse than the $\tilde O(\sqrt T)$ regret bound for \textit{unconstrained} LQR, it has the same order in terms of $T$ with the robust learning of  unconstrained  LQR in \cite{dean2018regret}. This motivates future work on   lower bounds of learning-based control with safety/robustness guarantees.

\nbf{Discussions on the pure exploitation phase.} Algorithm \ref{alg: online algo} includes a pure exploitation phase with no  excitation noise at each episode, which is not present in the unconstrained algorithms \citep{dean2018regret,mania2019certainty,simchowitz2020naive}. This phase is motivated by  our regret analysis: the algorithm can still work without this phase but will generate a worse regret bound. Specifically, consider a robust CE policy \eqref{equ: CCE} with respect to $\bar \eta$ and uncertainty radius $r$, the regret of this policy per stage can be roughly bounded by $\tilde O(\bar \eta+r)$  in the supplementary (we omit $\Delta_M$ here for simplicity). With no pure exploitation phases, the regret in episode $e$ can be roughly bounded by $\tilde O(T^{(e)}(\bar \eta^{(e)}+ r^{(e)}))$, where $r^{(e)}=\tilde O(\frac{1}{\sqrt{T^{(e-1)}} \bar \eta^{(e-1)}})$ by Corollary \ref{cor: estimation error of our algo} (hiding $n,m$). Therefore, the  total regret  can be  bounded by $\sum_e (\frac{1}{\sqrt{T^{(e-1)}} \bar \eta^{(e-1)}}+\bar \eta^{(e)}) T^{(e)}\approx \sum_e (\frac{1}{\sqrt{T^{(e)}} \bar \eta^{(e)}}+\bar \eta^{(e)}) T^{(e)} $, which is minimized at $\frac{1}{\sqrt{T^{(e)}} \bar \eta^{(e)}}=\bar \eta^{(e)}=(T^{(e)})^{-1/4}$, leading to a worse regret bound $\tilde O(T^{3/4})$.

Notice that $\bar \eta^{(e)}$ chosen in Theorem \ref{thm: regret bound} is a constant, so our algorithm suffers slightly larger stage regret during the Phase 1 (exploration \& exploitation) compared with the no-pure-exploitation algorithm described above with $\eta^{(e)}=(T^{(e)})^{-1/4}\to 0$.  Nevertheless,  by constantly refining the models and reducing $\Delta_M^{(e)}$, the performance during the  phase 1 still improve over episodes.
\begin{myspecial}
    \red{please  add more intuitive explanation to the above discussion.}
\end{myspecial}

\begin{myspecial}
\red{caveat: improve the coefficients of the cosntraint tightening terms}
\end{myspecial}
\vspace{-3pt}

\section{Discussions and  Future Work}
\nbf{Comparing with linear dynamical policies.} Though we only considered linear static policies  as our regret benchmark, it is straightforward to include linear dynamical policies as the  benchmark and achieve similar regret bounds under proper conditions (see e.g., \cite{plevrakis2020geometric}).

\nbf{Comparing with RMPC in \cite{mayne2005robust}.} \cite{mayne2005robust} propose a tube-based RMPC algorithm for constrained LQR with perfect model information. Though this algorithm generates piecewise-affine policies, we are able to show that the infinite-horizon averaged cost of this algorithm can be characterized by $J(K)$, where $K$ is a pre-fixed  safe linear policy for tube construction in the tube-based RMPC (for more details, see \cite{mayne2005robust}).\footnote{Here, $K$ is required to be safe on the linear system starting from 0, but RMPC can be safe to implement starting from nonzero $x_0$.} In other words, our current regret benchmark $K^*$ performs not worse than RMPC in \cite{mayne2005robust} in terms of the infinite-horizon averaged cost, so the regret of our Algorithm \ref{alg: online algo} remains unchanged by including this RMPC  to the benchmark policy set. The major strength of RMPC in \cite{mayne2005robust} compared with our algorithm is the initial feasibility: 
RMPC can safely control the system starting from more distant $x_0$ while our algorithm can only guarantee safety for small $x_0$. More discussions and  formal proofs  are provided in Appendix \ref{append: discuss}.







 
\nbf{Future work.} There are many interesting future directions, e.g., (i) improving practical performance  by reducing the constraint tightenings,  (ii)  regret analysis compared with other  RMPC algorithms, (iii) algorithm design and analysis for large initial states, (iv)  fundamental regret lower bounds, (v) safe adaptive control with performance guarantees for  nonlinear systems, etc.

 \bibliographystyle{unsrtnat}
\bibliography{citation4safecontrol}
\newpage
 \vspace{20pt}

\appendix
\section*{Appendices}
The appendices  include the proofs for the theoretical results and more discussions  for the paper. 
\begin{itemize}
	\item Appendix \ref{append: prep} provides necessary lemmas that will be used throughout the appendices and defines the constraint-tightening factors promised in Lemma \ref{lem: constraint tightening terms}. 

\item Appendix \ref{append: estimate error} focuses on the estimation error and provides proofs for Theorem \ref{thm: general estimation error bdd} and Corollary \ref{cor: estimation error of our algo}.

\item Appendix \ref{append: feasible} studies  feasibility and provides a proof for Theorem \ref{thm: feasibility}.

\item Appendix \ref{append: constraint satisfaction} focuses on the constraint satisfaction and provides a proof for Theorem \ref{thm: constraint satisfaction}, which also includes a proof for Lemma \ref{lem: constraint tightening terms}.

\item Appendix \ref{append: regret} analyzes the regret and proves Theorem \ref{thm: regret bound}.

\item Appendix \ref{append: discuss} provides additional discussions on RMPC and non-zero $x_0$.

\item Appendix \ref{append: add proofs} provides proofs to the technical lemmas used in the appendices \ref{append: prep}-\ref{append: regret}.
\end{itemize}

\nbf{Additional Notations.}
Define $\X=\{x: D_x x\leq d_x\}$ and $\U=\{u: D_u u \leq d_u\}$.
Let $\upsilon_{\min}(A)$ and $\upsilon_{\max}(A)$ denote the minimum and the maximum eigenvalue of a symmetric matrix $A$ respectively. For two symmetric matrices $X$ and $Y$, we write $X\leq Y$ if $Y-X$ is positive semi-definite, we write $X<Y$ if $Y-X$ is positive definite. For two vectors $x, y \in \R^n$, we write $x\leq y$ is $(y-x)_i\geq 0$ for $1\leq i \leq n$, i.e. $x$ is smaller than $y$ elementwise. Consider a $\sigma$-algebra $\F_t$ and a random vector $y_t\in \R^n$, we write $y_t\in \F_t$ if the random vector $y_t$ is measurable in $\F_t$. We let $I_n$ denote the identity matrix in $\R^{n\times n}$. Denote an aggregated vector $z_t=(x_t^\top, u_t^\top)^\top$ for notational simplicity. Define $z_{\max}=\sqrt{x_{\max}^2+u_{\max}^2}$ as the maximum $l_2$ norm of $z_t$ for any $x_t,u_t$ satisfying the constraints in \eqref{equ: J(pi)}.  We use ``a.s." as an abbreviation for ``almost surely". Finally, for memory lengths $H_1<H_2$, notice that the set $\M_{H_1}$ can be viewed as a subset of $\M_{H_2}$ since we can append 0 matrices to any $\mb M\in \M_{H_1}$ to generate a corresponding matrix in $\M_{H_2}$, so we will slightly abuse the notation and write $\mb M\in \M_{H_2}$ for any  $\mb M\in \M_{H_1}$ with $H_1<H_2$.

\section{Preparations: State Approximation and Constraint-tightening Terms}\label{append: prep}
This section  provides  results that will be useful throughout the rest of the appendices. Specifically, the first subsection provides a state approximation lemma and a constraint-decomposition corollary for the approximate DAP, and the second subsection defines and discusses the constraint-tightening terms in Lemma \ref{lem: constraint tightening terms} and \eqref{equ: CCE} based on the upper bounds of the constraint decomposition terms.

\subsection{State Approximation and Constraint Decompositions}
We consider a more general form of approximate DAP below than that in \eqref{equ: def DAP}, i.e., an approximate DAP with time-varying policy matrices $\mb M_t$, time-varying excitation levels $\bar \eta_t$, and time-varying model estimations $\hat \theta_t$.
\begin{equation}\label{equ: time varying DAP uncertain}
	u_t=\sum_{t=1}^{H_t} M_t[k]\hat w_{t-k}+\eta_t, \quad \hat w_t=\Pi_{\mathbb W}(x_{t+1}-\hat \theta_t z_t), \quad \|\eta_t\|_\infty \leq \bar \eta_t, \quad t\geq 0,
\end{equation}
where $\mb M_t\in \M_{H_t}$ and $\{H_t\}_{t\geq 0}$ is non-decreasing.

When implementing the time-varying approximate DAP \eqref{equ: time varying DAP uncertain} to the system $x_{t+1}=A_* x_t +B_* u_t +w_t$, we have the following state approximation lemma.

\begin{lemma}[State approximation under time-varying approximate DAP]\label{lem: xt formula DAP uncertain}
	When implementing the time-varying approximate DAP \eqref{equ: time varying DAP uncertain} to the system $x_{t+1}=A_* x_t +B_* u_t +w_t$, we have the following state approximation result: 
	\begin{align*}
		x_t&\!=\!A_*^{H_t}x_{t\!-\!H_t}\!+\!\sum_{k=2}^{2H_t}\sum_{i=1}^{H_t}A_*^{i\!-\!1}B_* M_{t\!-\!i}[k\!-\!i]\hat w_{t-k}\one_{(1\leq k-i\leq H_{t\!-\!i})} \!+\! \sum_{i=1}^{H_t}A_*^{i-1}w_{t\!-\!i}\!+\!\sum_{i=1}^{H_t}A_*^{i\!-\!1}B_* \eta_{t\!-\!i}
	\end{align*}
\end{lemma}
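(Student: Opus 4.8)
The plan is to establish the identity by directly unrolling the closed-loop dynamics and then substituting the policy and reindexing. First I would iterate the system equation $x_{t+1}=A_*x_t+B_*u_t+w_t$ backwards $H_t$ times. A straightforward induction gives the standard linear-system expansion
\begin{equation*}
	x_t = A_*^{H_t}x_{t-H_t} + \sum_{i=1}^{H_t} A_*^{i-1}\bigl(B_* u_{t-i}+w_{t-i}\bigr),
\end{equation*}
which isolates the ``carry-over'' term $A_*^{H_t}x_{t-H_t}$ together with the weighted history of inputs and disturbances. This step needs nothing beyond the recursion and already produces the term $\sum_{i=1}^{H_t}A_*^{i-1}w_{t-i}$ appearing verbatim in the claimed formula.

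Next I would substitute the approximate DAP \eqref{equ: time varying DAP uncertain}, namely $u_{t-i}=\sum_{k=1}^{H_{t-i}}M_{t-i}[k]\hat w_{t-i-k}+\eta_{t-i}$, into the expansion. The excitation part contributes $\sum_{i=1}^{H_t}A_*^{i-1}B_*\eta_{t-i}$, matching the last term of the statement, while the disturbance-feedback part yields the double sum $\sum_{i=1}^{H_t}\sum_{k=1}^{H_{t-i}}A_*^{i-1}B_*M_{t-i}[k]\hat w_{t-i-k}$. Throughout I rely on the initialization convention $w_t=\hat w_t=0$ for $t<0$ so that boundary terms vanish and no separate treatment of small $t$ is needed; this is the analogue of how Proposition \ref{prop: state formula with known system} is stated for the time-invariant case.

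The one place that needs care is the reindexing of the double sum. I would substitute $k'=i+k$ so that $\hat w_{t-i-k}=\hat w_{t-k'}$ and $M_{t-i}[k]=M_{t-i}[k'-i]$. The original ranges $1\le i\le H_t$ and $1\le k\le H_{t-i}$ translate into $2\le k'\le 2H_t$ together with the constraint $1\le k'-i\le H_{t-i}$, which is exactly what the indicator $\one_{(1\le k-i\le H_{t-i})}$ encodes after renaming $k'\to k$. The upper limit $k'\le 2H_t$ is where the hypothesis that $\{H_t\}$ is non-decreasing enters: since $H_{t-i}\le H_t$, we have $k'=i+k\le H_t+H_{t-i}\le 2H_t$, so capping the outer sum at $2H_t$ discards no admissible term. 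Collecting the four pieces then gives the stated identity.

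I expect the reindexing and the verification of the summation bounds — in particular checking that the monotonicity of $\{H_t\}$ makes $2H_t$ the correct cap and that the indicator faithfully restores the inner range — to be the only nontrivial bookkeeping; the remainder is the routine unrolling already carried out for the known-model, time-invariant setting in Proposition \ref{prop: state formula with known system}.
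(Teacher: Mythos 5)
Your proof is correct and is exactly the argument the paper has in mind: the paper omits the proof, stating only that the lemma is "a straightforward extension from Proposition \ref{prop: state formula with known system}," and your unrolling-substitution-reindexing chain (including the observation that monotonicity of $\{H_t\}$ justifies the cap $k\le 2H_t$ and that the indicator $\one_{(1\le k-i\le H_{t-i})}$ encodes the inner summation range) is precisely that extension carried out explicitly.
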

The  lemma above is a straightforward extension from Proposition  \ref{prop: state formula with known system} reviewed in Section \ref{sec: prelim} for the case with perfect model information, thus the proof is omitted.

To simplify the exposition, we introduce the following notations for time-varying  DAP.
\begin{align}
	&\tilde \Phi_k^x(\mb M_{t-H_t: t};\theta)= A^{k-1}\one_{(k\leq H_t)}+ \sum_{i=1}^{H_t} A^{i-1} B M_{t-i}[k-i]  \one_{(1\leq k-i \leq H_t)}, \quad \forall \, 1\leq k \leq 2H_t,\\
	& \tilde g_i^x(\mb M_{t-H_t: t-1};\theta)\!=\!\sup_{\hat w_{k}\in \W}D_{x,i}^\top \sum_{k=1}^{2H_t} \tilde \Phi_k^x(\mb M_{t-H_t: t-1};\theta) \hat w_{t-k}\!=\! \sum_{k=1}^{2H_t} \|D_{x,i}^\top\tilde \Phi_k^x(\mb M_{t-H_t: t-1};\theta) \|_1 w_{\max},
\end{align}
where $1\leq i \leq k_x$ and we define  $\mb M_t=\mb M_0$ for $t\leq 0$ for notational simplicity. Notice that when $\mb M_t=\mb M$ and $H_t=H$ (the time-invariant case), the definitions of $\tilde \Phi_k^x(\mb M_{t-H_t: t};\theta)$ and $\tilde g_i^x(\mb M_{t-H_t: t-1};\theta)$  above reduce to the definitions of $ \Phi_k^x(\mb M;\theta)$ and $ g_i^x(\mb M;\theta)$ respectively in Section \ref{sec: prelim}. 

Based on Lemma \ref{lem: xt formula DAP uncertain} and the notations defined above, we can obtain the following corollary on the  decompositions of the state constraints $D_x x_t$ and action constraints $D_u u_t$. The decompositions are crucial when defining our constraint-tightening terms and developing the constraint satisfaction guarantees. 

\begin{corollary}[Constraint decomposition]\label{cor: state constraint decomposition}
	When implementing the time-varying approximate DAP \eqref{equ: time varying DAP uncertain} to the system $x_{t+1}=A_* x_t +B_* u_t +w_t$, for each $1\leq i \leq k_x$ and $1\leq j \leq k_u$, we have the following decompositions:
	\begin{align*}
		D_{x,i}^\top x_t \leq \ & \underbrace{g_i^x(\mb M_t; \hat \theta_t^g)}_{\text{estimated state constraint function}}+ \underbrace{(g_i^x(\mb M_t;  \theta_*)-g_i^x(\mb M_t; \hat \theta_t^g))+ \sum_{k=1}^{H_t} D_{x,i}^\top A_*^{k-1}(w_{t-k}-\hat w_{t-k})}_{\text{model estimation errors}}\\
		& +\underbrace{\sum_{i=1}^{H_t} D_{x,i}^\top A_*^{i-1}B_*\eta_{t-i}}_{\text{excitation errors on the state}}+\underbrace{D_{x,i}^\top A_*^{H_t} x_{t-H_t}}_{\text{history truncation errors}}+\underbrace{ (\tilde g_i^x(\mb M_{t-H_t:t-1};  \theta_*)-g_i^x(\mb M_t;  \theta_*))}_{\text{policy variation errors}},\\
		D_{u,j}^\top u_t \leq \ & \underbrace{g_j^u(\mb M_t)}_{\text{action constraint function}}+\underbrace{ D_{u,j}^\top\eta_{t}}_{\text{excitation error on the action}}.
	\end{align*}
	where  $\hat \theta_t^g$ is an estimated model used to approximate the state constraint function, and we allow $\hat \theta_t^g\not = \hat \theta_t$ for generality.
\end{corollary}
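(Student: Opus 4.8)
The plan is to read off both inequalities directly from the state formula of Lemma \ref{lem: xt formula DAP uncertain} together with the policy definition \eqref{equ: time varying DAP uncertain}; the statement is an exact rearrangement, so there is no probabilistic content and essentially only two suprema over $\W$ are taken. The three genuine ingredients are: completing the low-order disturbance terms of the state formula into the approximate state associated with $\tilde\Phi_k^x$; using the projection $\hat w_t=\Pi_\W(\cdot)\in\W$ to bound a linear form by its worst case over $\W$; and a telescoping identity that splits $\tilde g_i^x$ into the labeled pieces.

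First I would start from Lemma \ref{lem: xt formula DAP uncertain} and add and subtract $\sum_{k=1}^{H_t}A_*^{k-1}\hat w_{t-k}$. Setting $\tilde x_t=\sum_{k=1}^{2H_t}\tilde\Phi_k^x(\mb M_{t-H_t:t-1};\theta_*)\hat w_{t-k}$ and expanding the definition of $\tilde\Phi_k^x$, one sees that $\tilde x_t$ equals exactly $\sum_{k=1}^{H_t}A_*^{k-1}\hat w_{t-k}$ plus the double sum $\sum_{k=2}^{2H_t}\sum_{i=1}^{H_t}A_*^{i-1}B_*M_{t-i}[k-i]\hat w_{t-k}\one_{(1\le k-i\le H_t)}$ appearing in the lemma. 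At this point I would observe that the lemma's indicator $\one_{(1\le k-i\le H_{t-i})}$ and the indicator $\one_{(1\le k-i\le H_t)}$ in $\tilde\Phi_k^x$ act identically on $M_{t-i}[k-i]$: since $\{H_t\}$ is non-decreasing and zero-padding ($\M_{H_1}\subset\M_{H_2}$) gives $M_{t-i}[k-i]=0$ whenever $k-i>H_{t-i}$. Combining the two disturbance sums (and renaming the index in $\sum_{i=1}^{H_t}A_*^{i-1}w_{t-i}$) then yields the exact identity
\[
x_t=\tilde x_t+A_*^{H_t}x_{t-H_t}+\sum_{k=1}^{H_t}A_*^{k-1}(w_{t-k}-\hat w_{t-k})+\sum_{i=1}^{H_t}A_*^{i-1}B_*\eta_{t-i}.
\]

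Next I would left-multiply by $D_{x,i}^\top$ and bound the approximate-state term by its worst case: because every $\hat w_{t-k}=\Pi_\W(\cdot)$ lies in $\W$, we have $D_{x,i}^\top\tilde x_t\le\sup_{\hat w_k\in\W}D_{x,i}^\top\tilde x_t=\tilde g_i^x(\mb M_{t-H_t:t-1};\theta_*)$, which is precisely the defining expression of $\tilde g_i^x$. Finally I would apply the telescoping identity
\[
\tilde g_i^x(\mb M_{t-H_t:t-1};\theta_*)=g_i^x(\mb M_t;\hat\theta_t^g)+\big(g_i^x(\mb M_t;\theta_*)-g_i^x(\mb M_t;\hat\theta_t^g)\big)+\big(\tilde g_i^x(\mb M_{t-H_t:t-1};\theta_*)-g_i^x(\mb M_t;\theta_*)\big),
\]
and regroup the resulting six terms according to the labels in the statement (estimated constraint function, model estimation errors, excitation errors, history truncation, policy variation). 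The action inequality is immediate: from \eqref{equ: time varying DAP uncertain}, $D_{u,j}^\top u_t=D_{u,j}^\top\sum_{k=1}^{H_t}M_t[k]\hat w_{t-k}+D_{u,j}^\top\eta_t$, and bounding the first term over $\hat w_k\in\W$ gives $g_j^u(\mb M_t)+D_{u,j}^\top\eta_t$.

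The only step demanding care, and hence the main (minor) obstacle, is the reconciliation of the two indicators, i.e. verifying that the double sum produced by $\tilde\Phi_k^x$ (which carries the single memory length $H_t$) coincides term-by-term with the double sum in Lemma \ref{lem: xt formula DAP uncertain} (which carries the memory length $H_{t-i}$ active when each past policy was applied). Monotonicity of $\{H_t\}$ together with the zero-padding convention is exactly what makes these agree; once that is settled, every subsequent manipulation is an exact equality except for the two suprema over $\W$, which produce the two inequalities in the statement.
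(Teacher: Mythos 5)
Your proposal is correct and follows essentially the same route as the paper: the paper's own proof of this corollary is a one-line appeal to Lemma \ref{lem: xt formula DAP uncertain}, the definitions of $g_i^x, g_j^u, \tilde g_i^x$, and H\"{o}lder's inequality applied to $\hat w_{t-k}\in\W$, with the full add-and-subtract/regrouping computation you describe appearing verbatim in the proof of the general constraint satisfaction lemma (Lemma \ref{lem: general constraint satisfaction theta*}). Your explicit reconciliation of the indicators $\one_{(1\le k-i\le H_{t-i})}$ versus $\one_{(1\le k-i\le H_t)}$ via monotonicity of $\{H_t\}$ and zero-padding is a detail the paper glosses over, and it is handled correctly.
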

\begin{proof}
	The proof is by the definitions of $g_i^x, g_j^u, \tilde g_i^x$, and Lemma \ref{lem: xt formula DAP uncertain}. For the action constraints, by the definition \eqref{equ: time varying DAP uncertain}, we have
	\begin{align*}
		D_{u,j}^\top u_t &= \sum_{t=1}^{H_t} D_{u,j}^\top M_t[k] \hat w_{t-k}+D_{u,j}^\top \eta_t \leq \sum_{t=1}^{H_t} \|D_{u,j}^\top M_t[k] \|_1 w_{\max}+D_{u,j}^\top \eta_t=g_j^u(\mb M_t)+D_{u,j}^\top \eta_t
	\end{align*}
	where the inequality is because $\hat w_{t-k}\in \W$ and the H\"{o}lder's inequality.  The state constraints can be similarly proved: notice that we apply  the H\"{o}lder's inequality on $\hat w_{t-k}$ instead of $w_{t-k}$. 
\end{proof}

\subsection{The Constraint-tightening Terms}\label{append: constraint tightening}
This subsection provides the definitions of the factors $c_1, c_2, c_3$ in the constraint-tightening terms introduced in Lemma \ref{lem: constraint tightening terms}. Further, this subsection provides explanations on all the constraint-tightening terms in \eqref{equ: CCE} by showing that each constraint-tightening term serves as an upper bound on an error term in the constraint decompositions in Corollary \ref{cor: state constraint decomposition}.


\paragraph{Definition and explanation of $\epsilon_\theta(r)$.} The next lemma formally shows that the constraint-tightening term $\epsilon_\theta(r)$ is an upper bound on the model estimation errors in the state constraint decomposition in Corollary \ref{cor: estimation error of our algo}, where $r$ is the model estimation error bound.
\begin{lemma}[Definition of $\epsilon_\theta(r)$]\label{lem: Definition of epsilon theta(r)}
	Consider implementing the time-varying approximate DAP \eqref{equ: time varying DAP uncertain} to the system $x_{t+1}=A_* x_t +B_* u_t +w_t$. For a fixed $t$, suppose $\hat \theta_{t-k}, \hat \theta_t^g\in \Theta_{\text{ini}}$, $\|\hat \theta_{t}^g-\theta_*\|_F \leq r$, and $\|\hat \theta_{t-k} -\theta_*\|_F \leq r$ for all $1\leq k \leq H_t$. Further,  suppose $x_{t-k} \in \X, u_{t-k}\in \U$ for all $1\leq k \leq H_t$. Then, we have
	\begin{align*}
		&g_i^x(\mb M_t;  \theta_*)-g_i^x(\mb M_t; \hat \theta_t^g) \leq \epsilon_{\hat \theta}(r),\quad  \sum_{k=1}^{H_t} D_{x,i}^\top A_*^{k-1}(w_{t-k}-\hat w_{t-k}) \leq \epsilon_{\hat w}(r),\\
		&\underbrace{(g_i^x(\mb M_t;  \theta_*)-g_i^x(\mb M_t; \hat \theta_t^g))+ \sum_{k=1}^{H_t} D_{x,i}^\top A_*^{k-1}(w_{t-k}-\hat w_{t-k})}_{\text{model estimation errors}} 
		\leq  \epsilon_\theta(r)
	\end{align*}
	where $\epsilon_{\hat w}(r)= \|D_x\|_\infty z_{\max} \kappa/\gamma \cdot r =O(r)$, $\epsilon_{\hat \theta}(r)= 5\kappa^4 \kappa_B \|D_x\|_\infty w_{\max}/\gamma^3 \sqrt{mn} r =O(\sqrt{mn} r)$,  and 
	$\epsilon_\theta(r)=\epsilon_{\hat \theta}(r)+\epsilon_{\hat w}(r)=O(\sqrt{mn}) r$. We can let $c_1=\|D_x\|_\infty z_{\max} \kappa/\gamma+5\kappa^4 \kappa_B \|D_x\|_\infty w_{\max}/\gamma^3$.
\end{lemma}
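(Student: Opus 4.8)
The plan is to bound the two summands that make up the ``model estimation errors'' bracket separately, matching them to $\epsilon_{\hat\theta}(r)$ and $\epsilon_{\hat w}(r)$, and then add them. I would begin with the disturbance-estimation term $\sum_{k=1}^{H_t} D_{x,i}^\top A_*^{k-1}(w_{t-k}-\hat w_{t-k})$, which is the cleaner of the two. The key observation is that the \emph{true} disturbance satisfies $w_{t-k} = x_{t-k+1}-\theta_* z_{t-k}\in\W$, hence $\Pi_\W(w_{t-k})=w_{t-k}$, while $\hat w_{t-k}=\Pi_\W(x_{t-k+1}-\hat\theta_{t-k}z_{t-k})$. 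Since the Euclidean projection onto the convex set $\W$ is nonexpansive, $\|\hat w_{t-k}-w_{t-k}\|_2 \le \|(\hat\theta_{t-k}-\theta_*)z_{t-k}\|_2 \le \|\hat\theta_{t-k}-\theta_*\|_F\,\|z_{t-k}\|_2 \le r\,z_{\max}$, where the last step uses $x_{t-k}\in\X$, $u_{t-k}\in\U$ so that $\|z_{t-k}\|_2\le z_{\max}$. Bounding $\|D_{x,i}^\top A_*^{k-1}\|_2 \le \|D_x\|_\infty\,\kappa(1-\gamma)^{k-1}$ via $(\kappa,\gamma)$-stability of $A_*$ and Cauchy--Schwarz, then summing the geometric series $\sum_{k\ge 1}\kappa(1-\gamma)^{k-1}=\kappa/\gamma$, yields exactly $\epsilon_{\hat w}(r)=\|D_x\|_\infty z_{\max}\kappa/\gamma\cdot r$.

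The harder term is the model-perturbation of the constraint function, $g_i^x(\mb M_t;\theta_*)-g_i^x(\mb M_t;\hat\theta_t^g)$. Using $|\,\|a\|_1-\|b\|_1\,|\le\|a-b\|_1$ on each summand of $g_i^x$, the difference is at most $w_{\max}\sum_{k=1}^{2H_t}\|D_{x,i}^\top(\Phi_k^x(\mb M_t;\theta_*)-\Phi_k^x(\mb M_t;\hat\theta_t^g))\|_1$. I would then split $\Phi_k^x(\mb M;\theta_*)-\Phi_k^x(\mb M;\hat\theta)$ into the free-response difference $(A_*^{k-1}-\hat A^{k-1})\id_{(k\le H_t)}$ and the forced-response difference $\sum_{i}(A_*^{i-1}B_*-\hat A^{i-1}\hat B)M_t[k-i]\id_{(1\le k-i\le H_t)}$, and control each matrix difference by the telescoping identities $A_*^{j}-\hat A^{j}=\sum_{l=0}^{j-1}A_*^{l}(A_*-\hat A)\hat A^{j-1-l}$ and $A_*^{i-1}B_*-\hat A^{i-1}\hat B=A_*^{i-1}(B_*-\hat B)+(A_*^{i-1}-\hat A^{i-1})\hat B$. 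Since both $\theta_*,\hat\theta_t^g\in\Theta_{\text{ini}}$, $A_*$ and $\hat A$ are $(\kappa,\gamma)$-stable, giving $\|A_*^j-\hat A^j\|_2\le \kappa^2 j(1-\gamma)^{j-1} r$ and $\|A_*^{i-1}B_*-\hat A^{i-1}\hat B\|_2\le \kappa(1-\gamma)^{i-1}r + \kappa^2\kappa_B(i-1)(1-\gamma)^{i-2} r$, with $\|A_*-\hat A\|_2,\|B_*-\hat B\|_2\le r$ read off from the Frobenius bound. After converting $\|D_{x,i}^\top(\cdot)\|_1$ to operator norms (picking up $\|D_x\|_\infty$ and the dimension factors $\sqrt n$ and $\sqrt m$, the latter from $M_t[\cdot]\in\M_{H_t}$), the forced-response piece dominates; the nested sum $\sum_j j(1-\gamma)^{j-1}=\gamma^{-2}$ together with the additional geometric sums over the $M$-decay and over $k$ produce the $\gamma^{-3}$ scaling, yielding the claimed $\epsilon_{\hat\theta}(r)=5\kappa^4\kappa_B\|D_x\|_\infty w_{\max}\gamma^{-3}\sqrt{mn}\,r$, with the constant $5$ absorbing the subdominant free-response term and the slack in the norm conversions.

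Finally, the third inequality is immediate: the ``model estimation errors'' bracket is the sum of the two quantities already bounded, so it is at most $\epsilon_{\hat\theta}(r)+\epsilon_{\hat w}(r)=\epsilon_\theta(r)$, and reading off the coefficient of $r$ gives $c_1=\|D_x\|_\infty z_{\max}\kappa/\gamma+5\kappa^4\kappa_B\|D_x\|_\infty w_{\max}/\gamma^3$. I expect the main obstacle to lie in the second paragraph: getting the dimension factor $\sqrt{mn}$ and the $\gamma^{-3}$ constant exactly right requires care in (i) choosing consistent norm conversions between $\ell_1$, $\ell_2$, and the $\ell_\infty$-operator-norm bounds that define $\mb M_t\in\M_{H_t}$, and (ii) evaluating the several overlapping geometric sums in $k$ and $i$ while preserving the exponential decay that makes them finite uniformly in $H_t$.
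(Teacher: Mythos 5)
Your proposal is correct and follows essentially the same route as the paper's proof: the disturbance term is handled via non-expansiveness of $\Pi_\W$ (the paper's Lemma \ref{lem: what - w}) plus a geometric sum, and the constraint-function term is handled by reducing to $\sum_k\|\Phi_k^x(\mb M;\theta_*)-\Phi_k^x(\mb M;\hat\theta)\|_\infty$ and invoking exactly the telescoping perturbation bounds on $A^k$ and $A^kB$ (the paper's Lemma \ref{lem: perturb Ak AkB}) with the same norm conversions and nested geometric sums. No gaps.
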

The proof of Lemma \ref{lem: Definition of epsilon theta(r)} is based on the perturbation analysis and deferred to Appendix \ref{subsec: proof of Lemma epsilon theta}.
When proving Lemma \ref{lem: Definition of epsilon theta(r)}, we also establish the following lemma.
\begin{lemma}[Disturbance approximation error]\label{lem: what - w}
	Consider $\hat w_t =\Pi_{\mathbb W}(x_{t+1}-\hat \theta z_t)$ and $x_{t+1}=\theta_* z_t +w_t$. Suppose $\|z_t\|_2\leq b_z$ and $\|\theta_*-\hat \theta\|_F \leq r$, then 
	$$\|w_t -\hat w_t\|_2 \leq b_z r$$
\end{lemma}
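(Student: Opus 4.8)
The plan is to prove Lemma \ref{lem: what - w}, the disturbance approximation error bound, directly from the definitions, using the non-expansiveness of the Euclidean projection. The statement asserts that $\|w_t - \hat w_t\|_2 \leq b_z r$ whenever $\|z_t\|_2 \leq b_z$ and $\|\theta_* - \hat\theta\|_F \leq r$, where $\hat w_t = \Pi_{\mathbb W}(x_{t+1} - \hat\theta z_t)$ and $x_{t+1} = \theta_* z_t + w_t$.

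First I would observe that since $w_t \in \mathbb W$ by Assumption \ref{ass: w Sigma}, the true disturbance is a fixed point of the projection, i.e.\ $w_t = \Pi_{\mathbb W}(w_t)$. Moreover, substituting the dynamics $x_{t+1} = \theta_* z_t + w_t$ into the definition of $\hat w_t$ gives
\begin{equation*}
	\hat w_t = \Pi_{\mathbb W}(x_{t+1} - \hat\theta z_t) = \Pi_{\mathbb W}(\theta_* z_t + w_t - \hat\theta z_t) = \Pi_{\mathbb W}\bigl(w_t + (\theta_* - \hat\theta) z_t\bigr).
\end{equation*}
Thus $w_t$ and $\hat w_t$ are projections onto the same convex set $\mathbb W$ of two points that differ by exactly $(\theta_* - \hat\theta) z_t$.

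The key step is then to invoke the $1$-Lipschitz (non-expansive) property of the Euclidean projection onto a convex set: for any convex $C$ and any points $a, b$, one has $\|\Pi_C(a) - \Pi_C(b)\|_2 \leq \|a - b\|_2$. Here $\mathbb W = \{w : \|w\|_\infty \leq w_{\max}\}$ is a box, hence convex, so the projection is well-defined and non-expansive. Applying this with $a = w_t$ and $b = w_t + (\theta_* - \hat\theta) z_t$ yields
\begin{equation*}
	\|w_t - \hat w_t\|_2 = \bigl\|\Pi_{\mathbb W}(w_t) - \Pi_{\mathbb W}(w_t + (\theta_* - \hat\theta) z_t)\bigr\|_2 \leq \|(\theta_* - \hat\theta) z_t\|_2.
\end{equation*}

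Finally I would bound the right-hand side by $\|(\theta_* - \hat\theta) z_t\|_2 \leq \|\theta_* - \hat\theta\|_2 \|z_t\|_2 \leq \|\theta_* - \hat\theta\|_F \|z_t\|_2 \leq r \, b_z$, using the submultiplicativity of the spectral norm, the standard inequality $\|\cdot\|_2 \leq \|\cdot\|_F$, and the two hypotheses. This chains to the claimed bound. There is no real obstacle here: the only points requiring care are confirming that $w_t \in \mathbb W$ so it is its own projection, and correctly tracking the spectral-versus-Frobenius norm comparison so the $\|\cdot\|_F$ hypothesis on the model error is applied consistently. The lemma is genuinely short, and the entire content is the non-expansiveness of projection together with this fixed-point observation.
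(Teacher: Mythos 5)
Your proof is correct and follows essentially the same route as the paper's: both arguments identify $w_t$ as the projection of $x_{t+1}-\theta_* z_t$ onto $\mathbb W$, apply the non-expansiveness of the projection onto the convex set $\mathbb W$, and then bound $\|(\theta_*-\hat\theta)z_t\|_2$ by $\|\theta_*-\hat\theta\|_F\|z_t\|_2\leq b_z r$. Your write-up is simply a more explicit version of the paper's one-line proof, with the fixed-point observation $w_t=\Pi_{\mathbb W}(w_t)$ and the spectral-versus-Frobenius comparison spelled out.
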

\begin{proof}
	By non-expansiveness of projection, we have
	$
	\|w_t -\hat w_t\|_2\leq \| x_{t+1}-\theta_* z_t-(x_{t+1}-\hat \theta z_t)\|_2=\|(\hat \theta-\theta_*)z_t\|_2\leq b_z r.
	$
\end{proof}

\paragraph{Definition and explanation of $\epsilon_{\eta,x}(\bar \eta)$ and $\epsilon_{\eta,u}(\bar \eta)$} The next lemma formally shows that the terms $\epsilon_{\eta,x}(\bar \eta)$ and $\epsilon_{\eta,u}(\bar \eta)$ bounds the excitation errors on the state and action constraint decompositions in Corollary \ref{cor: state constraint decomposition}.

\begin{lemma}[Definition of $\epsilon_\eta(\bar \eta)$]\label{lem: Definition of epsilon eta}
	Consider implementing the time-varying approximate DAP \eqref{equ: time varying DAP uncertain} to the system $x_{t+1}=A_* x_t +B_* u_t +w_t$. For a fixed $t$, suppose $\|\eta_t\|_\infty \leq \bar \eta$  for all $0\leq k \leq H_t$. Then, 
	\begin{align*}
		&\underbrace{\sum_{i=1}^{H_t} D_{x,i}^\top A_*^{i-1}B_*\eta_{t-i}}_{\text{excitation errors on state}}
		\leq \epsilon_{\eta, x}(\bar \eta), \qquad \underbrace{D_{u,j}^\top \eta_t}_{\text{excitation errors on actions}}\leq \epsilon_{\eta, u}(\bar \eta),
	\end{align*}
	where 
	$	\epsilon_{\eta,x}=\|D_x\|_\infty \kappa\kappa_B/\gamma\sqrt m \bar \eta=O(\sqrt m \bar \eta)$, 
	$	\epsilon_{\eta,u}=\|D_u\|_\infty\bar \eta=O(\bar \eta)$, and we define $\epsilon_\eta=(\epsilon_{\eta,x}, \epsilon_{\eta,u})$, $c_2=\|D_x\|_\infty \kappa\kappa_B/\gamma$, $c_3=\|D_u\|_\infty$.
	
\end{lemma}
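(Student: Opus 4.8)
The plan is to bound each of the two excitation-error terms deterministically by a short chain of elementary norm inequalities; no probabilistic argument is needed, since both quantities are pathwise bounded once $\|\eta_{t-i}\|_\infty \leq \bar\eta$ is assumed. First I would treat the action term $D_{u,j}^\top \eta_t$, which is the simpler of the two: by H\"older's inequality in the $\ell_1$--$\ell_\infty$ pairing, $D_{u,j}^\top \eta_t \leq \|D_{u,j}\|_1 \|\eta_t\|_\infty \leq \|D_{u,j}\|_1 \bar\eta$. Since $\|D_{u,j}\|_1$ is the $\ell_1$-norm of the $j$-th row of $D_u$, it is dominated by the maximum absolute row sum $\|D_u\|_\infty = \max_j \|D_{u,j}\|_1$, which yields $D_{u,j}^\top \eta_t \leq \|D_u\|_\infty \bar\eta = \epsilon_{\eta,u}(\bar\eta)$ and identifies $c_3 = \|D_u\|_\infty$.

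For the state term $\sum_{i=1}^{H_t} D_{x,i}^\top A_*^{i-1} B_* \eta_{t-i}$, the plan is to bound each summand and then collapse a geometric series. For a fixed $i$ I would again apply H\"older to get $D_{x,i}^\top A_*^{i-1} B_* \eta_{t-i} \leq \|D_{x,i}\|_1 \, \|A_*^{i-1} B_* \eta_{t-i}\|_\infty$, then pass to the Euclidean norm via $\|\cdot\|_\infty \leq \|\cdot\|_2$ and use submultiplicativity of the operator norm, $\|A_*^{i-1} B_* \eta_{t-i}\|_2 \leq \|A_*^{i-1}\|_2 \, \|B_*\|_2 \, \|\eta_{t-i}\|_2$. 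The three structural inputs come straight from the earlier hypotheses: Assumptions \ref{ass: Theta0 known} and \ref{ass: open loop stable} together make $A_* = A_* - B_* K_{\textup{stab}}$ $(\kappa,\gamma)$-stable, so $\|A_*^{i-1}\|_2 \leq \kappa(1-\gamma)^{i-1}$; membership $\theta_* \in \Theta_{\textup{ini}}$ gives $\|B_*\|_2 \leq \kappa_B$ (with $\kappa_B$ as defined in Lemma \ref{lem: constraint satisfaction of slow varying DAP}); and $\eta_{t-i} \in \R^m$ with $\|\eta_{t-i}\|_\infty \leq \bar\eta$ gives $\|\eta_{t-i}\|_2 \leq \sqrt{m}\,\bar\eta$. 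Combining, each summand is at most $\|D_{x,i}\|_1 \, \kappa(1-\gamma)^{i-1} \kappa_B \sqrt{m}\,\bar\eta$.

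Finally I would sum over $i$, using $\sum_{i=1}^{H_t}(1-\gamma)^{i-1} \leq \sum_{i=0}^{\infty}(1-\gamma)^i = 1/\gamma$ to absorb the geometric factor, and bound $\|D_{x,i}\|_1 \leq \|D_x\|_\infty$, which produces $\sum_{i=1}^{H_t} D_{x,i}^\top A_*^{i-1} B_* \eta_{t-i} \leq \|D_x\|_\infty \tfrac{\kappa\kappa_B}{\gamma}\sqrt{m}\,\bar\eta = \epsilon_{\eta,x}(\bar\eta)$, matching $c_2 = \|D_x\|_\infty \kappa\kappa_B/\gamma$. The argument is essentially mechanical, so there is no substantive obstacle; the only point demanding care is to keep the sequence of norm conversions consistent (H\"older in $\ell_1$--$\ell_\infty$, then $\|\cdot\|_\infty \leq \|\cdot\|_2$, then operator-norm submultiplicativity) so the constant lands exactly as $\|D_x\|_\infty\kappa\kappa_B/\gamma$ and the single $\sqrt{m}$ factor is not inflated into a spurious dimension dependence. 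Since both bounds are worst-case over the bounded excitation noises, no tightness or concentration analysis enters.
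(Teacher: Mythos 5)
Your proof is correct and follows essentially the same route as the paper's: H\"older plus norm conversions, the $(\kappa,\gamma)$-stability bound $\|A_*^{i-1}\|_2\le\kappa(1-\gamma)^{i-1}$, $\|B_*\|_2\le\kappa_B$, and a geometric series giving the $1/\gamma$ factor. The only cosmetic difference is that you pick up the $\sqrt m$ from $\|\eta_{t-i}\|_2\le\sqrt m\,\|\eta_{t-i}\|_\infty$ whereas the paper uses $\|A_*^{i-1}B_*\|_\infty\le\sqrt m\,\|A_*^{i-1}B_*\|_2$; both land on the identical constant.
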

\begin{proof} The proof is provided below.
	\begin{align*}
		\|D_x\sum_{i=1}^{H_t} A_*^{i-1} B_* \eta_{t-i}\|_\infty &\leq \|D_x\|_\infty \sum_{i=1}^{H_t} \|A_*^{i-1} B_*\|_\infty \| \eta_{t-i}\|_\infty \leq \|D_x\|_\infty \sqrt m \sum_{i=1}^{H_t} \|A_*^{i-1} B_*\|_2 \| \eta_{t-i}\|_\infty\\
		& \leq \|D_x\|_\infty \sqrt m \sum_{i=1}^{H_t} \kappa (1-\gamma)^{i-1}\kappa_B \| \eta_{t-i}\|_\infty\leq  \|D_x\|_\infty \sqrt m\kappa \kappa_B/\gamma\bar \eta\\
		\|D_u \eta_t \|_\infty &\leq\|D_u\|_\infty \|\eta_t\|_\infty\leq \|D_u\|_\infty\bar \eta
	\end{align*}
\end{proof}

\paragraph{Definition of $\epsilon_H(H)$}
The  term $\epsilon_H(H)$  has been introduced in \cite{li2020online} for the known-model case to bound the history truncation errors in the state constraint decomposition. Here, we slightly improve its dependence on the problem dimensions and include our proof below.

\begin{lemma}[Definition of $\epsilon_H$]\label{lem: define epsilonH}
	For any $x_{t-H_t}\in \mathbb X$, we have
	$$\underbrace{D_{x,i}^\top A_*^{H_t} x_{t-H_t}}_{\text{history truncation errors}} \leq \epsilon_H(H_t)= \|D_x\|_\infty\kappa x_{\max}(1-\gamma)^{H_t}=O((1-\gamma)^{H_t}).$$
	
\end{lemma}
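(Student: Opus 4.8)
The plan is to bound the scalar history truncation error $D_{x,i}^\top A_*^{H_t} x_{t-H_t}$ by a short chain of norm inequalities, using two facts that are already available. First I would observe that under Assumptions~\ref{ass: Theta0 known} and~\ref{ass: open loop stable} (so that $K_\textup{stab}=0$), the true matrix $A_*$ inherits $(\kappa,\gamma)$-stability from $\theta_*\in\Theta_\textup{ini}$, hence $\|A_*^{H_t}\|_2\leq \kappa(1-\gamma)^{H_t}$. Second, since $x_{t-H_t}\in\mathbb X$ and the feasible set is bounded by the definition $x_{\max}=\sup_{D_x x\leq d_x}\|x\|_2$, we have $\|x_{t-H_t}\|_2\leq x_{\max}$.

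With these two ingredients I would control the vector $v:=A_*^{H_t}x_{t-H_t}$ submultiplicatively in the spectral norm, giving $\|v\|_2\leq \|A_*^{H_t}\|_2\,\|x_{t-H_t}\|_2\leq \kappa x_{\max}(1-\gamma)^{H_t}$. The remaining step is to pass from $\|v\|_2$ to the claimed bound on the inner product $D_{x,i}^\top v$. The cleanest route is H\"{o}lder's inequality with the $(\ell_1,\ell_\infty)$ pairing:
\[
	D_{x,i}^\top v\leq |D_{x,i}^\top v|\leq \|D_{x,i}\|_1\,\|v\|_\infty\leq \|D_x\|_\infty\,\|v\|_2,
\]
where the last inequality uses $\|D_{x,i}\|_1\leq \max_i\|D_{x,i}\|_1=\|D_x\|_\infty$ (the matrix $\infty$-norm is the maximal absolute row sum) together with $\|v\|_\infty\leq\|v\|_2$. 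Combining the two bounds yields $D_{x,i}^\top A_*^{H_t}x_{t-H_t}\leq \|D_x\|_\infty\,\kappa x_{\max}(1-\gamma)^{H_t}=\epsilon_H(H_t)$, which is exactly the claim.

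There is no serious obstacle here; the only point requiring care is the choice of norm pairing. Bounding via Cauchy--Schwarz as $\|D_{x,i}\|_2\|v\|_2$ and then relaxing $\|D_{x,i}\|_2$ to a norm of the whole matrix $D_x$ is what introduces the extra dimension-dependent factors in the analogous estimate of \cite{li2020online}; matching the $\ell_1$-norm of the constraint row against the $\ell_\infty$-norm of $v$, and relaxing to $\ell_2$ only at the very last step, is precisely what removes that dimension dependence and produces the clean constant $\|D_x\|_\infty$. This accounts for the slight improvement claimed over the known-model result.
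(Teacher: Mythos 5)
Your proof is correct and follows essentially the same chain of inequalities as the paper: the paper bounds $\|D_x A_*^{H_t}x_{t-H_t}\|_\infty \leq \|D_x\|_\infty \|A_*^{H_t}x_{t-H_t}\|_\infty \leq \|D_x\|_\infty\|A_*^{H_t}x_{t-H_t}\|_2 \leq \|D_x\|_\infty\|A_*^{H_t}\|_2\|x_{t-H_t}\|_2$, which is exactly your $(\ell_1,\ell_\infty)$ H\"{o}lder pairing followed by $\|\cdot\|_\infty\leq\|\cdot\|_2$ and spectral-norm submultiplicativity, stated at the level of the whole matrix $D_x$ rather than the single row $D_{x,i}$. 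No gaps.
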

\begin{proof}
	\begin{align*}
		\|D_x A_*^{H_t} x_{t-H_t}\|_\infty& \leq \|D_x\|_\infty \|A_*^{H_t} x_{t-H_t}\|_\infty\leq \|D_x\|_\infty \|A_*^{H_t} x_{t-H_t}\|_2\\
		& \leq  \|D_x\|_\infty\| A_*^{H_t}\|_2 \| x_{t-H_t}\|_2\leq  \|D_x\|_\infty\kappa (1-\gamma)^{H_t} x_{\max}.
	\end{align*}
\end{proof}

\paragraph{Definition of $\epsilon_v(\Delta_M, H)$}
The error term $\epsilon_v(\Delta_M, H)$ has also been introduced in \cite{li2020online} for the known-model case to bound the policy variation error. Here, we also slightly improve its dependence on the problem dimensions and the memory length in the next lemma. The proof is based on the perturbation analysis and will be provided in Appendix \ref{subsec: proof of epsilonv}.

\begin{lemma}[Definition of $\epsilon_v(\Delta_M, H)$]\label{lem: define epsilon V}
	Under the conditions in Lemma \ref{lem: xt formula DAP uncertain}, suppose $\Delta_{M}\geq \max_{1\leq k \leq H_t}\frac{\|\mb M_t-\mb M_{t-k}\|_F}{k}$, then we have
	\begin{align*}
		\underbrace{ (\tilde g_i^x(\mb M_{t-H_t:t-1};  \theta_*)-g_i^x(\mb M_t;  \theta_*))}_{\text{policy variation errors}}
		\leq \epsilon_v(\Delta_M, H_t)
	\end{align*}
	where $\epsilon_v(\Delta_M, H_t)=\|D_x\|_\infty w_{\max}\kappa\kappa_B/\gamma^2\sqrt{mnH_t}\Delta_{M}=O(\sqrt{mnH_t}\Delta_{M})$.
\end{lemma}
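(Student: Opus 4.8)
The plan is to expand the policy variation error directly in terms of the coordinatewise $\ell_1$ norms that define $g_i^x$ and $\tilde g_i^x$, and then bound the difference block by block. First I would observe that both $\tilde g_i^x(\mb M_{t-H_t:t-1};\theta_*)$ and $g_i^x(\mb M_t;\theta_*)$ are sums over $k$ of terms $\|D_{x,i}^\top \tilde\Phi_k^x\|_1 w_{\max}$ and $\|D_{x,i}^\top \Phi_k^x\|_1 w_{\max}$, respectively. The free-response parts $A_*^{k-1}\one_{(k\le H_t)}$ are identical in $\tilde\Phi_k^x(\mb M_{t-H_t:t-1};\theta_*)$ and $\Phi_k^x(\mb M_t;\theta_*)$, so by the reverse triangle inequality $|\,\|a\|_1-\|b\|_1\,|\le\|a-b\|_1$ the whole difference is at most $\sum_{k=1}^{2H_t}\|D_{x,i}^\top(\tilde\Phi_k^x-\Phi_k^x)\|_1 w_{\max}$, where the surviving discrepancy involves only the policy increments, namely $\tilde\Phi_k^x-\Phi_k^x=\sum_{\ell=1}^{H_t}A_*^{\ell-1}B_*\bigl(M_{t-\ell}[k-\ell]-M_t[k-\ell]\bigr)\one_{(1\le k-\ell\le H_t)}$.

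The key per-block estimate I would use is that for a row vector $v^\top=D_{x,i}^\top A_*^{\ell-1}B_*\in\R^{1\times m}$ and a matrix $\Delta\in\R^{m\times n}$ one has $\|v^\top\Delta\|_1\le\|v\|_\infty\sqrt{mn}\,\|\Delta\|_F$, which follows from $\|v^\top\Delta\|_1\le\|v\|_\infty\sum_{p,q}|\Delta_{pq}|$ together with Cauchy--Schwarz applied to the entrywise $\ell_1$ norm. Combining this with $\|D_{x,i}^\top A_*^{\ell-1}B_*\|_\infty\le\|D_x\|_\infty\kappa\kappa_B(1-\gamma)^{\ell-1}$, which uses $(\kappa,\gamma)$-stability of $A_*$ and $\|B_*\|_2\le\kappa_B$, each summand is bounded by $\|D_x\|_\infty\kappa\kappa_B(1-\gamma)^{\ell-1}\sqrt{mn}\,\|M_{t-\ell}[k-\ell]-M_t[k-\ell]\|_F$.

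Finally I would swap the order of the $k$- and $\ell$-summations and substitute $\ell':=k-\ell\in\{1,\dots,H_t\}$, so that for each fixed $\ell$ the inner sum becomes $\sum_{\ell'=1}^{H_t}\|M_{t-\ell}[\ell']-M_t[\ell']\|_F\le\sqrt{H_t}\,\|\mb M_{t-\ell}-\mb M_t\|_F$ by Cauchy--Schwarz over the $H_t$ policy blocks. The slow-variation hypothesis gives $\|\mb M_{t-\ell}-\mb M_t\|_F\le\ell\Delta_M$, and the remaining decay sum is $\sum_{\ell=1}^{H_t}\ell(1-\gamma)^{\ell-1}\le\sum_{\ell=1}^{\infty}\ell(1-\gamma)^{\ell-1}=\gamma^{-2}$. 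Collecting the constants yields $\epsilon_v(\Delta_M,H_t)=\|D_x\|_\infty w_{\max}\kappa\kappa_B\gamma^{-2}\sqrt{mnH_t}\,\Delta_M$, as claimed. I expect the only genuine obstacle to be the bookkeeping of the dimension factors: one must route the per-block estimate through $\|v\|_\infty\sqrt{mn}\,\|\Delta\|_F$ rather than the tighter $\|v\|_2\sqrt{n}\,\|\Delta\|_F$ (which would lose the $\sqrt{m}$), while the $\sqrt{H_t}$ arises from the Cauchy--Schwarz step over the blocks and the $\gamma^{-2}$ from the summable geometric decay $\sum_\ell \ell(1-\gamma)^{\ell-1}$.
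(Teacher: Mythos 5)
Your proposal is correct and follows essentially the same route as the paper's proof in Appendix \ref{subsec: proof of epsilonv}: cancel the identical free-response terms, apply the reverse triangle inequality so that only the policy increments $M_{t-\ell}[\cdot]-M_t[\cdot]$ survive, bound each block using $(\kappa,\gamma)$-stability of $A_*$ and $\|B_*\|_2\le\kappa_B$, extract $\sqrt{H_t}$ by Cauchy--Schwarz over the blocks, and finish with the slow-variation hypothesis and $\sum_{\ell\ge1}\ell(1-\gamma)^{\ell-1}\le\gamma^{-2}$. The only difference is cosmetic bookkeeping — you obtain the $\sqrt{mn}$ factor in one step via $\|v^\top\Delta\|_1\le\|v\|_\infty\sqrt{mn}\,\|\Delta\|_F$, whereas the paper splits it into $\sqrt m$ (from $\|A_*^{\ell-1}B_*\|_\infty\le\sqrt m\,\|A_*^{\ell-1}B_*\|_2$) and $\sqrt n$ (from $\sum_j\|M[j]\|_\infty\le\sqrt{nH}\,\|\mb M\|_F$) — and both yield the same constant.
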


\section{Estimation Error Bounds}\label{append: estimate error}

This section provides a proof for Theorem \ref{thm: general estimation error bdd} and a proof for Corollary \ref{cor: estimation error of our algo}. When proving Corollary \ref{cor: estimation error of our algo}, we also establishes a.s. upper bounds on the state and action trajectories of our algorithm.

\subsection{Proof of Theorem \ref{thm: general estimation error bdd}}
Our proof of Theorem \ref{thm: general estimation error bdd} relies on a recently developed least square estimation error bound for general time series satisfying a block matingale small-ball (BMSB) condition  \cite{simchowitz2018learning}.  The general error bound and the definition of BMSB are included below for completeness. In the literature \cite{dean2018regret,dean2019safely}, only linear policies are considered and shown to satisfy the BMSB condition. Our contribution is to show that even for general policies, BMSB still holds as long as the corresponding states and actions are bounded (which is usually the case if certain stability properties are satisfied). By general policies, we allow time-varying policies, nonlinear policies, policies that depend on all the history, etc., (i.e. we consider $u_t=\pi_t(x_0, \{w_k,\eta_k\}_{k=0}^{t-1})+\eta_t$). More rigorous discussions are provided below.

\begin{definition}[Block Martingale Small-Ball (BMSB) (Definition 2.1 \cite{simchowitz2018learning})]\label{def: bmsb}
	Let $\{X_t\}_{t\geq 1}$ be an $\{\F_t\}_{t\geq 1}$-adapted random process taking values in $\R^d$.
	We say that it satisfies the $(k, \Gamma_{sb},p)$-block martingale small-ball (BMSB) condition for $\Gamma_{sb}>0$ if, for any fixed $\lambda \in\R^d$ such that $\|\lambda\|_2=1$ and for any $j\geq 0$, one has $\frac{1}{k}\sum_{i=1}^k \Pb(|\lambda^\top X_{j+i}|\geq \sqrt{\lambda^\top\Gamma_{sb} \lambda}\mid \F_j)\geq p$ almost surely. 
\end{definition}


\begin{theorem}[Theorem 2.4 in \cite{simchowitz2018learning}]\label{thm: theorem 2.4 general estimation error}
	Fix $\epsilon\in (0,1)$, $\delta \in (0, 1/3)$, $T\geq 1$, and $0<\Gamma_{sb}<\bar \Gamma$. Consider a random process $\{X_t, Y_t\}_{t\geq 1}\in (\R^d\times \R^n)^T$ and a filtration $\{\F_t\}_{t\geq 1}$. 
	Suppose the following conditions hold,
	\begin{enumerate}
		\item $Y_t=\theta_* X_t +\eta_t$, where $\eta_t\mid \F_t$ is $\sigma_{sub}^2$-sub-Gaussian and mean zero,
		\item $\{X_t\}_{t\geq 1}$ is an $\{\F_t\}_{t\geq 1}$-adapted random process satisfying the $(k, \Gamma_{sb},p)$-block martingale small-ball (BMSB) condition,
		\item $\Pb(\sum_{t=1}^T X_t X_t^\top \not \leq T \bar \Gamma)\leq \delta$.
	\end{enumerate}
	Define the (ordinary) least square estimator as
$
		\tilde \theta=\argmin_{\theta \in \R^{n \times d}} \sum_{t=1}^T \|Y_t-\theta X_t\|_2^2
$.
	Then if 
$$
		T\geq \frac{10k}{p^2}\left(\log(\frac{1}{\delta})+2d\log(10/p)+\log \det(\bar \Gamma \Gamma_{sb}^{-1})\right),
$$
	we have
	\begin{equation*}
		\|\tilde \theta-\theta_*\|_2 \leq \frac{90\sigma_{sub}}{p}\sqrt{\frac{n+d\log(10/p)+\log\det(\bar \Gamma \Gamma_{sb}^{-1})+\log(1/\delta)}{T \upsilon_{\min}(\Gamma_{sb})}}
	\end{equation*}
	with probability at least $1-3\delta$.
\end{theorem}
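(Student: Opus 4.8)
The plan is to decompose the least-squares error into a \emph{self-normalized noise martingale} divided by the smallest eigenvalue of the empirical Gram matrix, and then control each piece on a high-probability event. Write $V=\sum_{t=1}^T X_tX_t^\top$ and $S=\sum_{t=1}^T \eta_tX_t^\top$. Substituting $Y_t=\theta_*X_t+\eta_t$ into the normal equations gives $\tilde\theta-\theta_*=SV^{-1}$ whenever $V\succ0$, so by submultiplicativity and $\|V^{-1/2}\|_2=\upsilon_{\min}(V)^{-1/2}$,
\[
\|\tilde\theta-\theta_*\|_2 \;\le\; \bigl\|S V^{-1/2}\bigr\|_2 \cdot \frac{1}{\sqrt{\upsilon_{\min}(V)}}.
\]
It therefore suffices to establish, each with probability at least $1-\delta$: (a) an upper bound on the self-normalized term $\|SV^{-1/2}\|_2$, and (b) a lower bound $\upsilon_{\min}(V)\gtrsim p^2\,T\,\upsilon_{\min}(\Gamma_{sb})$. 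Together with the event of condition~3, a union bound over these three failure events yields the claimed probability $1-3\delta$, and the $p^{-2}$ in the Gram lower bound is exactly what produces the $90/p$ prefactor after taking a square root.

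\textbf{Step (a): self-normalized noise term.} Since $\eta_t\mid\F_t$ is mean-zero and $\sigma_{sub}^2$-sub-Gaussian and $X_t$ is $\F_t$-measurable, the partial sums $\sum_{s\le t}\eta_sX_s^\top$ form a matrix martingale, and I would invoke a self-normalized tail inequality (in the spirit of Abbasi-Yadkori et al.) on the \emph{regularized} Gram matrix $V_\lambda:=V+\lambda I$. Applied coordinate-wise to the $n$-dimensional noise and combined with a net over the output directions, this gives with probability $1-\delta$
\[
\bigl\|S V_\lambda^{-1/2}\bigr\|_2^2 \;\lesssim\; \sigma_{sub}^2\Bigl(n+\log\tfrac{\det V_\lambda}{\det(\lambda I)}+\log\tfrac1\delta\Bigr).
\]
On the event $V\preceq T\bar\Gamma$ from condition~3 one has $\det V_\lambda\le\det(T\bar\Gamma+\lambda I)$, and choosing $\lambda\asymp T\,\upsilon_{\min}(\Gamma_{sb})$ converts the log-determinant ratio into $\log\det(\bar\Gamma\Gamma_{sb}^{-1})$ up to an additive dimensional constant; this is the origin of that term in the final bound. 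The regularization is harmless because, on the event of step (b), $\lambda\le\upsilon_{\min}(V)$, so $\|V_\lambda^{1/2}V^{-1}\|_2\le\sqrt{2/\upsilon_{\min}(V)}$ and $\|SV^{-1/2}\|_2$ is controlled by $\|SV_\lambda^{-1/2}\|_2$ up to a constant.

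\textbf{Step (b): small-ball lower bound — the main obstacle.} The heart of the argument is to promote the per-direction BMSB hypothesis into a uniform spectral lower bound $V\succeq cT\Gamma_{sb}$ \emph{without any mixing assumption}. First fix a direction $\lambda$ with $\|\lambda\|_2=1$. Partition $\{1,\dots,T\}$ into $N=\lfloor T/k\rfloor$ blocks and set $B_l=\sum_{i=1}^k\one\{|\lambda^\top X_{(l-1)k+i}|\ge\sqrt{\lambda^\top\Gamma_{sb}\lambda}\}$, so that $\E[B_l\mid\F_{(l-1)k}]\ge pk$ by the BMSB condition. A reverse-Markov step on $B_l\in[0,k]$ gives $\Pb(B_l\ge pk/2\mid\F_{(l-1)k})\ge p/2$, hence the block indicators $\zeta_l=\one\{B_l\ge pk/2\}$ satisfy $\E[\zeta_l\mid\F_{(l-1)k}]\ge p/2$. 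A Freedman/Chernoff bound for this conditionally-lower-bounded Bernoulli sequence yields $\sum_l\zeta_l\ge Np/4$ with failure probability exponentially small in $Np$, whence
\[
\sum_{t=1}^T(\lambda^\top X_t)^2 \;\ge\; (\lambda^\top\Gamma_{sb}\lambda)\sum_{l=1}^N B_l \;\ge\; \frac{p^2kN}{8}\,(\lambda^\top\Gamma_{sb}\lambda).
\]
I would then upgrade this to all directions via a net of the ellipsoid $\{\lambda:\lambda^\top\Gamma_{sb}\lambda\le1\}$, whose log-cardinality scales as $d\log(10/p)+\log\det(\bar\Gamma\Gamma_{sb}^{-1})$ once the upper bound $V\preceq T\bar\Gamma$ is used to handle the net discretization error; the union bound over this net is what enforces the stated threshold on $T$ and contributes the $d\log(10/p)$ term to the final bound.

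\textbf{Combining.} On the intersection of the three events, step (a) bounds the numerator by $\sigma_{sub}\sqrt{n+d\log(10/p)+\log\det(\bar\Gamma\Gamma_{sb}^{-1})+\log(1/\delta)}$ and step (b) gives $\upsilon_{\min}(V)^{-1/2}\le (cp^2T\,\upsilon_{\min}(\Gamma_{sb}))^{-1/2}$, and substituting into the opening display produces the stated estimate with the explicit constant $90/p$ after tracking $c=\Theta(1)$ and the $p^{-2}$ through the algebra. The principal difficulty is squarely step (b): the block-martingale small-ball estimate is the device that replaces the usual i.i.d.\ or geometric-mixing arguments, and the covering that lifts it to a spectral lower bound is where both the effective-dimension term $\log\det(\bar\Gamma\Gamma_{sb}^{-1})$ and the minimal sample length originate.
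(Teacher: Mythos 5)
The paper does not actually prove this statement: it is imported verbatim from Simchowitz et al.\ (2018), cited as Theorem~2.4 there precisely so that Theorem~\ref{thm: general estimation error bdd} can be obtained by verifying the BMSB condition. Your sketch is a correct reconstruction of the original argument in that reference --- the decomposition $\tilde\theta-\theta_*=SV^{-1}$, the per-direction small-ball lower bound via length-$k$ blocks, reverse Markov, and a martingale Chernoff bound, the lift to a spectral bound by covering an ellipsoid under the event $V\preceq T\bar\Gamma$, a self-normalized martingale bound for the noise term, and a union over three failure events --- with only a cosmetic deviation (you regularize the Gram matrix with $\lambda\asymp T\,\upsilon_{\min}(\Gamma_{sb})$, whereas the source sandwiches $V$ between the two deterministic matrices directly), so it is essentially the same approach as the cited proof.
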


Next, we  present a proof for our Theorem \ref{thm: general estimation error bdd} by verifying the conditions in Theorem \ref{thm: theorem 2.4 general estimation error} for general nonlinear policies. 
\begin{proof}[Proof of Theorem \ref{thm: general estimation error bdd}]
	
	Condition 1 is straightforward: $x_{t+1}=\theta_* z_t +w_t$, and $w_t\mid \F_t=w_t$ which is mean 0 and $\sigma_{sub}^2$-sub-Gaussian by Assumption \ref{ass: w Sigma}. 
	Condition 3 is also straightforward. Notice that 
	$\upsilon_{\max}(z_t z_t^\top )\leq \text{trace}(z_t z_t^\top) = \|z_t\|_2^2\leq b_z^2.$
	Therefore, we can define $\bar \Gamma=b_z^2I_{n+m}$, and then $\Pb(\sum_{t=1}^T z_t z_t^\top \not \leq T \bar \Gamma)=0\leq \delta$.
	
	The tricky part is Condition 2. 
	Next, we will show the BMSB condition holds for our system. Then, by Theorem \ref{thm: theorem 2.4 general estimation error}, we complete the proof.
	\begin{lemma}[Verification of BMSB condition]\label{lem: verify bmsb strictly safe perturbed policy}
		Define filtration $\F_t=\{w_0, \dots, w_{t-1}, \eta_0, \dots,\eta_t\}$. Under the conditions in Theorem \ref{thm: general estimation error bdd}, 	\begin{equation*}
			\{z_t\}_{t\geq 0} \ \text{ satisfies the $(1, s_z^2 I_{n+m}, p_z)$-BMSB condition,}
		\end{equation*}
		where $p_z=\min(p_w, p_{\eta})$, $s_z=\min(s_w/4, \frac{\sqrt 3}{2}s_\eta\bar \eta, \frac{s_ws_\eta}{4b_u} \bar \eta)$.
		
%
%
%
%
%
	\end{lemma}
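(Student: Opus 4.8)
The plan is to verify the three hypotheses of the BMSB definition (Definition \ref{def: bmsb}) with $k=1$, so that everything reduces to proving that for every fixed unit vector $\lambda\in\R^{n+m}$ and every $j\ge 0$,
\[
\Pb\big(|\lambda^\top z_{j+1}|\ge s_z \mid \F_j\big)\ge p_z \quad\text{a.s.}
\]
First I would record the structural facts. Writing $z_{j+1}=(x_{j+1}^\top,u_{j+1}^\top)^\top$ with $x_{j+1}=\theta_* z_j+w_j$ and $u_{j+1}=\pi_{j+1}(x_0,\{w_k,\eta_k\}_{k=0}^{j})+\eta_{j+1}$, one checks that $z_j$ is $\F_j$-measurable, while the only randomness entering $z_{j+1}$ beyond $\F_j$ is the pair $(w_j,\eta_{j+1})$. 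By Assumption \ref{ass: w Sigma} and the construction of $\eta_t$, the variables $w_j$ and $\eta_{j+1}$ are mutually independent and independent of $\F_j$, so conditionally on $\F_j$ the disturbance $w_j$ retains its $(s_w,p_w)$-anti-concentration and $\eta_{j+1}=\bar\eta\,\xi$ with $\xi\sim\D_\eta$ retains its $(s_\eta,p_\eta)$-anti-concentration. Splitting $\lambda=(\lambda_x^\top,\lambda_u^\top)^\top$, I would write $\lambda^\top z_{j+1}=c+\lambda_x^\top w_j+\lambda_u^\top h(w_j)+\lambda_u^\top\eta_{j+1}$, where $c=\lambda_x^\top\theta_* z_j$ is $\F_j$-measurable and $h(w_j):=\pi_{j+1}(\cdots)$ satisfies $\|h(w_j)\|_2\le b_u$ uniformly in $w_j$ because the states and actions are bounded.

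Next I would run a three-way case split on $\lambda$, in each case using whichever of $w_j,\eta_{j+1}$ is best suited. (i) If $\|\lambda_x\|_2\le 1/2$, then $\|\lambda_u\|_2\ge\sqrt3/2$; conditioning further on $w_j$ makes both $x_{j+1}$ and $h(w_j)$ constant, so $\lambda^\top z_{j+1}$ equals a constant plus $\lambda_u^\top\eta_{j+1}$, and applying the anti-concentration of $\eta_{j+1}$ along $\pm\lambda_u/\|\lambda_u\|_2$ (choosing the sign opposite to that constant) gives $\Pb(|\lambda^\top z_{j+1}|\ge\|\lambda_u\|_2\bar\eta s_\eta\mid\F_j,w_j)\ge p_\eta$, hence a bound $\ge \tfrac{\sqrt3}{2}\bar\eta s_\eta$ after integrating out $w_j$. (ii) If $\|\lambda_x\|_2>1/2$ and $\|\lambda_u\|_2\le \tfrac{s_w}{4b_u}$, I would instead condition on $\eta_{j+1}$ and exploit $w_j$: then $\lambda^\top z_{j+1}=c'+\lambda_x^\top w_j+\lambda_u^\top h(w_j)$ with the perturbation bounded uniformly by $|\lambda_u^\top h(w_j)|\le\|\lambda_u\|_2 b_u$, so the two tails of the anti-concentration of $w_j$ along $\pm\lambda_x/\|\lambda_x\|_2$ yield $|\lambda^\top z_{j+1}|\ge \|\lambda_x\|_2 s_w-\|\lambda_u\|_2 b_u>\tfrac12 s_w-\tfrac14 s_w=\tfrac14 s_w$ with probability $\ge p_w$. (iii) If $\|\lambda_x\|_2>1/2$ and $\|\lambda_u\|_2>\tfrac{s_w}{4b_u}$, I would again use $\eta_{j+1}$ as in case (i), obtaining a lower bound $\|\lambda_u\|_2\bar\eta s_\eta>\tfrac{s_w s_\eta}{4b_u}\bar\eta$ with probability $\ge p_\eta$. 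Taking the minimum over the three bounds reproduces exactly $s_z=\min\!\big(s_w/4,\tfrac{\sqrt3}{2}s_\eta\bar\eta,\tfrac{s_w s_\eta}{4b_u}\bar\eta\big)$ and $p_z=\min(p_w,p_\eta)$.

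The main obstacle is the coupling between the fresh disturbance $w_j$ and the next action $u_{j+1}$: because $\pi_{j+1}$ is an arbitrary, possibly nonlinear function of $w_j$, one cannot treat $x_{j+1}$ and $u_{j+1}$ as carrying independent fresh noise, which is precisely why the linear-policy arguments of \cite{dean2018regret,dean2019safely} do not transfer verbatim. I would resolve this by never requesting anti-concentration of the coupled term $\lambda_u^\top h(w_j)$; instead, in the $w_j$-driven case I bound it by its worst case $\|\lambda_u\|_2 b_u$ over all $w_j$, so the correlation becomes harmless, and I restrict that case to $\|\lambda_u\|_2\le s_w/(4b_u)$ exactly so that this worst-case perturbation cannot cancel the anti-concentration signal of $\lambda_x^\top w_j$. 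A secondary subtlety is that the $\F_j$-measurable constant has unknown sign, which I handle by invoking both one-sided tails of the anti-concentration property (Definition \ref{def: w anticoncentration}) in direction $+\mu$ or $-\mu$ according to that sign. Finally, each conditional bound is promoted to a bound given $\F_j$ alone by the tower property, since it holds for every realization of the variable on which I conditioned.
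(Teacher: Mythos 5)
Your proposal is correct and follows essentially the same route as the paper's proof: the same split of $\lambda$ into state and action blocks, the same case analysis with thresholds $1/2$ and $s_w/(4b_u)$ (the paper packages these as $1/k_0$ with $k_0=\max(2/\sqrt3,\,4b_u/s_w)$), the same worst-case bound $\|\lambda_u\|_2 b_u$ on the coupled term $\lambda_u^\top\pi_{j+1}(\cdot)$, and the same exploitation of the independence of $\eta_{j+1}$ (the paper phrases your conditioning-on-$w_j$-plus-tower step as a partition into events $\Omega_1^\lambda,\Omega_2^\lambda$ on which the probability factors). The only nit is the phrase ``choosing the sign opposite to that constant'' in case (i) --- you want the tail of $\lambda_u^\top\eta_{j+1}$ with the \emph{same} sign as the conditioned constant so the two reinforce rather than cancel --- but the surrounding logic makes the intended choice unambiguous.
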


	\begin{proof}[Proof of Lemma \ref{lem: verify bmsb strictly safe perturbed policy}]
			Define filtration $\F_t^m=\F(w_0, \dots, w_{t-1},\eta_0, \dots, \eta_{t-1})$. Notice that the policy in Theorem \ref{thm: general estimation error bdd} can be written as $u_t=\pi_t(\F_t^m)+\eta_t$.
		Note that $z_t \in \F_t$ is by definition. Next,
		\begin{align*}
			z_{t+1}\mid \F_t=	
			\begin{bmatrix}
				x_{t+1}\\
				u_{t+1}
			\end{bmatrix}\mid \F_t= 
			\begin{bmatrix}
				\theta_* z_t+w_t \mid \F_t\\
				\pi_{t+1}(\F_{t+1}^m)+\eta_{t+1}\mid \F_t
			\end{bmatrix},
		\end{align*}
		where $\F_{t+1}^m=\F(w_0, \dots, w_{t},\eta_0, \dots, \eta_{t})$.
		
		Notice that conditioning on $\F_t$, the variable $\theta_* z_t$ is determined, but the variable $	\pi_{t+1}(\F_{t+1}^m)$ is still random due to the randomness of $w_t$. 
		{For the rest of the proof, we will always condition on $\F_t$, and omit the conditioning notation,  i.e., $\cdot\mid \F_t$,  for notational simplicity.}
		
		Consider any $\lambda=(\lambda_1^\top, \lambda_2^\top)^\top\in \R^{m+n}$, where $\lambda_1\in \R^n$, $\lambda_2\in \R^m$, $\|\lambda\|_2^2=\|\lambda_1\|_2^2+\|\lambda_2\|_2^2=1$. Define $k_0=\max(2/\sqrt 3, 4b_u/s_w)$. We consider three cases: (i) when $\|\lambda_2\|_2\leq 1/k_0$ and $\lambda_1^\top \theta_* z_t \geq 0$, (ii) when $\|\lambda_2\|_2\leq 1/k_0$ and $\lambda_1^\top \theta_* z_t < 0$, (iii) when $\|\lambda_2\|_2> 1/k_0$. We will show in all three cases,
		$$\Pb(|\lambda^\top z_{t+1}| \geq s_z)\geq p_z$$
		Consequently, by Definition 2.1 in \cite{simchowitz2018learning}, we have $\{z_t\}$ is $(1, s_z^2 I, p_z)$-BMSB.
		
		\nbf{Case 1: when $\|\lambda_2\|_2\leq 1/k_0$ and $\lambda_1^\top \theta_* z_t \geq 0$}
		
		\begin{align*}
			\lambda_1^\top w_t& \leq \lambda_1^\top(w_t+\theta_*z_t)\leq | \lambda_1^\top(w_t+\theta_*z_t)|\\
			&=| \lambda^\top z_{t+1}-\lambda_2^\top u_{t+1}|
			\leq | \lambda^\top z_{t+1}|+|\lambda_2^\top u_{t+1}|\leq | \lambda^\top z_{t+1}|+ \|\lambda_2\|_2 b_u\\
			& \leq  | \lambda^\top z_{t+1}|+b_u/k_0\leq  | \lambda^\top z_{t+1}|+s_w/4
		\end{align*}
		where the last inequality uses $k_0\geq 4b_u/s_w$.
		
		Further, notice that $k_0\geq 2/\sqrt 3$, so $\|\lambda_2\|_2^2\leq 1/k_0^2\leq 3/4$, thus, $\|\lambda_1\|_2^2 \geq 1/4$, which means $\|\lambda_1\|_2\geq 1/2$. Therefore,
		\begin{align*}
			\Pb(\lambda_1^\top w_t\geq s_w/2)&=\Pb(\frac{\lambda_1^\top w_t}{\|\lambda_1\|_2}\geq \frac{s_w}{2\|\lambda_1\|_2}) \geq \Pb(\frac{\lambda_1^\top w_t}{\|\lambda_1\|_2}\geq s_w)=p_w
		\end{align*}
		Then,
		\begin{align*}
			\Pb(|\lambda^\top z_{t+1}| \geq s_z)&\geq \Pb(|\lambda^\top z_{t+1}| \geq s_w/4)= \Pb(|\lambda^\top z_{t+1}|+s_w/4 \geq s_w/2)\\
			&\geq \Pb(\lambda_1^\top w_t\geq s_w/2)\geq p_w
		\end{align*}
		which completes case 1.
		
		\nbf{Case 2: when $\|\lambda_2\|_2\leq 1/k_0$ and $\lambda_1^\top \theta_* z_t < 0$.}
		\begin{align*}
			\lambda_1^\top w_t& \geq \lambda_1^\top(w_t+\theta_*z_t)\geq -| \lambda_1^\top(w_t+\theta_*z_t)|\\
			&=-| \lambda^\top z_{t+1}-\lambda_2^\top u_{t+1}|
			\geq -| \lambda^\top z_{t+1}|-|\lambda_2^\top u_{t+1}|\geq -| \lambda^\top z_{t+1}|- \|\lambda_2\|_2 b_u\\
			& \geq - | \lambda^\top z_{t+1}|-b_u/k_0\geq  -| \lambda^\top z_{t+1}|-s_w/4
		\end{align*}
		where the last inequality uses $k_0\geq 4b_u/s_w$.
		
		Further, notice that $k_0\geq 2/\sqrt 3$, so $\|\lambda_2\|_2^2\leq 1/k_0^2\leq 3/4$, thus, $\|\lambda_1\|_2^2 \geq 1/4$, which means $\|\lambda_1\|_2\geq 1/2$. Therefore,
		\begin{align*}
			\Pb(\lambda_1^\top w_t\leq -s_w/2)&=\Pb(\frac{\lambda_1^\top w_t}{\|\lambda_1\|_2}\leq -\frac{s_w}{2\|\lambda_1\|_2}) \geq \Pb(\frac{\lambda_1^\top w_t}{\|\lambda_1\|_2}\leq -s_w)=\Pb(\frac{-\lambda_1^\top w_t}{\|\lambda_1\|_2}\geq s_w)=p_w
		\end{align*}
		by $s_w/(2\|\lambda_1\|_2)\leq s_w$, and thus $-s_w/(2\|\lambda_1\|_2)\geq -s_w$, and Assumption \ref{ass: w Sigma}.
		
		Consequently,
		\begin{align*}
			\Pb(|\lambda^\top z_{t+1}| \geq s_z)&\geq \Pb(|\lambda^\top z_{t+1}| \geq s_w/4)= \Pb(-|\lambda^\top z_{t+1}|-s_w/4 \leq -s_w/2)\\
			&\geq \Pb(\lambda_1^\top w_t\leq -s_w/2)\geq p_w
		\end{align*}
		which completes case 2.
		
		\nbf{Case 3: when $\|\lambda_2\|_2> 1/k_0$.} Define $v=\bar \eta s_\eta/k_0=\min(\sqrt 3 \bar \eta s_\eta/2, s_w \bar \eta s_\eta/(4 b_u))$. Define
		\begin{align*}
			\Omega_1^\lambda&=\{w_t\in \R^n\mid \lambda_1^\top(w_t+\theta_*z_t)+\lambda_2^\top(\pi_{t+1}(\F_{t+1}^m))\geq 0\}\\
			\Omega_2^\lambda&=\{w_t\in \R^n\mid \lambda_1^\top(w_t+\theta_*z_t)+\lambda_2^\top(\pi_{t+1}(\F_{t+1}^m))< 0\}\\
		\end{align*}
		Notice that $\Pb( w_t \in \Omega_1^\lambda)+\Pb( w_t \in \Omega_2^\lambda)=1$.
		\begin{align*}
			\Pb(|\lambda^\top z_{t+1}| \geq s_z)&\geq \Pb(|\lambda^\top z_{t+1}| \geq v)= \Pb(\lambda^\top z_{t+1} \geq v)+ \Pb(\lambda^\top z_{t+1} \leq -v)\\
			& \geq \Pb(\lambda^\top z_{t+1} \geq v, w_t \in \Omega_1^\lambda)+ \Pb(\lambda^\top z_{t+1} \leq -v, w_t \in \Omega_2^\lambda)\\
			& \geq \Pb(\lambda_2^\top \eta_{t+1} \geq v, w_t \in \Omega_1^\lambda)+\Pb(\lambda_2^\top \eta_{t+1} \leq -v, w_t \in \Omega_2^\lambda)\\
			&=\Pb(\lambda_2^\top \eta_{t+1} \geq v)\Pb( w_t \in \Omega_1^\lambda)+\Pb(\lambda_2^\top \eta_{t+1} \leq -v)\Pb( w_t \in \Omega_2^\lambda)\\
			&\geq p_\eta
		\end{align*}
		where the last inequality is because of the following arguments.
		Notice that
		\begin{align*}
			\Pb(\lambda_2^\top \eta_{t+1} \geq v)&= \Pb(\lambda_2^\top \eta_{t+1}/\|\lambda_{2}\|_2 \geq v/\|\lambda_{2}\|_2)\\
			& =\Pb(\lambda_2^\top \tilde \eta_{t+1}/\|\lambda_{2}\|_2 \geq v/(\|\lambda_{2}\|_2\bar \eta))\\
			& \geq \Pb(\lambda_2^\top \tilde \eta_{t+1}/\|\lambda_{2}\|_2 \geq k_0v/(\bar \eta))\\
			& = \Pb(\lambda_2^\top \tilde \eta_{t+1}/\|\lambda_{2}\|_2 \geq  s_\eta)\geq p_\eta
		\end{align*}
		Then,
		$$	\Pb(\lambda_2^\top \eta_{t+1} \leq -v)=	\Pb(-\lambda_2^\top \eta_{t+1} \geq v)\geq p_\eta$$
		This completes the proof of Case 3.
	\end{proof}
Finally, we apply Theorem \ref{thm: theorem 2.4 general estimation error}. Notice that $d=m+n$ in our problem, and $\log\det(\bar \Gamma \Gamma_{sb}^{-1})=2(m+n)\log(b_z/s_z)=O((m+n)\log(b_z/\bar\eta))$ as $\bar \eta\to 0$,  $\upsilon_{\min}(\Gamma_{sb})=s_z^2=O(1/\bar\eta^2)$ as $\bar \eta\to 0$, and $p=p_z$ here. Therefore, for $T$ large enough, we have:
$$\|\tilde \theta_T-\theta_*\|_2 \leq \! O\!\left(\!\sqrt{n\!+\!m}\frac{\sqrt{\log(b_z/\bar\eta+1/\delta)}}{\sqrt T \bar \eta}\!\right).$$
\end{proof}

\subsection{Proof of Corollary \ref{cor: estimation error of our algo}}


Corollary \ref{cor: estimation error of our algo} follows directly from Theorem \ref{thm: general estimation error bdd}. We only need to verify the boundedness of the states and actions. In the following, we will show that $u_t \in \U$ for all $t$ and $\|x_t\|_2\leq O(\sqrt{mn})$ for all $t$. Notice that though we can further show a much smaller bound $\|x_t \|_2\leq x_{\max}$ with  probability $(1-p)$ in Theorem \ref{thm: constraint satisfaction}, Theorem \ref{thm: general estimation error bdd} requires an almost sure bound and thus we provide a larger bound $\|x_t\|_2\leq O(\sqrt{mn})$ here. 

In the following, we show that $u_t \in \U$ for all $t$ and $\|x_t\|_2\leq O(\sqrt{mn})$ for all $t$.


\begin{lemma}[Action constraint satisfaction]\label{lem: ut satisfies constraints}
	When applying Algorithm \ref{alg: online algo}, $u_t\in\U$ for all $t$ and for any $w_k\in\W$.
\end{lemma}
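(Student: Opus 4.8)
The plan is to exploit the fact that the action constraint, unlike the state constraint, decouples completely from the model, the history, and the disturbance realization: at each stage $D_{u,j}^\top u_t$ is controlled by only the current policy matrices $\mb M_t$ and the current excitation $\eta_t$. Concretely, I would start from the action row of the constraint decomposition in Corollary \ref{cor: state constraint decomposition}, which gives, for every $1\le j\le k_u$,
$$D_{u,j}^\top u_t \le g_j^u(\mb M_t) + D_{u,j}^\top \eta_t,$$
and this holds deterministically for any disturbance sequence because $g_j^u(\mb M_t)=\sum_{k=1}^{H}\|D_{u,j}^\top M_t[k]\|_1 w_{\max}$ is exactly the supremum of $D_{u,j}^\top\sum_k M_t[k]\hat w_{t-k}$ over all $\hat w_{t-k}\in\W$, and the recorded $\hat w_{t-k}=\Pi_\W(\cdot)$ always lies in $\W$ by construction.

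The two remaining steps are to bound the excitation term and to show the policy term is tight enough. For the excitation term, Lemma \ref{lem: Definition of epsilon eta} gives $D_{u,j}^\top\eta_t\le \epsilon_{\eta,u}(\bar\eta_t)$ whenever $\|\eta_t\|_\infty\le\bar\eta_t$, where $\bar\eta_t$ is the excitation level actually used at stage $t$. For the policy term, I would argue that at every stage $t$ the implemented $\mb M_t$ belongs to a robustly safe set $\Omega(\hat\theta,\epsilon_x,\epsilon_u)$ whose action tightening satisfies $\epsilon_u\ge \epsilon_{\eta,u}(\bar\eta_t)$, hence $g_j^u(\mb M_t)\le d_{u,j}-\epsilon_u\le d_{u,j}-\epsilon_{\eta,u}(\bar\eta_t)$. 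Adding the two bounds yields $D_{u,j}^\top u_t\le d_{u,j}$ for every $j$, i.e. $u_t\in\U$, for all $t$ and all $w_k\in\W$.

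The crux is therefore the membership claim $\mb M_t\in\Omega(\hat\theta,\epsilon_x,\epsilon_u)$ with a compatible $\epsilon_u$ at every single stage, which I would verify by going through the three places Algorithm \ref{alg: online algo} sets a policy. During the two \texttt{ApproxDAP} loops (Lines 5 and 11) the policy is held fixed at $\mb M_\dagger^{(e)}$ or $\mb M^{(e)}$, each a feasible point of the corresponding QP \eqref{equ: CCE} whose action constraint is exactly $g_j^u\le d_{u,j}-\epsilon_{\eta,u}(\bar\eta^{(e)})$, so membership is immediate. During the two \texttt{SafeTransit} calls the policy is a convex combination: in Step 1 of Algorithm \ref{alg: safe transit} the iterate $\mb M_t$ is a convex combination of $\mb M$ and $\mb M_{\text{mid}}$, both in $\Omega$, and in Step 2 a convex combination of $\mb M_{\text{mid}}$ and $\mb M'$, both in $\Omega'$. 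Since $g_j^u$ is convex (a sum of $\ell_1$ norms of linear images of the $M[k]$) and $\M_H$ is a box, each such $\Omega$ is a convex polytope, so the convex combinations remain inside it.

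The one bookkeeping point I would be careful about is matching the excitation level to the tightening, since \texttt{SafeTransit} runs Step 1 with $\bar\eta_{\min}=\min(\bar\eta,\bar\eta')$ and Step 2 with $\bar\eta'$, while the action tightening of $\Omega$ was built for $\bar\eta$ (resp. $\bar\eta'$). This is harmless because $\epsilon_{\eta,u}(\cdot)=\|D_u\|_\infty(\cdot)$ is monotone, so the excitation used is never larger than the one the tightening accounts for: in Step 1 we have $\bar\eta_t=\bar\eta_{\min}\le\bar\eta$ with $\mb M_t\in\Omega$ and $\epsilon_u=\epsilon_{\eta,u}(\bar\eta)\ge\epsilon_{\eta,u}(\bar\eta_{\min})$, and Step 2 is exact. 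Everything above is worst-case over $w_k$ and uses only the deterministic bound $\|\eta_t\|_\infty\le\bar\eta_t$, so the conclusion holds with probability one; this is precisely the ``w.p.\ $1$'' strength claimed for the action constraint in Theorem \ref{thm: constraint satisfaction}, and explains why, in contrast with the state constraint, no confidence-set event on $\theta_*$ is needed here (the function $g_j^u$ does not involve the model at all).
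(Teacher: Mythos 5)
Your proposal is correct and follows essentially the same route as the paper's proof: bound $D_{u,j}^\top u_t$ by $g_j^u(\mb M_t)$ plus the excitation term (the paper derives this directly via H\"older's inequality on $\hat w_{t-k}\in\W$, which is exactly the action row of Corollary \ref{cor: state constraint decomposition}), then verify stage by stage that $\mb M_t$ lies in a robustly safe set whose action tightening dominates $\epsilon_{\eta,u}$ of the excitation level actually used, with the \texttt{SafeTransit} iterates handled by convexity of $\Omega$ and $\Omega'$ and the $\bar\eta_{\min}\le\bar\eta$ monotonicity observation. No gaps.
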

\begin{proof}Notice that  $u_t=\sum_{k=1}^{H^{(e-1)}}M_t[k]\hat w_{t-k} +\eta_t$. Hence, for any $1\leq j \leq k_u$, we have
			\begin{align*}
			D_{u,j}^\top u_t&=D_{u,j}^\top \sum_{k=1}^{H^{(e-1)}}M_t[k]\hat w_{t-k} +D_{u,j}^\top \eta_t \leq \sum_{k=1}^{H^{(e-1)}} \|D_{u,j}^\top M_t[k]\|_1w_{\max} +\|D_u\|_\infty \|\eta_t\|_\infty =g_j^u(\mb M_t)+\|D_u\|_\infty \|\eta_t\|_\infty
		\end{align*}
	Our goal is to show that $g_j^u(\mb M_t)+\|D_u\|_\infty \|\eta_t\|_\infty\leq d_{u,j}$ for all $j$ and for all $t\geq 0$. This is straightforward when $t_1^{(e)}\leq t \leq t_1^{(e)}+T_D^{(e)}-1$ and $t_2^{(e)}\leq t \leq T^{(e+1)}-1$. For example, when $t_1^{(e)}\leq t \leq t_1^{(e)}+T_D^{(e)}-1$, we have $\mb M_t=\mb M^{(e)}_\dagger$ and $\|\eta_t\|_\infty \leq \bar \eta^{(e)} $, which leads to $g_j^u(\mb M^{(e)}_\dagger)+\|D_u\|_\infty \|\eta_t\|_\infty=g_j^u(\mb M^{(e)}_\dagger)+c_3\|\bar \eta^{(e)}\|_\infty\leq d_{u,j}$ by \texttt{RobustCE} and Lemma \ref{lem: Definition of epsilon eta}. Similar results can be shown for $t_2^{(e)}\leq t \leq T^{(e+1)}-1$.
	
	Next, we focus on the safe policy transition stages. It suffices to show that $u_t=\sum_{k=1}^{H^{(e-1)}}M_t[k]\hat w_{t-k} +\eta_t$ in all stages of Algorithm \ref{alg: safe transit}. In the following, we will adopt the notations in Algorithm \ref{alg: safe transit}. In Step 1 of Algorithm \ref{alg: safe transit}, we have $\|\eta_t\|_\infty \leq\bar \eta_{\min}\leq \bar \eta$ and $\mb M_t\in \Omega$ by the convexity of $\Omega$. Therefore, we have $g_j^u(\mb M_t)+\|D_u\|_\infty \|\eta_t\|_\infty=g_j^u(\mb M_t)+c_3\|\bar \eta\|_\infty\leq d_{u,j}$, where we used the definition of $\Omega$ in \texttt{RobustCE}. In Step 2, we have $\|\eta_t\|_\infty \leq \bar \eta'$ and $\mb M_t\in \Omega'$ by the convexity of $\Omega'$. Therefore, we have $g_j^u(\mb M_t)+\|D_u\|_\infty \|\eta_t\|_\infty=g_j^u(\mb M_t)+c_3\|\bar \eta'\|_\infty\leq d_{u,j}$, where we used the definition of $\Omega'$ by \texttt{RobustCE} with input $\bar \eta'$. 
\end{proof}

\begin{lemma}[Almost sure upper bound on $x_t$]\label{lem: w.p.1 bdd on xt}
	Consider DAP policy $u_t=\sum_{k=1}^{H_t} M_t[k]\hat w_{t-k}+\eta_t$, where $\mb M_t\in \M_{H_t}$, $\{H_t\}_{t\geq 0}$ is non-decreasing,  and  $\|\eta_t\|_\infty \leq \eta_{\max}$. Suppose $H_0 \geq \log(2\kappa)/\log((1-\gamma)^{-1})$ and $\eta_{\max} \leq w_{\max}/\kappa_B$. Let $\{x_t, u_t\}_{t\geq 0}$ denote the trajectory generated by this policy on the  system with parameter $\theta_*$ and disturbance $w_t$.  Then, there exists $b_x=4\sqrt n \kappa w_{\max}/\gamma+ 4\sqrt{mn} \kappa^3\kappa_B w_{\max}/\gamma^2= O(\sqrt{mn})$ such that 
	$$\|x_t\|_2\leq b_x,  \quad \forall\, t\geq 0, \quad \forall \ w_k, \, \hat w_k \in \W.$$
	
\end{lemma}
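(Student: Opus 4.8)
The plan is to combine the state-approximation identity of Lemma~\ref{lem: xt formula DAP uncertain} with a contraction-and-induction argument. Applying that lemma to the trajectory decomposes $x_t$ into four pieces: a \emph{history} term $A_*^{H_t}x_{t-H_t}$, a \emph{DAP-through-dynamics} term $\sum_{k=2}^{2H_t}\sum_{i=1}^{H_t}A_*^{i-1}B_*M_{t-i}[k-i]\hat w_{t-k}\one_{(1\le k-i\le H_{t-i})}$, a \emph{disturbance} term $\sum_{i=1}^{H_t}A_*^{i-1}w_{t-i}$, and an \emph{excitation} term $\sum_{i=1}^{H_t}A_*^{i-1}B_*\eta_{t-i}$. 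The first term is a genuine contraction because $A_*$ is $(\kappa,\gamma)$-stable, while the remaining three are absolutely summable and hence bounded by a dimension-dependent constant independent of $t$; the lemma then follows by strong induction on $t$.

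First I would bound the three non-recursive terms. Using $\|A_*^{i-1}\|_2\le\kappa(1-\gamma)^{i-1}$, $\|B_*\|_2\le\kappa_B$, the membership $\mb M_{t-i}\in\M_{H_{t-i}}$ (so $\|M_{t-i}[j]\|$ decays geometrically like $(1-\gamma)^{j-1}$), and $\|\hat w_k\|_2,\|w_k\|_2\le\sqrt n\,w_{\max}$, each sum collapses to a product of geometric series in $(1-\gamma)$. The disturbance term contributes $O(\sqrt n\,\kappa w_{\max}/\gamma)$, the DAP term a double geometric sum contributing $O(\sqrt{mn}\,\kappa^3\kappa_B w_{\max}/\gamma^2)$, and the excitation term $\sum_i\kappa(1-\gamma)^{i-1}\kappa_B\|\eta_{t-i}\|_2$. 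Here the hypothesis $\eta_{\max}\le w_{\max}/\kappa_B$ cancels the $\kappa_B$ and keeps the excitation term at scale $O(\sqrt m\,\kappa w_{\max}/\gamma)$, which is dominated by the larger DAP term, so no new scale appears. Collecting constants gives $\sum(\text{non-recursive terms})\le C$ with $C=O(\sqrt{mn})$, and tracking the constants yields $C=2\sqrt n\kappa w_{\max}/\gamma+2\sqrt{mn}\kappa^3\kappa_B w_{\max}/\gamma^2=b_x/2$.

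For the history term, the choice $H_0\ge\log(2\kappa)/\log((1-\gamma)^{-1})$ gives $\kappa(1-\gamma)^{H_0}\le\tfrac12$, and since $\{H_t\}$ is non-decreasing, $\|A_*^{H_t}\|_2\le\kappa(1-\gamma)^{H_t}\le\tfrac12$. Hence $\|x_t\|_2\le\tfrac12\|x_{t-H_t}\|_2+C$. I would then run a strong induction on $t$ with base case $x_0=0$: for indices with $t\ge H_t$ we have $t-H_t<t$, so the inductive hypothesis $\|x_{t-H_t}\|_2\le 2C$ yields $\|x_t\|_2\le\tfrac12(2C)+C=2C=b_x$; for the early indices $t<H_t$ the identity instead unrolls all the way to $x_0=0$, so the history term vanishes and the truncated sums are bounded by the same $C\le 2C$. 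This closes the induction and proves $\|x_t\|_2\le b_x$ for all $t\ge0$ and all $w_k,\hat w_k\in\W$.

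The main obstacle is bookkeeping rather than conceptual: matching the geometric-sum constants to the stated $b_x$, and handling the boundary regime $t<H_t$ where an $H_t$-step unrolling would reference a pre-initial state. The cleanest way around the latter is to unroll only $t$ steps to the zero initial state, observing that this can only shrink the geometric sums. One should also verify that the reindexing $j=k-i$ in the DAP term, together with the indicator $\one_{(1\le k-i\le H_{t-i})}$, decouples the two summations so that they factor into independent geometric series each summing to $1/\gamma$.
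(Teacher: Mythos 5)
Your proposal is correct and follows essentially the same route as the paper's proof: decompose $x_t$ via the state-approximation identity into the contracting history term plus three geometrically summable terms, bound the latter by $b_x/2$ (using $\kappa_B\eta_{\max}\le w_{\max}$ to fold the excitation term into the disturbance-term scale), and close by induction using $\kappa(1-\gamma)^{H_t}\le 1/2$ with the zero pre-initial convention handling $t<H_t$. The constants you track match the paper's exactly.
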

This lemma is a natural extension of Lemma 2 in \cite{li2020online} and the proof is deferred to Appendix \ref{subsec: as. bdd on xt}.

\begin{proof}[Proof of Corollary \ref{cor: estimation error of our algo}]

	By letting $\delta^{(e)}=\frac{p}{6e^2}$ for $e\geq 1$, we have that $\|\tilde \theta^{(e)}-\theta_*\|_2\leq O(\frac{(\sqrt{m+n})\sqrt{\log(\sqrt{mn}/\bar \eta^{(e-1)})+\log(e)}}{\sqrt{ T^{(e-1)}_D} \bar \eta^{(e-1)}})$ w.p. $1-p/(2e^2)$. Notice that 
$
		\| \hat \theta^{(e)}-\theta_*\|_F\leq  	\| \tilde  \theta^{(e)}-\theta_*\|_F\leq \sqrt n\| \tilde  \theta^{(e)}-\theta_*\|_2,
	$
	which completes the proof.
\end{proof}

\section{Feasibility}\label{append: feasible}
This appendix provides a proof for Theorem \ref{thm: feasibility}. We will first establish the recursive feasibility and then prove the initial feasibility. For notational simplicity, we define $\Omega_0\coloneqq \Omega(\hat \theta^{(0)}, \epsilon_x^{(0)}+\epsilon_0, \epsilon_u^{(0)})$.

	\nbf{Proof of recursive feasibility:} 
 To show that Algorithm 1 and 2 are feasible at all stages, we need to show that $\Omega_\dagger^{(e)}, \Omega^{(e)}, \Omega_\dagger^{(e)}\cap \Omega^{(e)}, \Omega_\dagger^{(e+1)}\cap \Omega^{(e)}$ are all non-empty for $e\geq 0$. Notice that it suffices to show that $\Omega_0\subseteq \Omega_\dagger^{(e)}$ and $\Omega_0\subseteq \Omega^{(e)}$ for all $e\geq 0$. 
 
 Consider $\Omega_\dagger^{(e)}$ for $e\geq 0$. Notice that $\Omega_0\subseteq \Omega_\dagger^{(0)}$ by definition, so we will focus on $e\geq 1$ below. We first consider the action constraints. 
For any $\mb M\in \Omega_0$, we have 
\begin{align*}
	g_j^u(\mb M)\leq d_{u,j}-\epsilon_{\eta, u}(\bar \eta^{(0)}), \quad \forall\, 1\leq j \leq k_u.
\end{align*}
Since $\bar \eta^{(0)}\geq \bar \eta^{(e)}$ by condition (ii) of Theorem \ref{thm: feasibility}, we have $\epsilon_{\eta, u}(\bar \eta^{(0)}) \geq \epsilon_{\eta, u}(\bar \eta^{(e)})$, so $\mb M$ satisfies the action constraints in $\Omega_\dagger^{(e)}$:
\begin{align*}
	g_j^u(\mb M)\leq d_{u,j}-\epsilon_{\eta, u}(\bar \eta^{(e)}) \quad \forall\, 1\leq j \leq k_u.
\end{align*}
Next, we consider the state constraints.
Notice that $\hat \theta^{(e)}\in \Theta_{ini}$ by \texttt{ModelEst}, so $\|\hat \theta^{(e)}-\hat \theta^{(0)}\|_F\leq r^{(0)}$ for $e\geq 1$. By Lemma \ref{lem: Definition of epsilon theta(r)},  for any $\mb M\in \Omega_0$, we have
\begin{align*}
	g_i^x(\mb M; \hat \theta^{(e)})&\leq 	g_i^x(\mb M; \hat \theta^{(0)})+\epsilon_\theta(r^{(0)})\\
	& \leq d_{x,i}-\epsilon_x^{(0)}-\epsilon_0+\epsilon_\theta(r^{(0)})\\
	& = d_{x,i}-\epsilon_\theta(r^{(0)})- \epsilon_{\eta,x}(\bar \eta^{(0)})-\epsilon_H(H^{(0)})-\epsilon_v(\Delta_M^{(0)}, H^{(0)}) -\epsilon_0+\epsilon_\theta(r^{(0)})\\
	&=d_{x,i}- \epsilon_{\eta,x}(\bar \eta^{(0)})-\epsilon_H(H^{(0)})-\epsilon_v(\Delta_M^{(0)}, H^{(0)}) -\epsilon_0
\end{align*}
Further, since $r^{(e)}\leq r^{(1)}\leq \epsilon_0/(c_1\sqrt{mn})$ by condition (ii) of Theorem \ref{thm: feasibility}, we have $\epsilon_\theta(r^{(e)})\leq \epsilon_\theta(r^{(1)})\leq \epsilon_0$, so 
\begin{align*}
	g_i^x(\mb M; \hat \theta^{(e)})&\leq d_{x,i}- \epsilon_{\eta,x}(\bar \eta^{(0)})-\epsilon_H(H^{(0)})-\epsilon_v(\Delta_M^{(0)}, H^{(0)}) -\epsilon_0\\
	& \leq d_{x,i}- \epsilon_{\eta,x}(\bar \eta^{(0)})-\epsilon_H(H^{(0)})-\epsilon_v(\Delta_M^{(0)}, H^{(0)}) -\epsilon_\theta(r^{(e)})\\
	& \leq d_{x,i}- \epsilon_{\eta,x}(\bar \eta^{(e)})-\epsilon_H(H^{(e)})-\epsilon_v(\Delta_M^{(e)}, H^{(e)}) -\epsilon_\theta(r^{(e)})
\end{align*}
 by condition (ii) of Theorem \ref{thm: feasibility}. So $\mb M\in \Omega_\dagger^{(e)}$ for $e\geq 1$. Similarly, we can show  $\mb M\in \Omega^{(e)}$ for $e \geq 0$. This completes the recursive feasibility.

\nbf{Proof of  initial feasibility.} By Lemma 4 and Corollary 2 in \cite{li2020online},  we can construct $\mb M_F$ with length $H^{(0)}$ based on $K_F$ in  Assumption \ref{ass: KF epsilonF} such that 
\begin{align*}
	&g_i^x(\mb M_F;\theta_*)\leq d_{x,i}- \epsilon_{F,x}+\epsilon_{P}(H^{(0)})\\
	&g_i^u(\mb M_F)\leq d_{u,j}- \epsilon_{F,u}+\epsilon_{P}(H^{(0)}).
\end{align*}
where $\epsilon_P$ corresponds to $\epsilon_1+\epsilon_3$ in \cite{li2020online}.\footnote{Notice that $\epsilon_1+\epsilon_3=O(n\sqrt m H (1-\gamma)^H)$ in \cite{li2020online}, but we  improve the bound to $\epsilon_P=O(\sqrt{mn}(1-\gamma)^H)$. Specifically, $\epsilon_3=O(\sqrt n (1-\gamma)^H)$ remains unchanged, but we can show $\epsilon_1(H)=O(\sqrt{mn}(1-\gamma)^H)$. This is because  the proof of Lemma 1 in \cite{li2020online} shows that $\epsilon_1=O(b_x(1-\gamma)^H)$, where $\|x_t\|_2\leq b_x$ a.s.. In Lemma \ref{lem: w.p.1 bdd on xt}, we show $ b_x=O(\sqrt{mn})$ in this paper, so we have $\epsilon_1(H)=O(\sqrt{mn}(1-\gamma)^H)$. \label{footnote: epsilonP}}

Therefore, by Lemma \ref{lem: Definition of epsilon theta(r)}, we have
\begin{align*}
	&g_i^x(\mb M_F;\hat \theta^{(0)})\leq d_{x,i}- \epsilon_{F,x}+\epsilon_{P}(H^{(0)})+\epsilon_\theta(r^{(0)})\quad g_i^u(\mb M_F)\leq d_{u,j}- \epsilon_{F,u}+\epsilon_{P}(H^{(0)}).
\end{align*}
Therefore, $\mb M_F\in \Omega_0$ if \eqref{equ: initial feasibility} in Theorem \ref{thm: feasibility} holds.

\section{Constraint Satisfaction}\label{append: constraint satisfaction}
This section provides a proof for  the constraint satisfaction guarantee in Theorem \ref{thm: constraint satisfaction}. Notice that  the control constraint satisfaction has already been established in Lemma \ref{lem: ut satisfies constraints}. Hence, we will focus on state constraint satisfaction in this appendix. Firstly, we present and prove a  general state constraint satisfaction lemma for time-varying approximate DAPs, which is more general than Lemma \ref{lem: constraint tightening terms}.  Secondly, we prove the state constraint satisfaction of our algorithms by showing that our algorithms satisfy the conditions in the general state constraint satisfaction lemma.
\subsection{A General State Constraint Satisfaction Lemma}
This subsection provides a   general state constraint satisfaction lemma for time-varying approximate DAPs, which includes Lemma \ref{lem: constraint tightening terms} as a special case.

\begin{lemma}[General Constraint Satisfaction Lemma]\label{lem: general constraint satisfaction theta*}
	Consider the time-varying approximate DAPs in \eqref{equ: time varying DAP uncertain}, where $\mb M_t \in \M_{H_t}$ for non-decreasing $\{H_t\}_{t\geq 0}$, $\hat \theta_t\in \Theta_{\textup{ini}}$. Define 
	\begin{align*}
		\epsilon_{H,t}&=(1-\gamma)^{H_t}\cdot\|D_x\|_\infty \kappa x_{\max}\\
				\epsilon_{v,t}&= \sqrt{mnH_t}\Delta_{M,t}\cdot \|D_x\|_\infty w_{\max}\kappa\kappa_B/\gamma^2, \quad \Delta_{M,t}=\max_{1\leq k\leq H_t}\frac{\|\mb M_t-\mb M_{t-k}\|_F}{k}, \\
								\epsilon_{\theta,t}& = c_1 \max_{0\leq k\leq H_t} \|\hat \theta_{t-k}-\theta_*\|_F\\
		\epsilon_{\eta,x,t}&=c_2\sqrt m \max_{1\leq k\leq H_t}\bar \eta_{t-k}, 
	\end{align*}
	where $c_1, c_2$ are defined in Lemma \ref{lem: Definition of epsilon theta(r)} and Lemma \ref{lem: Definition of epsilon eta}, and we let $\mb M_t=\mb M_0$, $\bar \eta_t=0$, $H_t=H_0$, $\hat \theta_t=\hat \theta_0$, $w_t=\hat w_t=x_t=0$,  for $t\leq -1$. 
	
	For any $t\geq 0$, if $x_s\in \X, u_s\in\U$ for all $s\leq t-1$ and
	\begin{equation}\label{equ: general state constraint satisfaction}
		g_i^x(\mb M_t;\hat \theta_t)\leq d_{x,i}-\epsilon_{H,t}-\epsilon_{\eta,x,t}-\epsilon_{\theta,t}-\epsilon_{v,t}, \quad \forall \, 1\leq i\leq k_x,
	\end{equation} 
	 then $x_t\in \X$. 
	 
	 Consequently, if \eqref{equ: general state constraint satisfaction} holds  and $u_t\in \U$ for all $t\geq 0$, then $x_t \in \X$ for all $t\geq 0$.
\end{lemma}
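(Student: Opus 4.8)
The plan is to prove the per-stage claim by feeding the constraint decomposition of Corollary~\ref{cor: state constraint decomposition} into the hypothesis \eqref{equ: general state constraint satisfaction}, bounding each of the four error terms by its matching constraint-tightening term; the ``consequently'' statement then follows by a short induction on $t$. First I would invoke Corollary~\ref{cor: state constraint decomposition} with the choice $\hat\theta_t^g=\hat\theta_t$, which gives, for each $1\le i\le k_x$,
\begin{equation*}
D_{x,i}^\top x_t \le g_i^x(\mb M_t;\hat\theta_t) + E_{\text{est}} + E_{\eta} + E_{\text{trunc}} + E_{\text{var}},
\end{equation*}
where $E_{\text{est}}$ is the model-estimation error, $E_{\eta}=\sum_{i=1}^{H_t}D_{x,i}^\top A_*^{i-1}B_*\eta_{t-i}$ the excitation error on the state, $E_{\text{trunc}}=D_{x,i}^\top A_*^{H_t}x_{t-H_t}$ the history-truncation error, and $E_{\text{var}}$ the policy-variation error. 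The goal is then to show that each error is dominated by the corresponding tightening term appearing in \eqref{equ: general state constraint satisfaction}.

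For $E_{\text{est}}$ I would apply Lemma~\ref{lem: Definition of epsilon theta(r)} with $r=r_t\coloneqq\max_{0\le k\le H_t}\|\hat\theta_{t-k}-\theta_*\|_F$, obtaining $E_{\text{est}}\le\epsilon_\theta(r_t)=\epsilon_{\theta,t}$; this step needs $\hat\theta_{t-k},\hat\theta_t^g\in\Theta_{\text{ini}}$ (assumed) and, crucially, $x_{t-k}\in\X,u_{t-k}\in\U$ for $1\le k\le H_t$ so that $\|z_{t-k}\|_2\le z_{\max}$ feeds the disturbance-approximation bound of Lemma~\ref{lem: what - w}. For $E_{\eta}$, Lemma~\ref{lem: Definition of epsilon eta} with $\bar\eta=\max_{1\le k\le H_t}\bar\eta_{t-k}$ gives $E_{\eta}\le\epsilon_{\eta,x,t}$. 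For $E_{\text{trunc}}$, Lemma~\ref{lem: define epsilonH} gives $E_{\text{trunc}}\le\epsilon_H(H_t)=\epsilon_{H,t}$, using $x_{t-H_t}\in\X$ (from the past-feasibility hypothesis when $t\ge H_t$, or from the convention $x_s=0\in\X$ for $s\le-1$). For $E_{\text{var}}$, Lemma~\ref{lem: define epsilon V} applies with $\Delta_{M,t}=\max_{1\le k\le H_t}\|\mb M_t-\mb M_{t-k}\|_F/k$, giving $E_{\text{var}}\le\epsilon_{v,t}$. Summing these four bounds and invoking \eqref{equ: general state constraint satisfaction} yields $D_{x,i}^\top x_t\le d_{x,i}$ for every $i$, i.e.\ $x_t\in\X$.

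For the final ``consequently'' claim, I would induct on $t$. Given $u_s\in\U$ for all $s$ and that \eqref{equ: general state constraint satisfaction} holds at every stage, the inductive hypothesis $x_s\in\X$ for $s\le t-1$, together with the assumed $u_s\in\U$ for $s\le t-1$, supplies exactly the past-feasibility premise of the per-stage claim, so $x_t\in\X$; the base case $t=0$ is immediate because $x_0=0\in\X$ and the negative-index conventions make all error terms vanish. The main obstacle I anticipate is not any individual bound — those are delivered by the four tightening lemmas — but the bookkeeping of their hypotheses: each tightening lemma silently requires the past states and actions to be feasible so that $\|z_{t-k}\|_2\le z_{\max}$, which makes the per-stage statement look circular. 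The induction is precisely what breaks this circularity, and the delicate part is checking that the $s\le-1$ conventions ($w_s=\hat w_s=x_s=0$, $\bar\eta_s=0$, $\mb M_s=\mb M_0$, $\hat\theta_s=\hat\theta_0$) are simultaneously consistent with all four lemmas for small $t$, so that the argument closes uniformly from $t=0$ onward.
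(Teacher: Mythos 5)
Your proposal is correct and follows essentially the same route as the paper: the paper's proof expands $D_{x,i}^\top x_t$ via Lemma~\ref{lem: xt formula DAP uncertain} (which is exactly the decomposition of Corollary~\ref{cor: state constraint decomposition} carried out in-line), bounds the truncation, disturbance-approximation, model-perturbation, excitation, and policy-variation terms by $\epsilon_{H,t}$, $\epsilon_{\theta,t}$, $\epsilon_{\eta,x,t}$, $\epsilon_{v,t}$ using the same four tightening lemmas, and closes by induction. Your explicit attention to the circularity-breaking role of the induction and the $s\le -1$ conventions matches what the paper does implicitly.
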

\begin{proof}
Consider stage $t\geq 0$.
	By Lemma \ref{lem: xt formula DAP uncertain}, for any $1\leq i \leq k_x$, we have
	\begin{align*}
		D_{x,i}^\top x_t&=	D_{x,i}^\top A_*^{H_t}x_{t-H_t}\\
		&+\sum_{k=2}^{2H_t}\sum_{i=1}^{H_t}	D_{x,i}^\top A_*^{i-1}B_* M_{t-i}[k-i]\hat w_{t-k}\one_{(1\leq k-i\leq H_{t-i})} + \sum_{i=1}^{H_t}	D_{x,i}^\top A_*^{i-1}w_{t-i}+\sum_{i=1}^{H_t}	D_{x,i}^\top A_*^{i-1}B_* \eta_{t-i}\\
		&=	D_{x,i}^\top A_*^{H_t}x_{t-H_t}\\
		&+\sum_{k=1}^{2H_t}D_{x,i}^\top (A_*^{i-1}\one_{k\leq H_t}+\sum_{i=1}^{H_t}A_*^{i-1}B_* M_{t-i}[k-i]\one_{(1\leq k-i\leq H_{t-i})})\hat w_{t-k} + \sum_{i=1}^{H_t}	D_{x,i}^\top A_*^{i-1}(w_{t-i}-\hat w_{t-i})\\
		&+\sum_{i=1}^{H_t}	D_{x,i}^\top A_*^{i-1}B_* \eta_{t-i}\\
		&=	D_{x,i}^\top A_*^{H_t}x_{t-H_t}+\sum_{k=1}^{2H_t}D_{x,i}^\top \Phi_k^x(\mb M_{t-H_t:t-1};\theta_*)\hat w_{t-k} + \sum_{i=1}^{H_t}	D_{x,i}^\top A_*^{i-1}(w_{t-i}-\hat w_{t-i})\\
		&+\sum_{i=1}^{H_t}	D_{x,i}^\top A_*^{i-1}B_* \eta_{t-i}\\
		& \leq \|D_x\|_\infty \kappa(1-\gamma)^{H_t}x_{\max}+ g_i^x(\mb M_{t-H_t:t-1};\theta_*)+ \|D_x\|_\infty \kappa/\gamma \max_{1\leq k\leq H_t}\|\hat \theta_{t-k}-\theta_*\|_2z_{\max}\\
		&+\|D_x\|_\infty \kappa \kappa_B/\gamma \sqrt m \max_{1\leq k\leq H_t}\bar \eta_{t-k}\\
		& \leq \epsilon_{H,t}+ \mathring g_i^x(\mb M_{t};\theta_*)+ \epsilon_{v,t} +\|D_x\|_\infty \kappa/\gamma \max_{1\leq k\leq H_t}\|\hat \theta_{t-k}-\theta_*\|_2z_{\max}+\epsilon_{\eta,x,t}\\
		& \leq \epsilon_{H,t}+ \mathring g_i^x(\mb M_{t};\hat \theta_t)+\epsilon_{\hat \theta}(\|\theta_*-\hat \theta_t\|_F)+ \epsilon_{v,t} +\|D_x\|_\infty \kappa/\gamma \max_{1\leq k\leq H_t}\|\hat \theta_{t-k}-\theta_*\|_2z_{\max}+\epsilon_{\eta,x,t}\\
		& \leq \epsilon_{H,t}+ \mathring g_i^x(\mb M_{t};\hat \theta_t)+ \epsilon_{\theta,t}+  \epsilon_{v,t}+\epsilon_{\eta,x,t}\\
	&	\leq d_{x,i}
	\end{align*}
where we used Lemma \ref{lem: what - w}, Lemma \ref{lem: Definition of epsilon theta(r)},  $x_s\in \X, u_s\in\U$ for all $s\leq t-1$, and \eqref{equ: general state constraint satisfaction}. The last inequality guarantees $x_t\in \X$. Therefore, the proof can be completed by induction.
\end{proof}

\subsection{Proof of Theorem \ref{thm: constraint satisfaction}}

Define an event 
\begin{align}\label{equ: def of Esafe}
	\mathcal E_{\text{safe}}=\{\theta_*\in \bigcap_{e=0}^{N-1} \Theta^{(e)}\}.
\end{align}
Notice that
\begin{align*}
	\Pb(\Esafe)&=1-\Pb(\Esafe^c)\geq 1-\sum_{e=0}^N \Pb(\theta_*\not \in \Theta^{(e)})\geq 1-\sum_{e=1}^N p/(2e^2)\geq 1-p
\end{align*}
where we used Corollary \ref{cor: estimation error of our algo} and $\theta_*\in  \Theta^{(0)}=\Theta_{\text{ini}}$.
In the following, we will condition on the event $\Esafe$ and show $x_t\in \X$ for all $t\geq 0$ under this event. By Lemma \ref{lem: general constraint satisfaction theta*}, we only need to show \eqref{equ: general state constraint satisfaction} for any $t$.

We discuss three possible cases based on the value of $t$. We introduce some notations for our case-by-case discussion: let $W_{1}^{(e)}, W_{2}^{(e)}$ denote the $W_1, W_2$ defined in Algorithm \ref{alg: safe transit} during the transition in Phase 1, and let $\tilde W_{1}^{(e)}, \tilde W_{2}^{(e)}$ denote the $ W_1,  W_2$ defined in Algorithm \ref{alg: safe transit} during the transition in Phase 2.



\nbf{(Case 1: when $T^{(e)}\leq t \leq T^{(e)}+W_{1}^{(e)}-1$.)} In this case, $\mb M_t\in \Omega^{(e-1)}$, so
\begin{align*}
	 g_i^x(\mb M_t;\hat \theta^{(e)})\leq d_{x,i}-\epsilon_H(H^{(e-1)})-\epsilon_{v}(\Delta_M^{(e-1)})-\epsilon_{ \theta}(r^{(e)})
\end{align*}
Notice that $H_t=H^{(e-1)}$, so $\epsilon_{H,t}=\epsilon_H(H^{(e-1)})$. Further, by our algorithm design, $\Delta_{M,t}\leq \Delta_M^{(e-1)}$. Since $\tilde W_{1}^{(e-1)}\geq H^{(e-1)}$ and $\eta_k=0$ for $t_1^{(e-1)}+T_D^{(e-1)}\leq k \leq t$, we have $\epsilon_{\eta,x,t}=\max_{1\leq k\leq H_t}c_2\sqrt m \bar \eta_{t-k}=0$. Next, since $r_\theta^{(e)}\leq r_\theta^{(e-1)}$ by Condition 2 of Theorem \ref{thm: feasibility} for $e\geq 1$, $\epsilon_{\theta,t}\leq \epsilon_\theta(r^{(e)})$. So we satisfy \eqref{equ: general state constraint satisfaction}.

\nbf{(Case 2: when $ T^{(e)}+W_{1}^{(e)} \leq t \leq  t_1^{(e)}+T_{D}^{(e)}+\tilde W_{1}^{(e)}-1$.)} We have $\mb M_t\in \Omega_\dagger^{(e)}$. So
\begin{align*}
	 g_i^x(\mb M_t;\hat \theta^{(e)})\leq d_{x,i}-\epsilon_H(H^{(e)})-\epsilon_{v}(\Delta_M^{(e)})-\epsilon_{\eta,x}(\bar \eta^{(e)})-\epsilon_{ \theta}(r^{(e)})
\end{align*}
Next, $H_t=H^{(e)}$, since $W_{1}^{(e)}\geq H^{(e)}$, we have$ \epsilon_{\theta,t}\leq \epsilon_{ w}(r^{(e)})$, and $\epsilon_{v,t}=\epsilon_{v}(\Delta_M^{(e)})$. Since we take minimum over potential $\bar \eta$ in Step 1 of Algorithm \ref{alg: safe transit} and $W_{1}^{(e)}\geq H^{(e)}$, we have $\epsilon_{\eta,x,t}\leq \epsilon_{\eta,x}(\bar \eta^{(e)})$. So we satisfy \eqref{equ: general state constraint satisfaction}.

\nbf{(Case 3: when $t_1^{(e)}+T_{D}^{(e)}+\tilde W_{1}^{(e)}\leq t \leq  T^{(e+1)}-1$.)} We have $\mb M_t\in \Omega^{(e)}$. So
\begin{align*}
	 g_i^x(\mb M_t;\hat \theta^{(e+1)})\leq d_{x,i}-\epsilon_H(H^{(e)})-\epsilon_{v}(\Delta_M^{(e)})-\epsilon_{ \theta}(r^{(e+1)})
\end{align*}
Next, $H_t=H^{(e)}$, since $W_{1}^{(e)}\geq H^{(e)}$, we have $ \epsilon_{\theta,t}\leq  \epsilon_{\theta}(r^{(e+1})$ by $r^{(e+1)}\leq r_\theta^{(e)}$, and $\epsilon_{v,t}\leq \epsilon_{v}(\Delta_M^{(e)})$. Since we take minimum over potential $\bar \eta$ in Step 1 of Algorithm \ref{alg: safe transit} and $\tilde W_{1}^{(e)}\geq H^{(e)}$, we have $\epsilon_{\eta,x,t}=0$. So we satisfy \eqref{equ: general state constraint satisfaction}.

In conclusion, we satisfy \eqref{equ: general state constraint satisfaction} for all $t\geq 0$. By Lemma \ref{lem: general constraint satisfaction theta*} and Lemma \ref{lem: ut satisfies constraints}, we can show state constraint satisfaction under $\Esafe$.

\section{Regret Analysis}\label{append: regret}


In this section, we provide a proof for Theorem \ref{thm: regret bound}. Specifically, we first prove the regret bound and then  verify the conditions for  feasibility and constraint satisfaction. 


\subsection{Proof of the Regret Bound}
Our proof of the regret bound relies on decomposing  the regret into several parts and bounding each part. 

Firstly, we  decompose the $T$ stages into two parts and decompose the regret accordingly. 
For $e\geq 0$, define 
\begin{align*}
	\T_1^{(e)}&=\{T^{(e)}\leq t \leq t_2(e)+H^{(e)}-1\}, \quad	\T_2^{(e)}=\{t_2(e)+H^{(e)}\leq t \leq T^{(e+1)}-1\}.
\end{align*}
Then, decompose the regret by the stage decomposition below:
\begin{equation}\label{equ: regret decompo}
	\text{Regret}=\sum_{t=0}^{T-1}(l(x_t,u_t)-J^*)=\underbrace{\sum_{e=0}^{N-1} \sum_{t\in \T_1^{(e)}}(l(x_t,u_t)-J^*)}_{\text{First term}}+\underbrace{\sum_{e=0}^{N-1} \sum_{t\in \T_2^{(e)}}(l(x_t,u_t)-J^*)}_{\text{Second term}}
\end{equation}
The first term can be bounded straightforwardly by the fact that the single-stage regret is bounded and the total number of stages in $	\T_1^{(e)}$ for all $e$ can be bounded by $O(T^{2/3} )$. 
\begin{lemma}[Regret Bound of the First Term]\label{lem: regret in sum-e T1(e)}
	When the event $\Esafe$ defined in \eqref{equ: def of Esafe} happens, under the conditions in Theorem \ref{thm: regret bound}, we have
	$$ \sum_{e=0}^{N-1} \sum_{t\in \T_1^{(e)}}(l(x_t,u_t)-J^*)\leq O(T^{2/3})$$
\end{lemma}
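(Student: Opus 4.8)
The plan is to factor the first term into (i) a uniform bound on the per-stage regret $l(x_t,u_t)-J^*$, which is $O(1)$ on the event $\Esafe$, and (ii) a count of the stages in $\bigcup_e \T_1^{(e)}$, which I will show is $O(T^{2/3})$. First I would bound the per-stage regret. On $\Esafe$, Theorem \ref{thm: constraint satisfaction} gives $x_t\in\X$ and $u_t\in\U$ for all $t$, so $\|x_t\|_2\le x_{\max}$, $\|u_t\|_2\le u_{\max}$, and hence $l(x_t,u_t)=x_t^\top Q x_t+u_t^\top R u_t\le \upsilon_{\max}(Q)x_{\max}^2+\upsilon_{\max}(R)u_{\max}^2=:l_{\max}$, a constant independent of $T$. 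Since $Q,R\succ 0$ forces $J^*=\min_{K\in\K}J(K)\ge 0$, each summand obeys $l(x_t,u_t)-J^*\le l_{\max}$, so
\[
\sum_{e=0}^{N-1}\sum_{t\in\T_1^{(e)}}(l(x_t,u_t)-J^*)\le l_{\max}\sum_{e=0}^{N-1}|\T_1^{(e)}|,
\]
and it remains to bound $\sum_e |\T_1^{(e)}|$.

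Next I would count the stages in $\T_1^{(e)}=\{T^{(e)}\le t\le t_2^{(e)}+H^{(e)}-1\}$. Reading off the timeline of Algorithm \ref{alg: online algo}, this interval splits into the Phase-1 transition of length $W_1^{(e)}+W_2^{(e)}$, the exploration window of length $T_D^{(e)}$, the Phase-2 transition of length $\tilde W_1^{(e)}+\tilde W_2^{(e)}$, and the trailing $H^{(e)}$ stages, so $|\T_1^{(e)}|=W_1^{(e)}+W_2^{(e)}+T_D^{(e)}+\tilde W_1^{(e)}+\tilde W_2^{(e)}+H^{(e)}$. For the transition lengths, the key observation is that every $\mb M\in\M_{H^{(e)}}$ satisfies $\|\mb M\|_F=O(\sqrt m\, n)$, because $\|M[k]\|_\infty\le 2\sqrt n\kappa^2(1-\gamma)^{k-1}$ decays geometrically in $k$; hence $\|\mb M-\mb M_{\text{mid}}\|_F=O(\sqrt m\, n)$. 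Since the variation budget is non-increasing (so $\min(\Delta_M^{(e-1)},\Delta_M^{(e)})=\Delta_M^{(e)}$), the definitions of $W_1,W_2$ in Algorithm \ref{alg: safe transit} and the choice $\Delta_M^{(e)}=O(\epsilon_F^x/(\sqrt{mnH^{(e)}}(T^{(e+1)})^{1/3}))$ give each transition a length $O(\|\mb M-\mb M_{\text{mid}}\|_F/\Delta_M^{(e)})=O(\mathrm{poly}(m,n)\sqrt{H^{(e)}}(T^{(e)})^{1/3})$. With $T_D^{(e)}=(T^{(e)})^{2/3}$ and $H^{(e)}=O(\log T)$, the exploration window dominates, yielding $|\T_1^{(e)}|=O((T^{(e)})^{2/3})$.

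Finally I would sum over episodes using the doubling schedule $T^{(e+1)}=2T^{(e)}$, which gives $T^{(e)}=T/2^{N-e}$ with $N=O(\log T)$ and $T^{(N)}=T$. Then
\[
\sum_{e=0}^{N-1}(T^{(e)})^{2/3}=T^{2/3}\sum_{j=1}^{N}2^{-2j/3}\le T^{2/3}\cdot\frac{2^{-2/3}}{1-2^{-2/3}}=O(T^{2/3}),
\]
while the subdominant transition contributions sum to only $\sum_e(T^{(e)})^{1/3}=O(T^{1/3})$ and the trailing terms to $\sum_e H^{(e)}=O(\log^2 T)$, both absorbed. Multiplying by $l_{\max}$ yields the claim. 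The hard part will be the transition-length bound: establishing that each \texttt{SafeTransit} costs only $O((T^{(e)})^{1/3})$ stages, which rests on the geometric-decay bound keeping $\|\mb M-\mb M_{\text{mid}}\|_F$ dimension-bounded and on the prescribed $(T^{(e)})^{-1/3}$ scaling of $\Delta_M^{(e)}$; once the transitions are shown to be $o((T^{(e)})^{2/3})$, the exponential episode lengths reduce the remainder to the routine convergent geometric sum above.
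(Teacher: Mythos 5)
Your proposal is correct and follows essentially the same route as the paper's proof: bound the per-stage regret by a constant on $\Esafe$ via Theorem \ref{thm: constraint satisfaction}, count the stages in $\T_1^{(e)}$ by decomposing them into the two SafeTransit windows (each of length $O(\max(\|\mb M-\mb M_{\text{mid}}\|_F/\Delta_M^{(e)}, H^{(e)}))=\tilde O((T^{(e+1)})^{1/3})$ thanks to the boundedness of $\M_{H^{(e)}}$ and the prescribed $\Delta_M^{(e)}$), the exploration window $T_D^{(e)}=O((T^{(e+1)})^{2/3})$, and the trailing $H^{(e)}$ stages, and then sum over the doubling episode schedule (the paper's Lemma \ref{lem: bound Te power a by T power a}). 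The only cosmetic difference is that you make the observation $J^*\geq 0$ explicit and carry out the geometric sum by hand rather than citing the algebraic lemma.
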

\begin{proof}
When $\Esafe $ is true, by Theorem \ref{thm: constraint satisfaction}, we have $x_t\in \X$ and $u_t\in \U$, thus $\|x_t\|_2 \leq x_{\max}$ and $\|u\|_2\leq u_{\max}$ and  $l(x_t,u_t)-J^*\leq \|Q\|_2x_{\max}^2+\|R\|_2u_{\max}^2=O(1)$. 

Next, we bound the number of stages in $\T_1^{(e)}$. Under the conditions in Theorem \ref{thm: regret bound}, the number of the stages in $\T_1^{(e)}$ is $ T_D^{(e)} + H^{(e)}$ plus the safe policy transition stages in Phase 1 and Phase 2. Since $\M_{H^{(e)}}$ is a bounded set, the number of stages in SafeTransit between any two policies in $\M_{H^{(e)}}$ can be bounded by $O(\max(1/\Delta_M^{(e)}, H^{(e)}))=\tilde O(\sqrt{mn} (T^{(e+1)})^{1/3})$, where we used $H^{(e)}=O(\log(T^{(e+1)}))$ and $\Delta_M^{(e)}=O(\frac{\epsilon_F^x}{\sqrt{mn	H^{(e)} } (T^{(e+1)})^{1/3}})$. Further, by $T^{(e+1)}=2T^{(e)}$, $T_D^{(e)}=(T^{(e+1)}-T^{(e)})^{2/3}$, we have $ T_D^{(e)} =O((T^{(e+1)})^{2/3})$. Consequently, the total number of stages in $\T_1^{(e)}$ can be bounded by $O((T^{(e+1)})^{2/3})$ (notice that $T^{(e+1)})^{1/3} \geq \sqrt{mn}$ by our condition of $T^{(1)}$ in Theorem \ref{thm: regret bound}). 

Finally, with the help of the  algebraic fact in  Lemma \ref{lem: bound Te power a by T power a}, we are able to bound the total regret in all episodes by $O((T^{(e+1)})^{2/3})$.
\end{proof}

Lemma \ref{lem: bound Te power a by T power a} is a technical fact that will be used throughout our regret proof.
\begin{lemma}[An algebraic fact]\label{lem: bound Te power a by T power a}
	When $T^{(e)}=2^{e-1}T^{(1)}$, and $T^{(N)}\geq T >T^{(N-1)}$, $N\leq O(\log T)$. Further, for any $\alpha>0$, we have
	$$\sum_{e=1}^{N}(T^{(e)})^\alpha =O(T^\alpha)$$
\end{lemma}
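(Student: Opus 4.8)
The plan is to recognize that both claims are elementary consequences of the geometric growth $T^{(e)} = 2^{e-1} T^{(1)}$, so the whole argument reduces to summing a geometric series and controlling its largest term by $T$. There is no conceptual difficulty; the work is just careful bookkeeping of constants.

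First I would establish the bound on $N$. From $T^{(N-1)} < T \leq T^{(N)}$ together with $T^{(e)} = 2^{e-1} T^{(1)}$, I get $2^{N-2} T^{(1)} < T \leq 2^{N-1} T^{(1)}$, hence $N - 2 < \log_2(T/T^{(1)})$, which gives $N < \log_2 T - \log_2 T^{(1)} + 2 = O(\log T)$.

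Next, for the sum I would factor out the geometric structure, writing $\sum_{e=1}^N (T^{(e)})^\alpha = (T^{(1)})^\alpha \sum_{e=1}^N 2^{\alpha(e-1)}$ and recognizing the inner sum as a geometric series with common ratio $2^\alpha > 1$ (this is where I use $\alpha > 0$). Since such a series is dominated by a constant multiple of its largest term, I obtain $\sum_{e=1}^N 2^{\alpha(e-1)} \leq \frac{2^\alpha}{2^\alpha - 1}\, 2^{\alpha(N-1)}$, so that $\sum_{e=1}^N (T^{(e)})^\alpha \leq \frac{2^\alpha}{2^\alpha - 1}(T^{(N)})^\alpha$. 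Finally I would bound the last episode length by $T$: since $T^{(N)} = 2\, T^{(N-1)} < 2T$, it follows that $(T^{(N)})^\alpha < 2^\alpha T^\alpha$, and hence $\sum_{e=1}^N (T^{(e)})^\alpha \leq \frac{2^{2\alpha}}{2^\alpha - 1}\, T^\alpha = O(T^\alpha)$, with a constant depending only on $\alpha$.

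The only point requiring any care is that $\alpha$ enters the hidden constant $\frac{2^{2\alpha}}{2^\alpha - 1}$, so the $O(\cdot)$ notation absorbs an $\alpha$-dependent factor; this is harmless because $\alpha$ is a fixed exponent (e.g.\ $\alpha = 2/3$ in the regret application). I would also flag that the geometric-series step crucially uses $\alpha > 0$, equivalently ratio strictly greater than $1$, so that the sum is dominated by its \emph{last} term rather than its first.
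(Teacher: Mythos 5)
Your proof is correct and follows essentially the same route as the paper: bound $N$ via $T > T^{(N-1)} = 2^{N-2}T^{(1)}$, then observe that $\sum_e (T^{(e)})^\alpha$ is a geometric series with ratio $2^\alpha>1$ dominated by its last term $(T^{(N)})^\alpha \leq (2T)^\alpha$. The only difference is that you track the explicit constant $\frac{2^{2\alpha}}{2^\alpha-1}$, which the paper absorbs into $O(\cdot)$.
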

\begin{proof}
	By $T \geq T^{(N-1)}\geq 2^{(N-2)}$, we have $\log T \geq (N-2)\log(2)$, so $N\leq O(\log T)$.
	Further, 
$
		\sum_{e=1}^{N}(T^{(e)})^\alpha = \sum_{e=1}^{N}(2^{e-1})^\alpha (T^{(1)})^\alpha \leq O((2^{N})^\alpha (T^{(1)})^\alpha )\leq O(T^\alpha)
$.
\end{proof}

The second term in \eqref{equ: regret decompo} is more complicated to bound, so we further decompose it into four parts as follows.
\begin{align*}
	\sum_{e=0}^{N-1} \sum_{t\in \T_2^{(e)}}(l(x_t,u_t)-&J^*)=\underbrace{\sum_{e=0}^{N-1} \sum_{t\in \T_2^{(e)}}(l(x_t,u_t)-l(\hat x_t,\hat u_t))}_{\text{Part i}}+\underbrace{\sum_{e=0}^{N-1} \sum_{t\in \T_2^{(e)}}(l(\hat x_t,\hat u_t)-  f(\mb M^{(e)};\theta_*))}_{\text{Part ii}}\\
	&+\underbrace{\sum_{e=0}^{N-1} \sum_{t\in \T_2^{(e)}}(  f(\mb M^{(e)};\theta_*)-  f(\mb M^*_{H^{(e)}};\theta_*))}_{\text{Part iii}}+\underbrace{\sum_{e=0}^{N-1} \sum_{t\in \T_2^{(e)}}(  f(\mb M^*_{H^{(e)}};\theta_*)-J^*)}_{\text{Part iv}},
\end{align*}
where we introduced auxiliary states $\hat x_t$ and actions $\hat u_t$ defined as
\begin{align*}
	\hat x_t &= \sum_{k=1}^{2H^{(e)}} \Phi_k^x(\mb M^{(e)};\theta_*)w_{t-k}, \quad	\hat u_t=\sum_{k=1}^{H^{(e)}}M^{(e)}[k]w_{t-k},
\end{align*}
which are basically the approximate states and the  actions generated by the disturbance-action policy $\mb M^{(e)}$  computed in Phase 2 of Algorithm \ref{alg: online algo} when the actual disturbances $w_{t-k}$ are known. We also introduce an auxiliary policy $ \mb M^*_{H^{(e)}}$ in Part iii, which is defined as the optimal DAP policy  in \eqref{equ: def DAP} with a memory length $H=H^{(e)}$  under a known model, i.e. $ \mb M^*_{H^{(e)}}=\argmin_{\mb M\in \Omega_*^{(e)}}  f(\mb M;\theta_*)$, where $\Omega_{*}^{(e)}$ is defined by \eqref{equ: Omega def} with $H=H^{(e)}$.

The rest of the proof is to bound Parts i-iv. Establishing the bound on Part iii is the major part of the proof and the bound on Part iii is the dominating term in our regret bound, so we will present our bound on Part iii first. Then, we will establish bounds on Parts i, ii, and iv.

\subsubsection{Bound on Part iii}
Notice that $\mb M^{(e)}$ is the solution to the CCE program in \eqref{equ: CCE} and $\mb M^*_{H^{(e)}}$ is the solution to the optimal DAP program in \eqref{equ: def DAP}. Further,  the CCE program \eqref{equ: CCE}  can be viewed as a slightly perturbed version of  the optimal DAP program  \eqref{equ: def DAP} due to model estimation errors and constraint-tightening terms. Therefore, we can bound Part iii by the perturbation analysis. 

Specifically, we establish the following general perturbation bound. This bound is not only useful for our bound on Part iii but also helps the discussions after Theorem \ref{thm: regret bound} on the reasons for including the pure exploitation phases.
\begin{lemma}[Perturbation analysis for CCE]\label{lem: perturb cost}
	Consider a fixed memory length $H\geq \log(2\kappa)/\log((1-\gamma)^{-1})$ and $\theta_1, \theta_2\in \Theta_{\textup{ini}} $. Consider two CCE programs $\mb M_1=\argmin_{\mb M\in \Omega(\theta_1, \epsilon_{x1}, \epsilon_{u1})} f(\mb M; \theta_1)$ and $\mb M_2=\argmin_{\mb M\in \Omega(\theta_2, \epsilon_{x2}, \epsilon_{u2})} f(\mb M; \theta_2)$. Suppose there exists $\epsilon_g>0$ such that $\Omega(\theta_1, \epsilon_{x1}+\epsilon_g, \epsilon_{u1}+\epsilon_g)\cap \Omega(\theta_2, \epsilon_{x2}+\epsilon_g, \epsilon_{u2}+\epsilon_g)$ is non-empty. Then, we have
	\begin{align*}
		f(\mb M_1, \theta_2)-f(\mb M_2, \theta_2)\leq  O(mn r+(\sqrt{mn}+\sqrt{k_x+k_u})n\sqrt{mH} \max(|\epsilon_{x1}-\epsilon_{x2}|, |\epsilon_{u1}-\epsilon_{u2}|))/\epsilon_g
	\end{align*}
where $\|\theta_1-\theta_2\|_F\leq r$.
	
\end{lemma}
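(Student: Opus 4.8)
The plan is to treat both subproblems as convex quadratic programs over convex feasible sets (recall $f(\cdot;\theta)$ is convex quadratic, each $g_i^x(\cdot;\theta)$ and $g_j^u(\cdot)$ is a sum of $\ell_1$-norms of affine maps of $\mb M$, hence convex, and $\Omega$ is a polytope inside the bounded box $\M_H$), and to split the perturbation into two independent sources: the model change $\theta_1\to\theta_2$, which moves both the objective and the state-constraint functions $g_i^x(\cdot;\theta)$, and the tightening change $(\epsilon_{x1},\epsilon_{u1})\to(\epsilon_{x2},\epsilon_{u2})$, which moves the feasible region. Concretely I would start from the telescoping decomposition
\begin{align*}
f(\mb M_1;\theta_2)-f(\mb M_2;\theta_2)
= \big[f(\mb M_1;\theta_2)-f(\mb M_1;\theta_1)\big]
+\big[f(\mb M_1;\theta_1)-f(\mb M_2;\theta_1)\big]
+\big[f(\mb M_2;\theta_1)-f(\mb M_2;\theta_2)\big].
\end{align*}
The outer two brackets are pure model-sensitivity terms at a fixed policy, while the middle bracket compares the two optima against the single model $\theta_1$, for which $\mb M_1$ is the genuine minimizer over $\Omega(\theta_1,\epsilon_{x1},\epsilon_{u1})$.

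For the outer brackets I would bound $|f(\mb M;\theta_1)-f(\mb M;\theta_2)|$ uniformly over $\mb M\in\M_H$. Since $\theta$ enters $f(\mb M;\theta)=\E[l(\tilde x_t(\mb M;\theta),u_t)]$ only through $\Phi_k^x(\mb M;\theta)$, a perturbation estimate on $\Phi_k^x$ in $\theta$ — of the same flavor as the one behind $\epsilon_{\hat\theta}(r)$ in Lemma~\ref{lem: Definition of epsilon theta(r)} — combined with the a.s.\ bound $\|\tilde x_t\|=O(\sqrt{mn})$ from Lemma~\ref{lem: w.p.1 bdd on xt} and the quadratic form of $l$, gives $|f(\mb M;\theta_1)-f(\mb M;\theta_2)|=O(mn\,r)$ whenever $\|\theta_1-\theta_2\|_F\le r$. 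This produces the $mn\,r$ term in the claim and is routine.

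The middle bracket is the heart of the argument, and it is where the strict-feasibility margin $\epsilon_g$ and the factor $1/\epsilon_g$ enter. Because $\mb M_1$ and $\mb M_2$ lie in different polytopes I cannot compare them directly; instead I would use the assumed common point $\mb M_g\in\Omega(\theta_1,\epsilon_{x1}+\epsilon_g,\epsilon_{u1}+\epsilon_g)\cap\Omega(\theta_2,\epsilon_{x2}+\epsilon_g,\epsilon_{u2}+\epsilon_g)$ to build a feasible surrogate of $\mb M_2$ inside $\Omega(\theta_1,\epsilon_{x1},\epsilon_{u1})$, namely $\hat{\mb M}=(1-\lambda)\mb M_2+\lambda\mb M_g$. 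Using convexity of $g_i^x(\cdot;\theta_1)$ and $g_j^u(\cdot)$, the feasibility of $\mb M_2$ for the $\theta_2$-program, the margin-$\epsilon_g$ feasibility of $\mb M_g$ for the $\theta_1$-program, and the model-transfer bound $|g_i^x(\mb M;\theta_1)-g_i^x(\mb M;\theta_2)|\le\epsilon_{\hat\theta}(r)$ from Lemma~\ref{lem: Definition of epsilon theta(r)}, a direct computation shows $\hat{\mb M}\in\Omega(\theta_1,\epsilon_{x1},\epsilon_{u1})$ as soon as
\begin{align*}
\lambda\ \ge\ \frac{\delta}{\epsilon_g+\delta},\qquad
\delta:=\max\big(|\epsilon_{x1}-\epsilon_{x2}|+\epsilon_{\hat\theta}(r),\ |\epsilon_{u1}-\epsilon_{u2}|\big),
\end{align*}
where the extra $\epsilon_{\hat\theta}(r)$ appears only on the state side because $g_j^u$ does not depend on $\theta$. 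Choosing the smallest such $\lambda\le\delta/\epsilon_g$ and invoking optimality of $\mb M_1$ gives $f(\mb M_1;\theta_1)\le f(\hat{\mb M};\theta_1)$, so the middle bracket is at most $f(\hat{\mb M};\theta_1)-f(\mb M_2;\theta_1)\le L_f\,\|\hat{\mb M}-\mb M_2\|_F=L_f\,\lambda\,\|\mb M_g-\mb M_2\|_F$, with $L_f$ the Lipschitz constant of the quadratic $f(\cdot;\theta_1)$ on the bounded set $\M_H$; bounding $\|\mb M_g-\mb M_2\|_F$ by $\mathrm{diam}(\M_H)$ turns this into $O(L_f\,\mathrm{diam}(\M_H))\cdot\delta/\epsilon_g$.

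Assembling the pieces, the outer brackets give $O(mn\,r)$ and the middle bracket gives $O(L_f\,\mathrm{diam}(\M_H))\cdot(\max(|\epsilon_{x1}-\epsilon_{x2}|,|\epsilon_{u1}-\epsilon_{u2}|)+\epsilon_{\hat\theta}(r))/\epsilon_g$; since $\epsilon_{\hat\theta}(r)=O(\sqrt{mn}\,r)$, the $r$-part here merges with the $O(mn\,r)$ term (writing it over $\epsilon_g$ in the relevant $\epsilon_g\le1$ regime). What remains is the mechanical bookkeeping of the dimensional constants: computing $\mathrm{diam}(\M_H)$ from the geometric row bounds of $\M_H$ and the Lipschitz constant $L_f$ of the cost, whose product, together with the norm conversions across the $k_x+k_u$ constraint rows, yields the stated prefactor $(\sqrt{mn}+\sqrt{k_x+k_u})\,n\sqrt{mH}$. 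I expect the feasibility-restoration step to be the main obstacle: one must absorb the model shift inside the state constraints (hence $\epsilon_{\hat\theta}(r)$ must be folded into $\delta$, not merely the raw tightening gap) while keeping $\lambda\le1$ so that $\hat{\mb M}$ stays in $\M_H$. This is exactly where strict feasibility $\epsilon_g>0$ is indispensable and where the $1/\epsilon_g$ blow-up originates; if $\delta>\epsilon_g$ the convex-combination move is no longer small, but then the bound is vacuous because $f$ is uniformly bounded on $\M_H$, so the estimate still holds.
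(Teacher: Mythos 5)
Your proposal is correct and follows essentially the same strategy as the paper: split the perturbation into an objective part, bounded by the uniform $O(mnr)$ sensitivity of $f(\mb M;\cdot)$, and a constraint part, bounded by (Lipschitz constant of $f$) $\times$ (diameter of the feasible set) $\times$ (tightening gap) $/\epsilon_g$ via a strictly feasible point. The only difference is cosmetic: the paper routes the constraint part through an auxiliary minimizer $\mb M_3=\argmin_{\Omega(\theta_1,\epsilon_{x1},\epsilon_{u1})}f(\cdot;\theta_2)$ and a cited polytope-perturbation lemma (an extension of Proposition 2 in \cite{li2020online}, with the $\theta$-shift in $g_i^x$ absorbed by adding inactive constraints), whereas you unpack that lemma into the explicit convex-combination argument $\hat{\mb M}=(1-\lambda)\mb M_2+\lambda \mb M_g$ and fold the shift $\epsilon_{\hat\theta}(r)$ directly into the gap $\delta$ — which is equivalent and, in the regime where the lemma is applied, yields the same bound.
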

\begin{proof}
	Notice that both the objective functions and the constraints are different in the two CCE programs above, so we  introduce an auxiliary policy $\mb M_3=\argmin_{\mb M\in \Omega(\theta_1, \epsilon_{x1}, \epsilon_{u1})} f(\mb M; \theta_2)$ to discuss the perturbation bounds on cost differences and constraint differences separately. 
	We will first bound 	$f(\mb M_1, \theta_2)-f(\mb M_3, \theta_2)$ and then bound 	$f(\mb M_3, \theta_2)-f(\mb M_2, \theta_2)$ below.
	
	\nbf{Perturbation on the cost functions.}
	\begin{align*}
		f(\mb M_1, \theta_2)-f(\mb M_3, \theta_2)& = 	f(\mb M_1, \theta_2)-f(\mb M_1, \theta_1)+	f(\mb M_1, \theta_1)-f(\mb M_3, \theta_1)+	f(\mb M_3, \theta_1)-f(\mb M_3, \theta_2)\\
		& \leq 	f(\mb M_1, \theta_2)-f(\mb M_1, \theta_1)+ 	f(\mb M_3, \theta_1)-f(\mb M_3, \theta_2)\\
		& \leq O({mn}\|\theta_1-\theta_2\|_F)
	\end{align*}
where the first inequality is because $\mb M_1$ and $\mb M_3$ are in the same set and $\mb M_1$ minimizes the cost $f(\mb M, \theta_1)$ in this set, and the second inequality is because of the following perturbation lemma on the cost functions.
\begin{lemma}[Perturbation bound on $  f$ with respect to $\theta$]\label{lem: f(theta)-f(theta')}
	For any $H\geq \log(2\kappa)/\log((1-\gamma)^{-1})$, $\mb M\in \M_H$, any $\theta,  \hat \theta \in \Theta_{\textup{ini}}$ , we have
	$$|  f(\mb M;\theta)-  f(\mb M;\hat \theta)|\leq O(mn r)$$
	where $\|\theta-\hat \theta\|_F\leq r$.
\end{lemma}
	\begin{proof}
	We let $\tilde x_t(\theta)$ and $\tilde x_t( \theta_2)$ denote the approximate states defined by Proposition \ref{prop: state formula with known system}, and we omit $\mb M$ in this proof for notational simplicity. Notice that
		\begin{align*}
		\|	\tilde x(\theta)-\tilde x( \hat \theta)\|_2&=\|\sum_{k=1}^{2H }(  \Phi_k^x( \theta)-   \Phi_k^x(\hat  \theta))w_{t-k}\|_2 \leq \sum_{k=1}^{2H }\|(  \Phi_k^x( \theta)-   \Phi_k^x( \hat \theta))w_{t-k}\|_2\\
		& \leq \sum_{k=1}^{H } \|(A^{k-1}-\hat A^{k-1}) w_{t-k}\|_2+ \sum_{k=1}^{2H }\sum_{i=1}^H\| (A^{i-1}B-\hat A^{i-1}\hat B)M[k-i]\id_{(1\leq k-i\leq H)} w_{t-k}\|_2 \\
				& \leq \sum_{k=1}^{H } \|(A^{k-1}-\hat A^{k-1})\|_2 \sqrt{n} w_{\max} + \sum_{k=1}^{2H }\sum_{i=1}^H\| (A^{i-1}B-\hat A^{i-1}\hat B)\|_2\sqrt{m}\|M[k-i]\id_{(1\leq k-i\leq H)} w_{t-k}\|_\infty \\
				& \leq  \sum_{k=1}^{H } \sqrt{n} w_{\max} O(k(1\!-\!\gamma)^{k\!-\!1} \|\theta-\hat\theta\|_F)\!+  \! \sum_{k=1}^{2H }\sum_{i=1}^H\!O(\sqrt{m}\| (A^{i\!-\!1}B\!-\!\hat A^{i-1}\hat B)\|_2 \|M[k\!-\!i]\|_\infty \id_{(1\leq k-i\leq H)} )\\
				& \leq O(\sqrt n r_\theta)+  \sum_{k=1}^{2H }\sum_{i=1}^H \|\theta-\hat\theta\|_F\sqrt{mn} O((1-\gamma)^{k-i-1}(i-1) (1-\gamma)^{i-2}\id_{(i\geq 2)}\id_{(1\leq k-i\leq H)}) \\
		& = O(\sqrt n \|\theta-\hat\theta\|_F )+ O(\sum_{i=1}^H \sum_{j=1}^H  \|\theta-\hat\theta\|_F\sqrt{mn} (i-1) (1-\gamma)^{i-2}(1-\gamma)^{j-1}  )\\
		& \leq  O(\sqrt n \|\theta-\hat\theta\|_F)+ O(\sqrt{mn} \|\theta-\hat\theta\|_F)	= O(\sqrt{mn} \|\theta-\hat\theta\|_F)	
	\end{align*}
where we used Lemma \ref{lem: perturb Ak AkB} in the third and fourth inequality.

	\end{proof}
		\nbf{Perturbation on the constraints.} Our bound on  	$f(\mb M_3, \theta_2)-f(\mb M_2, \theta_2)$ is established based on Proposition 2 and Lemma 9 in \cite{li2020online}. The major difference is that \cite{li2020online} only considers changes in the right-hand-side of the constraint inequalities but in our setting, the left-hand-side of the constraint inequalities also change. To handle this difference, we introduce two auxiliary sets  $\Omega_1$ and $\Omega_2$ with the same  left-hand-sides in the constraint inequalities  such that $\Omega_1=\Omega(\theta_1, \epsilon_{x1}, \epsilon_{u1})$ and $\Omega_2=\Omega(\theta_2, \epsilon_{x2}, \epsilon_{u2})$, which is achieved by adding inactive inequalitiy constraints. Specifically, define
		 \begin{align*}
		 	&\Omega_1=\Omega(\theta_1, \epsilon_{x1}, \epsilon_{u1})\cap \Omega(\theta_2, \epsilon_{x1}-\epsilon_{\hat \theta}(r), \epsilon_{u1}), \quad \Omega_2=\Omega(\theta_2, \epsilon_{x2}, \epsilon_{u2})\cap \Omega(\theta_1, \epsilon_{x2}-\epsilon_{\hat \theta}(r), \epsilon_{u2}).
		 \end{align*}
	 Notice that the constraints in  $\Omega(\theta_2, \epsilon_{x1}-\epsilon_{\hat \theta}(r), \epsilon_{u1})$ and $\Omega(\theta_1, \epsilon_{x2}-\epsilon_{\hat \theta}(r), \epsilon_{u2})$ are inactive due to Lemma \ref{lem: Definition of epsilon theta(r)}. Further, notice that $\Omega_1, \Omega_2$ are both polytopes.
	 
	 Another difference between our setting and that in Proposition 2 of \cite{li2020online} is that $\Omega_1$ may not be a subset of $\Omega_2$ and vice versa, so we need to generalize Proposition 2 to this setting as follows.
	 
	 \begin{lemma}[Extension from Proposition 2  \cite{li2020online}]\label{lem: cost diff lemma for linear constrained opt}
	 	Consider two polytopes: $\Omega_1=\{x: Cx\leq h-\Delta_1\}, \Omega_2=\{x: Cx\leq h-\Delta_2\}$, where $\Delta_1, \Delta_2$ are two vectors. Define $\Delta_0=\min(\Delta_1, \Delta_2)$ elementwise. Define $\Delta_3=\max(\Delta_1, \Delta_2)$ elementwise. Suppose the $l_2$-diameter of $\Omega_0$ is $d_{\Omega_0}$, $f(x)$ is $L$-Lipschitz continuous, and there exists $x_F\in \Omega_3$, then we have
	 	$$|\min_{\Omega_1}f(x)-\min_{\Omega_2}f(x)| \leq \frac{2Ld_{\Omega_0} \|\Delta_1-\Delta_2\|_\infty}{\min_{\{i: (\Delta_1)_i\not=(\Delta_2)_i\}} (h-\Delta_3-Cx_F)_i}$$
	 \end{lemma}
	 The proof is deferred to Appendix \ref{subsec: perturb QP}.
	 
	 Now, we are ready to bound $f(\mb M_3, \theta_2)-f(\mb M_2, \theta_2)$. By our definitions and discussions above, we have $f(\mb M_3, \theta_2)=\min_{\mb M\in \Omega_1}f(\mb M;\theta_2)$ and $f(\mb M_2, \theta_2)=\min_{\mb M\in \Omega_2}f(\mb M;\theta_2)$. Further, notice that $\Omega_1, \Omega_2$ are both polytopes and can be written in the form of 
	$\{\vec P:  C\vec P\leq h-\Delta_1\}, \{\vec P:  C\vec P\leq h-\Delta_2\}$ by introducing auxiliary variables to represent the absolute values (see Lemma 10 in \cite{li2020online} for more details). Therefore, we can apply Lemma \ref{lem: cost diff lemma for linear constrained opt} to obtain the bound on $f(\mb M_3, \theta_2)-f(\mb M_2, \theta_2)$ by bounding the corresponding constants $d_{\Omega_0}, L, \|\Delta_1-\Delta_2\|_\infty$, and $\min_{\{i: (\Delta_1)_i\not=(\Delta_2)_i\}} (h-\Delta_3-Cx_F)_i$. Same as the proof of Lemma 9, we can show the $l_2$-diameter of $\Omega_0$ is $d_{\Omega_0}=O(\sqrt{mn}+\sqrt{k_x+k_u})$ and $\|\Delta_1-\Delta_2\|_\infty=\max(|\epsilon_{x1}-\epsilon_{x2}|, |\epsilon_{u1}-\epsilon_{u2}|)$. Further, the Lipschitz factor $L$ can be obtained from the gradient bound $L= G_f=O( \sqrt{n^2mH})$ as provided below, whose proof is provided in Appendix \ref{subsec: Gf bound}.
	\begin{lemma}[Gradient bound of $f(\mb M;\theta)$]\label{lem: bound on Gf}
		For any $H\geq 1$, $\mb M\in \M_H$, $\theta\in \Theta_{ini}$,  we have
		$\|\nabla f(\mb M;\theta)\|_F\leq G_f=O(\sqrt{n^2mH})$.
	\end{lemma}
Next, since $\Omega(\theta_1, \epsilon_{x1}+\epsilon_g, \epsilon_{u1}+\epsilon_g)\cap \Omega(\theta_2, \epsilon_{x2}+\epsilon_g, \epsilon_{u2}+\epsilon_g)$ is non-empty, there exists  $\vec P_F\in \Omega_3$ such that  $\min_{\{i: (\Delta_1)_i\not=(\Delta_2)_i\}} (h-\Delta_3-Cx_F)_i\geq \epsilon_0$. Lastly, by applying the constants above to Lemma \ref{lem: cost diff lemma for linear constrained opt}, we can show $f(\mb M_3, \theta_2)-f(\mb M_2, \theta_2)\leq O((\sqrt{mn}+\sqrt{k_x+k_u})\sqrt{n^2mH})\max(|\epsilon_{x1}-\epsilon_{x2}|, |\epsilon_{u1}-\epsilon_{u2}|)/\epsilon_g$.

The proof of Lemma \ref{lem: perturb cost} is completed by combining the bounds on $	f(\mb M_1, \theta_2)-f(\mb M_3, \theta_2)$ and $f(\mb M_3, \theta_2)-f(\mb M_2, \theta_2)$.

\end{proof}

Based on Lemma \ref{lem: perturb cost}, we can show the following bound on Part iii when $\Esafe$ is true.
\begin{equation}\label{equ: bound on Part iii}
\text{Part iii}=	\sum_{e=0}^{N-1} \sum_{t\in \T_2^{(e)}}(  f(\mb M^{(e)};\theta_*)-  f(\mb M^*_{H^{(e)}};\theta_*))\leq  O((n^2m^{2}+n^{2.5}m^{1.5})\sqrt{mn+k_x+k_u}T^{2/3})
\end{equation}
The proof for \eqref{equ: bound on Part iii} is provided below. For each $e\geq 0$, notice that $\mb M^{(e)}=\argmin_{\mb M\in \Omega(\hat \theta^{(e+1)}, \hat \epsilon_x^{(e)},  0)} f(\mb M; \theta^{(e+1)})$, where we define $ \hat \epsilon_x^{(e)}=\epsilon_\theta(r^{(e+1)})+\epsilon_H(H^{(e)})+\epsilon_v(\Delta_M^{(e)}, H^{(e)})$; and $\mb M^*_{H^{(e)}}=\argmin_{\mb M\in \Omega(\theta_*, \epsilon_H(H^{(e)}), 0)}f(\mb M;\theta_*)$. Therefore, we can apply Lemma \ref{lem: perturb cost}. We first verify the conditions of Lemma \ref{lem: perturb cost}. Notice that $\theta^{(e+1)}, \theta_*\in \Theta_{\text{ini}}$, $H^{(e)}$ is large enough. Further, we have $\epsilon_g =min(\epsilon_{F,x}, \epsilon_{F,u})/4>0$ such that $\Omega(\hat \theta^{(e+1)}, \hat \epsilon_x^{(e)}+\epsilon_g,  \epsilon_g)\cap \Omega(\theta_*, \epsilon_H(H^{(e)})+\epsilon_g, \epsilon_g)$ is not empty due to the  feasibility conditions in Theorem \ref{thm: feasibility}. Therefore, when $\Esafe$ is true, by Lemma \ref{lem: perturb cost}, under our choices of parameters in Theorem \ref{thm: regret bound}, we have
\begin{align*}
	 f(\mb M^{(e)};\theta_*)&-  f(\mb M^*_{H^{(e)}};\theta_*) \leq O(mn r^{(e+1)}+(\sqrt{mn}+\sqrt{k_x+k_u})n\sqrt{mH^{(e)}} (\epsilon_\theta(r^{(e+1)})+\epsilon_v(\Delta_M^{(e)}, H^{(e)})))\\
	 & \leq \tilde O(mn  ({n\sqrt m +m \sqrt n}) (T^{(e)})^{-1/3}+ (\sqrt{mn}+\sqrt{k_x+k_u})n\sqrt{m}\sqrt{mn} ({n\sqrt m +m \sqrt n}) (T^{(e+1)})^{-1/3} )\\
	 & \leq \tilde O\left( (n^2m^{2}+n^{2.5}m^{1.5})\sqrt{mn+k_c}(T^{(e+1)})^{-1/3}\right)
\end{align*}
Consequently, by Lemma \ref{lem: bound Te power a by T power a}, we prove the bound \eqref{equ: bound on Part iii}.


\subsubsection{Bound on Part i}
When $\Esafe$ is true, we are able to show
\begin{equation}\label{equ: bound on Part i}
\text{Part i}=	\sum_e \sum_{t\in \T_2^{(e)}} l(x_t, u_t)-l(\hat x_t, \hat u_t)\leq \tilde O(n \sqrt m \sqrt{m+n}  T^{2/3}).
\end{equation}
The proof is provided below. 

Firstly, under $\Esafe$, we have $x_t \in \X$ and $u_t \in \U$, so $\|x_t\|_2\leq O(1), \|u_t\|_2\leq O(1)$. Further, by the definitions of $\hat x_t, \hat u_t$ and the proof of Theorem \ref{thm: constraint satisfaction}, we can also verify that $\hat x_t\in \X$ and $\hat u_t \in \U$ based on the proof of Lemma \ref{lem: ut satisfies constraints} and Lemma \ref{lem: general constraint satisfaction theta*}. Therefore, we have$\|\hat x_t\|_2\leq O(1), \|\hat u_t\|_2\leq O(1)$

Next, by Lemma \ref{lem: xt formula DAP uncertain}, for $t\in \T_2^{(e)}$, we can write $x_t, \hat x_t, u_t, \hat u_t$ as
\begin{align*}
	x_t & = \!A_*^{H_t}x_{t\!-\!H_t}\!+\!\sum_{k=2}^{2H_t}\sum_{i=1}^{H_t}A_*^{i\!-\!1}B_* M_{t\!-\!i}[k\!-\!i]\hat w_{t-k}\one_{(1\leq k-i\leq H_{t\!-\!i})} \!+\! \sum_{i=1}^{H_t}A_*^{i-1}w_{t\!-\!i}\\
	\hat x_t &= \!A_*^{H_t}x_{t\!-\!H_t}\!+\!\sum_{k=2}^{2H_t}\sum_{i=1}^{H_t}A_*^{i\!-\!1}B_* M_{t\!-\!i}[k\!-\!i] w_{t-k}\one_{(1\leq k-i\leq H_{t\!-\!i})} \!+\! \sum_{i=1}^{H_t}A_*^{i-1}w_{t\!-\!i}\\
	u_t &=\sum_{t=1}^{H_t} M_t[k]\hat w_{t-k},\quad \hat u_t =\sum_{t=1}^{H_t} M_t[k] w_{t-k}.
\end{align*}
Hence,  we can bound $\|x_t-\hat x_t\|_2$ and $\|u_t-\hat u_t\|_2$ by Lemma \ref{lem: what - w} below:
\begin{align*}
	\|x_t-\hat x_t\|_2& \leq O((1-\gamma)^{H^{(e)}}+  \sqrt{mn}  r_\theta^{(e+1)})=\tilde O(n m\sqrt{m+n} (T^{(e+1)})^{-1/3})\\
	\|u_t-\hat u_t\|_2&\leq O(\sqrt{mn} r_\theta^{(e+1)})=\tilde O(n  m \sqrt{m+n} (T^{(e+1)})^{-1/3})
\end{align*}
Consequently, by applying Lemma \ref{lem: bound Te power a by T power a} and the quadratic structure of $l(x,u)$, we can bound the Part i by
$$\sum_e \sum_{t\in \T_2^{(e)}} (l(x_t, u_t)-l(\hat x_t, \hat u_t))\leq \sum_e T^{(e+1)} \tilde O(n  m \sqrt{m+n} (T^{(e+1)})^{-1/3})\leq \tilde O(n  m \sqrt{m+n}  T^{2/3}).$$

\subsubsection{Bound on Part ii}

\begin{lemma}[Bound on Part ii]\label{lem: bound on Part ii}
	With probability $1-p$, $\text{Part ii}\leq \tilde O(mn \sqrt T )$.
\end{lemma}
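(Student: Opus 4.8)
The plan is to recognize that \text{Part ii} is a sum of \emph{centered} bounded random variables and to control it by a martingale/blocking concentration argument. The first observation is that $\hat x_t$ and $\hat u_t$ are exactly the approximate state and action produced by the disturbance-action policy $\mb M^{(e)}$ driven by the \emph{true} disturbances, so that by the very definition of $f$ in \eqref{equ: CCE} one has $\E[l(\hat x_t,\hat u_t)\mid \F_{\tau_e}]=f(\mb M^{(e)};\theta_*)$, where $\tau_e=t_1^{(e)}+T_D^{(e)}$ is the stage at which $\mb M^{(e)}$ (equivalently $\Theta^{(e+1)}$) becomes determined. For this identity I would first verify that for every $t\in\T_2^{(e)}$ the window disturbances $w_{t-1},\dots,w_{t-2H^{(e)}}$ carry indices at least $\tau_e$, hence are i.i.d.\ and independent of $\mb M^{(e)}$; this is exactly where the $H^{(e)}$-offset in the definition of $\T_2^{(e)}$ and the guarantee $\tilde W_1^{(e)}\ge H^{(e)}$ (from the $\max(\cdot,H')$ in Step 1 of Algorithm \ref{alg: safe transit}) are used, since together they force $t_2^{(e)}\ge \tau_e+H^{(e)}$ and therefore $t-2H^{(e)}\ge t_2^{(e)}+H^{(e)}-2H^{(e)}\ge\tau_e$.

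Setting $Y_t=l(\hat x_t,\hat u_t)-f(\mb M^{(e)};\theta_*)$, the identity above gives $\E[Y_t\mid \F_{t-2H^{(e)}-1}]=0$, so $\{Y_t\}_{t\in\T_2^{(e)}}$ is a mean-zero sequence whose dependence range is the window length $2H^{(e)}$. I would next record an almost-sure magnitude bound: since $\mb M^{(e)}\in\M_{H^{(e)}}$ and $\|w_k\|_\infty\le w_{\max}$, the same computation as in Lemma \ref{lem: w.p.1 bdd on xt} yields $\|\hat x_t\|_2,\|\hat u_t\|_2=O(\sqrt{mn})$ almost surely, whence $|Y_t|\le O(mn)$ deterministically (note this uses no reliance on $\Esafe$). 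To turn a $2H^{(e)}$-dependent centered bounded sequence into martingale differences, I would split the index set $\T_2^{(e)}$ into $2H^{(e)}+1$ subsequences according to residue modulo $2H^{(e)}+1$; within each subsequence consecutive indices differ by more than the dependence range, so the partial sums form a martingale and Azuma--Hoeffding applies. Each subsequence has length $O(|\T_2^{(e)}|/H^{(e)})$, so with probability $1-\delta$ its sum is at most $O\!\big(mn\sqrt{(|\T_2^{(e)}|/H^{(e)})\log(1/\delta)}\big)$, and summing the $2H^{(e)}+1$ subsequence bounds gives a per-episode bound $\tilde O(mn\sqrt{H^{(e)}\,|\T_2^{(e)}|})$.

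Finally I would aggregate over episodes. Using $H^{(e)}=O(\log T)$ and $N=O(\log T)$ together with Cauchy--Schwarz, $\sum_{e}\sqrt{|\T_2^{(e)}|}\le \sqrt{N\sum_e|\T_2^{(e)}|}\le\sqrt{N\,T}=\tilde O(\sqrt T)$, so the episode bounds add up to $\tilde O(mn\sqrt T)$; a union bound over the $O(N\log T)$ Azuma applications, with each $\delta$ set to a $1/\mathrm{poly}$ fraction of $p$, keeps the total failure probability at most $p$. I expect the main obstacle to be the bookkeeping that makes the conditional-mean identity rigorous: one must argue carefully that $\mb M^{(e)}$ is measurable with respect to the information available before the exploitation window and that the window disturbances are genuinely fresh, which is precisely what the $H^{(e)}$ buffer built into $\T_2^{(e)}$ guarantees. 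Everything after that---the almost-sure bound, the mod-$(2H^{(e)}+1)$ decoupling into martingales, and the two Cauchy--Schwarz summations---is routine.
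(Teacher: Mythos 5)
Your proposal matches the paper's argument essentially step for step: the same conditional-mean identity $\E[l(\hat x_t,\hat u_t)\mid\F]=f(\mb M^{(e)};\theta_*)$ justified by the measurability of $\mb M^{(e)}$ and the freshness of the window disturbances (via the $H^{(e)}$ buffer from $\tilde W_1^{(e)}\geq H^{(e)}$), the same almost-sure $O(mn)$ bound on the centered increments, the same residue-class blocking into martingales with Azuma--Hoeffding, and the same Cauchy--Schwarz aggregation over blocks and episodes. The only immaterial difference is splitting modulo $2H^{(e)}+1$ rather than $2H^{(e)}$ as the paper does; both spacings exceed the dependence range, so the proof goes through identically.
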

Notice that this part is not a dominating term in the regret bound. The proof relies on a martingale concentration analysis and is very technical, so we defer it to Appendix \ref{subsec: proof for bound on Part ii}.

\subsubsection{Bound on Part iv}
In the following, we will show that 
\begin{equation}\label{equ: bound on Part iv}
	\text{Part iv}=\sum_{e=0}^{N-1} \sum_{t\in \T_2^{(e)}}(  f(\mb M^*_{H^{(e)}};\theta_*)-J^*)=\tilde O(n\sqrt m\sqrt{mn+k_c} \sqrt n)
\end{equation}
The proof is provided below.

Remember that $J^*$ is generated by the optimal safe linear policy $K^*$. By Lemma 4 and Corollary 2 in \cite{li2020online}, for a memory length $H^{(e)}$, we can define $\mb M_{H^{(e)}}(K^*)\in \Omega(\theta_*, -\epsilon_P^{(e)},0)$, where $\epsilon_P^{(e)}=\sqrt{mn}(1-\gamma)^{H^{(e)}}$ corresponds to $\epsilon_1+\epsilon_3$ in \cite{li2020online} with $H=H^{(e)}$.\footnote{As discussed in footnote \ref{footnote: epsilonP}, our $\epsilon_P^{(e)}$ has smaller dependence on $n,m,H$ compared with \cite{li2020online}.} Further, by Lemma 6 in  \cite{li2020online}, we have
$$f(\mb M_{H^{(e)}}(K^*);\theta_*)-J^*=\lim_{T\to +\infty}\frac{1}{T}\sum_{t=0}^{T-1}f(\mb M_{H^{(e)}}(K^*);\theta_*)-\E(l(x_t^*, u_t^*))\leq O(n^2m (H^{(e)})^2 (1-\gamma)^{H^{(e)}})$$

In addition, we have
\begin{align*}
	 f(\mb M^*_{H^{(e)}};\theta_*)-f(\mb M_{H^{(e)}}(K^*);\theta_*)& \leq f(\mb M^*_{H^{(e)}};\theta_*) -\min_{\mb M\in \Omega(\theta_*,-\epsilon_P^{(e)},0)}f(\mb M; \theta_*)\\
	 & \leq \tilde O(n\sqrt m \sqrt{mn +k_x+k_u} (\epsilon_P^{(e)}+\epsilon_H(H^{(e)})))\\
	 & \leq \tilde O(n\sqrt m\sqrt{mn+k_c} \sqrt{mn} (1-\gamma)^{H^{(e)}})
\end{align*}
By combining the bounds above and by choosing $H^{(e)}\geq \log(T^{(e+1)})/\log((1-\gamma)^{-1})$, we have 
\begin{align*}
	f(\mb M^*_{H^{(e)}};\theta_*)-J^*&=f(\mb M^*_{H^{(e)}};\theta_*)-f(\mb M_{H^{(e)}}(K^*);\theta_*)+(\mb M_{H^{(e)}}(K^*);\theta_*)-J^*\\
	&\leq \tilde O(n\sqrt m\sqrt{mn+k_c} \sqrt{mn} /T^{(e+1)}),
\end{align*}
which directly leads to the bound \eqref{equ: bound on Part iv} on Part iv.

\subsubsection*{Completing the proof of the regret bound.}

By  combining \eqref{equ: bound on Part i}, \eqref{equ: bound on Part iii}, \eqref{equ: bound on Part iv}, and Lemma \ref{lem: bound on Part ii}, we obtain our regret bound in Theorem \ref{thm: regret bound}. 
Notice that \eqref{equ: bound on Part i}, \eqref{equ: bound on Part iii}, \eqref{equ: bound on Part iv} all condition on  $\Esafe$, and $\Esafe$ holds w.p. $1-p$. But Lemma \ref{lem: bound on Part ii} conditions on a different event and that event also holds with probability $1-p$. Putting them together, we have that our regret bound holds w.p. $1-2p$.

\subsection{Condition Verification for Feasibility and Constraint Satisfaction}\label{subsec: condition verification for regret bound}
In this subsection, we briefly show that there exist  parameters characterized by Theorem \ref{thm: regret bound} that satisfy the conditions for feasibility and constraint satisfaction in Theorem \ref{thm: feasibility} and 
Theorem \ref{thm: constraint satisfaction}, which include: $T_D^{(e)}$ satisfying the condition in Theorem \ref{thm: general estimation error bdd} (Corollary \ref{cor: estimation error of our algo}'s condition),  condition \eqref{equ: initial feasibility}, condition (ii) of Theorem \ref{thm: feasibility},  and $T^{(e+1)}\geq t_2^{(e)}$. 

Firstly, in Corollary \ref{cor: estimation error of our algo}, we need $T_D^{(e)}\geq O(\log(2e^2/p)+(m+n)\log(1/\bar \eta))$ for $e\geq 0$. By our choices, we have $T_D^{(e)}=(T^{(\min(e,1)})^{2/3}$ and $T^{(e+1)}=2T^{(e)}$, so $T_D^{(e)}$ increases exponentially. Therefore, $T_D^{(e)}\geq O(\log(2e^2/p)+(m+n)\log(1/\bar \eta))$ can be guaranteed if $T_D^{(1)}\geq O(\log(1/p)+(m+n)\log(1/\bar \eta))$ with some sufficiently large constant factor, which requires $T^{(1)}\geq O((m+n)^{3/2})$. 

Secondly, for condition \eqref{equ: initial feasibility}, we set $\epsilon_0=\epsilon_{F,x}/4$ and let $\epsilon_P+\epsilon_H(H^{(0)})\leq \epsilon_{F,x}/12$, $\epsilon_{\eta,x}\leq \epsilon_{F,x}/12$,  $\epsilon_v(\Delta_M^{(0)}, H^{(0)})\leq  \epsilon_{F,x}/12$, and $\epsilon_P \leq \epsilon_{F,u}/4$, $\epsilon_{\eta,u}\leq \epsilon_{F,u}/4$. These conditions require $H^{(0)}\geq O(\log(\sqrt{mn}/\min(\epsilon_F)))$, $\bar\eta^{(e)}= O(\min( \frac{\epsilon^x_F}{ \sqrt m }, \epsilon_F^u))$, and $\Delta_M^{(e)}=O(\frac{\epsilon_F^x}{\sqrt{mn	H^{({e})} } (T^{(e+1)})^{1/3}})$. 

Thirdly, for the condition (ii) of Theorem \ref{thm: constraint satisfaction}, the monotonicity for $H^{(e)}, \sqrt{H^{(e)}}\Delta_M^{(e)}$ are satisfied and $\bar \eta^{(e)}$ is a constant, so its monotonicity condition is also satisfied. With exponentially increasing $T_D^{(e)}$, the decreasing $r^{(e)}$ is also satisfied. We only need to verify that $r^{(1)}\leq r_{\text{ini}}$. This requires $T^{(1)}\geq \tilde O((\sqrt{n}m +n\sqrt n)^{3})$.

Lastly, for $T^{(e+1)}\geq t_2^{(e)}$, notice that Phase 1 only takes $(T^{(e+1)}-T^{(e)})^{2/3}$ stages, and the safe transitions only takes $\tilde O((T^{(e+1)})^{1/3}) $ stages, so $T^{(e+1)}\geq t_2^{(e)}$ for all $e$ for large enough initial $T^{(1)}$.

\section{More Discussions}\label{append: discuss}
In this appendix, we briefly introduce RMPC in \citet{mayne2005robust} and show that its infinite-horizon averaged cost can be captured by  $J(\mathbb K)$ for some safe linear policy $\mathbb K$. Therefore, algorithms with small regret compared with optimal safe linear policies can also achieve comparable performance with RMPC in \citet{mayne2005robust} for long horizons, which further motivates our choice of regret benchmarks as safe linear policies.
Further, we  discuss the implementation of our algorithm for non-zero $x_0$.
\subsection{A brief review of RMPC in \cite{mayne2005robust}}

RMPC is a popular method to handle constrained system with disturbances and/or other system uncertainties. Since we will include RMPC in the benchmark policy class, we assume the model $\theta_*$ is available here, but RMPC can also handle model uncertainties. Many different versions of RMPC have been proposed in the literature, (see \cite{rawlings2009model} for a review). In this appendix, we will focus on a tube-based RMPC defined in \cite{mayne2005robust}. The RMPC method  in \cite{mayne2005robust} enjoys desirable theoretical guarantees, such as robust exponential stability, recursive feasibility, constraint satisfaction, and  is thus commonly adopted. RMPC usually considers $x_0\not =0$. When considering RMPC for regulation problems, one goal of RMPC is to quickly and safely steer the states to a neighborhood of origin (due to the system disturbances, one cannot steer the state to the origin exactly).

Next, we briefly introduce the tube-based RMPC scheme. In most tube-based RMPC schemes (not just \cite{mayne2005robust}), it is required to know a  linear static controller $u_t=-\Kb x_t$ such that this controller is strictly safe if the system starts from the origin. A  disturbance-invariant set for the closed-loop system $x_{t+1}=Ax_t-B\Kb x_t+w_t$ is also needed. 
\begin{definition}
	$\Xi$ is called a disturbance-invariant set for $x_{t+1}=Ax_t-B\Kb x_t+w_t$ if for any $x_t \in \Xi$, and $w_t\in \W$, we have $x_{t+1}\in \Xi$. 
\end{definition}
For computational purposes, a polytopic approximation of disturbance-invariant set is usually employed. Further, the implementation of RMPC also requires the knowledge of a terminal set $X_f$ such that for any $x_0\in X_f$, implementing the controller $u_t=-\Kb x_t$ is safe, as well as a terminal cost function $V_f(x)=x^\top P x$ satisfying certain conditions (see \cite{mayne2005robust} for more details).

\nbf{RMPC scheme in \cite{mayne2005robust}.}
Now, we are ready to define the tube-based RMPC proposed in \cite{mayne2005robust}. 
At each stage $t$, consider a planning window $t+k|t$ for $0\leq k \leq W$, RMPC in \cite{mayne2005robust} solves the following optimization:
\begin{equation}\tag{RMPC \cite{mayne2005robust}}
	\begin{aligned}
		\min_{ x_{t|t},  u_{t+k|t}} \ &\sum_{k=0}^{W-1} l( x_{t+k|t},  u_{t+k|t})+ V_f( x_{t+W|t})\\
		\text{s.t. \ } &  x_{t+k+1|t}=A_* \bar x_{t+k|t}+B_* u_{t+k|t}, \quad k\geq 0\\
		&  x_{t|t}\in x_{t}\oplus\Xi\\
		&  x_{t+k|t}\in 	\mathbb X\ominus\Xi, \forall 0\leq k \leq W-1\\
		&  u_{t+k|t}\in 	\mathbb U\ominus\mathbb K\Xi, \forall 0\leq k \leq W-1\\
		&  x_{t+W|t}\in X_f \subseteq 	\mathbb X\ominus\Xi
	\end{aligned}
\end{equation}
Then, implement control:
$$ u_t=-\mathbb K (x_t- x_{t|t}^*)+ u_{t|t}^*.$$
Notice that $ x_{t|t}^*, u_{t|t}^*$ are functions of $x_t$. Further, by \cite{bemporad2002explicit}, $u_t$ is a piece-wise affine (PWA) function of the state $x_t$ when $\Xi$ is a polytope. 
Define the set of feasible initial values as
$$X_N=\{x_0: \text{(RMPC \cite{mayne2005robust}) is feasible when $x_t=x_0$}\}.$$

The RMPC scheme in \cite{mayne2005robust} is a variant of the traditional RMPC schemes by allowing more freedom when choosing $x_{t|t}$, i.e., in the scheme above, $x_{t|t}$ is also an optimization variable as long as  $x_{t|t}\in x_{t}\oplus\Xi$, but in traditional RMPC schemes, $x_{t|t}=x_t$ is fixed. With this adjustment, the RMPC scheme in \cite{mayne2005robust} enjoys robust exponential stability.
\begin{theorem}[Theorem 1 in \cite{mayne2005robust}]\label{thm: rob exp stable}
	The set $\Xi$ is robustly exponentially stable for the closed-loop system with \textup{(RMPC \cite{mayne2005robust})} for $w_k\in \W$ with an attraction region $X_N$, i.e., there exists $c>0, \gamma_1\in (0,1)$, such that for any $x_0\in X_N$, for any $w_k\in \W$.
	$$\textup{dist}(x_t, \Xi)\leq c \gamma_1^t \textup{dist}(x_0, \Xi).$$
	
\end{theorem}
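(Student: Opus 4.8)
The plan is to exploit the tube structure of the scheme. At each stage write the true state as $x_t = x_{t|t}^* + e_t$, where $\bar x_t := x_{t|t}^*$ is the optimized nominal initial state and $e_t := x_t - x_{t|t}^*$ is the tube error. The two things to prove are: (a) the error never leaves $\Xi$, and (b) the nominal state $\bar x_t$ decays exponentially to the origin. Since $e_t\in\Xi$ is itself a point of $\Xi$, one has $\textup{dist}(x_t,\Xi)\le \|x_t-e_t\|_2=\|\bar x_t\|_2$, so (b) immediately transfers an exponential rate to $\textup{dist}(x_t,\Xi)$.

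First I would establish the error invariance together with recursive feasibility, which are intertwined. The applied control $u_t=-\mathbb{K}(x_t-x_{t|t}^*)+u_{t|t}^*$ and the nominal successor $x_{t+1|t}^*=A_*x_{t|t}^*+B_*u_{t|t}^*$ give the error recursion $x_{t+1}-x_{t+1|t}^*=(A_*-B_*\mathbb{K})e_t+w_t$. Disturbance invariance of $\Xi$ for the closed loop $x^+=(A_*-B_*\mathbb{K})x+w$ then yields $x_{t+1}-x_{t+1|t}^*\in\Xi$ whenever $e_t\in\Xi$. To keep the optimization feasible at $x_{t+1}$, I construct a candidate plan by shifting the optimal plan at $t$ one step and appending the terminal action $-\mathbb{K}x_{t+W|t}^*$: the shifted nominal initial state satisfies $x_{t+1}\in x_{t+1|t}^*\oplus\Xi$ by the error inclusion just shown, the tightened constraints $\mathbb{X}\ominus\Xi$ and $\mathbb{U}\ominus\mathbb{K}\Xi$ hold along the shifted portion by optimality at $t$, and the appended step remains admissible because $X_f$ is positively invariant under $-\mathbb{K}$ and contained in $\mathbb{X}\ominus\Xi$. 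This shows $x_{t+1}\in X_N$ and, by symmetry of $\Xi$, that one may keep $e_t\in\Xi$ for all $t$.

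Next I would show that the optimal value function $V_W^*(\bar x)$ of (RMPC \cite{mayne2005robust}), viewed as a function of the nominal state, is a Lyapunov function for the disturbance-free nominal dynamics. Feeding the shifted-plus-appended candidate from the previous step into the cost and using the terminal condition $V_f((A_*-B_*\mathbb{K})x)+l(x,-\mathbb{K}x)\le V_f(x)$ on $X_f$, the usual telescoping MPC argument gives the decrease $V_W^*(\bar x_{t+1})-V_W^*(\bar x_t)\le -l(\bar x_t,u_{t|t}^*)\le -c_3\|\bar x_t\|_2^2$, where the last bound uses $Q\succ0$. Combining this with the quadratic sandwich $c_1\|\bar x\|_2^2\le V_W^*(\bar x)\le c_2\|\bar x\|_2^2$ on $X_N$ --- the lower bound from $Q\succ0$ and the upper bound from the locally quadratic behaviour of the parametric-QP value function near the origin together with its boundedness on the compact feasible region --- standard discrete-time Lyapunov theory produces constants $c>0$ and $\gamma_1\in(0,1)$ with $\|\bar x_t\|_2\le c\,\gamma_1^t\|\bar x_0\|_2$. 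The conclusion follows from $\textup{dist}(x_t,\Xi)\le\|\bar x_t\|_2$ together with $\|\bar x_0\|_2\le\sqrt{c_2/c_1}\,\textup{dist}(x_0,\Xi)$, the latter because $\bar x_0$ minimizes the cost over $x_0\oplus\Xi$ and can be compared against the Euclidean projection of $x_0$ onto $-\Xi=\Xi$.

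The main obstacle is the Lyapunov step, specifically securing an exponential rather than merely asymptotic rate. This hinges on choosing the terminal ingredients $(V_f,X_f,\mathbb{K})$ so that $V_f$ is a genuine local control-Lyapunov function matched to the stage cost $l$, and on the quadratic upper bound for $V_W^*$, which relies on the active set of the parametric QP stabilizing near the origin so that the value function is locally a positive-definite quadratic. Verifying these terminal conditions and the constraint qualification that makes the value bound uniform on $X_N$ is where the real work lies; the invariance and recursive-feasibility bookkeeping above is comparatively routine.
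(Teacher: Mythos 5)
This theorem is imported verbatim from Mayne et al.\ (2005) (``Theorem 1 in \cite{mayne2005robust}''), and the paper gives no proof of its own, so the only meaningful comparison is with the cited source; your reconstruction --- the tube decomposition $x_t=\bar x_t+e_t$, invariance of the error in $\Xi$ coupled with recursive feasibility via the shifted-plus-terminal candidate plan, and a value-function Lyapunov argument yielding exponential decay of $\bar x_t$ and hence of $\textup{dist}(x_t,\Xi)\le\|\bar x_t\|_2$ --- is exactly the argument of that source. The outline is sound, and you correctly identify the one genuinely delicate step (the uniform quadratic upper bound on the value function, equivalently $\|\bar x_0\|_2\le c\,\textup{dist}(x_0,\Xi)$, which needs the terminal ingredients and a local-plus-compactness argument rather than just comparison with the Euclidean projection), so nothing further is required.
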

Theorem \ref{thm: rob exp stable} suggests that (RMPC \cite{mayne2005robust}) can quickly reduce the distance between $x_t$ and $\Xi$, i.e. it can drive a large initial state $x_0\not=0$ quickly to a neighborhood around $\Xi$, which is also a neighborhood around the origin.

Based on the robust exponential stability, we can build a connection between the infinite horizon averaged cost of RMPC  and that of the safe linear policy $\mathbb K$.
\begin{theorem}[Connection between RMPC in \cite{mayne2005robust} and linear control's infinite-horizon costs]\label{thm: inf cost of rmpc=lqr cost of K}
	Consider (RMPC \cite{mayne2005robust}) defined above with $\Kb$ satisfying the requirements in \cite{mayne2005robust}.  For any $x_0\in X_N$, the infinite-horizon averaged cost of RMPC in \cite{mayne2005robust}  equals the infinite-horizon averaged cost of $\Kb$, i.e.
	$$J(\textup{RMPC in \cite{mayne2005robust}})=J(\Kb),$$
	
\end{theorem}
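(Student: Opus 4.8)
The plan is to show that the RMPC feedback coincides, asymptotically, with the static linear policy $u_t=-\Kb x_t$ up to an exponentially decaying correction, so that the two controllers induce the same infinite-horizon averaged cost. The first step is to rewrite the implemented control law. Since $u_t=-\Kb(x_t-x_{t|t}^*)+u_{t|t}^*$, I would set $v_t:=u_{t|t}^*+\Kb x_{t|t}^*$ and write $u_t=-\Kb x_t+v_t$, so that the closed loop becomes
\[
x_{t+1}=(A_*-B_*\Kb)x_t+B_*v_t+w_t .
\]
Thus the RMPC trajectory is exactly the $\Kb$-closed-loop system driven by the same disturbances $w_t$, perturbed only by the additive term $B_*v_t$. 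The entire argument then reduces to controlling $v_t$ and transferring this control to the averaged cost.

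The main step, and the hardest one, is to prove that $v_t\to 0$ exponentially fast, uniformly over disturbance sequences. Here I would invoke the robust exponential stability of $\Xi$ (Theorem~\ref{thm: rob exp stable}) together with the Lyapunov structure of the RMPC value function established in \cite{mayne2005robust}. Writing $V_N^0(x_t)$ for the optimal value of the RMPC program at stage $t$, positivity of $l$ and $V_f$ gives $V_N^0(x_t)\ge l(x_{t|t}^*,u_{t|t}^*)\ge \upsilon_{\min}(Q)\|x_{t|t}^*\|_2^2+\upsilon_{\min}(R)\|u_{t|t}^*\|_2^2$, while the Lyapunov decrease underlying Theorem~\ref{thm: rob exp stable} shows $V_N^0(x_t)$ is comparable to $\mathrm{dist}(x_t,\Xi)^2$ and hence decays exponentially to zero. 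Consequently $\|x_{t|t}^*\|_2$ and $\|u_{t|t}^*\|_2$, and therefore $\|v_t\|_2$, are bounded by $C\rho^{t}$ for some $C>0$, $\rho\in(0,1)$ that are independent of the realization of $\{w_k\}$. Verifying precisely that the nominal optimizer collapses to the origin (rather than merely entering a neighborhood) is the delicate point, and is where I expect to lean most heavily on the value-function bounds of \cite{mayne2005robust}.

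With exponential decay of $v_t$ in hand, I would couple the RMPC state to the benchmark trajectory $\tilde x_t$ generated by $u_t=-\Kb\tilde x_t$ from $\tilde x_0=0$ under the \emph{same} $w_t$. The difference $\delta_t:=x_t-\tilde x_t$ obeys $\delta_{t+1}=(A_*-B_*\Kb)\delta_t+B_*v_t$ with $\delta_0=x_0$; since $A_*-B_*\Kb$ is $(\kappa,\gamma)$-stable, a standard convolution bound gives $\|\delta_t\|_2\le \kappa(1-\gamma)^t\|x_0\|_2+\kappa\kappa_B\sum_{s=0}^{t-1}(1-\gamma)^{t-1-s}\|v_s\|_2$, which again decays exponentially. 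Because both controllers are safe, their states and actions lie in the bounded sets $\X$ and $\U$, so $l$ is Lipschitz on the relevant region and $|l(x_t,u_t)-l(\tilde x_t,-\Kb\tilde x_t)|\le L(\|\delta_t\|_2+\|v_t\|_2)$ for a constant $L$. Averaging and using $\frac1T\sum_{t=0}^{T-1}(\|\delta_t\|_2+\|v_t\|_2)\to 0$ (a Cesàro mean of a summable sequence), I conclude
\[
\lim_{T\to\infty}\frac1T\sum_{t=0}^{T-1}\E[l(x_t,u_t)]=\lim_{T\to\infty}\frac1T\sum_{t=0}^{T-1}\E[l(\tilde x_t,-\Kb\tilde x_t)]=J(\Kb),
\]
which is exactly $J(\textup{RMPC in \cite{mayne2005robust}})=J(\Kb)$. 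Note this also shows the averaged cost is insensitive to the initial state $x_0\in X_N$, since its effect enters only through the exponentially vanishing $\delta_0$ term.

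In summary, the routine parts are the control-law rewriting and the coupling/Cesàro estimate, both of which follow from stability of $A_*-B_*\Kb$ and boundedness of the safe trajectories. The crux is the second paragraph: certifying that the optimal nominal pair $(x_{t|t}^*,u_{t|t}^*)$ converges to the origin at a geometric rate, which I would extract from the robust exponential stability and the explicit Lyapunov/value-function bounds in \cite{mayne2005robust}.
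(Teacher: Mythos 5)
Your proposal is correct and follows the same overall architecture as the paper's proof: both rewrite the RMPC input as $u_t=-\Kb x_t+v_t$ with $v_t=u_{t|t}^*+\Kb x_{t|t}^*$ (the paper's $\upsilon_t=B_*v_t$), couple the RMPC trajectory to the $\Kb$-closed loop driven by the same disturbances, bound the discrepancy by convolving the $(\kappa,\gamma)$-stable transition matrix against the exponentially decaying $v_t$ (yielding the paper's $c_4 t\rho_0^{t-1}$ bound), and finish with a Lipschitz-plus-Ces\`aro argument. The one place you genuinely diverge is the step you correctly flag as the crux: certifying geometric decay of $u_{t|t}^*$. The paper does this in two stages: Proposition~\ref{prop: RMPC exp convergence} imports $\|x_{t|t}^*\|_2\le c_1\rho^t\|x_{0|0}^*\|_2$ from the proof of Theorem 1 in \cite{mayne2005robust}, and Lemma~\ref{lem: ut|t exp. convergence} then transfers this to $u_{t|t}^*$ by viewing $x_{t|t}^*\mapsto u_{t|t}^*$ as the continuous piecewise-affine (hence Lipschitz) explicit solution of a multiparametric QP (Proposition~\ref{prop: CLQR solution structure}, via \cite{bemporad2002explicit}) together with $u_{t|t}^*(0)=0$. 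Your route instead lower-bounds the RMPC value function by $l(x_{t|t}^*,u_{t|t}^*)\ge \upsilon_{\min}(R)\|u_{t|t}^*\|_2^2$ and uses the Lyapunov decrease and upper bound of $V_N^0$ from \cite{mayne2005robust} to force $V_N^0(x_t)\to 0$ geometrically; this bypasses the explicit-MPC machinery entirely and needs only positive definiteness of $Q,R$, which is more elementary and more portable to settings where the PWA solution structure is unavailable (your phrasing ``comparable to $\mathrm{dist}(x_t,\Xi)^2$'' should really be ``comparable to $\|x_{0|0}^*(x_t)\|_2^2$'', which is what Mayne et al.\ actually sandwich, but the mechanism is the same). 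The paper's route, in exchange, produces an explicit Lipschitz constant $L_{rmpc}$ and hence explicit constants in the final bound. The remaining differences are immaterial: the paper couples to the $\Kb$-trajectory started from $y_0=x_0$ rather than from $0$, but as you note the initial condition only enters through an exponentially vanishing term.
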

The proof is deferred to the end of this appendix.

Notice that $\mathbb K$ is a pre-fixed safe linear policy, so by Theorem \ref{thm: inf cost of rmpc=lqr cost of K}, we have $J(K^*)\leq J(\text{RMPC in \cite{mayne2005robust}})$, where  $K^*$ is our regret benchmark, i.e., the optimal safe linear policy. This suggests  that RMPC in \cite{mayne2005robust} achieves similar or worse performance than the optimal safe linear policy in the long run. Since our adaptive control algorithm enjoys a sublinear regret compared to the optimal safe linear policy, Theorem \ref{thm: inf cost of rmpc=lqr cost of K} suggests that our algorithm achieves the same regret bound even if we include RMPC in \cite{mayne2005robust} to the benchmark policy set. Further, if $\mathbb K\not=K^*$, our adaptive algorithm can even achieve better performance than RMPC in \cite{mayne2005robust} at around the equilibrium point 0.

Nevertheless, one major strength of RMPC in \cite{mayne2005robust} compared with our algorithm is that RMPC can guarantee safety for large nonzero $x_0$  and can drive a large state exponentially to a small neighborhood of 0. Therefore, an interesting and natural idea is to combine RMPC in \cite{mayne2005robust} with our algorithm to achieve the strengths of both methods: quickly and safely drive a large initial state to a neighborhood around 0, and learning to optimize the performance around 0.\footnote{Though RMPC in \cite{mayne2005robust} requires a known model, there are standard approaches to extend RMPC to handle model uncertainties, e.g., \cite{kohler2019linear,lu2019robust}.} We leave more studies on this combination as future work.


 


\begin{remark}
	Since our proof relies on the robust exponential stability property of RMPC in \cite{mayne2005robust}, for other RMPC schemes without this property, we still cannot include them to our benchmark policy class and generate a sublinear regret. We leave the regret analysis compared with other RMPC schemes without robust exponential stability as future work. Further, we note that there are a few papers on the regret analysis with RMPC as the benchmark, e.g., \cite{wabersich2018linear,muthirayan2020regret}. However, \cite{wabersich2018linear} allows constraint violation during the learning process and allows restarts when policies are updated, and \cite{muthirayan2020regret} does not consider state constraints and the proposed algorithm involves an intractable oracle. In conclusion, the regret analysis with RMPC as the benchmark is largely under-explored and is an important direction for future research.
\end{remark}

\begin{proof}[Proof of Theorem \ref{thm: inf cost of rmpc=lqr cost of K}]


	To prove Theorem \ref{thm: inf cost of rmpc=lqr cost of K}, we introduce some necessary results from the existing literature and some  lemmas based on these existing results.
	
	Firstly, we review the structure of constrained LQR's solution proved in \cite{bemporad2002explicit}.
	
	\begin{proposition}[Corollary 2 and Theorem 4 and Section 4.4 in \cite{bemporad2002explicit}]\label{prop: CLQR solution structure}
		Consider (CLQR) with p.d. quadratic costs and polytopic constraints below:
		\begin{equation}\tag{CLQR}
			\begin{aligned}
				\min_{ u_{t+k|t}} \ &\sum_{k=0}^{W-1} l( x_{t+k|t},  u_{t+k|t})+  x_{t+W|t}^\top P  x_{t+W|t}^\top\\
				\text{s.t. \ } &  x_{t+k+1|t}=A_*  x_{t+k|t}+B_*  u_{t+k|t}, \quad k\geq 0\\
				& D_x  x_{t+k|t}\leq d_x, \quad \forall 0\leq k \leq W-1\\
				& D_u  u_{t+k|t}\leq d_u, \quad  \forall 0\leq k \leq W-1\\
				& D_{term} x_{t+W|t}\leq d_{term}\\
				& x_{t|t}=x
			\end{aligned}
		\end{equation}
		Denote the optimal policy as $\pi_{CLQR}(x)= u_{t|t}^*$, and denote the feasible region as $X_N$. Then, $X_N$ is convex, and $\pi_{CLQR}(x)$ is continuous and PWA on a finite number of closed convex polytopic regions. that is, 
		$$\pi_{CLQR}(x)=K_i x + b_i, \quad G_i x\leq h_i, \quad i=0, 1, \dots, N_{clqr}.$$
		Further, the number of different gain matrices can bounded by a constant $\bar N_{clqr-gain}$ that only depends on the dimensionality of the problem.
	\end{proposition}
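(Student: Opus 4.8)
The plan is to recognize (CLQR) as a multiparametric quadratic program (mpQP) in the parameter $x=x_{t|t}$ and then invoke the standard explicit-MPC / mpQP machinery. First I would eliminate the equality constraints $x_{t+k+1|t}=A_* x_{t+k|t}+B_* u_{t+k|t}$ by unrolling the dynamics: writing $U=(u_{t|t}^\top,\dots,u_{t+W-1|t}^\top)^\top$, each predicted state is an affine function of $x$ and $U$, so the cost collapses to $\tfrac12 U^\top \mathcal H U + x^\top \mathcal F U + \tfrac12 x^\top \mathcal Y x$ and the inequalities become $\mathcal G U \le \mathcal W_0 + \mathcal E x$. The key structural observation is that the Hessian $\mathcal H$ is positive definite: the stage term $u_{t+k|t}^\top R u_{t+k|t}$ contributes an $R\succ0$ block directly on each control input, so $\mathcal H\succeq \mathrm{blkdiag}(R,\dots,R)\succ0$ regardless of the (at least positive semidefinite) contributions of $Q$ and $P$. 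Hence for every fixed $x$ the QP is strictly convex and has a unique minimizer, which already shows that $\pi_{CLQR}(x)=u_{t|t}^*$ is a well-defined single-valued map.

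Convexity of $X_N$ is then immediate: $X_N=\{x:\exists\, U,\ \mathcal G U\le \mathcal W_0+\mathcal E x\}$ is the projection onto the $x$-coordinates of the polyhedron $\{(x,U):\mathcal G U\le \mathcal W_0+\mathcal E x\}$, and a linear projection of a polyhedron is again a polyhedron, hence convex.

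For the PWA structure I would use the KKT conditions, which for a strictly convex QP with affine constraints are necessary and sufficient for optimality. Fixing a candidate active set $\mathcal A$ (the rows holding with equality at the optimum), the stationarity equation together with the active-constraint equations forms a linear system in $(U,\lambda_{\mathcal A})$ whose right-hand side is affine in $x$; provided the active gradients $\mathcal G_{\mathcal A}$ are linearly independent, this system is uniquely solvable and yields $U^*(x)=Kx+b$ and $\lambda_{\mathcal A}(x)$ both affine in $x$, so reading off the first block gives $u_{t|t}^*=K_i x+b_i$. The set of parameters on which this active set is optimal — the critical region — is cut out by primal feasibility of the inactive rows, $\mathcal G_{\mathcal A^c}U^*(x)\le (\mathcal W_0+\mathcal E x)_{\mathcal A^c}$, together with dual feasibility $\lambda_{\mathcal A}(x)\ge 0$; both are finite families of affine inequalities in $x$, so each critical region is a closed convex polyhedron. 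Since there are at most $2^{q}$ active sets for $q$ constraints, there are finitely many critical regions, and the number of distinct gain matrices $K_i$ is bounded by a constant $\bar N_{clqr\text{-}gain}$ depending only on the problem dimensions (namely $W$, $n$, $m$, and the number of constraint rows).

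The hard part will be the rigorous treatment of continuity and of degeneracy at the boundaries between critical regions. Continuity of $\pi_{CLQR}$ follows in principle from strict convexity: the parametric QP has a unique solution that depends continuously (in fact Lipschitz-continuously) on $x$, e.g. by the Berge maximum theorem applied to the continuous feasible-set correspondence together with the strictly convex objective, which forces the affine pieces to agree on shared faces. The delicate points are (i) primal degeneracy, where $\mathcal G_{\mathcal A}$ is rank-deficient and the naive KKT system is not uniquely solvable, and (ii) dual degeneracy, where distinct active sets produce the same optimizer, both of which can create overlapping or lower-dimensional critical regions. I would handle these exactly as in \cite{bemporad2002explicit}: restrict to active sets satisfying linear independence (achievable by refining the partition so that LICQ holds), and observe that all degenerate boundaries lie in a finite union of hyperplanes, on which the continuity of the (unique) optimizer forces the neighboring affine pieces to match. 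This produces a continuous PWA control law on a finite polyhedral partition of the convex set $X_N$, which is exactly the asserted structure.
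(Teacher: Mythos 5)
Your proposal is correct and coincides with the paper's own treatment: the paper gives no proof of this proposition but imports it verbatim from \citet{bemporad2002explicit}, and your argument — condensing the dynamics to a multiparametric QP with Hessian bounded below by $\mathrm{blkdiag}(R,\dots,R)\succ 0$, obtaining convexity of $X_N$ as a linear projection of a polyhedron, enumerating active sets via KKT to get finitely many closed polyhedral critical regions each carrying an affine law $K_i x+b_i$ (hence the dimension-dependent bound on distinct gains), and deducing continuity from uniqueness of the strictly convex parametric QP solution — is precisely the mpQP argument underlying that citation. The one step you defer, the treatment of primal/dual degeneracy at region boundaries, is deferred to exactly the same reference on which the paper itself relies, so nothing is missing relative to the paper's own justification.
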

	
	Based on Proposition \ref{prop: CLQR solution structure}, we have that $\pi_{CLQR}(x)
	$ is Lispchitz continous with Lipschitz factor $L_{CLQR}= \max_{i} \|K_i\|_2$ since $\pi_{CLQR}(x)$ is continuous and piecewise-affine with respect to $x$.

	Next, we will use the exponential convergence results of RMPC in \cite{mayne2005robust}.
	\begin{proposition}[See the proof of Theorem 1 in \cite{mayne2005robust}]\label{prop: RMPC exp convergence}
		There exists $c_1>0$ and $\rho\in (0,1)$ such that for any $x_0\in X_N$, and for any admissible disturbances $w_k$, we have
		$$ \|x_{t|t}^*(x_t)\|_2\leq c_1 \rho^t \|x_{0|0}^*(x_0)\|_2.$$
	\end{proposition}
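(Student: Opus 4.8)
The plan is to follow the classical tube-MPC Lyapunov argument of \citet{mayne2005robust}, using the optimal nominal cost as a Lyapunov function for the disturbance-free nominal initial state $x_{t|t}^*$. I would denote by $V_W^*(z)$ the optimal value of the RMPC program when the free nominal initial state is fixed to $x_{t|t}=z$; that is, $V_W^*(z)$ is the minimum of $\sum_{k=0}^{W-1} l(x_{t+k|t}, u_{t+k|t}) + V_f(x_{t+W|t})$ over nominal trajectories starting at $z$ and obeying the tightened constraints $x_{t+k|t}\in \X\ominus\Xi$, $u_{t+k|t}\in \U\ominus\Kb\Xi$, together with the terminal constraint $x_{t+W|t}\in X_f$. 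The fact I would exploit is that the optimized nominal initial state at time $t+1$ can be chosen independently of the realized disturbance $w_t$, so that the value function decreases \emph{robustly}.

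First I would establish two-sided quadratic bounds on $V_W^*$. The lower bound $V_W^*(z)\ge \upsilon_{\min}(Q)\|z\|_2^2 =: c_-\|z\|_2^2$ is immediate from the first stage cost, since $l(z,\cdot)\ge \upsilon_{\min}(Q)\|z\|_2^2$. For the upper bound $V_W^*(z)\le c_+\|z\|_2^2$ on $X_N$, I would use that $V_f$ is a local control-Lyapunov function for the terminal controller $\Kb$ (so that $V_W^*$ coincides with a quadratic near the origin inside $X_f$), that $V_W^*$ is continuous and piecewise quadratic by Proposition \ref{prop: CLQR solution structure}, and that $X_N\subseteq \X$ is bounded; hence the ratio $V_W^*(z)/\|z\|_2^2$ stays bounded on $X_N$.

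Next I would prove the geometric decrease of $V_W^*$ along the closed loop. Given the optimal pair $(x_{t|t}^*, u_{t|t}^*)$ and the optimal nominal trajectory at time $t$, I would build a feasible candidate at $t+1$ by shifting the nominal control sequence, appending the terminal control $-\Kb x_{t+W|t}^*$, and setting the new nominal initial state to $x_{t+1|t+1}=x_{t+1|t}^*=A_*x_{t|t}^*+B_*u_{t|t}^*$. The crucial point is that $x_{t+1|t}^*$ is computed from the nominal (disturbance-free) dynamics, hence is independent of $w_t$, while the realized error obeys $x_{t+1}-x_{t+1|t}^*=(A_*-B_*\Kb)(x_t-x_{t|t}^*)+w_t\in\Xi$ because $x_t-x_{t|t}^*\in\Xi$ and $\Xi$ is disturbance invariant; together with invariance of $X_f$ under $\Kb$ and the tightened constraints, this makes the candidate feasible for every $w_t\in\W$. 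Optimality at $t+1$ and the terminal inequality $V_f(A_*x-B_*\Kb x)+l(x,-\Kb x)\le V_f(x)$ on $X_f$ then yield $V_W^*(x_{t+1|t+1}^*)\le V_W^*(x_{t|t}^*)-l(x_{t|t}^*,u_{t|t}^*)\le V_W^*(x_{t|t}^*)-c_-\|x_{t|t}^*\|_2^2$.

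Finally I would combine these estimates. Using the upper bound to write $c_-\|x_{t|t}^*\|_2^2\ge (c_-/c_+)V_W^*(x_{t|t}^*)$ gives $V_W^*(x_{t+1|t+1}^*)\le \beta\,V_W^*(x_{t|t}^*)$ with $\beta=1-c_-/c_+\in(0,1)$; iterating and sandwiching with the two-sided bounds produces $\|x_{t|t}^*\|_2^2\le (c_+/c_-)\beta^t\|x_{0|0}^*\|_2^2$, which is the claim with $c_1=\sqrt{c_+/c_-}$ and $\rho=\sqrt{\beta}$. The main obstacle is the robust-feasibility verification of the shifted candidate in the decrease step: one must check that the disturbance-free nominal warm start remains admissible under every $w_t\in\W$, which is exactly where the disturbance-invariant tube $\Xi$, the constraint tightening, and Mayne et al.'s device of optimizing over the nominal initial state are indispensable. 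The quadratic upper bound on $V_W^*$ across the whole feasible region $X_N$ is the secondary technical point.
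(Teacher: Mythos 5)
Your proposal is correct and coincides with the argument the paper actually relies on: the paper does not prove this proposition itself but cites the proof of Theorem~1 in \cite{mayne2005robust}, and your reconstruction --- the two-sided quadratic bounds on the continuous piecewise-quadratic nominal value function $V_W^*$, and the robust geometric decrease obtained by shifting the optimal nominal sequence and exploiting that the warm-start nominal state $x_{t+1|t}^*$ is disturbance-independent while $x_{t+1}-x_{t+1|t}^*=(A_*-B_*\Kb)(x_t-x_{t|t}^*)+w_t\in\Xi$ by disturbance invariance --- is precisely that argument. Your conclusion with $c_1=\sqrt{c_+/c_-}$ and $\rho=\sqrt{1-c_-/c_+}$ yields exactly the claimed bound.
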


	Based on this, we can also show the exponential decay of $u_{t|t}^*(x_t)$.
	\begin{lemma}\label{lem: ut|t exp. convergence}
		There exists $c_2>0$ and $\rho\in (0,1)$ such that for any $x_0\in X_N$, and for any admissible disturbances $w_k$, $u_{t|t}^*(x_{t|t}^*)$ is Lipschitz continous with a finite factor denoted as $L_{rmpc}$ on a convex feasible set. Further, we have
		$ \|u_{t|t}^*(x_t)\|_2\leq c_2 \rho^t ,$
		where $c_2= L_{rmpc} c_1 x_{\max}$.
	\end{lemma}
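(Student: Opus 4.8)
The plan is to write the nominal control $u_{t|t}^*$ as a constrained-LQR feedback law evaluated at the optimal nominal state $x_{t|t}^*$, and then combine the Lipschitz structure of that law (Proposition \ref{prop: CLQR solution structure}) with the exponential contraction of $x_{t|t}^*$ (Proposition \ref{prop: RMPC exp convergence}). First I would decompose the RMPC optimization as a nested minimization: an outer problem over the nominal initial state $x_{t|t}\in x_t\oplus\Xi$, and an inner problem over the planned inputs $\{u_{t+k|t}\}_{k\geq 0}$. For any fixed $x_{t|t}$, the inner problem is exactly a (CLQR) instance in the sense of Proposition \ref{prop: CLQR solution structure}: initial state $x_{t|t}$, stage cost $l$, terminal cost $V_f(x)=x^\top P x$, and the tightened polytopic constraints $\mathbb X\ominus\Xi$, $\mathbb U\ominus\mathbb K\Xi$, and $X_f$. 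By the principle of optimality, once the outer problem selects the optimizer $x_{t|t}^*$, the associated planned control satisfies $u_{t|t}^*=\pi_{CLQR}(x_{t|t}^*)$, where $\pi_{CLQR}$ is the CLQR feedback law of this tightened problem. Since the text following Proposition \ref{prop: CLQR solution structure} already establishes that $\pi_{CLQR}$ is globally Lipschitz on its convex feasible domain with factor $L_{CLQR}=\max_i\|K_i\|_2$, this directly yields the Lipschitz claim with $L_{rmpc}=L_{CLQR}$.

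Next I would pin down $\pi_{CLQR}(0)=0$. Because $Q$, $R$, and $P$ are positive definite and the origin lies in the interior of each tightened constraint set and in $X_f$ (which holds since $\Xi$ is a bounded disturbance-invariant set around the origin and $\mathbb X$, $\mathbb U$ contain the origin in their interiors), the all-zero plan is feasible and attains zero cost, so it is the unique optimizer; hence $\pi_{CLQR}(0)=0$. Combining this with Lipschitz continuity gives $\|u_{t|t}^*\|_2=\|\pi_{CLQR}(x_{t|t}^*)-\pi_{CLQR}(0)\|_2\leq L_{rmpc}\|x_{t|t}^*\|_2$. Applying Proposition \ref{prop: RMPC exp convergence} to bound $\|x_{t|t}^*\|_2\leq c_1\rho^t\|x_{0|0}^*\|_2$, and using $\|x_{0|0}^*\|_2\leq x_{\max}$ (as $x_{0|0}^*\in\mathbb X\ominus\Xi\subseteq\mathbb X$), yields $\|u_{t|t}^*(x_t)\|_2\leq L_{rmpc}c_1 x_{\max}\rho^t=c_2\rho^t$ with $c_2=L_{rmpc}c_1 x_{\max}$, exactly as stated.

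The main obstacle is the identification $u_{t|t}^*=\pi_{CLQR}(x_{t|t}^*)$: one must argue carefully that optimizing over the nominal initial state does not break the CLQR structure of the remaining inner problem, so that the realized nominal control coincides with the CLQR feedback evaluated at the chosen nominal state, and that the terminal ingredients $V_f$ and $X_f$ of the RMPC scheme fit the CLQR template of Proposition \ref{prop: CLQR solution structure}. Once this nesting is justified, the Lipschitz-plus-contraction argument is routine. A minor point to verify alongside is that the domain on which $\pi_{CLQR}$ is evaluated is convex and contains the entire contracting trajectory $\{x_{t|t}^*\}_{t\geq 0}$, so that the single global Lipschitz constant $L_{rmpc}$ applies uniformly in $t$.
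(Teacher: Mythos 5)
Your proposal is correct and follows essentially the same route as the paper: identify $u_{t|t}^*=\pi_{CLQR}(x_{t|t}^*)$ for the tightened inner problem, invoke the Lipschitz/PWA structure from Proposition \ref{prop: CLQR solution structure}, note $u_{t|t}^*(0)=0$, and chain this with the exponential decay of $x_{t|t}^*$ from Proposition \ref{prop: RMPC exp convergence}. Your write-up is in fact somewhat more careful than the paper's (e.g., in justifying $\pi_{CLQR}(0)=0$ via feasibility and zero cost of the all-zero plan), but the argument is the same.
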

	\begin{proof}
		First of all, we point out that for the (RMPC \cite{mayne2005robust}) optimization, when $x_{t|t}^*$ is fixed, then $u_{t|t}^*$ can be viewed as $u_{t|t}^*=\pi_{CLQR}(x_{t|t}^*)$ for a (CLQR) problem with the same polytopic constraints and strongly convex quadratic cost functions with (RMPC \cite{mayne2005robust}). Therefore, $u_{t|t}^*(x_{t|t}^*)$ is Lipschitz continous with a finite factor denoted as $L_{rmpc}$ on a convex feasible set. 
		
		Further, notice that $u_{t|t}^*(0)=0$. Therefore,
		\begin{align*}
			\|u_{t|t}^*(x_{t|t}^*)\|_2=\|u_{t|t}^*(x_{t|t}^*)-u_{t|t}^*(0)\|_2\leq L_{rmpc}\|x_{t|t}^*\|_2\leq L_{rmpc} c_1\rho^t \|x_{0|0}^*(x_0)\|_2\leq c_2\rho^t
		\end{align*}
		where $c_2= L_{rmpc} c_1 x_{\max}$.
	\end{proof}
	
	Lastly, a technical lemma of a standard results. The proof is very straightforward.
	\begin{lemma}
		Consider $y^+=A_{\mathbb K} y+w$, where $y_0=x_0\in \mathbb X$ and $p=-\mathbb K y$. Since $\mathbb K$ is $(\kappa, \gamma)$ strongly convex, both $y$ and $p$ are bounded by
		$$ \|y_t\|_2\leq \|w\|_2 \kappa^2/\gamma + \kappa^2 x_{\max}=y_{\max}, \|p_t\|_2\leq \|w\|_2 \kappa^3/\gamma + \kappa^2 x_{\max}=p_{\max}.$$
	\end{lemma}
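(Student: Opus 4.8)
The plan is to exploit the closed-loop recursion directly, unroll it, and invoke the $(\kappa,\gamma)$-stability of $A_{\mathbb K}$ together with a geometric-series estimate. First I would read the (mistyped) hypothesis as the natural statement that $A_{\mathbb K}=A_*-B_*\mathbb K$ is $(\kappa,\gamma)$-stable, and unroll $y_{t+1}=A_{\mathbb K}y_t+w_t$ from the given initial condition $y_0=x_0$ into the standard convolution form
\begin{equation*}
	y_t = A_{\mathbb K}^t\, y_0 + \sum_{k=0}^{t-1} A_{\mathbb K}^{\,t-1-k}\, w_k .
\end{equation*}
Taking Euclidean norms, the triangle inequality and submultiplicativity of $\|\cdot\|_2$ give $\|y_t\|_2 \le \|A_{\mathbb K}^t\|_2\,\|y_0\|_2 + \sum_{k=0}^{t-1}\|A_{\mathbb K}^{\,t-1-k}\|_2\,\|w_k\|_2$.

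Next I would substitute the three ingredients. By $(\kappa,\gamma)$-stability (Definition~1), $\|A_{\mathbb K}^j\|_2 \le \kappa(1-\gamma)^j$ for every $j\ge 0$; since $y_0=x_0\in\mathbb X$, we have $\|y_0\|_2\le x_{\max}$; and each disturbance satisfies $\|w_k\|_2\le \|w\|_2$. Summing the resulting geometric series $\sum_{j=0}^{\infty}(1-\gamma)^j = 1/\gamma$ and using $(1-\gamma)^t\le 1\le\kappa$ then yields
\begin{equation*}
	\|y_t\|_2 \le \kappa(1-\gamma)^t x_{\max} + \frac{\kappa}{\gamma}\|w\|_2 \le \kappa^2 x_{\max} + \frac{\kappa^2}{\gamma}\|w\|_2 = y_{\max},
\end{equation*}
the extra factor of $\kappa$ being harmless slack since $\kappa\ge 1$.

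For the input I would use $p_t=-\mathbb K y_t$ together with $\|\mathbb K\|_2\le\kappa$, which holds whenever $\mathbb K\in\mathcal K$ and is part of the standing requirements on the tube controller. The cleanest route is to push $\mathbb K$ inside the convolution and bound $\|\mathbb K A_{\mathbb K}^j\|_2 \le \|\mathbb K\|_2\,\|A_{\mathbb K}^j\|_2 \le \kappa^2(1-\gamma)^j$, then repeat the geometric-series estimate; this gives $\|p_t\|_2 \le \kappa^2 x_{\max} + \frac{\kappa^2}{\gamma}\|w\|_2$, which is dominated by the stated bound $\kappa^2 x_{\max} + \frac{\kappa^3}{\gamma}\|w\|_2 = p_{\max}$.

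There is no genuine obstacle here, as this is a routine stability estimate for a stable linear recursion driven by bounded inputs. The only points requiring any care are (i) correctly interpreting the typo'd hypothesis as a statement about $A_{\mathbb K}$ rather than $\mathbb K$ alone, and (ii) noting that the stated constants are deliberately loose: the tight estimate already produces a $\kappa$-power one smaller in several places, so one need only check that the chain of inequalities closes rather than match constants exactly.
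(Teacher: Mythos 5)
Your proof is correct: the paper itself omits any argument for this lemma (it is flagged as ``a standard result''), and your unrolling of $y_t = A_{\mathbb K}^t y_0 + \sum_{k=0}^{t-1} A_{\mathbb K}^{t-1-k} w_k$ followed by the geometric-series bound $\sum_{j\ge 0}\kappa(1-\gamma)^j = \kappa/\gamma$ is exactly the intended routine estimate. Your reading of the typo (the hypothesis is $(\kappa,\gamma)$-stability of $A_{\mathbb K}$, not a property of $\mathbb K$), your use of $\|\mathbb K\|_2\le\kappa$ from the definition of $\mathcal K$, and your observation that the stated constants carry harmless slack of a factor $\kappa\ge 1$ are all accurate, so the chain of inequalities closes as claimed.
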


	Now, we are ready for the proof of Theorem \ref{thm: inf cost of rmpc=lqr cost of K}.
	\begin{proof}[Proof of Theorem \ref{thm: inf cost of rmpc=lqr cost of K}]
		The closed-loop system of (RMPC \cite{mayne2005robust}) is 
		$$ x_{t+1}=A_*x_t+B_* \pi_{RMPC}(x_t)+w_t=A_*x_t-B_*\mathbb K x_t + B_*( \mathbb K  x_{t|t}^*(x_t)+u_{t|t}^*(x_t))+w_t.$$
		
		Consider a possibly unsafe system: 
		$$ y_{t+1}=A_*y_t + B_* p_t +w_t, \quad p_t=-\mathbb K y_t$$
		with the same sequence of disturbances and $y_0=x_0$.

		The dynamics of the error $e_t=x_t-y_t$ is
		$$ e_{t+1}=A_{\mathbb K} e_t + \upsilon_t$$
		where $A_{\mathbb K}=A_*-B_*\Kb$, and $\upsilon_t=B_*( \mathbb K  x_{t|t}^*(x_t)+u_{t|t}^*(x_t))$.
		Notice that by Proposition \ref{prop: RMPC exp convergence} and Lemma \ref{lem: ut|t exp. convergence}, we have
		$$\|\upsilon_t\|_2 \leq \|B_*\|_2 (\kappa c_1 \rho^{t}x_{\max}+ c_2 \rho^t) = c_3\rho^t,$$
		where $c_3=  \|B_*\|_2 (\kappa c_1 x_{\max}+ c_2) $.
		
		Therefore, 
		\begin{align*}
			\|	e_t\|_2&= \|\upsilon_{t-1}+A_{\mathbb K} \upsilon_{t-2}+A_{\mathbb K}^{t-1}\upsilon_{0}\|_2\\
			& \leq c_3 \rho^{t-1}+ \kappa^2 (1-\gamma) c_3 \rho^{t-2}+ \dots\\
			& \leq c_3 \kappa^2 t \max(\rho, 1-\gamma)^{t-1}=c_4 t\rho_0^{t-1}
		\end{align*}
		where $\rho_0=\max(\rho, 1-\gamma)\in (0,1)$ and $c_4=  c_3 \kappa^2$.
		Further, 
		$$\|u_t-p_t\|_2 =\|-\mathbb K  e_t+\upsilon_t\|_2 \leq \kappa c_4 t \rho_0^{t-1}+ c_3 \rho^t \leq c_5 t \rho_0^{t-1},$$
		where $c_5=c_4\kappa+c_3/\rho$. 
		
		Therefore, the stage cost difference is
		\begin{align*}
			|l(x_t,u_t)-l(y_t,p_t)|&\leq \|Q\|_2 \|e_t\|_2 (x_{\max}+ y_{\max})+ \|R\|_2 \|u_t-p_t\|_2 \|u_{\max}+p_{\max}\|_2\\
			& \leq \|Q\|_2 (x_{\max}+ y_{\max})c_4 t \rho_0^{t-1}+ \|R\|_2 \|u_{\max}+p_{\max}\|_2c_5 t \rho_0^{t-1} = c_6 t \rho_0^{t-1}
		\end{align*}
		where $c_6=  \|Q\|_2 (x_{\max}+ y_{\max})c_4+  \|R\|_2 \|u_{\max}+p_{\max}\|_2c_5$.
		
		Therefore, 
		\begin{align*}
			\left|\frac{1}{T}\E \sum_{t=0}^{T-1}l(x_t,u_t)-l(y_t,p_t) \right|\leq \frac{1}{T}\sum_{t=0}^{T-1}\E |l(x_t,u_t)-l(y_t,p_t) | \leq \frac{1}{T} c_6 /(1-\rho_0)^2
		\end{align*}
		By taking $T\to +\infty$, we have 
		$ \lim_{T\to +\infty}\frac{1}{T}\E \sum_{t=0}^{T-1}l(x_t,u_t)-l(y_t,p_t) =0$.
		Since $ \lim_{T\to +\infty}\frac{1}{T}\E l(y_t,p_t) =J(\mathbb K)$, we have
		$\lim_{T\to +\infty}\frac{1}{T}\E \sum_{t=0}^{T-1}l(x_t,u_t)= J(\mathbb K).$
	\end{proof}
\end{proof}


\section{Additional Proofs}\label{append: add proofs}

\subsection{Proof of Lemma \ref{lem: Definition of epsilon theta(r)}}\label{subsec: proof of Lemma epsilon theta}

The proof relies on the following two lemmas.

\begin{lemma}[Definition of $\epsilon_{\hat w}$]\label{lem: define epsilon hat w}
	Under the conditions in Lemma \ref{lem: Definition of epsilon eta}, 
	\begin{align*}
		\sum_{k=1}^{H_t} D_{x,i}^\top A_*^{k-1}(w_{t-k}-\hat w_{t-k}) \leq \epsilon_{\hat w}(r)
	\end{align*}
\end{lemma}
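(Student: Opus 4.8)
The plan is to bound the sum term by term and then collapse a geometric series, using three ingredients per summand: a H\"older step that separates the direction $D_{x,i}$ from the vector $A_*^{k-1}(w_{t-k}-\hat w_{t-k})$, the $(\kappa,\gamma)$-stability of $A_*$ to control $\|A_*^{k-1}\|_2$, and Lemma \ref{lem: what - w} to control the disturbance approximation error. The operative hypotheses are those of Lemma \ref{lem: Definition of epsilon theta(r)}: the state/action boundedness $x_{t-k}\in\X,\ u_{t-k}\in\U$ for $1\le k\le H_t$, and the estimation-error bound $\|\hat\theta_{t-k}-\theta_*\|_F\le r$.

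Concretely, for a fixed $1\le k\le H_t$, since $D_{x,i}$ is the $i$-th row of $D_x$ we have $\|D_{x,i}\|_1\le\|D_x\|_\infty$, so H\"older's inequality (pairing the $\ell_1$-norm of the direction against the $\ell_\infty$-norm of the vector, then upgrading $\ell_\infty\le\ell_2$) gives
$$D_{x,i}^\top A_*^{k-1}(w_{t-k}-\hat w_{t-k})\le \|D_x\|_\infty\,\|A_*^{k-1}\|_2\,\|w_{t-k}-\hat w_{t-k}\|_2.$$
By Assumptions \ref{ass: Theta0 known} and \ref{ass: open loop stable}, $A_*$ is $(\kappa,\gamma)$-stable, so $\|A_*^{k-1}\|_2\le\kappa(1-\gamma)^{k-1}$. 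For the remaining factor I would apply Lemma \ref{lem: what - w} at the shifted index $t-k$: since $x_{t-k}\in\X,\ u_{t-k}\in\U$ the aggregated vector satisfies $\|z_{t-k}\|_2\le z_{\max}$, and with $\|\hat\theta_{t-k}-\theta_*\|_F\le r$ this yields $\|w_{t-k}-\hat w_{t-k}\|_2\le z_{\max}\,r$.

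Summing over $k$ and bounding $\sum_{k=1}^{H_t}(1-\gamma)^{k-1}\le\sum_{k=0}^{\infty}(1-\gamma)^{k}=1/\gamma$ gives
$$\sum_{k=1}^{H_t} D_{x,i}^\top A_*^{k-1}(w_{t-k}-\hat w_{t-k})\le \|D_x\|_\infty\,z_{\max}\,r\cdot\kappa\sum_{k=1}^{H_t}(1-\gamma)^{k-1}\le \frac{\|D_x\|_\infty\,z_{\max}\,\kappa}{\gamma}\,r=\epsilon_{\hat w}(r),$$
which is exactly the claimed constant. This argument is a pure chain of norm inequalities, so I do not expect any genuine obstacle; the only points needing care are choosing the correct dual-norm pairing in the H\"older step (matrix $\infty$-norm against vector $\ell_\infty$, then $\ell_\infty\le\ell_2$) and the bookkeeping that the boundedness and estimation hypotheses hold at every shifted time index $t-k$ so that Lemma \ref{lem: what - w} applies uniformly across the sum.
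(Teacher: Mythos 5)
Your proposal is correct and follows essentially the same route as the paper's proof: H\"older/norm bounds to extract $\|D_x\|_\infty$, the $(\kappa,\gamma)$-stability of $A_*$, Lemma \ref{lem: what - w} applied at each shifted index $t-k$ (valid since $\|z_{t-k}\|_2\le z_{\max}$ and $\|\hat\theta_{t-k}-\theta_*\|_F\le r$ there), and a geometric series giving the factor $1/\gamma$. The only cosmetic difference is that you bound each scalar $D_{x,i}^\top(\cdot)$ term by term while the paper bounds $\|D_x\sum_k(\cdot)\|_\infty$ as a vector; both collapse to the same chain of inequalities and the same constant $\epsilon_{\hat w}(r)=\|D_x\|_\infty z_{\max}\kappa r/\gamma$.
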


\begin{proof}
	\begin{align*}
		\|D_x  \sum_{k=1}^{H_t}A_*^{k-1}(w_{t-k}-\hat w_{t-k}) \|_\infty & \leq \|D_x\|_\infty \sum_{k=1}^{H_t} \|A_*^{k-1}(w_{t-k}-\hat w_{t-k}) \|_\infty\\
		& \leq  \|D_x\|_\infty \sum_{k=1}^{H_t} \|A_*^{k-1}(w_{t-k}-\hat w_{t-k}) \|_2\\
		& \leq  \|D_x\|_\infty \sum_{k=1}^{H_t}  \kappa (1-\gamma)^{k-1} r z_{\max}\\
		& \leq \|D_x\|_\infty \kappa/\gamma z_{\max} r=\epsilon_{\hat w}(r)
	\end{align*}
	
\end{proof}

\begin{lemma}[Definition of $\epsilon_{\hat \theta}$]\label{lem: define epsilon hat theta}
	For any $\mb M\in \M$, any $\hat \theta, \theta \in \Theta^{(0)}$ such that $\|\hat \theta-\theta\|_F\leq r$, we have
	
	$$	|g_i^x(\mb M;\hat \theta)-	g_i^x(\mb M;\theta)|\leq \epsilon_{\hat \theta}(r)$$
	where $\epsilon_{\hat \theta}(r)=c_{\hat \theta} r\sqrt{mn}$.
	
\end{lemma}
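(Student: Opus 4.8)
The target statement bounds the gap between the state-constraint functions $g_i^x(\mb M;\cdot)$ evaluated at two nearby models. The plan is to push the model perturbation through the explicit formula $g_i^x(\mb M;\theta)=\sum_{k=1}^{2H}\|D_{x,i}^\top \Phi_k^x(\mb M;\theta)\|_1 w_{\max}$ and reduce everything to perturbation bounds on the matrix products appearing in $\Phi_k^x$. First I would apply the reverse triangle inequality for the $1$-norm termwise in $k$, obtaining
\begin{align*}
|g_i^x(\mb M;\hat\theta)-g_i^x(\mb M;\theta)| \le w_{\max}\sum_{k=1}^{2H}\big\|D_{x,i}^\top\big(\Phi_k^x(\mb M;\hat\theta)-\Phi_k^x(\mb M;\theta)\big)\big\|_1 .
\end{align*}
This isolates the quantity $\Delta\Phi_k \coloneqq \Phi_k^x(\mb M;\hat\theta)-\Phi_k^x(\mb M;\theta)$, and since the $w_{\max}$-weighted sup over $\W$ cancels in the subtraction, the only $\theta$-dependence left is through $\Delta\Phi_k$ itself.

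Next I would write $\theta=(A,B)$, $\hat\theta=(\hat A,\hat B)$ and use the definition of $\Phi_k^x$ to split $\Delta\Phi_k$ into a pure power-difference part and a mixed part:
\begin{align*}
\Delta\Phi_k = (\hat A^{k-1}-A^{k-1})\id_{(k\le H)} + \sum_{i=1}^{H}\big(\hat A^{i-1}\hat B - A^{i-1}B\big)M[k-i]\,\id_{(1\le k-i\le H)}.
\end{align*}
To make the matrix perturbation lemmas applicable I would convert the outer $1$-norm to a spectral norm at the cost of a dimension factor, using $\|D_{x,i}^\top X\|_1\le \sqrt n\,\|D_{x,i}\|_2\,\|X\|_2 \le \sqrt n\,\|D_x\|_\infty\,\|X\|_2$ for $X\in\R^{n\times n}$. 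The two building blocks $\|\hat A^{k-1}-A^{k-1}\|_2$ and $\|\hat A^{i-1}\hat B-A^{i-1}B\|_2$ are then controlled by Lemma \ref{lem: perturb Ak AkB} (the same tool invoked in the proof of Lemma \ref{lem: f(theta)-f(theta')}), which for $(\kappa,\gamma)$-stable matrices gives decaying bounds of the form $O\big(k(1-\gamma)^{k-1}\big)\|\theta-\hat\theta\|_F$ and $O\big((i-1)(1-\gamma)^{i-2}\id_{(i\ge 2)}+\id_{(i=1)}\big)\|\theta-\hat\theta\|_F$ respectively; the factor $\|M[k-i]\|_\infty\le 2\sqrt n\kappa^2(1-\gamma)^{k-i-1}$ from the definition of $\M_H$ supplies the remaining geometric decay and one of the $\sqrt n$ factors, while the conversion of $M[k-i]$ from an entrywise to a spectral bound supplies the $\sqrt m$.

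Finally I would sum. The power-difference part contributes $\sum_{k} k(1-\gamma)^{k-1}<\infty$, and the mixed part contributes a double sum $\sum_{i}\sum_{j}(i-1)(1-\gamma)^{i-2}(1-\gamma)^{j-1}$ of arithmetic--geometric type, both convergent because $\gamma>0$; collecting the accumulated dimension factors yields the claimed $\sqrt{mn}\,r$ scaling, with all stability and problem constants absorbed into $c_{\hat\theta}$ (a polynomial in $\|D_x\|_\infty,\kappa,\kappa_B,\gamma^{-1},w_{\max}$). I expect the only genuinely delicate step to be the matrix-power perturbation $\|\hat A^{k-1}-A^{k-1}\|_2$: a naive telescoping gives an extra $k$ factor that must be paired against the geometric decay $(1-\gamma)^{k-1}$ to keep the series summable, and getting the sharp dimension dependence there (rather than a loose $\sqrt{mn}\,H$-type bound) is the crux; everything downstream is routine series summation and norm bookkeeping, which is exactly what Lemma \ref{lem: perturb Ak AkB} is designed to package.
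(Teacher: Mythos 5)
Your proposal is correct and follows essentially the same route as the paper's proof in Appendix \ref{subsec: proof of Lemma epsilon theta}: termwise triangle inequality on the $\ell_1$-norms, splitting $\Phi_k^x(\mb M;\hat\theta)-\Phi_k^x(\mb M;\theta)$ into the $\hat A^{k-1}-A^{k-1}$ part and the $\hat A^{i-1}\hat B-A^{i-1}B$ part, invoking Lemma \ref{lem: perturb Ak AkB} together with the decay of $\|M[k]\|_\infty$ from $\M_H$, and summing the resulting arithmetic--geometric series to collect the $\sqrt{mn}\,r$ factor. The only cosmetic difference is that you pass from the $\ell_1$-norm directly to the spectral norm via $\|D_{x,i}\|_2$, whereas the paper routes through $\|D_x\|_\infty\|\cdot\|_\infty\le\sqrt n\,\|D_x\|_\infty\|\cdot\|_2$; both yield the same dimension dependence.
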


\begin{proof}
	Firstly, we show that it suffices to prove an upper bound of a simpler quantity. 
	\begin{align*}
		|g_i^x(\mb M;\hat \theta)-	g_i^x(\mb M;\theta)|& = 	|\sum_{k=1}^{2H} \|D_{x,i}^\top \Phi_k^x(\mb M;\hat \theta)\|_1-\|D_{x,i}^\top \Phi_k^x(\mb M; \theta)\|_1|w_{\max}\\
		& \leq \sum_{k=1}^{2H}|\|D_{x,i}^\top \Phi_k^x(\mb M;\hat \theta)\|_1-\|D_{x,i}^\top \Phi_k^x(\mb M; \theta)\|_1|w_{\max}\\
		& \leq \sum_{k=1}^{2H}\|D_{x,i}^\top \Phi_k^x(\mb M;\hat \theta)-D_{x,i}^\top \Phi_k^x(\mb M; \theta)\|_1w_{\max}\\
		& \leq \sum_{k=1}^{2H}\|D_{x}\|_\infty \| \Phi_k^x(\mb M;\hat \theta)- \Phi_k^x(\mb M; \theta)\|_\infty w_{\max}\\
	\end{align*}
	thus, it suffices to bound $\sum_{k=1}^{2H} \|\Phi_k^x(\mb M;\hat \theta)- \Phi_k^x(\mb M; \theta)\|_\infty$. To bound this, we need several small lemmas  below.
	\begin{lemma}
		When  $\|\theta-\hat \theta\|_F\leq r$,
		we have	$\max(\|\hat A-A\|_2, \|\hat B-B\|_2)\leq \max(\|\hat A-A\|_F, \|\hat B-B\|_F)\leq r$
	\end{lemma}
	This is quite straightforward so the proof is omitted.
	\begin{lemma}\label{lem: perturb Ak AkB}
		For any $k\geq 0$, any $\hat \theta, \theta \in \Theta^{(0)}$ such that $\|\hat \theta-\theta\|_F\leq r$, we have
		\begin{align*}
			\|A^k-\hat A^k\|_2&\leq k \kappa^2 (1-\gamma)^{k-1}r \one_{(k\geq 1)}\\
			\|A^kB-\hat A^k\hat B\|_2& \leq k \kappa^2\kappa_B (1-\gamma)^{k-1}r \one_{(k\geq 1)}+\kappa(1-\gamma)^k r
		\end{align*}
	\end{lemma}
	\begin{proof}
		When $k=0$, $\|A^0-\hat A^0\|_2=0$. When $k\geq 1$,
		\begin{align*}
			\|\hat A^k- A^k\|_2&=\|\sum_{i=0}^{k-1} \hat A^{k-i-1}(\hat A-A)A^{i}\|_2\\
			&\leq \sum_{i=0}^{k-1} \| \hat A^{k-i-1}\|_2\|\hat A-A\|\|A^{i}\|_2\\
			& \leq \sum_{i=0}^{k-1}\kappa (1-\gamma)^{k-i-1}\epsilon \kappa (1-\gamma)^i\\
			&=k \kappa^2 r (1-\gamma)^{k-1}\\
			\|\hat A^k\hat B - A^kB\|_2&\leq \|\hat A^k\hat B-A^k \hat B\|_2+ \| A^k \hat B-\hat A^k \hat B\|_2\\
			& \leq k \kappa^2\kappa_B r (1-\gamma)^{k-1}\one_{(k\geq 1)}+ \kappa(1-\gamma)^kr
		\end{align*}
	\end{proof}
	Now, we can bound $\sum_{k=1}^{2H} \|\Phi_k^x(\mb M;\hat \theta)- \Phi_k^x(\mb M; \theta)\|_\infty$.  For any $1\leq k \leq 2H$, 
	\begin{align*}
		&\|\Phi_k^x(\mb M;\hat \theta)- \Phi_k^x(\mb M; \theta)\|_\infty \\
		= \ & \| \hat A^{k\!-\!1} \one_{(k\leq H)} \!+\!\sum_{i=1}^H \hat A^{i-1}\hat B M_{t-i}[k-i] \one_{(1\leq k-i \leq H)}\!- \!A^{k\!-\!1} \one_{(k\leq H)} \!-\! \sum_{i=1}^H  A^{i\!-\!1} B M_{t\!-\!i}[k\!-\!i] \one_{(1\leq k-i \leq H)}\|_\infty\\
		\leq \ & \|  \hat A^{k-1}- A^{k-1}\|_\infty \one_{(k\leq H)} + \sum_{i=1}^H \|  (\hat A^{i-1}\hat B-A^{i-1}B) M_{t-i}[k-i]\|_\infty \one_{(1\leq k-i \leq H)}\\
		\leq \ & \sqrt n \|  \hat A^{k-1}- A^{k-1}\|_2 \one_{(k\leq H)} + \sqrt m\sum_{i=1}^H \|\hat A^{i-1}\hat B-A^{i-1}B\|_2 2\sqrt n \kappa^2 (1-\gamma)^{k-i-1}\one_{(1\leq k-i \leq H)}
	\end{align*}
	There are two terms in the last right-hand-side of the inequality above. We sum each term over $k$ below.
	\begin{align*}
		\sum_{k=1}^{2H}\sqrt n \|  \hat A^{k-1}- A^{k-1}\|_2 \one_{(k\leq H)} & \leq 	\sum_{k=1}^{2H} \sqrt n (k-1) \kappa^2 (1-\gamma)^{k-2}r \one_{(2\leq k\leq H)}\leq \sqrt n \kappa^2 r/\gamma^2
	\end{align*}
	\begin{align*}
		&	\sum_{k=1}^{2H}	\sqrt m\sum_{i=1}^H \|\hat A^{i-1}\hat B-A^{i-1}B\|_2 2\sqrt n \kappa^2 (1-\gamma)^{k-i-1}\one_{(1\leq k-i \leq H)}\\
		\leq \ & 	\sum_{k=1}^{2H}	\sqrt m\sum_{i=1}^H (i-1)\kappa^2\kappa_B (1-\gamma)^{i-2}r \one_{(i\geq 2)} 2\sqrt n \kappa^2 (1-\gamma)^{k-i-1}\one_{(1\leq k-i \leq H)}\\
		&+		\sum_{k=1}^{2H}\sqrt m\sum_{i=1}^H \kappa(1-\gamma)^{i-1} r 2\sqrt n \kappa^2 (1-\gamma)^{k-i-1}\one_{(1\leq k-i \leq H)}\\
		=\ &2 \sqrt{mn}  \kappa^4\kappa_Br \sum_{i=1}^H\sum_{j=1}^H(i-1)(1-\gamma)^{i-2}(1-\gamma)^{j-1}+2\sqrt{mn}\kappa^3r\sum_i \sum_j(1-\gamma)^{i-1} (1-\gamma)^{j-1} \\
		=\ & 2 \sqrt{mn}  \kappa^4\kappa_Br/\gamma^3+ 2\sqrt{mn}\kappa^3r/\gamma^2
	\end{align*}
\end{proof}

\subsection{Proof of Lemma \ref{lem: define epsilon V}}\label{subsec: proof of epsilonv}

For notational simplicity, we omit the subscript $t$ in $H_t$ in this proof.
Remember that 	$g_i^x(\mb M_{t-H:t-1};\theta)=\sum_{s=1}^{2H}  \|D_{x,i}^\top  \Phi^x_{s}(\mb M_{t-H:t-1};\theta)\|_1  w_{\max}$.
\begin{align*}
	|\tilde g_i^x(\mb M_{t-H:t-1};\theta)&-  g_i^x(\mb M;\theta)|=\left| \sum_{k=1}^{2H}\|D_{x,i}^\top  \tilde \Phi^x_{k}(\mb M_{t-H:t-1};\theta)\|_1-\|D_{x,i}^\top  \Phi^x_{k}(\mb M_{t};\theta)\|_1\right|w_{\max}\\
	& \leq \sum_{k=1}^{2H}\left|\|D_{x,i}^\top  \Phi^x_{k}(\mb M_{t-H:t-1};\theta^*)\|_1-\|D_{x,i}^\top   \Phi^x_{k}(\mb M_{t};\theta)\|_1\right|w_{\max}\\
	& \leq \sum_{k=1}^{2H}\|D_{x,i}^\top ( \tilde\Phi^x_{k}(\mb M_{t-H:t-1};\theta)-  \Phi^x_{k}(\mb M_{t};\theta))\|_1w_{\max}\\
	& \leq \sum_{k=1}^{2H}\|D_x\|_\infty \|\tilde \Phi^x_{k}(\mb M_{t-H:t-1};\theta)-  \Phi^x_{k}(\mb M_{t};\theta)\|_\infty w_{\max}\\
	&\leq \sum_{k=1}^{2H}\|D_x\|_\infty \|\sum_{i=1}^H A^{i-1}B(M_{t-i}[k-i]-M_t[k-i])\|_\infty \one_{(1\leq k-i \leq H)}w_{\max}\\
	& \leq  \sum_{k=1}^{2H}\|D_x\|_\infty \sum_{i=1}^H \|A^{i-1}B\|_\infty \|M_{t-i}[k-i]-M_t[k-i]\|_\infty \one_{(1\leq k-i \leq H)}w_{\max}\\
	& \leq \|D_x\|_\infty\sqrt m w_{\max} \sum_{k=1}^{2H} \sum_{i=1}^H \kappa (1-\gamma)^{i-1}\kappa_B \|M_{t-i}[k-i]-M_t[k-i]\|_\infty \one_{(1\leq k-i \leq H)}\\
	& = \|D_x\|_\infty\sqrt m w_{\max}\kappa \kappa_B \sum_{i=1}^H\sum_{j=1}^H (1-\gamma)^{i-1} \|M_{t-i}[j]-M_t[j]\|_\infty\\
	& \leq \|D_x\|_\infty\sqrt m w_{\max}\kappa \kappa_B \sqrt{nH}\sum_{i=1}^H(1-\gamma)^{i-1} \|\mb M_{t-i}-\mb M_t\|_F\\
	& \leq \|D_x\|_\infty\sqrt{mnH} w_{\max}\kappa \kappa_B \sum_{i=1}^H(1-\gamma)^{i-1} i \Delta_{M}\\
	& \leq \|D_x\|_\infty\sqrt{mnH} w_{\max}\kappa \kappa_B /\gamma^2\Delta_{M}
\end{align*}
where the third last inequality is because $M[j]\in \R^{m\times n}$
$$ \sum_{j=1}^H \|M[j]\|_\infty \leq  \sum_{j=1}^H \|M[j]\|_2 \sqrt{n}\leq \sum_{j=1}^H \|M[j]\|_F \sqrt{n}\leq \|\mb M\|_F \sqrt{n}\sqrt{H}
$$

\subsection{Proof of Lemma \ref{lem: w.p.1 bdd on xt}}\label{subsec: as. bdd on xt}

For notational simplicity, we define $y_t= \sum_{i=1}^{H_t} A_*^{i-1} w_{t-i}+ \sum_{k=2}^{2{H_t}}\sum_{i=1}^{H_t} A_*^{i-1} B_* M_{t-i}[k-i] \hat w_{t-k}\one_{1\leq k-i \leq {H_t}}+ \sum_{i=1}^{H_t} A_*^{i-1}B_* \eta_{t-i}$.  Since $A_*$ is $(\kappa, \gamma)$-stable, we have
\begin{align*}
	&	\|y_t\|_2 \leq \sum_{i=1}^{H_t} \|A_*^{i-1}\|_2 \|w_{t-i}\|_2 + \sum_{k=2}^{2{H_t} }\sum_{i=1}^{H_t} \|A_*^{i-1} B_* M_{t-i}[k-i] \hat w_{t-k}\|_2\one_{1\leq k-i \leq {H_t}}+ \sum_{i=1}^{H_t} \|A_*^{i-1}B_* \eta_{t-i}\|_2\\
	&\leq \sum_{i=1}^{H_t} \kappa(1-\gamma)^{i-1}\sqrt n w_{\max} + \sum_{k=2}^{2{H_t} }\sum_{i=1}^{H_t}\|A_*^{i-1} B_*\|_2 \| M_{t-i}[k-i] \hat w_{t-k}\|_2\one_{1\leq k-i \leq {H_t}}+ \sum_{i=1}^{H_t} \|A_*^{i-1}B_*\|_2 \| \eta_{t-i}\|_2\\
	& \leq \kappa\sqrt n w_{\max} /\gamma + \sum_{k=2}^{2{H_t} }\sum_{i=1}^{H_t} \kappa (1-\gamma)^{i-1}\kappa_B \sqrt m \| M_{t-i}[k-i] \hat w_{t-k}\|_\infty \one_{1\leq k-i \leq {H_t}}+ \sum_{i=1}^{H_t}\kappa (1-\gamma)^{i-1}\kappa_B \sqrt n \eta_{\max}\\
	& \leq  \kappa\sqrt n w_{\max} /\gamma + \sum_{k=2}^{2{H_t} }\sum_{i=1}^{H_t} \kappa (1-\gamma)^{i-1}\kappa_B \sqrt m 2\sqrt n \kappa^2(1-\gamma)^{k-i-1} w_{\max } \one_{1\leq k-i \leq {H_t}}+ \kappa\kappa_B/\gamma \sqrt n\eta_{\max}\\
	& \leq  \kappa\sqrt n w_{\max} /\gamma + \kappa\kappa_B/\gamma \sqrt n\eta_{\max}+\kappa^3\kappa_B 2\sqrt{mn}w_{\max}\sum_{i=1}^{H_t}\sum_{j=1}^{H_t}(1-\gamma)^{i-1}(1-\gamma)^{j-1}\\
	& \leq \sqrt n( \kappa w_{\max} +\kappa\kappa_B\eta_{\max} )/\gamma+ \kappa^3\kappa_B 2\sqrt{mn}w_{\max}/\gamma^2\\
	& \leq 2\sqrt n \kappa w_{\max}/\gamma +  \kappa^3\kappa_B 2\sqrt{mn}w_{\max}/\gamma^2\leq c_{bx}\sqrt{mn}
\end{align*}
Remember that $	x_t=A_*^{H_t} x_{t-{H_t}}+y_t$ and 	 and $\|x_t\|_2=0\leq b_x$ for $t\leq 0$. We prove the bound on $x_t$ by induction. Suppose at $t\geq 0$, $\|x_{t-{H_t}}\|_2 \leq b_x$, then 
\begin{align*}
	\|x_t\|_2 &\leq \|A_*^{H_t} \|_2 \|x_{t-{H_t}}\|_2 + \|y_t\|_2 \leq \kappa (1-\gamma)^{H_t} b_x + 2\sqrt n \kappa w_{\max}/\gamma +  \kappa^3\kappa_B 2\sqrt{mn}w_{\max}/\gamma^2\\
	&\leq b_x/2+ 2\sqrt n \kappa w_{\max}/\gamma +  \kappa^3\kappa_B 2\sqrt{mn}w_{\max}/\gamma^2=b_x
\end{align*}
where the last inequality is by $ \kappa (1-\gamma)^{H_t} \leq 1/2$ when ${H_t}\geq \log(2\kappa)/\log((1-\gamma)^{-1})$. This completes the proof.

\subsection{Proof of Lemma \ref{lem: bound on Gf}}\label{subsec: Gf bound}
\begin{proof}
	We omit $\theta$ in this proof for simplicity of notations.
	
	For any $H\geq 1$, 
	define $\M_{out,H}=\{\mb M\in \R^{mnH}: \|M[k]\|_\infty \leq 4\kappa^2 \sqrt n (1-\gamma)^{k-1}\}$. Notice that $\M_H\subseteq interior(\M_{out,H})$. Therefore, for any $\mb M\in \M_H$,
	\begin{align*}
		\|	\nabla f(\mb M;\theta)\|_F&=\sup_{\Delta \mb M\not= 0, \mb M+\Delta \mb M\in \M_{out,H}}\frac{\langle \nabla f(\mb M;\theta), \Delta \mb M\rangle}{\|\Delta \mb M\|_F}\\
		&\leq \sup_{\Delta \mb M\not= 0, \mb M+\Delta \mb M\in \M_{out,H}}\frac{ f(\mb M+\Delta\mb M)-f(\mb M)}{\|\Delta \mb M\|_F}
	\end{align*}
	For $\mb M, \mb M'\in \M_{out,H}$, we bound the following.
	\begin{align*}
		\|\tilde x-\tilde x'\|_2 & \leq \sum_{k=1}^{2H}\|(  \Phi_k^x(\mb M)-  \Phi_k^x(\mb M'))w_{t-k}\|_2\\
		& \leq \sum_{k=1}^{2H}\|\sum_{i=1}^{H^{(e)}} A^{i-1}B(M[k-i]-M'[k-i])\one_{(1\leq k-i\leq H)} w_{t-k}\|_2\\
		& \leq \sum_{j=1}^{H} O(\sqrt n) \|M[j]-M'[j]\|_2\\
		& \leq  \sum_{j=1}^{H} O(\sqrt n) \|M[j]-M'[j]\|_F\\
		&\leq O(\sqrt n\sqrt{H}) \|\mb M-\mb M'\|_F\\
		\|\tilde u-\tilde u'\|_2 &\sum_{k=1}^{H}\|M[k]-M'[k]\|_2\sqrt n w_{\max}\leq O(\sqrt n\sqrt{H}) \|\mb M-\mb M'\|_F
	\end{align*}
	where the third inequality uses $\theta \in \Theta_{ini}$.
	
	Further, even though we make $\M_{out,H}$ larger, but we don't change the dimension, so by Lemma 24, $\|\tilde x\|_2\leq \sqrt{mn}$. Further, even when we don't have additional conditions on $\mb M$, we still have $\|\tilde u\|_2\leq O(\sqrt{mn})$. Therefore, for $\mb M, \mb M'\in \M_{out,H}$, 
	\begin{align*}
		|f(\mb M)-f(\mb M')|& \leq O(\sqrt{mn}\sqrt n\sqrt{H})\|\mb M-\mb M'\|_F
	\end{align*}

	Therefore,
	\begin{align*}
		\|	\nabla f(\mb M;\theta)\|_F&\leq \sup_{\Delta \mb M\not= 0, \mb M+\Delta \mb M\in \M_{out,H}}\frac{ f(\mb M+\Delta\mb M)-f(\mb M)}{\|\Delta \mb M\|_F}\\
		& \leq  \sup_{\Delta \mb M\not= 0, \mb M+\Delta \mb M\in \M_{out,H}}\frac{O(\sqrt{mn}\sqrt n\sqrt{H})\|\Delta \mb M\|_F}{\|\Delta \mb M\|_F}\leq O(n\sqrt m \sqrt H)
	\end{align*}
\end{proof}
\subsection{Proof of Lemma \ref{lem: cost diff lemma for linear constrained opt}}\label{subsec: perturb QP}
\begin{proof}
	Notice that $\Omega_1$ and $\Omega_3$ satisfies the conditions in Proposition 2 in \cite{li2020online}. Therefore,
	$$|\min_{\Omega_1}f(x)-\min_{\Omega_3}f(x)| \leq \frac{Ld_{\Omega_0} \|\Delta_1-\Delta_3\|_\infty}{\min_{\{i: (\Delta_1)_i>(\Delta_3)_i\}} (h-\Delta_1-Cx_F)_i}$$
	Notice that
	\begin{align*}
		(\Delta_3)_i=\begin{cases}
			(\Delta_1)_i, & \text{ if } (\Delta_1)_i \geq (\Delta_2)_i\\
			(\Delta_2)_i, & \text{ if } (\Delta_1)_i < (\Delta_2)_i
		\end{cases}
	\end{align*}
	therefore, $\|\Delta_1-\Delta_3\|_\infty\leq \|\Delta_1-\Delta_2\|_\infty$. Further, $\{i: (\Delta_3)_i>(\Delta_1)_i\}=\{i: (\Delta_2)_i>(\Delta_1)_i\}\subseteq \{i: (\Delta_1)_i\not=(\Delta_2)_i\}$. So $\min_{\{i: (\Delta_3)_i>(\Delta_1)_i\}} (h-\Delta_1-Cx_F)_i\geq  \min_{\{i: (\Delta_1)_i\not=(\Delta_2)_i\}} (h-\Delta_1-Cx_F)_i\geq  \min_{\{i: (\Delta_1)_i\not=(\Delta_2)_i\}} (h-\Delta_3-Cx_F)_i$. Therefore,
	$$|\min_{\Omega_1}f(x)-\min_{\Omega_3}f(x)| \leq \frac{Ld_{\Omega_0} \|\Delta_1-\Delta_3\|_\infty}{\min_{\{i: (\Delta_1)_i>(\Delta_3)_i\}} (h-\Delta_1-Cx_F)_i}\leq \frac{Ld_{\Omega_0} \|\Delta_1-\Delta_2\|_\infty}{\min_{\{i: (\Delta_1)_i\not =(\Delta_2)_i\}} (h-\Delta_3-Cx_F)_i}$$
	
	Similarly, 
	$$|\min_{\Omega_2}f(x)-\min_{\Omega_3}f(x)| \leq \frac{Ld_{\Omega_0} \|\Delta_2-\Delta_3\|_\infty}{\min_{\{i: (\Delta_2)_i>(\Delta_3)_i\}} (h-\Delta_2-Cx_F)_i}\leq \frac{Ld_{\Omega_0} \|\Delta_1-\Delta_2\|_\infty}{\min_{\{i: (\Delta_1)_i\not =(\Delta_2)_i\}} (h-\Delta_3-Cx_F)_i}$$
	which completes the bound.
\end{proof}

\subsection{Proof of Lemma \ref{lem: bound on Part ii}}\label{subsec: proof for bound on Part ii}
In this subsection, we provide a proof for our bound on Part ii by martingale concentration inequalities.
\begin{lemma}
	In our Algorithm \ref{alg: online algo}, $\mb M^{(e)}\in \F(w_0, \dots, w_{t_1^{(e)}+T_D^{(e)}-1},\eta_0, \dots, \eta_{t_1^{(e)}+T_D^{(e)}-1} )=\F_{t_1^{(e)}+T_D^{(e)}}^m\subseteq  \F_{t_2^{(e)}-H^{(e)}}$. 
\end{lemma}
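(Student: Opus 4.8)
The plan is to prove the two set-inclusions separately: that $\mb M^{(e)}$ is measurable with respect to $\F_{t_1^{(e)}+T_D^{(e)}}^m$, and that $\F_{t_1^{(e)}+T_D^{(e)}}^m\subseteq\F_{t_2^{(e)}-H^{(e)}}$. For the first, I would trace the computation of $\mb M^{(e)}$ backwards through Algorithm \ref{alg: online algo}: it is the solution of the QP \eqref{equ: CCE} returned by \texttt{RobustCE}$(\Theta^{(e+1)},H^{(e)},0,\Delta_M^{(e)})$, whose only stochastic input is $\Theta^{(e+1)}$, and $\Theta^{(e+1)}$ is produced by \texttt{ModelEst} from $\{x_k,u_k\}$ over the window $[t_1^{(e)},t_1^{(e)}+T_D^{(e)}]$. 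The least-squares regression there uses the triples $(x_k,u_k,x_{k+1})$ only for $k\le t_1^{(e)}+T_D^{(e)}-1$, so it suffices to show that $x_k$ for $k\le t_1^{(e)}+T_D^{(e)}$ and $u_k$ for $k\le t_1^{(e)}+T_D^{(e)}-1$ are all measurable with respect to $\F_{t_1^{(e)}+T_D^{(e)}}^m=\F(w_{0:t_1^{(e)}+T_D^{(e)}-1},\eta_{0:t_1^{(e)}+T_D^{(e)}-1})$; this is precisely why $\eta_{t_1^{(e)}+T_D^{(e)}}$ need not appear.

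I would establish this by an adaptedness induction on the stage $t$, proving simultaneously that $x_t\in\F_t^m$, $u_t\in\F_t$, and $\hat w_t\in\F_{t+1}^m$, under the standing claim that every policy matrix and model estimate invoked at stage $t$ is $\F_t^m$-measurable. Using $x_{t+1}=A_*x_t+B_*u_t+w_t$ with $x_0=0$, the estimate rewrites as $\hat w_t=\Pi_{\W}\big((A_*-\hat A_t)x_t+(B_*-\hat B_t)u_t+w_t\big)$, so $\hat w_t$ absorbs exactly the fresh $w_t$, while $u_t=\sum_k M_t[k]\hat w_{t-k}+\eta_t$ absorbs exactly the fresh $\eta_t$ on top of strictly past randomness. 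The chain $x_t\in\F_t^m\Rightarrow u_t\in\F_t\Rightarrow\hat w_t\in\F_{t+1}^m\Rightarrow x_{t+1}\in\F_{t+1}^m$ then closes the induction, and the resulting adaptedness gives $x_k,u_k\in\F_{t_1^{(e)}+T_D^{(e)}}^m$ in the required ranges, whence $\Theta^{(e+1)}$ and therefore $\mb M^{(e)}$ lie in $\F_{t_1^{(e)}+T_D^{(e)}}^m$.

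For the inclusion of $\sigma$-algebras I would use $\F_a^m\subseteq\F_b^m\subseteq\F_b$ for $a\le b$, reducing the claim to the arithmetic inequality $t_1^{(e)}+T_D^{(e)}\le t_2^{(e)}-H^{(e)}$. Since $t_2^{(e)}$ is the termination stage of the Phase-2 \texttt{SafeTransit} launched at $t_0=t_1^{(e)}+T_D^{(e)}$, we have $t_2^{(e)}-(t_1^{(e)}+T_D^{(e)})=\tilde W_1^{(e)}+\tilde W_2^{(e)}$; and because the target set $\Omega^{(e)}$ of that call carries memory length $H^{(e)}$, the rule $W_1=\max(\lceil\cdot\rceil,H')$ in Algorithm \ref{alg: safe transit} forces $\tilde W_1^{(e)}\ge H^{(e)}$, which yields the inequality.

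The main obstacle is ensuring that the adaptedness induction is genuinely well-founded rather than circular: the estimate $\hat\theta^{(e+1)}$ is built from the stage-$\le t_1^{(e)}+T_D^{(e)}$ data yet also re-enters $\hat w_t$ at later stages (through $\hat\theta'$ and $\hat\theta_{\min}$ inside \texttt{SafeTransit}). I would resolve this by observing that each model/policy object is consumed only at stages strictly after the data window that defines it — the window $[t_1^{(e)},t_1^{(e)}+T_D^{(e)})$ runs on the earlier $\hat\theta^{(e)}$, while $\hat\theta^{(e+1)}$ first enters $\hat w_t$ only for $t\ge t_1^{(e)}+T_D^{(e)}$ — so the stages are processed in increasing order with no feedback loop, and the standing measurability claim propagates by a nested induction over episodes.
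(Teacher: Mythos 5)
Your proposal is correct and follows essentially the same route as the paper's proof: the paper asserts directly that $\mb M^{(e)}\in\F(\hat\theta^{(e+1)})=\F(\{z_k,x_{k+1}\}_{k=t_1^{(e)}}^{t_1^{(e)}+T_D^{(e)}-1})=\F^m_{t_1^{(e)}+T_D^{(e)}}$ and then concludes via $\tilde W_1^{(e)}\geq H^{(e)}$ and $\F_t^m\subseteq\F_t$, exactly as you do. Your adaptedness induction merely spells out the measurability chain that the paper compresses into a single ``by definition'' step, so no substantive difference.
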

\begin{proof}
	By definition, we have the following fact:
	$\mb M^{(e)}\in \F(\hat \theta^{(e+1)})=\F(\{z_k, x_{k+1}\}_{k=t^{(e)}_1}^{t_1^{(e)}+T_D^{(e)}-1})=\F(w_0, \dots, w_{t_1^{(e)}+T_D^{(e)}-1},\eta_0, \dots, \eta_{t_1^{(e)}+T_D^{(e)}-1} )=\F_{t_1^{(e)}+T_D^{(e)}}^m$.
	By $\tilde W_1^{(e)}\geq H^{(e)}$, we have $t_1^{(e)}+T_D^{(e)}+H^{(e)}\leq t_2^{(e)}$, and since $\F_t^m\subseteq \F_t$, we have the last claim.
\end{proof}
\begin{lemma}
	When $t\in \T_2^{(e)}$, $w_{t-2H^{(e)}}\ind \F_{t_2^{(e)}-H^{(e)}}$
\end{lemma}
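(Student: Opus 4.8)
The plan is to prove this by a direct index-counting argument that exploits the i.i.d.\ structure of the disturbances. First I would recall the filtration fixed in the proof of Theorem~\ref{thm: general estimation error bdd}, namely $\F_j=\F(w_0,\dots,w_{j-1},\eta_0,\dots,\eta_j)$, so that $\F_{t_2^{(e)}-H^{(e)}}$ is generated by the disturbances $w_0,\dots,w_{t_2^{(e)}-H^{(e)}-1}$ together with the excitation noises $\eta_0,\dots,\eta_{t_2^{(e)}-H^{(e)}}$. In particular the largest disturbance index entering this $\sigma$-algebra is $t_2^{(e)}-H^{(e)}-1$.

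Next I would invoke the definition $\T_2^{(e)}=\{t_2^{(e)}+H^{(e)}\le t\le T^{(e+1)}-1\}$. For $t\in\T_2^{(e)}$ we have $t\ge t_2^{(e)}+H^{(e)}$, so the index of the disturbance in question satisfies
\[
t-2H^{(e)}\ \ge\ t_2^{(e)}+H^{(e)}-2H^{(e)}\ =\ t_2^{(e)}-H^{(e)}\ >\ t_2^{(e)}-H^{(e)}-1 .
\]
Hence $w_{t-2H^{(e)}}$ carries a strictly larger time index than every disturbance generating $\F_{t_2^{(e)}-H^{(e)}}$, i.e.\ it is none of $w_0,\dots,w_{t_2^{(e)}-H^{(e)}-1}$.

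Finally I would appeal to Assumption~\ref{ass: w Sigma}, under which $\{w_k\}_{k\ge 0}$ is i.i.d., together with the fact that the excitation noises $\{\eta_k\}$ are drawn independently of the disturbances by construction (each $\eta_k\sim\bar\eta\D_\eta$ independently of the remaining randomness). Consequently $w_{t-2H^{(e)}}$ is independent of the entire collection $\{w_0,\dots,w_{t_2^{(e)}-H^{(e)}-1},\,\eta_0,\dots,\eta_{t_2^{(e)}-H^{(e)}}\}$ that generates $\F_{t_2^{(e)}-H^{(e)}}$, and therefore $w_{t-2H^{(e)}}\ind\F_{t_2^{(e)}-H^{(e)}}$, which is the claim. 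There is no genuine obstacle here; the only point requiring care is the off-by-one in the filtration convention (the disturbances in $\F_j$ run only up to index $j-1$), which is precisely why the shift is $2H^{(e)}$ rather than $H^{(e)}$ and why the boundary case $t=t_2^{(e)}+H^{(e)}$ is tight.
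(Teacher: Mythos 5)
Your proof is correct and follows essentially the same route as the paper's: use $t\geq t_2^{(e)}+H^{(e)}$ to get $t-2H^{(e)}\geq t_2^{(e)}-H^{(e)}$, note that $\F_{t_2^{(e)}-H^{(e)}}$ only contains disturbances up to index $t_2^{(e)}-H^{(e)}-1$, and conclude by the i.i.d.\ assumption on $w_t$ and the independence of the excitation noises. Your version is simply more explicit about the off-by-one in the filtration convention, which the paper's one-line proof leaves implicit.
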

\begin{proof}
	When $t\in \T_2^{(e)}$, $t\geq t_2^{(e)}+H^{(e)}$, so $t-2H^{(e)}\geq t_2^{(e)}-H^{(e)}$. Since $\F_t$ contains up to $w_{t-1}$, we have $w_{t-2H^{(e)}}\ind \F_{t_2^{(e)}-H^{(e)}}$.
\end{proof}
\begin{lemma}\label{lem: cond exp of l(tilde x)=mf(M)}
	In our Algorithm \ref{alg: online algo}, when $t\in \T_2^{(e)}$, we have
	$\E[ l(\hat x_t, \hat u_t)\mid \F_{t_2^{(e)}-H^{(e)}}]=f(\mb M^{(e)};\theta_*)$.
\end{lemma}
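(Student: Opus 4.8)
The plan is to recognize that the pair $(\hat x_t,\hat u_t)$ is exactly the approximate-state/action pair driven by the \emph{true} disturbances under the frozen policy $\mb M^{(e)}$, i.e. $\hat x_t=\tilde x_t(\mb M^{(e)};\theta_*)$ and $\hat u_t=\sum_{k=1}^{H^{(e)}}M^{(e)}[k]w_{t-k}$, while $f(\mb M;\theta_*)=\E[l(\tilde x_t(\mb M,\theta_*),u_t)]$ from the preliminaries is precisely the expectation of this same cost over a fresh i.i.d. copy of the disturbance window. So the identity is an instance of the standard ``freeze the measurable factor, average the independent factor'' property of conditional expectation. Concretely, I would write $l(\hat x_t,\hat u_t)=\Psi\big(\mb M^{(e)};w_{t-1},\dots,w_{t-2H^{(e)}}\big)$ for the deterministic map $\Psi(\mb M;w_{t-1},\dots,w_{t-2H^{(e)}})=l\big(\sum_{k=1}^{2H^{(e)}}\Phi_k^x(\mb M;\theta_*)w_{t-k},\;\sum_{k=1}^{H^{(e)}}M[k]w_{t-k}\big)$, and reduce the claim to showing $\E[\Psi(\mb M^{(e)};w_{t-1},\dots,w_{t-2H^{(e)}})\mid\F_{t_2^{(e)}-H^{(e)}}]=\E_{w}[\Psi(\mb M;w_{t-1},\dots,w_{t-2H^{(e)}})]\big|_{\mb M=\mb M^{(e)}}$.

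The argument rests on two structural ingredients. First, $\mb M^{(e)}$ is $\F_{t_2^{(e)}-H^{(e)}}$-measurable: the preceding lemma shows $\mb M^{(e)}\in\F_{t_1^{(e)}+T_D^{(e)}}^m$, and since $\tilde W_1^{(e)}\ge H^{(e)}$ gives $t_1^{(e)}+T_D^{(e)}+H^{(e)}\le t_2^{(e)}$, we have $\F_{t_1^{(e)}+T_D^{(e)}}^m\subseteq\F_{t_2^{(e)}-H^{(e)}}$. Second, every disturbance appearing in $\Psi$ is independent of $\F_{t_2^{(e)}-H^{(e)}}$: by Assumption \ref{ass: w Sigma} the $w_t$ are i.i.d., and $\F_{t_2^{(e)}-H^{(e)}}$ contains only $w_0,\dots,w_{t_2^{(e)}-H^{(e)}-1}$ together with the $\eta$'s (which are independent of the $w$'s); for $t\in\T_2^{(e)}$ we have $t\ge t_2^{(e)}+H^{(e)}$, so the oldest index in the window satisfies $t-2H^{(e)}\ge t_2^{(e)}-H^{(e)}$, whence $w_{t-1},\dots,w_{t-2H^{(e)}}$ all carry indices $\ge t_2^{(e)}-H^{(e)}$ and are jointly independent of $\F_{t_2^{(e)}-H^{(e)}}$. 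This is the window version of the preceding single-index independence fact $w_{t-2H^{(e)}}\ind\F_{t_2^{(e)}-H^{(e)}}$; the only thing to add is that the larger indices are handled identically and that the whole collection is \emph{jointly} independent of the $\sigma$-algebra, which follows from the product structure of the i.i.d. noise.

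With these two facts, I invoke the disintegration (freezing) lemma for conditional expectation: for a $\mathcal G$-measurable $X$ (here $\mb M^{(e)}$) and a random vector $Y$ (here the window) independent of $\mathcal G=\F_{t_2^{(e)}-H^{(e)}}$, one has $\E[\Psi(X,Y)\mid\mathcal G]=\E_Y[\Psi(x,Y)]\big|_{x=X}$. Evaluating at $\mb M=\mb M^{(e)}$ yields $\E[l(\hat x_t,\hat u_t)\mid\F_{t_2^{(e)}-H^{(e)}}]=\E_w[\Psi(\mb M^{(e)};w_{t-1},\dots,w_{t-2H^{(e)}})]$. Finally I match this with $f(\mb M^{(e)};\theta_*)$: since the $w_k$ are i.i.d. and $\Psi$ depends on them only through a fixed-shape window of length $2H^{(e)}$, the expectation is time-invariant and coincides with $\E[l(\tilde x_t(\mb M^{(e)},\theta_*),u_t)]=f(\mb M^{(e)};\theta_*)$ by the definition of $f$. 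The only mildly delicate point — the step I would state most carefully — is the measurability/independence bookkeeping, namely confirming that the \emph{full} memory reach $2H^{(e)}$ of $\hat x_t$ (not just $H^{(e)}$) does not extend back into $\F_{t_2^{(e)}-H^{(e)}}$; this is exactly why $\T_2^{(e)}$ is defined to start at $t_2^{(e)}+H^{(e)}$ rather than $t_2^{(e)}$, and everything else is a routine application of the freezing lemma.
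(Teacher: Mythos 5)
Your proposal is correct and follows essentially the same route as the paper's own (very terse) proof: measurability of $\mb M^{(e)}$ in $\F_{t_2^{(e)}-H^{(e)}}$, independence of the disturbance window from that $\sigma$-algebra, and the freezing property of conditional expectation, matched against the definition of $f$. You merely spell out the details the paper leaves implicit, including the worthwhile observation that the \emph{joint} independence of the full $2H^{(e)}$-length window (not just the single index $w_{t-2H^{(e)}}$ stated in the preceding lemma) is what the argument actually needs, and that this is guaranteed by $\T_2^{(e)}$ starting at $t_2^{(e)}+H^{(e)}$.
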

\begin{proof}
	By our lemmas above, $\mb M^{(e)}\in \F_{t_2^{(e)}-H^{(e)}}$, but $w_{t-2H^{(e)}}\ind \F_{t_2^{(e)}-H^{(e)}}$. Then, by our definition of $\hat x_t, \hat u_t$ and $f(\mb M;\theta_*)$, we have the result.
\end{proof}

\begin{definition}[Martingale]\label{def: martingale}
	$\{X_t\}_{t\geq 0}$ is a martingale wrt $\{\F_t\}_{t\geq 0}$ if 
	(i) $\E|X_t|<+\infty$, (ii) $X_t\in \F_t$, (iii) $\E(X_{t+1}\mid \F_t)=X_t$ for $t\geq 0$.
	
\end{definition}
\begin{proposition}[Azuma-Hoeffding Inequality]\label{prop: azuma}
	$\{X_t\}_{t\geq 0}$ is a martingale with respect to $\{\F_t\}_{t\geq 0}$. If (i) $X_0=0$, (ii) $|X_t-X_{t-1}|\leq \sigma$ for any $t\geq 1$, then,
	for any $\alpha>0$, 
	any $t\geq 0$,
	$$ \Pb(|X_t|\geq \alpha)\leq 2\exp\left( -\alpha^2/(2t\sigma^2)\right)$$
\end{proposition}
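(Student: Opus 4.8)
The plan is to use the standard exponential-moment (Chernoff) method applied to the martingale difference sequence. First I would define the increments $D_i = X_i - X_{i-1}$ for $i \geq 1$. Since $X_0 = 0$, we have $X_t = \sum_{i=1}^t D_i$, and the martingale property $\E[X_i \mid \F_{i-1}] = X_{i-1}$ together with $X_{i-1} \in \F_{i-1}$ yields the two facts I will rely on: $\E[D_i \mid \F_{i-1}] = 0$, and by hypothesis (ii), $|D_i| \leq \sigma$ almost surely. For any $\lambda > 0$, Markov's inequality applied to $e^{\lambda X_t}$ gives $\Pb(X_t \geq \alpha) \leq e^{-\lambda \alpha}\,\E[e^{\lambda X_t}]$, so the task reduces to controlling the moment generating function $\E[e^{\lambda X_t}]$.

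The key step, which I expect to be the main obstacle, is the conditional form of Hoeffding's lemma: for a mean-zero random variable $Y$ with $|Y| \leq \sigma$, one has $\E[e^{\lambda Y}] \leq e^{\lambda^2 \sigma^2 / 2}$. I would prove this by convexity: since $s \mapsto e^{\lambda s}$ is convex on $[-\sigma, \sigma]$, for each $y$ in this interval $e^{\lambda y} \leq \frac{\sigma + y}{2\sigma}\,e^{\lambda \sigma} + \frac{\sigma - y}{2\sigma}\,e^{-\lambda \sigma}$. Taking expectations and using $\E[Y] = 0$ leaves $\E[e^{\lambda Y}] \leq \tfrac{1}{2}(e^{\lambda \sigma} + e^{-\lambda \sigma}) = \cosh(\lambda \sigma)$, and the elementary bound $\cosh(x) \leq e^{x^2/2}$ finishes the lemma. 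Applying this conditionally to $Y = D_i$ given $\F_{i-1}$ (which is legitimate because $D_i$ is conditionally mean-zero and bounded) gives $\E[e^{\lambda D_i} \mid \F_{i-1}] \leq e^{\lambda^2 \sigma^2 / 2}$ almost surely.

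With the lemma in hand, I would iterate the tower property: writing $\E[e^{\lambda X_t}] = \E\big[e^{\lambda X_{t-1}}\,\E[e^{\lambda D_t} \mid \F_{t-1}]\big]$ and using that $X_{t-1} \in \F_{t-1}$, the conditional bound factors out to give $\E[e^{\lambda X_t}] \leq e^{\lambda^2 \sigma^2/2}\,\E[e^{\lambda X_{t-1}}]$. A straightforward induction then yields $\E[e^{\lambda X_t}] \leq e^{t \lambda^2 \sigma^2 / 2}$. Combining with the Chernoff bound gives $\Pb(X_t \geq \alpha) \leq \exp(-\lambda \alpha + t \lambda^2 \sigma^2 / 2)$ for every $\lambda > 0$; optimizing over $\lambda$ by choosing $\lambda = \alpha/(t\sigma^2)$ produces $\Pb(X_t \geq \alpha) \leq \exp(-\alpha^2/(2t\sigma^2))$. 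Finally, since $\{-X_t\}$ is also a martingale with increments bounded by $\sigma$, the identical argument bounds $\Pb(X_t \leq -\alpha)$, and a union bound over the two tails $\{X_t \geq \alpha\}$ and $\{X_t \leq -\alpha\}$ gives the stated factor of $2$.
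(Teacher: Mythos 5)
Your proof is correct and complete: the conditional Hoeffding lemma via convexity, the tower-property induction on the moment generating function, the Chernoff optimization $\lambda=\alpha/(t\sigma^2)$, and the union bound over the two tails are exactly the canonical argument for Azuma--Hoeffding. The paper itself states Proposition \ref{prop: azuma} as a standard result without proof, so there is no in-paper argument to compare against; your write-up is the textbook proof, with the only negligible edge case being $t=0$, where the claim holds trivially since $X_0=0$ and $\alpha>0$ (your choice of $\lambda$ is only needed for $t\geq 1$).
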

\begin{corollary}\label{cor: azuma whp bdd}
	$\{X_t\}_{t\geq 0}$ is a martingale wrt $\{\F_t\}_{t\geq 0}$. If (i) $X_0=0$, (ii) $|X_t-X_{t-1}|\leq \sigma$ for any $t\geq 1$, then,	for any $\delta\in (0,1)$, 
	$$ |X_t|\leq \sqrt{2t}\sigma \sqrt{\log(2/\delta)}$$
	w.p. at least $1-\delta$.
	
\end{corollary}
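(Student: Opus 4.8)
The plan is to obtain this high-probability bound as a direct consequence of the Azuma--Hoeffding tail inequality stated in Proposition \ref{prop: azuma}, by inverting the exponential tail at the confidence level $\delta$. Since the hypotheses of the corollary---that $\{X_t\}_{t\geq 0}$ is a martingale with respect to $\{\F_t\}_{t\geq 0}$, that $X_0=0$, and that $|X_t-X_{t-1}|\leq \sigma$ for all $t\geq 1$---are verbatim the hypotheses of Proposition \ref{prop: azuma}, no further structural work is needed; the only step is a choice of the free parameter $\alpha$.

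First I would fix an arbitrary $t\geq 0$ and the target confidence $\delta\in(0,1)$, and set
$$\alpha=\sqrt{2t}\,\sigma\sqrt{\log(2/\delta)}.$$
Then I would substitute this value of $\alpha$ into the tail bound of Proposition \ref{prop: azuma}. Computing the exponent gives $-\alpha^2/(2t\sigma^2)=-\log(2/\delta)$, so that
$$\Pb\bigl(|X_t|\geq \alpha\bigr)\leq 2\exp\bigl(-\log(2/\delta)\bigr)=2\cdot\frac{\delta}{2}=\delta.$$
Taking complements yields $\Pb(|X_t|<\alpha)\geq 1-\delta$, which is exactly the claimed statement that $|X_t|\leq \sqrt{2t}\,\sigma\sqrt{\log(2/\delta)}$ holds with probability at least $1-\delta$.

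There is essentially no obstacle here: the corollary is the standard reformulation of a concentration inequality from its tail-probability form into its high-probability (confidence-interval) form, and the only point requiring a moment of care is the algebra inverting the exponential, namely checking that $\exp(-\alpha^2/(2t\sigma^2))=\delta/2$ for the chosen $\alpha$. One minor edge case worth noting in passing is $t=0$, where $X_0=0$ and the bound $|X_0|\leq 0$ holds trivially (and with probability one); for $t\geq 1$ the argument above applies as written.
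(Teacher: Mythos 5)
Your proposal is correct and matches the paper's own proof exactly: both set $\alpha=\sqrt{2t\sigma^2\log(2/\delta)}$ in Proposition \ref{prop: azuma} and invert the exponential tail. Your write-up just spells out the algebra (and the trivial $t=0$ case) that the paper leaves implicit.
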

\begin{proof}
	The proof is by letting $\alpha= \sqrt{2t \sigma^2 \log(2/\delta)}$ in Proposition \ref{prop: azuma}.
\end{proof}

\begin{lemma}\label{lem: bdd on qt}
	Define $q_t=l(\hat x_t, \hat u_t)-f(\mb M^{(e)};\theta_*)$. Then, $|q_t|\leq O(mn)$ w.p.1.
\end{lemma}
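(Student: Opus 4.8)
The plan is to observe that $q_t$ is a \emph{centered} stage cost, and thus to inherit the bound from an almost-sure bound on the stage cost itself. Recall from the definition of $f$ and from Lemma~\ref{lem: cond exp of l(tilde x)=mf(M)} that $\hat x_t=\tilde x_t(\mb M^{(e)};\theta_*)$ and $\hat u_t$ are exactly the approximate state and DAP action driven by the \emph{true} disturbances $\{w_{t-k}\}$, so that $f(\mb M^{(e)};\theta_*)=\E[l(\hat x_t,\hat u_t)\mid \F_{t_2^{(e)}-H^{(e)}}]$. Hence $q_t=l(\hat x_t,\hat u_t)-\E[l(\hat x_t,\hat u_t)\mid \F_{t_2^{(e)}-H^{(e)}}]$. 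Since $l\geq 0$ (as $Q,R$ are positive definite), it suffices to produce an almost-sure bound $l(\hat x_t,\hat u_t)\leq O(mn)$: the conditional expectation of a nonnegative quantity bounded by $O(mn)$ also lies in $[0,O(mn)]$, so $|q_t|\leq O(mn)$ follows from the triangle inequality.

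It remains to bound the stage cost almost surely. Writing $l(\hat x_t,\hat u_t)=\hat x_t^\top Q\hat x_t+\hat u_t^\top R\hat u_t\leq \|Q\|_2\|\hat x_t\|_2^2+\|R\|_2\|\hat u_t\|_2^2$, the task reduces to the almost-sure norm bounds $\|\hat x_t\|_2=O(\sqrt{mn})$ and $\|\hat u_t\|_2=O(\sqrt{mn})$; squaring then yields $l(\hat x_t,\hat u_t)=O(mn)$ since $\|Q\|_2,\|R\|_2=O(1)$.

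For the norm bounds I would reuse exactly the triangle-inequality-plus-geometric-series estimate already carried out for the term $y_t$ in the proof of Lemma~\ref{lem: w.p.1 bdd on xt}. Indeed, $\hat x_t=\sum_{k=1}^{2H^{(e)}}\Phi_k^x(\mb M^{(e)};\theta_*)w_{t-k}$ is precisely the ``policy/approximation'' part appearing there, with the truncation term $A_*^{H}x_{t-H}$ and the excitation terms removed and $\hat w$ replaced by the genuine $w$; so the computation applies verbatim and is in fact strictly easier. Using $\|A_*^j\|_2\leq \kappa(1-\gamma)^j$, $\|B_*\|_2\leq\kappa_B$, $\mb M^{(e)}\in\M_{H^{(e)}}$ (so $\|M^{(e)}[k]\|_\infty\leq 2\sqrt n\kappa^2(1-\gamma)^{k-1}$), and the elementary conversion $\|M^{(e)}[k]w_{t-k}\|_2\leq \sqrt m\,\|M^{(e)}[k]\|_\infty w_{\max}$, the two geometric sums collapse to $\|\hat x_t\|_2=O(\sqrt{mn})$. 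An analogous and shorter summation gives $\|\hat u_t\|_2=\|\sum_{k=1}^{H^{(e)}}M^{(e)}[k]w_{t-k}\|_2\leq \sum_{k=1}^{H^{(e)}}\sqrt m\,\|M^{(e)}[k]\|_\infty w_{\max}=O(\sqrt{mn})$.

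There is no genuine conceptual obstacle here; the only point requiring care is the bookkeeping of the matrix-norm conversions (the $\|\cdot\|_\infty$-to-$\|\cdot\|_2$ passage on vectors together with the row-sum bounds on $M^{(e)}[k]$) so that the dimension dependence collapses to exactly $O(\sqrt{mn})$ in the norms, hence $O(mn)$ in the quadratic cost. As this is identical to the estimate in the proof of Lemma~\ref{lem: w.p.1 bdd on xt}, I would simply cite that computation rather than repeat it.
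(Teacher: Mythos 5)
Your proposal is correct and follows essentially the same route as the paper: bound $\|\hat x_t\|_2=O(\sqrt{mn})$ and $\|\hat u_t\|_2$ almost surely by reusing the geometric-series estimates from the proofs of Lemmas \ref{lem: ut satisfies constraints} and \ref{lem: w.p.1 bdd on xt}, conclude $l(\hat x_t,\hat u_t)=O(mn)$, and then bound $f(\mb M^{(e)};\theta_*)$ as the conditional expectation of that quantity. The only cosmetic difference is that the paper invokes $\hat u_t\in\U$ (hence $\|\hat u_t\|_2\leq u_{\max}=O(1)$) where you use the looser direct bound $O(\sqrt{mn})$, which still yields the same $O(mn)$ conclusion.
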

\begin{proof}
	We can show that $\|\hat x_t\|_2\leq O(\sqrt{mn})$ a.s. and $\hat u_t \in \U$ a.s. by the proofs of  Lemmas \ref{lem: ut satisfies constraints} and \ref{lem: w.p.1 bdd on xt}. Therefore,  we have $|l(\hat x_t, \hat u_t)|=O({mn})$. Since $f(\mb M^{(e)};\theta_*)=\E [l(\hat x_t, \hat u_t)\mid \F_{t_2^{(e)}-H^{(e)}}]$, we have $|f(\mb M^{(e)};\theta_*)|=O(mn)$. This completes the proof.
\end{proof}

\paragraph{Notations and definitions.} Define, for $0\leq h \leq 2H^{(e)}-1$, that
\begin{align}
	\T_{2,h}^{(e)}&=\{t\in \T_2^{(e)}: t \equiv h \mod (2H^{(e)})\}\eqqcolon\{t_h^{(e)}+2H^{(e)}, \dots,t_h^{(e)}+2H^{(e)}k_h^{(e)} \}
\end{align}
\begin{lemma}\label{lem: bdd on t_h^e, and k_h(e)}
	$t_h^{(e)}\geq t_2^{(e)}-H^{(e)}$  and $k_h^{(e)}\leq T^{(e+1)}/(2H^{(e)})$
\end{lemma}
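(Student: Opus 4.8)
The plan is to read both inequalities directly off the arithmetic-progression description of $\T_{2,h}^{(e)}$, so the whole argument is elementary bookkeeping. Recall that $\T_{2,h}^{(e)}$ is the set of indices $t\in\T_2^{(e)}$ with $t\equiv h \pmod{2H^{(e)}}$, and that by construction it is written as the progression $\{t_h^{(e)}+2H^{(e)},\dots,t_h^{(e)}+2H^{(e)}k_h^{(e)}\}$ with common difference $2H^{(e)}$ and $k_h^{(e)}$ terms. Hence its smallest element is $t_h^{(e)}+2H^{(e)}$ and its largest element is $t_h^{(e)}+2H^{(e)}k_h^{(e)}$; the anchor $t_h^{(e)}$ itself sits one step $2H^{(e)}$ below the first index of the class.

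First I would prove $t_h^{(e)}\ge t_2^{(e)}-H^{(e)}$. Since the smallest element $t_h^{(e)}+2H^{(e)}$ belongs to $\T_2^{(e)}=\{t:\ t_2^{(e)}+H^{(e)}\le t\le T^{(e+1)}-1\}$, it must satisfy $t_h^{(e)}+2H^{(e)}\ge t_2^{(e)}+H^{(e)}$, and subtracting $2H^{(e)}$ from both sides yields exactly $t_h^{(e)}\ge t_2^{(e)}-H^{(e)}$. (If the residue class is empty, i.e.\ $k_h^{(e)}=0$, both claims hold trivially, so I would treat the nonempty case.)

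For the second bound I would use the largest element instead: $t_h^{(e)}+2H^{(e)}k_h^{(e)}\in\T_2^{(e)}$ forces $t_h^{(e)}+2H^{(e)}k_h^{(e)}\le T^{(e+1)}-1< T^{(e+1)}$. Combining this with the lower bound just proved, and noting $t_h^{(e)}\ge t_2^{(e)}-H^{(e)}\ge 0$—the last inequality using that the Phase-2 termination stage obeys $t_2^{(e)}\ge t_1^{(e)}+T_D^{(e)}\ge H^{(e)}$, since $T_D^{(e)}$ dwarfs the logarithmic memory length $H^{(e)}$—I get $2H^{(e)}k_h^{(e)}\le t_h^{(e)}+2H^{(e)}k_h^{(e)}< T^{(e+1)}$, hence $k_h^{(e)}\le T^{(e+1)}/(2H^{(e)})$.

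There is no genuine obstacle here; the only points requiring a word of care are confirming the indexing convention (that $t_h^{(e)}+2H^{(e)}$, rather than $t_h^{(e)}$, is the first element of $\T_{2,h}^{(e)}$) and the mild side fact $t_2^{(e)}\ge H^{(e)}$ needed to guarantee $t_h^{(e)}\ge 0$. Both follow immediately from the definitions in Algorithm \ref{alg: online algo} and Algorithm \ref{alg: safe transit} together with the parameter choices $H^{(e)}=O(\log T^{(e+1)})$ and $T_D^{(e)}=(T^{(e+1)}-T^{(e)})^{2/3}$ from Theorem \ref{thm: regret bound}.
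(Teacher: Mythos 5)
Your proof is correct and follows essentially the same route as the paper: read the first element $t_h^{(e)}+2H^{(e)}$ against the lower endpoint $t_2^{(e)}+H^{(e)}$ of $\T_2^{(e)}$, and the last element $t_h^{(e)}+2H^{(e)}k_h^{(e)}$ against the upper endpoint $T^{(e+1)}$. Your explicit check that $t_h^{(e)}\geq 0$ (needed to drop $t_h^{(e)}$ from $t_h^{(e)}+2H^{(e)}k_h^{(e)}$) is a small point the paper leaves implicit, but otherwise the arguments coincide.
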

\begin{proof}
Notice that	$t_h^{(e)}+2H^{(e)}\geq t_2^{(e)}+H^{(e)}$, so the first inequality holds.
	Besides, notice that	$2H^{(e)}k_h^{(e)}\leq t_h^{(e)}+2H^{(e)}k_h^{(e)}\leq T^{(e+1)}$, so the second inequality holds.
\end{proof}

Define
\begin{align}
	&	\tilde q^{(e)}_{h,j}=q_{t_h^{(e)}+j (2H^{(e)})}\quad \forall 1\leq j \leq k_h^{(e)}\\
&	S^{(e)}_{h,j}=
	\sum_{s=1}^j  	\tilde q^{(e)}_{h,s} \quad \forall 0\leq j \leq k_h^{(e)},\\
	& \F^{(e)}_{h,j}=
	\F_{t_h^{(e)}+j (2H^{(e)})} \quad \forall 0\leq j \leq k_h^{(e)},
\end{align}
where we define $\sum_{s=1}^0a_s=0$.
By Lemma \ref{lem: bdd on t_h^e, and k_h(e)}, we have $	\F^{(e)}_{h,0}=\F_{t_h^{(e)}}\supseteq \F_{t_2^{(e)}-H^{(e)}}$.

\begin{lemma}\label{lem: Se hj is a martingale}
	$S^{(e)}_{h,j}$ is a martingale wrt $\F^{(e)}_{h,j}$ for $j\geq 0$. Further, $S^{(e)}_{k,0}=0$, $|S^{(e)}_{h,j+1}-S^{(e)}_{h,j}|\leq O(mn)$.
\end{lemma}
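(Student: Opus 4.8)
The plan is to verify the three defining properties of a martingale from Definition \ref{def: martingale} for the sequence $\{S^{(e)}_{h,j}\}_{0\le j\le k_h^{(e)}}$ relative to the filtration $\{\F^{(e)}_{h,j}\}_{0\le j\le k_h^{(e)}}$, and then to read off the two quantitative claims. The vanishing $S^{(e)}_{h,0}=0$ is immediate, being an empty sum $\sum_{s=1}^{0}\tilde q^{(e)}_{h,s}$. The increment bound is equally direct: since $S^{(e)}_{h,j+1}-S^{(e)}_{h,j}=\tilde q^{(e)}_{h,j+1}=q_{t_h^{(e)}+(j+1)(2H^{(e)})}$, Lemma \ref{lem: bdd on qt} gives $|S^{(e)}_{h,j+1}-S^{(e)}_{h,j}|\le O(mn)$ almost surely, which also supplies integrability (property (i)), each $S^{(e)}_{h,j}$ being a finite sum of terms bounded by $O(mn)$.

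The first substantive step is measurability (property (ii)): I would show $S^{(e)}_{h,j}\in \F^{(e)}_{h,j}=\F_{t_h^{(e)}+j(2H^{(e)})}$ using two facts. First, $\mb M^{(e)}\in \F_{t_2^{(e)}-H^{(e)}}\subseteq \F^{(e)}_{h,0}\subseteq \F^{(e)}_{h,j}$, so the policy entering $\hat x_t,\hat u_t$ is measurable. Second, for each $s\le j$ the term $\tilde q^{(e)}_{h,s}$ depends on the disturbances $w_{t-1},\dots,w_{t-2H^{(e)}}$ with $t=t_h^{(e)}+s(2H^{(e)})$; the largest index appearing is $t-1\le t_h^{(e)}+j(2H^{(e)})-1$, and since $\F_\tau$ records $w_0,\dots,w_{\tau-1}$, all of these are $\F^{(e)}_{h,j}$-measurable. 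Combining the two, each summand, and hence $S^{(e)}_{h,j}$, is $\F^{(e)}_{h,j}$-measurable.

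The main step is the martingale identity (iii). Since $S^{(e)}_{h,j}\in \F^{(e)}_{h,j}$, it suffices to prove $\E(\tilde q^{(e)}_{h,j+1}\mid \F^{(e)}_{h,j})=0$, i.e. $\E(l(\hat x_t,\hat u_t)\mid \F^{(e)}_{h,j})=f(\mb M^{(e)};\theta_*)$ for $t=t_h^{(e)}+(j+1)(2H^{(e)})$. This is exactly where the subsampling gap $2H^{(e)}$ is used: the disturbances defining $\hat x_t,\hat u_t$ are $w_{t-1},\dots,w_{t-2H^{(e)}}$, whose lowest index is $t-2H^{(e)}=t_h^{(e)}+j(2H^{(e)})$, so every disturbance in this window is independent of $\F^{(e)}_{h,j}=\F_{t_h^{(e)}+j(2H^{(e)})}$, which only records $w$'s of index strictly below $t_h^{(e)}+j(2H^{(e)})$. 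Simultaneously $\mb M^{(e)}$ is $\F^{(e)}_{h,j}$-measurable, so conditioning freezes the policy while leaving the relevant window of disturbances i.i.d. and independent; taking the expectation over these fresh disturbances reproduces the definition of $f(\mb M^{(e)};\theta_*)$, exactly as in Lemma \ref{lem: cond exp of l(tilde x)=mf(M)}. This gives $\E(\tilde q^{(e)}_{h,j+1}\mid \F^{(e)}_{h,j})=0$ and hence $\E(S^{(e)}_{h,j+1}\mid \F^{(e)}_{h,j})=S^{(e)}_{h,j}$.

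I expect the main obstacle to be not any hard estimate but the index bookkeeping that confirms the $2H^{(e)}$-separated subsampling both decouples the disturbance window of $\hat x_t$ from the conditioning $\sigma$-algebra and keeps $\mb M^{(e)}$ measurable; once those inclusions and independence relations are pinned down, all three martingale properties and the two quantitative claims follow mechanically from Lemmas \ref{lem: bdd on qt} and \ref{lem: cond exp of l(tilde x)=mf(M)}.
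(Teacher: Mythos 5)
Your proposal is correct and follows essentially the same route as the paper: it verifies the three properties of Definition \ref{def: martingale}, using the almost-sure bound $|q_t|\leq O(mn)$ from Lemma \ref{lem: bdd on qt} for integrability and the increment bound, the filtration structure (with $\mb M^{(e)}\in \F_{t_2^{(e)}-H^{(e)}}\subseteq \F^{(e)}_{h,0}$) for adaptedness, and the $2H^{(e)}$-separated subsampling together with the argument of Lemma \ref{lem: cond exp of l(tilde x)=mf(M)} for $\E[\tilde q^{(e)}_{h,j+1}\mid \F^{(e)}_{h,j}]=0$. Your index bookkeeping is in fact somewhat more explicit than the paper's, which simply asserts the conditional-expectation step; no gaps.
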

\begin{proof}
	Since $|q_t|\leq O(mn)$, $\E|S^{(e)}_{h,j}|\leq O(Tmn)<+\infty$. Notice that, for $t\in \T_2^{(e)}$, $w_{t-1}, \dots, w_{t-2H^{(e)}}\in \F_t$. and $\mb M^{(e)}\in \F_t$, so $q_t \in \F_t$, so $S^{(e)}_{h,j}\in \F^{(e)}_{h,j}$. Next, $\E[S^{(e)}_{h,j+1}\mid \F^{(e)}_{h,j}]=S^{(e)}_{h,j}+\E[q^{(e)}_{h,j+1}\mid \F^{(e)}_{h,j}]=S^{(e)}_{h,j}$. So this is done. The rest is by definition, and $q_t$'s bound.
\end{proof}
\begin{lemma}
	Consider our choice of $H^{(e)}$ in Theorem \ref{thm: constraint satisfaction}.
	Let $\delta=\frac{p}{2\sum_{e=0}^{N-1} H^{(e)}}$, w.p. $1-\delta$, we have
	$|S^{(e)}_{h,k_h^{(e)}}|\leq \tilde O\left( \sqrt{k_h^{(e)}} mn\right)$.
\end{lemma}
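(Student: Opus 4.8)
The plan is to apply the Azuma--Hoeffding inequality (Corollary~\ref{cor: azuma whp bdd}) to the martingale $S^{(e)}_{h,j}$ established in Lemma~\ref{lem: Se hj is a martingale}, and then to take a union bound over all relevant $(e,h)$ pairs. First I would recall from Lemma~\ref{lem: Se hj is a martingale} that for each fixed episode $e$ and residue $h$, the process $\{S^{(e)}_{h,j}\}_{j\geq 0}$ is a martingale with respect to the filtration $\{\F^{(e)}_{h,j}\}_{j\geq 0}$, that $S^{(e)}_{h,0}=0$, and that the martingale differences satisfy the almost-sure bound $|S^{(e)}_{h,j+1}-S^{(e)}_{h,j}|\leq O(mn)$ (which comes from the bound $|q_t|\leq O(mn)$ in Lemma~\ref{lem: bdd on qt}). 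These are exactly the three hypotheses required by Corollary~\ref{cor: azuma whp bdd} with $\sigma=O(mn)$.

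Next I would instantiate Corollary~\ref{cor: azuma whp bdd} at the terminal index $j=k_h^{(e)}$. This gives, for any $\delta'\in(0,1)$, the bound
\begin{align*}
	|S^{(e)}_{h,k_h^{(e)}}|\leq \sqrt{2k_h^{(e)}}\cdot O(mn)\cdot\sqrt{\log(2/\delta')}
\end{align*}
with probability at least $1-\delta'$. The single-event conclusion of the lemma then follows by setting $\delta'=\delta=\frac{p}{2\sum_{e=0}^{N-1}H^{(e)}}$ and absorbing the $\sqrt{\log(2/\delta)}$ factor into the $\tilde O(\cdot)$ notation, since $\log(2/\delta)$ is logarithmic in $T$ and $p^{-1}$ under our choice $H^{(e)}=O(\log T^{(e+1)})$ and $N=O(\log T)$ (the latter from Lemma~\ref{lem: bound Te power a by T power a}). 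This yields $|S^{(e)}_{h,k_h^{(e)}}|\leq \tilde O\!\left(\sqrt{k_h^{(e)}}\,mn\right)$ as claimed.

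The only genuinely delicate point is the choice of $\delta$ and the implicit union bound it anticipates. The statement fixes $\delta=\frac{p}{2\sum_{e=0}^{N-1}H^{(e)}}$ precisely because the eventual goal (in the proof of Lemma~\ref{lem: bound on Part ii}) is to control $|S^{(e)}_{h,k_h^{(e)}}|$ simultaneously over all episodes $e\in\{0,\dots,N-1\}$ and all residues $h\in\{0,\dots,2H^{(e)}-1\}$; there are $\sum_{e=0}^{N-1} 2H^{(e)}$ such sub-martingales, so a union bound with per-event failure probability $\delta$ produces total failure probability at most $\sum_{e=0}^{N-1} 2H^{(e)}\cdot \delta = p$. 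The present lemma, however, only asserts the bound for a single $(e,h)$ pair at level $\delta$, so strictly speaking no union bound is needed here and the main obstacle is simply verifying that the martingale-difference bound $O(mn)$ is genuinely almost sure (not merely in expectation); this rests on Lemma~\ref{lem: bdd on qt}, whose proof in turn requires the almost-sure state bound $\|\hat x_t\|_2\leq O(\sqrt{mn})$ from Lemma~\ref{lem: w.p.1 bdd on xt} together with $\hat u_t\in\U$ almost surely. I would therefore state those two facts explicitly as the inputs that make the difference bound almost sure, then invoke Corollary~\ref{cor: azuma whp bdd} to close the argument.
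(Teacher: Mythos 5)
Your proposal is correct and follows exactly the paper's route: invoke Lemma~\ref{lem: Se hj is a martingale} to verify the hypotheses of Corollary~\ref{cor: azuma whp bdd} with $\sigma=O(mn)$, instantiate at $j=k_h^{(e)}$, and absorb $\sqrt{\log(2/\delta)}=\tilde O(1)$ into the $\tilde O(\cdot)$. The extra remarks on the choice of $\delta$ anticipating the later union bound and on the almost-sure nature of the difference bound are accurate but not needed for this lemma itself.
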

\begin{proof}
	By Lemma \ref{lem: Se hj is a martingale}, we can apply Corollary \ref{cor: azuma whp bdd}, and obtain the bound, where we used $\log(2/\delta)=\tilde O(1)$.
\end{proof}
\begin{lemma}\label{lem: part ii bound first step}
	Consider our choice of $H^{(e)}$ in Theorem \ref{thm: constraint satisfaction}.
	For any $e$, w.p. $1-2H^{(e)}\delta$, where $\delta=\frac{p}{2\sum_{e=0}^{N-1} H^{(e)}}$, 
	$$|\sum_{h=0}^{2H^{(e)}-1}S^{(e)}_{h,k_h^{(e)}}|\leq \tilde O\left( \sqrt{T^{(e+1)}} mn\right)$$
\end{lemma}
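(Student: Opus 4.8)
The plan is to obtain this episode-level bound from the per-class concentration estimate of the preceding lemma by a union bound across the $2H^{(e)}$ residue classes, followed by a triangle inequality and a Cauchy--Schwarz aggregation. The intuition is that, although the full sum $\sum_{t\in\T_2^{(e)}}q_t$ is not directly a martingale (consecutive $q_t$ are correlated because they share overlapping disturbance windows of length $2H^{(e)}$), splitting the stages into residue classes modulo $2H^{(e)}$ removes this dependence within each class, so each $S^{(e)}_{h,k_h^{(e)}}$ is a bounded-increment martingale to which Azuma--Hoeffding applies.

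First I would take a union bound over $h$. The preceding lemma shows that for each fixed $h\in\{0,\dots,2H^{(e)}-1\}$ the event $|S^{(e)}_{h,k_h^{(e)}}|\leq \tilde O(\sqrt{k_h^{(e)}}\,mn)$ holds with probability at least $1-\delta$. Since there are exactly $2H^{(e)}$ such classes, the union bound guarantees that with probability at least $1-2H^{(e)}\delta$ all of these bounds hold simultaneously. Conditioning on this event and using the triangle inequality,
\[
\Big|\sum_{h=0}^{2H^{(e)}-1}S^{(e)}_{h,k_h^{(e)}}\Big|\leq \sum_{h=0}^{2H^{(e)}-1}\big|S^{(e)}_{h,k_h^{(e)}}\big|\leq \tilde O(mn)\sum_{h=0}^{2H^{(e)}-1}\sqrt{k_h^{(e)}}.
\]

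Next I would bound $\sum_h\sqrt{k_h^{(e)}}$ by Cauchy--Schwarz, writing $\sum_{h=0}^{2H^{(e)}-1}\sqrt{k_h^{(e)}}\leq \sqrt{2H^{(e)}}\,\big(\sum_{h=0}^{2H^{(e)}-1}k_h^{(e)}\big)^{1/2}$. The residue classes $\T^{(e)}_{2,h}$ partition $\T^{(e)}_2$ and each contains exactly $k_h^{(e)}$ stages, so $\sum_h k_h^{(e)}=|\T^{(e)}_2|\leq T^{(e+1)}$. This yields $\sum_h\sqrt{k_h^{(e)}}\leq \sqrt{2H^{(e)}T^{(e+1)}}$ and hence $|\sum_h S^{(e)}_{h,k_h^{(e)}}|\leq \tilde O(mn\sqrt{H^{(e)}T^{(e+1)}})$. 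Finally, under the choice of $H^{(e)}$ specified in the theorem, $H^{(e)}$ grows only logarithmically in $T^{(e+1)}$ (up to problem-dependent constants), so $\sqrt{H^{(e)}}$ is a $\tilde O(1)$ factor that can be absorbed into the $\tilde O$ notation, giving the claimed $\tilde O(mn\sqrt{T^{(e+1)}})$.

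The argument is largely bookkeeping once the per-class martingale bound is available; the points that need care are confirming that the $2H^{(e)}$ residue classes tile $\T^{(e)}_2$ exactly (so that $\sum_h k_h^{(e)}$ equals the total stage count without double counting) and that the logarithmic growth of $H^{(e)}$ is enough to render $\sqrt{H^{(e)}}$ absorbable by $\tilde O(\cdot)$. The resulting failure probability $2H^{(e)}\delta$ is designed to sum, over all episodes $e$, to at most $p$, which is why $\delta$ was set to $p/(2\sum_{e=0}^{N-1}H^{(e)})$; this feeds directly into the final union bound of the enclosing argument that bounds Part ii over all stages.
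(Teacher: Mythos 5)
Your proof is correct and follows essentially the same route as the paper: a union bound over the $2H^{(e)}$ residue classes, the triangle inequality, and Cauchy--Schwarz with $\sum_h k_h^{(e)}\leq T^{(e+1)}$, absorbing $\sqrt{H^{(e)}}=\tilde O(1)$ into the $\tilde O(\cdot)$. If anything your writeup is slightly more careful than the paper's, whose final display drops the square root and reads $\tilde O(mnT^{(e+1)})$ where $\tilde O(mn\sqrt{T^{(e+1)}})$ is intended.
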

\begin{proof}
	Define event
	$$\mathcal E_h^{(e)}=\{|S^{(e)}_{h,k_h^{(e)}}|\leq \tilde O\left( \sqrt{k_h^{(e)}} mn\right)\}$$
	When $\cap_h \mathcal E_h^{(e)}$ holds, 
	\begin{align*}
		|\sum_{t\in \T_2^{(e)}} q_t|=|\sum_{h=0}^{2H^{(e)}-1}S^{(e)}_{h,k_h^{(e)}}|
		& \tilde O(mn \sqrt{\sum_h k_h^{(e)}} \sqrt{2H^{(e)}} )\leq \tilde O(mn T^{(e+1)})
	\end{align*}
	where we used Lemma \ref{lem: bdd on t_h^e, and k_h(e)} and Cauchy Schwartz.
	
	Then, we have
	\begin{align*}
		\Pb(\cap_h \mathcal E_h^{(e)})&=1-\Pb(\cup_h (\mathcal E_h^{(e)})^c)\geq 1-\sum_h \Pb((\mathcal E_h^{(e)})^c)\geq 1-2H^{(e)}\delta
	\end{align*}
\end{proof}

Now, we can prove Lemma \ref{lem: bound on Part ii}. 
By Lemma \ref{lem: part ii bound first step}, w.p. $1-p$,   we have $|\sum_{h=0}^{2H^{(e)}-1}S^{(e)}_{h,k_h^{(e)}}|\leq \tilde O\left( \sqrt{T^{(e+1)}} mn\right)$ for all $e$. Then, by Lemma \ref{lem: bound Te power a by T power a}, we completed the proof.

\end{document}